\documentclass[10pt, a4paper]{article}
\pdfoutput=1
\usepackage{jheppub}
\usepackage{bm}

\usepackage{graphics} 
\usepackage{amsfonts}
\usepackage{amsmath}
\usepackage{transparent, colortbl}
\usepackage{booktabs}
\usepackage{multirow}
\usepackage{anysize}
\usepackage{array}
\usepackage{latexsym, amscd, amsthm}
\usepackage{pdfsync}
\usepackage{epsf}
\usepackage[all]{xy}
\usepackage{color}
\usepackage[usenames,dvipsnames]{xcolor}
\usepackage{enumerate}
\usepackage{float}
\restylefloat{table}
\usepackage{upgreek}
\usepackage{subcaption}
\usepackage{array}
\usepackage{multirow}

\topmargin +1 cm
\textheight 21 cm
\textwidth 16 cm
\oddsidemargin +2.5 cm
\evensidemargin +2.5 cm

\DeclareSymbolFont{extraup}{U}{zavm}{m}{n}
\DeclareMathSymbol{\varheart}{\mathalpha}{extraup}{86}
\DeclareMathSymbol{\vardiamond}{\mathalpha}{extraup}{87}

\newtheorem{Theorem}{Theorem}[section]
\newtheorem{Corollary}[Theorem]{Corollary}
\newtheorem{Lemma}[Theorem]{Lemma}
\newtheorem{Proposition}[Theorem]{Proposition}
\theoremstyle{definition}
\newtheorem{Definition}[Theorem]{Definition}
\theoremstyle{remark}
\newtheorem*{Remark}{Remark}

\newcommand{\Hom}{{\rm Hom}}

\def\ad{\mathrm{ad}}

\newcommand{\Sym}{\mathrm{Sym}}

\newcommand{\im}{\mathrm{im}}

\newcommand{\bp}{\begin{Proposition}}
\newcommand{\ep}{\end{Proposition}}
\newcommand{\bl}{\begin{Lemma}}
\newcommand{\el}{\end{Lemma}}
\newcommand{\bt}{\begin{Theorem}}
\newcommand{\et}{\end{Theorem}}
\newcommand{\bd}{\begin{Definition}}
\newcommand{\ed}{\end{Definition}}
\newcommand{\End}{\mathrm{End}}
\newcommand{\Aut}{\mathrm{Aut}}

\newcommand{\Mat}{\mathrm{Mat}}

\newcommand{\eqdef}{\stackrel{{\rm def.}}{=}}

\newcommand{\cinf}{{{\cal \cC}^\infty(M,\R)}}

\DeclareFontFamily{U}{rsf}{}
\DeclareFontShape{U}{rsf}{m}{n}{<5> <6> rsfs5 <7> <8> <9> rsfs7 <10-> rsfs10}{}
\DeclareMathAlphabet\Scr{U}{rsf}{m}{n}

\def\bu{{\bf u}}
\def\cU{\mathcal{U}}

\def\cW{\mathcal{W}}

\def\Z{\mathbb{Z}}
\def\C{\mathbb{C}}
\def\R{\mathbb{R}}
\def\bD{\mathbb{D}}

\def\rk{{\rm rk}}

\def\deg{{\rm deg}}

\def\GL{\mathrm{GL}}
\def\dd{\mathrm{d}}

\def\tf{\mathrm{tf}}

\def\Ad{\mathrm{Ad}}

\def\ad{\mathrm{ad}}

\def\sp{\mathrm{sp}}

\def\bcD{\boldsymbol{\mathcal{D}}}

\def\momega{{\boldsymbol{\omega}}}

\def\balpha{{\boldsymbol{\alpha}}}

\def\U{\mathrm{U}}

\def\bt{\mathbf{t}}

\newcommand{\be}{\begin{equation*}}
\newcommand{\ee}{\end{equation*}}
\newcommand{\ben}{\begin{equation}}
\newcommand{\een}{\end{equation}}
\newcommand{\beqa}{\begin{eqnarray*}}
\newcommand{\eeqa}{\end{eqnarray*}}
\newcommand{\beqan}{\begin{eqnarray}}
\newcommand{\eeqan}{\end{eqnarray}}

\newcommand{\nn}{\nonumber}

\newcommand{\id}{\mathrm{id}}
\newcommand{\Tr}{\mathrm{Tr}}

\newcommand{\diag}{\mathrm{diag}}

\def\Diff{\mathrm{Diff}}

\def\cC{{\mathcal C}}

\def\Gr{\mathrm{Gr}}
\def\cK{\mathrm{\cal K}}
\def\cM{\mathrm{\cal M}}

\def\O{\mathrm{O}}
\def\Hol{\mathrm{Hol}}
\def\hol{\mathrm{hol}}
\def\cD{\mathcal{D}}

\def\cE{\check{E}}
\def\cX{\mathcal{X}}

\def\cP{\mathcal{P}}
\def\cN{\mathcal{N}}
\def\cG{\mathcal{G}}
\def\cT{\mathcal{T}}
\def\cF{\mathcal{F}}
\def\cC{\mathcal{C}}
\def\cH{\mathcal{H}}

\def\Sp{\mathrm{Sp}}
\def\G_2{\mathrm{G_2}}

\def\cL{\mathcal{L}}
\def\cS{\mathcal{S}}

\def\cV{\mathcal{V}}

\def\mf{\mathbf{f}}

\def\Alg{\mathrm{Alg}}
\def\tw{\mathrm{tw}}
\def\G{\mathrm{G}}

\def\Im{\mathrm{Im}}
\def\Ric{\mathrm{Ric}}

\def\cE{\mathcal{E}}
\def\Isom{\mathrm{Isom}}

\def\fS{\mathfrak{S}}
\def\iOmega{\mathit{\Omega}}

\def\sp{\mathrm{sp}}
\def\Ab{\mathrm{Ab}}
\def\VB{\mathrm{VB}}
\def\bcE{{\boldsymbol{\mathcal{E}}}}
\def\Herm{\mathrm{Herm}}
\def\TamedSymp{\mathrm{TamedSymp}}
\def\Symp{\mathrm{Symp}}
\def\AbVar{\mathrm{AbVar}}

\def\cX{\mathcal{X}}

\def\grad{\mathrm{grad}}

\def\bD{\mathbf{D}}

\def\bd{\boldsymbol{\dd}}
\def\bast{\boldsymbol{\ast}}

\def\btau{\boldsymbol{\tau}}

\def\Pic{\mathrm{Pic}}
\def\rS{\mathrm{S}}

\def\rT{\mathrm{T}}

\def\Met{\mathrm{Met}}
\def\Iso{\mathrm{Iso}}
\def\Re{\mathrm{Re}}
\def\Im{\mathrm{Im}}
\def\fJ{\mathfrak{J}}
\def\SH{\mathbb{S}\mathbb{H}}
\def\i{\mathbf{i}}
\def\sc{\mathrm{sc}}
\def\EH{\mathrm{EH}}

\def\bfJ{\boldsymbol{\fJ}}
\def\bcL{\boldsymbol{\cL}}
\def\Ric{\mathrm{Ric}}
\def\Sol{\mathrm{Sol}}
\def\Conf{\mathrm{Conf}}
\def\ub{\mathrm{ub}}

\def\Open{\mathrm{Open}}
\def\Set{\mathrm{Set}}

\def\bwedge{\boldsymbol{\wedge}}

\def\bD{\mathbf{D}}

\def\cl{\mathrm{cl}}

\def\Gp{\mathrm{Gp}}
\def\loslash{\bm{\oslash}}
\def\Vect{\mathrm{Vect}}
\def\sp{\mathrm{sp}}
\def\Tors{\mathrm{Tors}}
\def\curv{\mathrm{curv}}
\def\bAd{\mathbf{Ad}}
\def\bDelta{{\boldsymbol{\Delta}}}
\def\bXi{{\boldsymbol{\Xi}}}
\def\bD{\boldsymbol{\cD}}
\def\Div{\mathrm{Div}}
\def\bdelta{\boldsymbol{\delta}}
\def\Tor{\mathrm{Tor}}
\def\TorSymp{\mathrm{TorSymp}}
\def\TorComp{\mathrm{TorComp}}
\def\Comp{\mathrm{Comp}}

\def\bcX{\boldsymbol{\cX}}
\def\SB{\mathrm{SB}}
\def\DS{\mathrm{DS}}
\def\ES{\mathrm{ES}}
\def\UES{\mathrm{UES}}
\def\rT{\mathrm{T}}
\def\TamedSymp{\mathrm{TamedSymp}}


\title{Generalized Einstein-Scalar-Maxwell theories and locally geometric U-folds}

\author{C.~I.~Lazaroiu$^1$ and C.~S.~Shahbazi$^2$}

\affiliation{$^1$ Center for Geometry and Physics, Institute for Basic
  Science, Pohang, Republic of Korea 37673\\
$^2$ Institut de Physique Th\'eorique, CEA-Saclay, France.}

\emailAdd{calin@ibs.re.kr, carlos.shabazi-alonso@cea.fr}

\abstract{We give the global mathematical formulation of the coupling of
  four-dimensional scalar sigma models to Abelian gauge fields on a Lorentzian four-manifold, for the generalized situation when the \emph{duality structure} of the Abelian gauge theory is described by a flat symplectic vector bundle $(\cS,D,\omega)$ defined over the scalar manifold $\cM$. The
  construction uses a taming of $(\cS, \omega)$, which we find to be the correct mathematical object globally encoding the inverse gauge couplings and theta angles of the ``twisted'' Abelian gauge theory in a manner that makes no use of  duality frames. We show that global solutions of the equations of
  motion of such models give classical locally geometric U-folds. We
  also describe the groups of duality transformations and
  scalar-electromagnetic symmetries arising in such models, which
  involve lifting isometries of $\cM$ to the bundle $\cS$ and hence
  differ from expectations based on local analysis. The appropriate
  version of the Dirac quantization condition involves a discrete
  local system defined over $\cM$ and gives rise to a smooth bundle of
  polarized Abelian varieties, endowed with a flat symplectic
  connection. This shows in particular that a generalization of part of the
  mathematical structure familiar from $\cN=2$ supergravity is already
  present in such purely bosonic models, without any coupling to fermions
  and hence without any supersymmetry.}
\preprint{}

\begin{document}

\maketitle

\pagebreak

\vskip .6in


\section*{Introduction}

The construction of four-dimensional Einstein-Scalar-Maxwell theories,
namely nonlinear sigma models defined on an oriented four-manifold $M$
and coupled to gravity and to Abelian gauge fields
\cite{GaillardZumino,Andrianopoli,AndrianopoliUduality,
  AndrianopoliFlat, Ortin} is usually performed using a
simply-connected ``scalar manifold'' $\cM$ in which the scalar map
$\varphi:M\rightarrow \cM$ is valued. This assumption implies that the
behavior of the Abelian gauge theory under electro-magnetic duality
transformations is governed by a $2n$-dimensional symplectic vector
space $(\cS_0,\omega_0)$. The gauge field strengths and their
Lagrangian conjugates combine in a 2-form defined on $M$ and valued in
$\cS_0$, which transforms under duality transformations through the
action of the symplectic group $\Aut(\cS_0,\omega_0)\simeq
\Sp(2n,\R)$.

When the scalar manifold $\cM$ is not simply connected, one can
generalize such theories by promoting the symplectic vector space
$(\cS_0,\omega_0)$ to a flat symplectic vector bundle
$\Delta=(\cS,D,\omega)$ defined over $\cM$; such bundles correspond to
symplectic representations of the fundamental group of $\cM$. The
freedom to modify the theory in this manner arises because Abelian
gauge theories defined on oriented Lorentzian four-manifolds $(M,g)$
can be ``twisted'' by a flat symplectic vector bundle\footnote{Even
  when $\rk\cS=2$, such a generalized Abelian gauge theory is not the
  ordinary Maxwell theory defined on $(M,g)$, but a ``twisted''
  version thereof.}, which in the sigma model case is the
$\varphi$-pullback of the {\em duality structure} $\Delta$. In such a
generalized sigma model coupled to Abelian gauge fields, no
globally-defined ``duality frames'' need exist since $\Delta$ need not
admit any globally-defined flat symplectic frame. As a consequence,
the formulation of the theory must be reconsidered. In particular,
various statements regarding global transformations \cite{symmetries}
have to be modified, since the groups of scalar-electromagnetic
duality transformations and of scalar-electromagnetic symmetries no
longer have a simple direct product structure. Moreover, the Dirac
quantization condition for the Abelian gauge sector changes.

In this paper, we give a general geometric construction of the
coupling of four-dimensional nonlinear scalar sigma models to gravity
and to Abelian gauge fields having a non-trivial duality structure
$\Delta$, by giving a global formulation of the equations of motion
which is both coordinate and frame-free. This uses a taming \cite{Vaisman,Berndt,DuffSalamon} $J$ of the symplectic vector bundle $(\cS,\omega)$, which encodes in a
globally-valid and frame-free manner the information carried by the
gauge-kinetic functions (equivalently, by the inverse coupling
constants and theta angles). The ``duality frames'' of the traditional
formulation are replaced by {\em locally defined} flat symplectic
frames of $\Delta$. The gauge-kinetic matrix is defined only locally
and depends on choosing a local flat symplectic frame $\cE$ supported
on an open subset $\cU\subset \cM$, giving a smooth
$\cE$-dependent function $\tau^\cE:\cU\rightarrow \SH_n$ valued in the
Siegel upper half space $\SH_n$. Equivalently, the gauge-kinetic
matrix becomes a function ${\tilde \tau}^{\cE_0}:{\tilde
  \cM}\rightarrow \SH_n$ defined on the universal cover ${\tilde \cM}$
of the scalar manifold and dependent on a choice of symplectic basis
$\cE_0$ of the typical fiber $(\cS_0,\omega_0)$ of $(\cS,\omega)$. The
second perspective amounts to viewing ${\tilde \tau}^{\cE_0}$ as a
``multivalued function'' defined on $\cM$, whose monodromy is given by
the action of the holonomy representation of the flat symplectic
connection $D$ through matrix fractional transformations. The
equations of motion of the model are expressed using the flat
symplectic vector bundle $\Delta$ and its taming $J$ and make no
direct reference to locally-defined or frame-dependent quantities. Therefore,  the formulation that we present in this work can be used to provide an \emph{explicitly U-duality covariant} formulation of the bosonic sector of four-dimensional ungauged supergravity.

We describe global scalar-electromagnetic duality transformations and
scalar-electromagnetic symmetries of such models, which involve
lifting isometries from $\cM$ to $\cS$ (a process which generally is
obstructed and non-unique), showing that the corresponding groups need
not have the direct product structure typical of the case when the
duality structure is trivial. We give a geometric formulation of the
Dirac quantization condition using the theory of local systems,
showing that it produces a smooth bundle of polarized Abelian
varieties defined over the scalar manifold $\cM$. The Dirac lattice is
replaced by a flat fiber sub-bundle $\Lambda\subset \cS$ whose fibers
$\Lambda_p$ at points $p\in \cM$ are full lattices inside the fibers
$\cS_p$ of $\cS$. The bundle of Abelian varieties has fibers
$\cS_p/\Lambda_p$, the polarization being induced by the symplectic
pairing $\omega_p$ of $\cS_p$ together with the pointwise value $J_p$
of the taming. This shows that a generalization of part of the
structure characteristic of the projective special K\"{a}hler geometry
governing $\cN=2$ supergravity theories \cite{Strominger, Freed,
  Cortes} is already present for general scalar sigma models coupled
to gravity and to Abelian gauge fields, except that various conditions
relevant to the $\cN=2$ case are relaxed. The cubic form of
\cite{Freed} also has a generalization to our models, being replaced
by the {\em fundamental form} $\Theta$ of the {\em electromagnetic
  structure} $\Xi=(\cS,D,J,\omega)$. This is an $End(\cS)$-valued
1-form defined on $\cM$ which measures the failure of $J$ to be
covariantly constant with respect to $D$. As shown in the present
paper, a non-trivial duality structure is already allowed when the
model does not include fermions and hence cannot be supersymmetric. We
stress that such a structure {\em must}\footnote{Indeed, the bosonic
  sector of the $\cN=2$ theory should be a particular case of the
  bosonic sector of the $\cN=0$ theory.} be present if one wishes to
view $\cN=2$ supergravity coupled to matter as a particular case of
the $\cN=0$ theory, since a (highly constrained) non-trivial flat
symplectic vector bundle must be present \cite{Strominger, Freed, Cortes}
in the $\cN=2$ case.

We also show that globally-defined solutions of such models can be
interpreted as classical locally-geometric U-folds. This
interpretation --- which realizes proposals of \cite{GeometricUfolds}
--- already arises in the absence of any coupling to spinors (in
particular, it holds without the need for any supersymmetry), being a
topological consequence of the fact that the duality structure need
not be trivial. As we show in a separate paper, the models
constructed here can be further generalized such as to produce a
global geometric description of all classical locally-geometric
U-folds.

The paper is organized as follows. Section \ref{sec:sigmagravity}
recalls the global formulation of four-dimensional scalar sigma models
coupled to gravity. Section \ref{sec:scalarsigmamodel} discusses
duality and electromagnetic structures and gives the geometric
formulation of the general coupling of scalar sigma models to gravity
and to Abelian gauge fields, when the latter are governed by an
arbitrary duality structure. Section \ref{sec:duality} discusses
dualities and scalar-electromagnetic symmetries of such models, whose
formulation involves lifting isometries of the scalar manifold to the
vector bundle $\cS$. Section \ref{sec:Dirac} discusses the Dirac
quantization condition, showing that the Dirac lattice is replaced by
a discrete local system, giving rise rise to a smooth bundle of
polarized Abelian varieties defined over the scalar manifold
$\cM$. Section \ref{sec:GeometricU} discusses the U-fold
interpretation of global solutions of such theories as well as the
particular case when $(\cS,\omega)$ is symplectically trivial but
equipped with a flat symplectic connection of non-trivial
holonomy. Section \ref{sec:relation} contains a brief comparison with
the literature while Section \ref{sec:conclusions}
concludes. Appendices \ref{app:space_tamings} and
\ref{app:bundle_tamings} summarize information about tamings and
positive polarizations of symplectic vector spaces and symplectic
vector bundles. Appendix \ref{app:twistedconstructions} discusses
certain twisted constructions relevant to such models. Appendix
\ref{app:unbased} explains various constructions involving unbased
automorphisms of vector bundles while appendix \ref{app:localeom}
shows that our global models are {\em locally} indistinguishable from
those found in the literature. Appendix \ref{app:integral} discusses 
various categories of integral spaces and tori which are relevant to 
the formulation of Dirac quantization for such models. 

\paragraph{Notations and conventions.}

The symbols $\Gp,\Ab$ and $\Alg$ denote respectively the categories of
all groups, of Abelian groups and of unital $\R$-algebras. The symbol
$\Vect$ denotes the category of finite-dimensional $\R$-vector spaces
and linear maps. For any category $\cC$, let $\cC^\times$ denote its unit
groupoid, i.e. the category having the same objects as $\cC$ and whose
morphisms are the isomorphisms of $\cC$.  Given an Abelian group $A$,
let $\Tors(A)$ denote its torsion subgroup and $A_\tf\eqdef
A/\Tors(A)$ be the corresponding torsion-free group; the latter is
free when it is finitely-generated.

All manifolds considered are smooth, connected, Hausdorff and
paracompact (hence also second countable) while all bundles and
sections considered are smooth. Given a manifold $N$, let
$\Met_{p,q}(N)$ denote the set of all smooth pseudo-Riemannian metrics
of signature $(p,q)$ defined on $N$. Let $\cP(N)$ denote the set of
paths (=piecewise-smooth curves) $\gamma:[0,1]\rightarrow N$ and
$\Pi_1(N)$ denote the first homotopy groupoid of $N$. Given a vector
space $V$, let $V_N$ denote the corresponding trivial vector bundle
defined on $N$. Let $\cX(N)$ denote the $\cC^\infty(N,\R)$-module of
globally-defined smooth vector fields on $N$. By definition, a
Lorentzian manifold is a pseudo-Riemannian 4-manifold $(N,g)$ whose
pseudo-Riemannian metric $g$ has ``mostly plus'' signature
$(p,q)=(3,1)$.

Given vector bundles $\cS$ and $\cS'$ over a manifold $N$, we denote
by $Hom(\cS,\cS')$ and $Isom(\cS,\cS')$ the bundles of based linear
morphisms and isomorphisms from $\cS$ to $\cS'$ and by
$\Hom(\cS,\cS')\eqdef \Gamma(N,Hom(\cS,\cS'))$ and
$\Isom(\cS,\cS')\eqdef \Gamma(N,Isom(\cS,\cS'))$ the sets of based
vector bundle morphisms and isomorphisms. When $\cS'=\cS$, we set
$End(\cS)\eqdef Hom(\cS,\cS)$, $Aut(S)\eqdef Isom(\cS,\cS)$ and
$\End(\cS)\eqdef \Hom(\cS,\cS)$, $\Aut(S)\eqdef \Isom(\cS,\cS)$. We
use the notations $\Hom^\ub(\cS,\cS')$, $\Isom^\ub(\cS,\cS')$,
$\End^\ub(\cS)$ and $\Aut^\ub(\cS)$ for the sets of {\em unbased}
morphisms, isomorphisms etc.

Given a smooth map $f:N_1\rightarrow N_2$ and a vector bundle $\cS$ on
$N_2$, we denote the $f$-pullback of $\cS$ to $N_1$ by $\cS^f$.  Given
a smooth section $\xi\in \Gamma(N_2,\cS)$, we denote its $f$-pullback
by $\xi^f\in \Gamma(M_1,\cS^f)$, calling it the {\em topological
  pullback}. We reserve the notation $f^\ast$ for the {\em
  differential pullback} of tensors defined on $N_2$, an operation
which generally involves both the topological pull-back along $f$ and
insertions of the differential $\dd f$; an example is the differential
pull-back of differential forms valued in a vector bundle.

By definition, a Hermitian pairing is complex-linear in the first variable 
while an {\em anti-Hermitian pairing} is complex linear in
the second variable. A complex-valued pairing is anti-Hermitian iff its
conjugate is Hermitian. The canonical symplectic pairing $\omega_n$
on $\R^{2n}$ is defined through:
\ben
\label{omegan}
\omega_n((x_1,y_1),(x_2,y_2))=x_1y_2-y_1x_2~~\forall x_1,x_2,y_2,y_2\in \R^n
\een
and its matrix in the canonical basis is denoted by: 
\begin{equation*}
\label{Omegan}
\Omega_n\eqdef \left[\begin{array}{cc} 0 & I_n \\ -I_n & 0 \end{array} \right]\, .
\end{equation*}
The space of matrices of size $m\times n$ with coefficients in a ring $R$ is 
denoted by $\Mat(m,n,R)$ while the space of square matrices of size $n$ 
is denoted by $\Mat(n,R)\eqdef \Mat(n,n,R)$. The space of 
symmetric square matrices of size $n$ with coefficients in $R$
is denoted by $\Mat_s(n,R)$. The complex imaginary unit is denoted by $\i$. 

\section{Scalar nonlinear sigma models coupled to gravity}
\label{sec:sigmagravity}

We start by recalling the global construction of four-dimensional
scalar nonlinear sigma models coupled to gravity.  Let $M$ be an
oriented four-manifold and $\Met_{3,1}(M)$ be the set of Lorentzian
metrics defined on $M$. We assume that $\Met_{3,1}(M)$ is non-empty
(this requires $\chi(M)=0$ when $M$ is compact).

\subsection{Preparations}

The musical isomorphism $TM\simeq T^\ast M$ defined by $g\in
\Met_{3,1}(M)$ induces an isomorphism $T^\ast M\otimes T^\ast
M\simeq End(TM)$. Combining this with the fiberwise trace
$\Tr:End(TM)\rightarrow \R_M$ gives a map $\Tr_g:T^\ast M\otimes
T^\ast M\rightarrow \R_M$. Let $\nu:\Met_{3,1}(M)\rightarrow
\Omega^4(M)$ be the functional which associates to an element $g\in
\Met_{3,1}(M)$ its normalized volume form $\nu(g)$ and
$\mathrm{R}:\Met_{3,1}(M)\rightarrow \cC^\infty(M,\R)$ be the
functional which associates to $g$ its scalar curvature
$\mathrm{R}(g)$.

\begin{Definition} 
A {\em scalar structure} is a triplet $\Sigma=(\cM,\cG,\Phi)$, where
$(\cM,\cG)$ is a Riemannian manifold (called the {\em scalar
  manifold}) and $\Phi\in \cC^\infty(\cM,\R)$ is a smooth real-valued
function defined on $\cM$ (called the {\em scalar potential}).
\end{Definition}

\noindent Given a scalar structure $\Sigma=(\cM,\cG,\Phi)$, we denote
by $\Sigma_0\eqdef (\cM,\cG)$ the underlying scalar manifold.  Let
$\varphi^\ast(\cG)\in \Gamma(M,\Sym^2(T^\ast M))$ be the differential
pullback of $\cG$ through $\varphi$, also known as the {\em first
  fundamental form} of $\varphi$ (see \cite{BairdWood}). This is the
symmetric covariant 2-tensor defined on $M$ through the expression:
\ben
\varphi^\ast(\cG)_{m}(u,v)\eqdef \cG_{\varphi(m)}(\dd_m\varphi(u),\dd_m\varphi(v))
\een
for all $m\in M$ and $u,v\in T_mM$. The topological pullback of $\cG$
defines a Euclidean pairing $\cG^\varphi:(T\cM)^\varphi\otimes
(T\cM)^\varphi\rightarrow \R_M$ on the pulled-back bundle
$(T\cM)^\varphi$. The differential $\dd\varphi:TM\rightarrow T\cM$ is
an unbased morphism of vector bundles covering $\varphi$, which induces a
based morphism $\widehat{\dd\varphi}:TM\rightarrow (T\cM)^\varphi$; in
turn, this can be viewed as a 1-form $\widetilde{\dd
  \varphi}\in\Omega^1(M,(T \cM)^\varphi)=\Gamma(M,T^\ast M\otimes
(T\cM)^\varphi)$ defined on $M$ and valued in the pulled-back vector
bundle $(T\cM)^\varphi$. Then:
\be
\varphi^\ast(\cG)(X,Y)=\cG^\varphi(\widetilde{\dd\varphi}(X)\otimes \widetilde{\dd\varphi}(Y))~~\forall X,Y\in \cX(M)~~,
\ee
where $\widetilde{\dd\varphi}(X)$ and $\widetilde{\dd\varphi}(Y)$ are
sections of $(T\cM)^\varphi$. The {\em density} of $\varphi$ relative
to $\cG$ and $g\in \Met_{3,1}(M)$ is the function \cite{BairdWood}:
\be
e_\cG(g, \varphi)\eqdef \frac{1}{2}\Tr_g\varphi^\ast(\cG)\in \cC^\infty(M,\R)~~.
\ee

\begin{Definition}
The {\em modified density} of a smooth map $\varphi\in
\cC^\infty(M,\cM)$ relative to the Lorentzian metric $g\in
\Met_{3,1}(M)$ and to the scalar structure $\Sigma$ is defined
through:
\ben 
e_\Sigma(g,\varphi)\eqdef e_\cG(g,\varphi)+\Phi^\varphi=\frac{1}{2}\Tr_g \varphi^\ast(\cG)+\Phi^\varphi\in \cC^\infty(M,\R)~~,
\een
where $\Phi^\varphi=\Phi \circ \varphi\in \cC^\infty(M,\R)$.
\end{Definition}

\noindent Together with the Levi-Civita connection $\nabla^g$ of
$(M,g)$, the $\varphi$-pullback $(\nabla^\cG)^\varphi$ of the
Levi-Civita connection of $(\cM,\cG)$ induces a connection on the
vector bundle $T^\ast M\otimes (T\cM)^\varphi$, which we denote by
$\nabla$. The {\em second fundamental form} (see \cite{BairdWood}) of
the map $\varphi$ relative to $g$ and $\cG$ is the
$(T\cM)^\varphi$-valued tensor field on $M$ given by:
\be
\nabla\widetilde{\dd\varphi}\in \Omega^1(M,T^\ast M \otimes (T\cM)^\varphi)=\Gamma(M,T^\ast M\otimes T^\ast M\otimes (T\cM)^\varphi)~~,
\ee
i.e.:
\be
(\nabla \widetilde{\dd\varphi})(X,Y)=(\nabla^\cG)_X^\varphi(\widetilde{\dd\varphi}(Y))-\widetilde{\dd\varphi}(\nabla_X^g Y)~~\forall X,Y\in \cX(M)~~.
\ee
One can show that $\nabla\widetilde{\dd\varphi}$ is a section of the bundle
$\Sym^2(T^{\ast}M)\otimes (T \cM)^\varphi$. The {\em tension field} of
$\varphi$ relative to $g$ and $\cG$ is defined through (see
loc. cit.):
\ben
\label{tension}
\theta_\cG (g,\varphi)\eqdef \Tr_g \nabla\widetilde{\dd\varphi} \in \Gamma(M, (T\cM)^\varphi)~~.
\een

\begin{Definition}
The {\em modified tension field} of a smooth map $\varphi\in
\cC^\infty(M,\cM)$ relative to the Lorentzian metric $g\in \Met_{3,1}(M)$ and
to the scalar structure $\Sigma$ is defined through:
\ben
\theta_\Sigma(g,\varphi)\eqdef \theta_\cG(g,\varphi)-(\grad_\cG \Phi)^\varphi\in \Gamma(M, (T\cM)^\varphi)~~,
\een
where $\grad_\cG \Phi\in \cX(\cM)$ is the gradient vector field of
$\Phi$ with respect to $\cG$. 
\end{Definition}

\subsection{The action functional}
 
For any relatively compact open subset $U\subset M$, the action of the
scalar sigma model coupled to gravity determined by $\Sigma$ and
defined on $M$ is the functional
$S_\Sigma^U:\Met_{3,1}(M)\times\cC^\infty(M,\R)\rightarrow \R$ defined
through:
\begin{equation}
\label{S0}
S_\Sigma^U[g,\varphi] = S_{\EH}^U[g]+S_{\sc,\Sigma}^U[g, \varphi]\, ,
\end{equation} 
where:
\be
\label{SEH}
S_{\EH}^U[g]=\frac{1}{2\kappa}\int_U\nu(g) \mathrm{R}(g)\, ,
\ee
is the Einstein-Hilbert action ($\kappa$ being the Einstein constant) and:
\ben
\label{Ssc0}
S_{\sc,\Sigma}^U[g, \varphi]= - \int_U \nu(g) e_\Sigma(g,\varphi)\, .
\een
is the scalar sigma model action determined by $\Sigma$ in the
background metric $g\in \Met_{3,1}(M)$.

\subsection{Local expressions}

Let $d\eqdef \dim \cM$. Consider local charts $(U, x^\mu)_{\mu=0\dots
  3}$ on $M$ and $(\cU, y^A)_{A=1\ldots d}$ on $\cM$ chosen such that
$\varphi(U)\subset \cU$. Let $x=(x^0,\ldots x^3)$ and let
$y^A=\varphi^A(x)$ be the local expression of the map $\varphi$ in
these coordinates. Let $\cG_{AB}\in\cC^\infty(\cU,\R)$ be the local
coefficients of $\cG$ in the chart $(\cU,y^A)$, defined in the
convention $\cG|_\cU=\cG_{AB}\dd y^A\otimes \dd y^B$. Then:
\be
\varphi^\ast(\cG)(x)=_{U} \cG_{AB}(\varphi(x)) \partial_\mu \varphi^A(x)\partial_\nu \varphi^B(x) \dd x^\mu\otimes \dd x^\nu 
\ee
and:
\be
e_\Sigma(g, \varphi)(x)=_{U}\frac{1}{2}\cG_{AB}(\varphi(x))g^{\mu\nu}(x) \partial_\mu \varphi^A(x)\partial_\nu \varphi^B(x)+\Phi(\varphi(x))~~.
\ee
We also have $\theta_\cG(g,
\varphi)=_U\theta^A_\cG(g,\varphi)\partial_A^\varphi$, where
$\partial_A^\varphi\eqdef (\partial_A)^\varphi|_U$ form a local frame
of the bundle $(T\cM)^\varphi$ above $U$. Let $\Gamma^C_{AB}$ and
$K^\rho_{\mu\nu}$ be the Christoffel symbols of $\cG$ and $g$ computed
respectively in the chosen coordinates on $\cM$ and $M$:
\beqa
&& \nabla^\cG(\frac{\partial}{\partial y^A})=_\cU \Gamma_{AB}^C \frac{\partial}{\partial y^C}\nn\\
&& \nabla^g(\frac{\partial}{\partial x^\mu})=_U K_{\mu\nu}^\rho \frac{\partial}{\partial x^\rho}~~,
\eeqa
where $\nabla^\cG$ and $\nabla^g$ are the Levi-Civita connections of
$\cG$ and $g$, respectively. Then:
\ben
\label{taulocal}
\theta^A_\cG(g,\varphi)(x)=_U\partial^\mu \partial_\mu
\varphi^A(x)+\Gamma^A_{BC}(\varphi(x))\partial^\mu \varphi^B(x)\partial_\mu
\varphi^C(x)-g^{\mu\nu}(x)K^\rho_{\mu\nu}(x)\partial_\rho\varphi^A(x)~~.
\een
and:
\ben
\theta^A_\Sigma(g,\varphi)(x)=\theta^A_\cG(g,\varphi)(x)-\cG^{AB}(\varphi(x))(\partial_B\Phi)(\varphi(x))~~.
\een

\subsection{The equations of motion}

Let $\Ric(g)$ and $\G(g) \eqdef \Ric(g)-\frac{1}{2}\mathrm{R}(g)g $ be
the Ricci and Einstein tensors of $g$. 

\begin{Proposition}
The equations of motion (critical point equations) for $g$ and
$\varphi$ derived from the action functional \eqref{S0} are the {\em
  Einstein-Scalar (ES) equations}:
\beqan
\label{eom0}
&& \G(g) = \kappa \rT_\Sigma(g,\varphi)\, ,\nn \\
&& \theta_{\Sigma}(g,\varphi)  = 0\, ,
\eeqan
where: 
\ben
\label{TSigma}
\rT_\Sigma(g,\varphi)\eqdef \varphi^\ast(\cG) - e_\Sigma(g,\varphi)g \in \Gamma(M,\Sym^2(T^\ast M))\, ,
\een
is the {\em scalar stress-energy tensor}.
\end{Proposition}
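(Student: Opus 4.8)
The plan is to treat the two summands of the action \eqref{S0} separately and to extract the Euler--Lagrange equations by varying $g$ and $\varphi$ independently. Since $U\subset M$ is relatively compact, I would work with compactly supported variations $\delta g$ and $\delta\varphi$ (vanishing, together with their first derivatives, near $\partial U$) and require stationarity of $S_\Sigma^U$ for all such variations and all such $U$; letting $U$ exhaust $M$ then yields the pointwise equations on all of $M$. Because the configuration space is a product $\Met_{3,1}(M)\times\cC^\infty(M,\cM)$, the variations in $g$ and in $\varphi$ are independent, so the two equations in \eqref{eom0} arise separately and there is no cross term to worry about. Throughout I would compute in the local charts and with the local expressions for $e_\Sigma$, $\theta_\cG$ and $\theta_\Sigma$ recorded above, and then recognize the results as the coordinate-free objects $\rT_\Sigma$ and $\theta_\Sigma$.

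For the variation with respect to $\varphi$, only $S_{\sc,\Sigma}^U$ contributes. Using $e_\Sigma=\frac12 g^{\mu\nu}\cG_{AB}(\varphi)\partial_\mu\varphi^A\partial_\nu\varphi^B+\Phi\circ\varphi$, I would vary $\varphi^A\mapsto\varphi^A+\delta\varphi^A$ and integrate by parts the term containing $\partial_\nu\delta\varphi^B$, discarding the boundary contribution. The derivative of $\cG_{AB}$ along $\varphi$ together with the divergence produced by integrating by parts assemble, after raising an index with $\cG^{AB}$, into precisely the d'Alembertian-plus-Christoffel combination of the tension field \eqref{taulocal}, while the variation of $\Phi\circ\varphi$ produces the $-\cG^{AB}\partial_B\Phi$ term. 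Requiring this to vanish for every $\delta\varphi$ gives $\theta_\cG^A-\cG^{AB}\partial_B\Phi=0$, i.e. the second equation $\theta_\Sigma(g,\varphi)=0$. The only point requiring care here is that the Christoffel symbols of $g$ (arising from differentiating the volume density $\nu(g)$) and those of $\cG$ (arising from $\partial_C\cG_{AB}$) must be collected correctly, so that the result is the covariant modified tension field rather than a merely coordinate expression.

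For the variation with respect to $g$, both summands contribute. For the Einstein--Hilbert part I would invoke the standard identity $\delta(\nu(g)\mathrm{R}(g))=\nu(g)\,\G(g)_{\mu\nu}\,\delta g^{\mu\nu}+\dd(\cdots)$, whose total-derivative remainder integrates to zero against a compactly supported variation. For the matter part I would use $\delta\nu(g)=-\frac12\nu(g)\,g_{\mu\nu}\delta g^{\mu\nu}$ and note that, among the ingredients of $e_\Sigma$, only the inverse metric in the kinetic term feels $\delta g$, giving $\delta e_\Sigma=\frac12\varphi^\ast(\cG)_{\mu\nu}\delta g^{\mu\nu}$. Hence
\be
\delta(\nu(g)e_\Sigma)=\nu(g)\Big[-\tfrac12 e_\Sigma\,g_{\mu\nu}+\tfrac12\varphi^\ast(\cG)_{\mu\nu}\Big]\delta g^{\mu\nu}~~.
\ee
Adding the two contributions and demanding stationarity for all $\delta g^{\mu\nu}$ yields $\frac{1}{2\kappa}\G(g)_{\mu\nu}=\frac12\big(e_\Sigma\,g_{\mu\nu}-\varphi^\ast(\cG)_{\mu\nu}\big)$, i.e. $\G(g)=\kappa\,\rT_\Sigma(g,\varphi)$ with $\rT_\Sigma$ as in \eqref{TSigma}. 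This is the first equation in \eqref{eom0}.

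The computation is otherwise routine, so I expect the main (and essentially only) obstacle to be the correct treatment of the boundary term in the Einstein--Hilbert variation: the second-derivative terms in $\delta\mathrm{R}(g)$ must be shown to assemble into an exact divergence, which is what allows them to be dropped for compactly supported $\delta g$. A secondary, bookkeeping-level subtlety is matching the two coordinate computations to the intrinsic objects $\theta_\Sigma$ and $\rT_\Sigma$: for the tension field this is guaranteed by the local formula \eqref{taulocal}, and for the stress tensor by \eqref{TSigma}, so that once the variations are carried out in coordinates the identification with \eqref{eom0} is immediate.
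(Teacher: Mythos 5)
Your proposal is correct and follows essentially the same route as the paper's own proof: independent variations of $g$ and $\varphi$, the standard Einstein--Hilbert variation identity, the identification $\delta e_\Sigma/\delta g^{\mu\nu}=\tfrac12\varphi^\ast(\cG)_{\mu\nu}$ yielding $\rT_\Sigma$, and integration by parts in the $\varphi$-variation producing $\theta_\Sigma$. The paper merely states these variations more tersely, while you spell out the boundary-term and bookkeeping details.
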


\begin{proof}
Under an infinitesimal variation of $g$ and $\varphi$, we have:
\be
(\delta S_{\EH}^U) [g]=\frac{1}{2\kappa}\int_U \nu(g)\G_{\mu\nu}(g) \delta g^{\mu\nu}~~,
\ee
and:
\be
(\delta S_{\sc,\Sigma}^U) [g,\varphi]= \int_{U} \nu(g)\cG^\varphi(\delta\varphi, \theta_\Sigma(g,\varphi))-\frac{1}{2}\int_{U}\nu(g)\rT_{\mu\nu}(g,\varphi)\delta g^{\mu\nu}~~,
\ee
where the stress-energy tensor is given by:
\be
\rT_{\mu\nu}[g,\varphi]= - g_{\mu\nu} e_\Sigma(g,\varphi)+2 \frac{\delta e_\Sigma(g,\varphi)}{\delta g^{\mu\nu}}= \varphi^\ast(\cG)_{\mu\nu} - g_{\mu\nu} e_\Sigma(g,\varphi)\, .
\ee
Thus: 
\be
(\delta S_\Sigma^U)[g,\varphi]=\int_{U} \nu(g)\cG^\varphi(\delta\varphi, \theta_\Sigma(g,\varphi))+\frac{1}{2\kappa}\int_{U}\nu(g)\left[\G_{\mu\nu}(g)-\kappa \rT_{\mu\nu}(g,\varphi)\right]\delta g^{\mu\nu}~~.
\ee
Hence the critical points of $S_\Sigma^U$ are solutions of the equations: 
\be
\G_{\mu\nu}(g)=\kappa \rT_{\mu\nu}(g,\varphi)~~,~~\theta_\Sigma(g,\varphi)=0~~.
\ee
\end{proof}

\noindent Using the expressions of the previous subsection, one finds
that the equations of motion \eqref{eom0} coincide with the local
forms usually given in the physics literature.

\begin{Remark}
When $\Phi=0$, the second equation in \eqref{eom0} reduces to the {\em
  pseudo-harmonicity equation}:
\ben
\label{sceom0}
\theta_\cG(g, \varphi)=0~~,
\een
whose solutions are called {\em pseudoharmonic maps} (or wave maps)
from $(M,g)$ to $(\cM,\cG)$. The theory of pseudo-harmonic maps is a
classical subject (see \cite{BairdWood, KriegerSchlag}).
Most mathematical literature on Einstein equations coupled to scalar
fields considers only the case when the scalar manifold is given by
$\cM=\R$, endowed with a translationally-invariant metric $\cG$ on
$\R$ (which can be taken to be the standard metric).  In that
particular case, the system \eqref{eom0} is sometimes called the {\em
  Einstein-nonlinear scalar field system}. The initial value problem
for that particular case is discussed in \cite{CB1,CB2,CB3}.
\end{Remark} 

\subsection{Sheaves of configurations and solutions}

For any subset $U\subset M$, let $\Conf_\Sigma(U)\eqdef
\Met_{3,1}(U)\times \cC^\infty(U,\cM)$ and let $\Sol_\Sigma(U)\subset
\Conf_\Sigma(U)$ denote the set of smooth solutions of equations
\eqref{eom0} defined on $U$, where $U$ is endowed with the open
submanifold structure induced from $M$. Let $\Open(M)$ be the small
site\footnote{The collection of all open subsets of $M$ viewed as a
  category whose morphisms are inclusion maps.} of $M$. Since
equations \eqref{eom0} are local, the restriction to an open subset of
a solution is a solution. This gives contravariant functors
$\Conf_\Sigma:\Open(M)\rightarrow \Set$ and
$\Sol_\Sigma:\Open(M)\rightarrow \Set$ which satisfy the gluing
conditions and thus are sheaves of sets defined on $M$, called the
{\em sheaves of Einstein-Scalar (ES) configurations and solutions},
respectively. Given a metric $g\in \Met_{3,1}(M)$, we also have a
sheaf $\Conf_\Sigma^\sc$ of {\em scalar configurations} (defined
through $\Conf_\Sigma^\sc(U)\eqdef \cC^\infty(U,\cM)$) and a sheaf
$\Sol^g_{\Sigma,\sc}$ of {\em scalar solutions} relative to a metric
$g\in \Met_{3,1}(M)$, where $\Sol^g_{\Sigma,\sc}(U)$ is the set of all
solutions of the second equation in \eqref{eom0} which are defined on
$U$. The sheaf $\Sol_{\Sigma,\sc}^g$ is relevant when treating $g$ as 
a background metric. 

\subsection{Target space symmetries}

Let: 
\be
\Iso(\cM,\cG)\eqdef \{\psi\in \Diff(\cM)|\psi^\ast(\cG)=\cG\}
\ee
be the group of isometries of $(\cM,\cG)$, which is a Lie group by the
second Steenrod-Myers theorem. This group acts from the left on the
set $\Conf_\Sigma^\sc(M)=\cC^\infty(M,\cM)$ through postcomposition:
\ben
\label{scalaraction}
\psi \varphi \eqdef \psi \circ \varphi
\een
The action \eqref{scalaraction} preserves $\Phi^\varphi$ for all
$\varphi\in \cC^\infty(M,\cM)$ if and only if\footnote{To see this, it
  suffices to take $\varphi$ to be a constant map.} $\Phi\circ
\psi=\Phi$.

\begin{Definition}
The {\em automorphism group} of the scalar structure
$\Sigma=(\cM,\cG,\Phi)$ is defined through:
\be
\Aut(\Sigma) \eqdef \left\{\psi\in\Iso(\cM,\cG) | \Phi\circ \psi=\Phi\right\} \subset \Iso(\cM, \cG)~~.
\ee
\end{Definition}

\noindent Notice that $\Aut(\Sigma)$ is a closed subgroup of
$\Iso(\cM,\cG)$ and hence is a Lie group.

\begin{Proposition}
\label{prop:thetatf}
Let $\varphi\in \cC^\infty(M,\cM)$. For any $\psi\in\Iso(\cM,\cG)$, we have: 
\beqan
\label{thetatf}
&& e_\cG(g,\psi\varphi)=e_\cG(g,\varphi)~~\nn\\
&& \theta_\cG(g, \psi \varphi)=(\widehat{\dd \psi})^\varphi \circ \theta_\cG(g, \varphi)~~,
\eeqan
where $\widehat{\dd\psi}\in
\Hom(T\cM,(T\cM)^\psi)$ is the based  morphism of vector bundles 
induced by $\dd\psi$ and $(\widehat{\dd\psi})^\varphi\in
\Hom((T\cM)^\varphi,(T\cM)^{\psi\circ \varphi})$ is its
$\varphi$-pullback to $M$. When $\psi\in \Aut(\Sigma)$, we have:
\beqan
\label{thetatf2}
&& e_\Sigma(g,\psi\varphi)=e_\Sigma(g,\varphi)~~\nn\\
&& T_\Sigma(g,\psi\varphi)=T_\Sigma(g,\varphi)~~\nn\\
&& \theta_\Sigma(g,\psi\varphi)=(\widehat{\dd \psi})^\varphi \circ \theta_\Sigma(g, \varphi)~~.
\eeqan

\end{Proposition}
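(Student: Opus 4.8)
The plan is to reduce everything to two classical facts: the functoriality of the differential pullback of tensors, and the fact that an isometry is an affine (totally geodesic) map, so that its differential is parallel for the Levi-Civita connection. The density and stress-energy identities follow from the former, the tension-field identities from the latter.

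First I would dispose of the statements not involving $\theta$. Since the differential pullback of tensors is functorial under composition, $(\psi\circ\varphi)^\ast(\cG)=\varphi^\ast(\psi^\ast(\cG))$; because $\psi$ is an isometry, $\psi^\ast(\cG)=\cG$, so $(\psi\varphi)^\ast(\cG)=\varphi^\ast(\cG)$, and applying $\tfrac{1}{2}\Tr_g$ gives $e_\cG(g,\psi\varphi)=e_\cG(g,\varphi)$. When moreover $\psi\in\Aut(\Sigma)$, the condition $\Phi\circ\psi=\Phi$ yields $\Phi^{\psi\varphi}=\Phi\circ\psi\circ\varphi=\Phi\circ\varphi=\Phi^\varphi$, whence $e_\Sigma(g,\psi\varphi)=e_\Sigma(g,\varphi)$; combined with the invariance of $\varphi^\ast(\cG)$, the definition $\rT_\Sigma=e_\Sigma\, g-\varphi^\ast(\cG)$ immediately gives $\rT_\Sigma(g,\psi\varphi)=\rT_\Sigma(g,\varphi)$.

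The tension-field transformation is the substantive step. By the chain rule $\dd(\psi\circ\varphi)=\dd\psi\circ\dd\varphi$, so at the level of the $(T\cM)^\varphi$-valued $1$-forms of the excerpt one has $\widetilde{\dd(\psi\varphi)}=(\widehat{\dd\psi})^\varphi\circ\widetilde{\dd\varphi}$. The key geometric input is that an isometry $\psi$ preserves the Levi-Civita connection, so the based morphism $\widehat{\dd\psi}\colon T\cM\to (T\cM)^\psi$ is parallel for the connections induced by $\nabla^\cG$ on source and target (equivalently, the second fundamental form of $\psi$ vanishes); I would derive this from the uniqueness of the Levi-Civita connection applied to $\psi^\ast(\cG)=\cG$. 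Pulling this parallelism back along $\varphi$ shows that $(\widehat{\dd\psi})^\varphi$ intertwines the pulled-back connections, i.e. $(\nabla^\cG)^{\psi\varphi}_X\circ(\widehat{\dd\psi})^\varphi=(\widehat{\dd\psi})^\varphi\circ(\nabla^\cG)^\varphi_X$ for every $X\in\cX(M)$. Substituting into the defining formula for the second fundamental form gives $(\nabla\widetilde{\dd(\psi\varphi)})(X,Y)=(\widehat{\dd\psi})^\varphi\big((\nabla\widetilde{\dd\varphi})(X,Y)\big)$, and since $(\widehat{\dd\psi})^\varphi$ is a fiberwise-linear morphism acting only on the $(T\cM)^\varphi$-factor it commutes with $\Tr_g$, yielding $\theta_\cG(g,\psi\varphi)=(\widehat{\dd\psi})^\varphi\circ\theta_\cG(g,\varphi)$.

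Finally, for the $\Sigma$-version of the tension field I would use the equivariance of the gradient: since $\psi$ is an isometry with $\Phi\circ\psi=\Phi$, a short computation from $\psi^\ast(\cG)=\cG$ and $\dd(\Phi\circ\psi)=\dd\Phi$ gives $\dd\psi\circ\grad_\cG\Phi=(\grad_\cG\Phi)\circ\psi$, which at the pulled-back level reads $(\widehat{\dd\psi})^\varphi\circ(\grad_\cG\Phi)^\varphi=(\grad_\cG\Phi)^{\psi\varphi}$. Subtracting this from the identity for $\theta_\cG$ and using $\theta_\Sigma=\theta_\cG-(\grad_\cG\Phi)^\varphi$ gives $\theta_\Sigma(g,\psi\varphi)=(\widehat{\dd\psi})^\varphi\circ\theta_\Sigma(g,\varphi)$. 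The main obstacle is the careful bookkeeping of based versus unbased morphisms and of the various pulled-back connections needed to justify the intertwining relation; the underlying analytic content — that isometries are affine and carry gradients of invariant functions to gradients — is classical.
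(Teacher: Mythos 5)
Your proof is correct, and it is exactly the ``direct computation'' that the paper invokes without writing out: the paper's own proof of this proposition consists of the single line ``Follows by direct computation,'' so your argument---functoriality of the differential pullback for the density and stress-energy identities, the chain rule $\widetilde{\dd(\psi\varphi)}=(\widehat{\dd\psi})^\varphi\circ\widetilde{\dd\varphi}$ together with the fact that isometries preserve the Levi-Civita connection (hence $(\widehat{\dd\psi})^\varphi$ intertwines the pulled-back connections) for the tension field, and gradient equivariance for the $\Sigma$-version---is a faithful and complete instantiation of what the authors leave implicit. No gaps; the bookkeeping of based versus unbased morphisms you flag is handled correctly.
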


\begin{proof}
Follows by direct computation. 
\end{proof}

\noindent The proposition implies that the equations of motion
\eqref{eom0} are invariant under the action of the group
$\Aut(\Sigma)$. It follows that this action preserves the set
$\Sol_\Sigma(M)$ of global Einstein-Scalar solutions as well as the
set $\Sol_{\Sigma,\sc}^g(M)$ of global scalar solutions relative to
any metric $g\in \Met_{3,1}(M)$.

\section{Generalized scalar sigma models coupled to gravity and to Abelian gauge fields} 
\label{sec:scalarsigmamodel}

In this section we give the global mathematical formulation of the equations of motion of a scalar sigma model coupled to gravity and to an arbitrary number $n$ of Abelian gauge fields --- also known as the Einstein-Scalar-Maxwell (ESM) theory --- for the generalized case when the Abelian gauge fields have a non-trivial ``duality structure''. This theory is {\em locally} equivalent with that discussed in the supergravity literature, to which it reduces globally only when the duality structure is trivial. As we explain in a separate publication, the generalized models constructed in this section are physically allowed because the Maxwell equations defined on a non-simply connected oriented Lorentzian four-manifold can be ``twisted'' by a flat symplectic vector bundle.

\subsection{Introductory remarks} 
Before introducing the mathematical formalism, it may be worth
outlining how it captures the local description found in the
supergravity literature. Let $(M,g)$ be a Lorentzian four-manifold and
$\Sigma=(\cM,\cG,\Phi)$ be a scalar structure. When coupling the
scalar sigma model defined by $\Sigma$ on $(M,g)$ to $n$ Abelian gauge
fields with a trivial ``duality structure'', one specifies a symmetric
complex-valued gauge-kinetic matrix $\tau(p)=\theta(p)+i\gamma(p)\in
\Mat_s(n,\C)$ for each point $p$ of the scalar manifold $\cM$. The
real-valued symmetric matrices $\gamma(p)$ and $\theta(p)$ encode
respectively the inverse gauge couplings and the theta angles of the
Abelian gauge theory. Since the kinetic term in the electromagnetic
action written in any duality frame must produce a positive-definite
energy density, the imaginary part $\gamma(p)$ must be strictly
positive-definite, hence $\tau(p)$ is an element of the Siegel upper
half space $\SH_n$. This gives a function $\tau:\cM\rightarrow \SH_n$,
which we take to be smooth. Traditionally \cite{GaillardZumino}, one
arranges the Abelian gauge field strengths defined on $M$ into a
vector-valued two-form ${\hat F} \in \Omega^2(M,\R^n)$ (the
generalized ``Faraday tensor'') and associates to it the vector-valued
2-form ${\hat G} \eqdef \theta\circ\varphi|_U {\hat F} -\gamma\circ
\varphi|_U ~\ast_g {\hat F} \in \Omega^2(M,\R^n)$ of ``Lagrangian
conjugate field strengths'', where $\ast_g$ is the Hodge operator of
$(M,g)$. The vector-valued 2-form ${\hat \cV}={\hat F}\oplus {\hat
  G}\in\Omega^2(M,\R^{2n})$ obeys the algebraic {\em polarization
  condition}:
\be
\ast_g {\hat \cV}=- ({\hat J}\circ \varphi)~{\hat \cV} ~~,
\ee
where ${\hat J}:\cM\rightarrow \Mat(2n,\R)$ is the smooth matrix-valued
function given by:
\be
{\hat J}(p)\eqdef \left[\begin{array}{cc} \gamma(p)^{-1}\theta(p)~~&
    -\gamma(p)^{-1} \\ \gamma(p)+\theta(p)\gamma(p)^{-1}\theta(p) &~~
    -\theta(p)\gamma(p)^{-1}\end{array}\right]~~.
\ee
Consider the modified Hodge operator $\star_{g,{\hat J},\varphi}\eqdef
\ast_g\otimes {\hat J}^\varphi:\Omega^k(M,\R^{2n})\rightarrow
\Omega^{4-k}(M,\R^{2n})$ where ${\hat J}^\varphi\eqdef {\hat J}\circ
\varphi\in \cC^\infty(M,\Mat(2n,\R))$. This operator squares to the
identity on $\Omega^2(M,\R^{2n})$ and the polarization condition
states that ${\hat \cV}$ is self-dual with respect to
$\star_{g,J,\varphi}$:
\be
\star_{g,J,\varphi} {\hat \cV}={\hat \cV}~~.
\ee
The Maxwell equations for the field strengths (including the Bianchi
identities) amount to the condition $\dd {\hat \cV}=0$, which we shall
call the {\em electromagnetic equation}. This formulation allows one
to view ${\hat \cV}$ as a self-dual vector-valued 2-form field defined
on $M$, where self-duality is understood with respect to the modified
Hodge operator $\star$ (which, beyond the space-time metric $g$,
depends on both $J$ and $\varphi$). Under an electro-magnetic duality
transformation, ${\hat \cV}$ changes by a symplectic transformation
acting on $\R^{2n}$ (where $\R^{2n}$ is endowed with its canonical
symplectic form \eqref{omegan}), while the gauge-kinetic function
$\tau$ changes by the corresponding matrix modular transformation
\cite{GaillardZumino,Andrianopoli}. To obtain the global mathematical formulation of this construction, it suffices to notice (see \cite{Vaisman,
  Berndt}) that a choice of symplectic basis of a real symplectic
vector space $(\cS_0,\omega_0)$ of dimension $2n$ induces a bijection
between the Siegel upper half space $\SH_n$ and the space of tamings
of $(\cS_0,\omega_0)$, which takes an element $\tau(p)\in \SH_n$ into
the taming $J(p)\in \End_\R(\cS_0)$ whose matrix ${\hat J}(p)$ in the
given symplectic basis is expressed in terms of $\tau$ through the
formula given above. This means that $\tau$ can be recovered uniquely
from the map $\cM\ni p\rightarrow J(p)\in \End_\R(\cS_0)$ and from a
choice of symplectic basis of $(\cS_0,\omega_0)$. When $J$ is fixed,
changes of symplectic basis of $(\cS_0,\omega_0)$ correspond to matrix
modular transformations of $\tau$ (see Appendix
\ref{app:space_tamings}). In particular, symplectic bases of
$(\cS_0,\omega_0)$ can be identified with ``duality frames''.  The
polarization condition is independent of the choice of ``duality
frame'' and the duality-invariant information contained in $\tau$ is
encoded by the smooth map $J:\cM\rightarrow \fJ_+(\cS_0,\omega_0)$,
where $\fJ_+(\cS_0,\omega_0)$ is the space of tamings of
$(\cS_0,\omega_0)$.

In the generalized theory constructed below, the symplectic vector
space $(\cS_0,\omega_0)$ is promoted to a {\em duality structure},
i.e. to a flat symplectic vector bundle $\Delta=(\cS,D,\omega)$
defined on $\cM$ and having typical fiber $(\cS_0,\omega_0)$, where
$D$ is the flat $\omega$-compatible connection on $\cS$. The taming of
$(\cS_0,\omega_0)$ is promoted to a taming $J$ of the symplectic
vector bundle $(\cS,\omega)$ (where $J$ need {\em not} be
covariantly-constant with respect to $D$). The choice of such a bundle
taming defines an {\em electromagnetic structure}
$\Xi=(\cS,D,J,\omega)$ on $\cM$. The vector-valued 2-form ${\hat \cV}$
is promoted to a 2-form $\cV\in \Omega^2(M,\cS^\varphi)$ defined on
space-time and valued in the $\varphi$-pullback of the vector bundle
$\cS$. This 2-form is subject to a generalized version of the
polarization condition which can be expressed in a frame-free manner
using the $\varphi$-pullback $J^\varphi$ of the bundle taming:
\be
\ast_g\cV=-J^\varphi \cV~~\mathrm{i.e.}~~\star_{g,J,\varphi}\cV=\cV~~,
\ee
where $\star_{g,J,\varphi}\eqdef \star_g\otimes
J^\varphi:\Omega^k(M,\cS)\rightarrow \Omega^{4-k}(M,\cS)$ is the Hodge
operator of $(M,g)$ ``twisted'' by the complex vector bundle
$(\cS^\varphi, J^\varphi)$. Moreover, $\cV$ obeys the following
version of the electromagnetic equations:
\be
\dd_{D^\varphi}\cV=0~~,
\ee
where $\dd_{D^\varphi}:\Omega^k(M,\cS^\varphi)\rightarrow
\Omega^{k+1}(M,\cS^\varphi)$ is the exterior differential of $M$
twisted by the pulled-back flat connection $D^\varphi$. In this
generalized theory, ``duality frames'' correspond to {\em local} flat
symplectic frames of $\Delta$. The generalized theory reduces to the
ordinary one when $D$ is a trivial flat connection, i.e. when $D$ has
trivial holonomy along each loop in $\cM$. In particular, it is
necessarily equivalent with the latter when $\cM$ is
simply-connected. When $\pi_1(M)\neq 1$ and $D$ has non-trivial
holonomy, no globally-defined ``duality frame'' exists and the
generalized theory is only {\em locally} equivalent with the ordinary
one.  In that case, the Abelian gauge theory coupled to a background
scalar field $\varphi$ can be understood by ``patching'' ordinary
theories using electro-magnetic duality transformations and in this
sense its global solutions are of ``U-fold'' type
(cf. \cite{GeometricUfolds}).

We now proceed to give the mathematical formulation, referring the
reader to Appendices \ref{app:space_tamings} and
\ref{app:bundle_tamings} for information on tamings of symplectic
vector spaces and of vector bundles. The proof of the fact that the
generalized theory defined below is locally equivalent with that
described in the supergravity literature can be found in Appendix
\ref{app:localeom}.

\subsection{Duality structures}

Let $N$ be a manifold.

\paragraph{The category of duality structures defined on $N$.}

\begin{Definition}
A {\em duality structure} (or {\em flat symplectic vector bundle})
defined on $N$ is a triple $\Delta=(\cS,D,\omega)$, where:
\begin{enumerate}[(a)]
\itemsep 0.0em
\item $\cS$ is a real vector bundle of even rank defined on $N$.
\item $\omega:\cS\times_N\cS\rightarrow\R_N$ is a symplectic pairing
  on $\cS$, i.e. a non-degenerate and antisymmetric fiberwise-bilinear
  pairing defined on $\cS$.
\item $D:\Gamma(N,\cS)\rightarrow \Omega^1(N,\cS)$ is a flat and
  $\omega$-compatible connection on $\cS$.
\end{enumerate}
The {\em rank} of a duality structure $\Delta=(\cS,D,\omega)$ is
defined to be the rank of the underlying vector bundle $\cS$.
\end{Definition}

\

\noindent Compatibility of $D$ with $\omega$ means that the following
condition holds for all $ X\in \cX(N)$ and all $ \xi_1,\xi_2\in
\Gamma(N,\cS)$:
\ben
\label{Domega}
\omega(D_X \xi_1,\xi_2)+\omega(\xi_1,D_X \xi_2)=X (\omega(\xi_1,\xi_2))~~.
\een
Notice that the rank of a duality structure is necessarily even. 

\begin{Definition}
Let $\Delta_i=(\cS_i,D_i,\omega_i)$ with $i=1,2$ be two duality
structures defined on $N$. A {\em morphism of duality structures} from
$\Delta_1$ to $\Delta_2$ is a based morphism of vector bundles $f\in
\Hom(\cS_1,\cS_2)$ which satisfies the following two conditions: 
\begin{enumerate}[1.]
\itemsep 0.0em
\item $f:(\cS_1,\omega_1)\rightarrow (\cS_2,\omega_2)$ is a morphism
  of symplectic vector bundles, i.e. we have:
\be
\omega_2(f(\xi),f(\xi'))=\omega_1(\xi,\xi')~~\forall \xi,\xi'\in \Gamma(N,S_1)~~.
\ee
\item $f:(\cS_1,D_1)\rightarrow (\cS_2,D_2)$ is a morphism of vector
  bundles with connection, i.e.  we have $D_2\circ
  f=(\id_{\Omega^1(N)}\otimes f)\circ D_1$, where
  $f:\Gamma(N,\cS_1)\rightarrow \Gamma(N,\cS_2)$ denotes the map
  induced on sections.
\end{enumerate}
\end{Definition}

\noindent With this notion of morphism, duality structures defined on
$N$ form a category which we denote by $\DS(N)$. Notice that any
morphism $f:(\cS_1,D_1,\omega_1)\rightarrow (\cS_2,D_2,\omega_2)$ of
duality structures is a monomorphism of vector bundles, i.e. we have
$\ker f=0$. For any $x\in N$, the fiber
$f_x:(\cS_{1,x},\omega_{1,x})\rightarrow (\cS_{2,x},\omega_{2,x})$ is
a morphism in the category $\Symp$ of finite-dimensional symplectic
vector spaces over $\R$ (see Appendix \ref{app:sympspaces}). In
particular, $f_x$ is injective for all $x\in N$.

\paragraph{The parallel transport functor.}

Let $\Delta=(\cS,D,\omega)$ be
a duality structure of rank $2n$ defined on $N$. Let $T^\Delta_\gamma$
be the parallel transport of $D$ along a path $\gamma\in\cP(N)$. The
compatibility condition \eqref{Domega} amounts to the requirement that
$\omega$ is preserved by the parallel transport of $D$ in the sense
that the following relation holds for any $\gamma\in \cP(N)$:
\be
\omega_{\gamma(1)}(T^\Delta_\gamma(s_1),T^\Delta_\gamma(s_2))=\omega_{\gamma(0)}(s_1,s_2)~~\forall s_1,s_2\in \cS_{\gamma(0)}~~.
\ee
Let $\Delta_i=(\cS_i,D_i,\omega_i)$ with $i=1,2$ be two duality
structures defined on $N$.  A morphism of symplectic vector bundles
$f:(\cS_1,\omega_1)\rightarrow (\cS_2,\omega_2)$ is a morphism of
duality structures from $(\cS_1,D_1,\omega_1)$ to
$(\cS_2,D_2,\omega_2)$ iff it satisfies the following condition:
\ben
\label{fT}
f_{\gamma(1)}\circ T^{\Delta_1}_\gamma=T^{\Delta_2}_\gamma\circ f_{\gamma(0)}~~.
\een
Recall that $\Pi_1(N)$ denotes the first homotopy groupoid of $N$. 

\begin{Definition}
The category of {\em symplectic local systems defined on $N$}
is the category $[\Pi_1(N),\Symp^\times]$ of functors from
$\Pi_1(N)$ to the groupoid $\Symp^\times$ and natural transformations
of such. 
\end{Definition} 

\noindent An object of $[\Pi_1(N),\Symp^\times]$ can be viewed as a
$\Symp^\times$-valued local system defined on $N$ as explained, for
example, in \cite{Whitehead}.

\begin{Definition}
The {\em parallel transport functor} of the duality structure $\Delta$
is the functor $T_{\Delta}:\Pi_1(N)\rightarrow\Symp^\times$ which
associates to any point $x\in N$ the symplectic vector space
$T_\Delta(x)=(\cS_x,\omega_x)$ and to any homotopy class $c\in
\Pi_1(N)$ with fixed initial point $x$ and fixed final point $y$ the
invertible symplectic morphism
$T_\Delta(c)=T_\gamma^\Delta:(\cS_x,\omega_x)\stackrel{\sim}{\rightarrow}
(\cS_y,\omega_y)$, where $\gamma\in \cP(N)$ is any path which
represents the class $c$.
\end{Definition}

\noindent Notice that $T_\Delta$ is well-defined since the connection
$D$ is flat and symplectic. Moreover, $T_\Delta$ determines the
duality structure $\Delta$. Relation \eqref{fT} implies that a
morphism of duality structures $f:\Delta_1\rightarrow \Delta_2$ gives
a natural transformation from $T_{\Delta_1}$ to $T_{\Delta_2}$. In
fact, the map which takes $\Delta$ into $T_\Delta$ and $f$ into the
corresponding natural transformation gives an equivalence between the
categories $\DS(N)$ and $[\Pi_1(N),\Symp^\times]$.

\paragraph{Classification of duality structures.}

Let $x$ be any point of $N$ and $\Delta=(\cS,D,\omega)$ be a duality
structure of rank $2n$ defined on $N$. The holonomy of $D$ along
piecewise-smooth loops based at $x$ induces a morphism of groups:
\ben
\label{holrep}
\hol_D^x:\pi_1(N,x)\rightarrow \Sp(\cS_x,\omega_x)~~,
\een
whose image:
\be
\label{hol}
\Hol_D^x\eqdef \hol_D^x(\pi_1(N,x))\subset \Sp(\cS_x,\omega_x)~~
\ee
is the holonomy group of $D$ at $x$. Noticing that
$\pi_1(N,x)=\End_{\Pi_1(N)}(x)$, we have:
\be
\hol_D^x=T_\Delta|_{\pi_1(N,x)}~~,~~\Hol_D^x=T_\Delta(\pi_1(N,x))~~.
\ee
In fact, $D$ is uniquely determined by its holonomy representation
$\hol_D^x$, while the symplectic gauge equivalence class of $D$ is
determined by the orbit of $\hol_D^x$ under the conjugation action of
$\Sp(\cS_x,\omega_x)$. Since $N$ is connected, the isomorphism type of
$\Hol_D^x$ does not depend on $x$ and we denote it by $\Hol_D$. Let
${\tilde N}$ be the universal covering space of $N$, viewed as a
principal $\pi_1(N,x)$-bundle over $N$, where $\pi_1(N,x)$ acts freely
and transitively on the fibers through deck transformations. Then
$\cS$ is isomorphic with the vector bundle associated to ${\tilde N}$
through the holonomy representation $\hol_D^x$:
\be
\cS\simeq {\tilde N}\times_{\hol_D^x}\cS_x~~.
\ee
Moreover, the symplectic pairing $\omega$ is induced by 
$\omega_x$ while the flat symplectic connection $D$ is induced by the
trivial flat connection on the trivial bundle ${\tilde N}\times
\cS_x$. It follows that isomorphism classes of
duality structures of rank $2n$ defined on $N$ are in
bijection with points of the symplectic character variety:
\be
C_{\pi_1(N)}(\Sp(2n,\R))\eqdef \Hom(\pi_1(N),\Sp(2n,\R))/\Sp(2n,\R)~~,
\ee
where $\Sp(2n,\R)$ acts by conjugation. 

\paragraph{Duality frames.}

\noindent Let $\Delta=(\cS,D,\omega)$ be a duality structure of rank $2n$ defined on $N$. 

\begin{Definition}
A {\em duality frame} of $\Delta$ defined on an open subset
$\cU\subset N$ is a local frame $\cE\eqdef (e_1,\ldots e_n,f_1,\ldots
f_n)$ of $\cS$ defined on $\cU$ which satisfies the following two
conditions:
\begin{enumerate}[(a)]
\itemsep 0.0em
\item $\cE$ is a symplectic frame of $(\cS,\omega)$, i.e.: 
\be
\omega(e_i,e_j)=\omega(f_i,f_j)=0~~,~~\omega(e_i,f_j)=\delta_{ij}~~\forall i,j=1\ldots n~~.
\ee
\item $\cE$ is a flat frame of $(\cS,D)$, i.e.: 
\be
D(e_i)=D(f_i)=0~~\forall i=1\ldots n~~.
\ee
\end{enumerate}
\end{Definition}

\begin{Remark} 
Let $\cU\subset N$ be an open subset and $x\in \cU$ be a point. Then
$\cU$ supports a duality frame of $\Delta$ iff
$\hol_D^x(j_\ast(\pi_1(\cU,x)))=1$, where
$j_\ast:\pi_1(\cU,x)\rightarrow \pi_1(N,x)$ is the morphism of groups
induced by the inclusion $\cU\subset N$ (this morphism need not be
injective !). In particular, any open subset $\cU$ of $N$ for which
$j_\ast(\pi_1(\cU,x))=0$ (for example, a simply-connected domain)
supports a duality frame.
\end{Remark}

\paragraph{Trivial duality structures.}

\begin{Definition}
A duality structure defined on $N$ is called {\em trivial} if it admits a
globally-defined duality frame, i.e.  a duality frame defined on all
of $N$.
\end{Definition}

\noindent A duality structure $\Delta=(\cS,D,\omega)$ is trivial iff
$D$ is a trivial flat connection, i.e. iff $\Hol_D$ is the trivial
group. Equivalently, $\Delta$ is trivial iff $T_\Delta$ is a trivial
symplectic local system, which means that $T_\Delta(\Hom_{\Pi_1(N)}(x,y))$ is a
singleton subset of $\Hom_{\Symp^\times}(T_\Delta(x),T_\Delta(y))$ for
any $x,y\in N$.  In this case, $D$ is determined up to a symplectic
gauge transformation and $\Delta$ is isomorphic with the duality
structure whose underlying vector bundle is the trivial bundle
$\R^{2n}_N=N\times \R^{2n}$, whose flat connection is the trivial
connection $\dd|_{\Omega^0(N)}\otimes \id_{\R^{2n}}$ and whose
symplectic pairing is the constant pairing induced by the canonical
symplectic pairing $\omega_0$ of $\R^{2n}$. Any duality structure
defined on a simply-connected manifold is necessarily trivial (in this
case, the symplectic character variety is reduced to a point). When
$N$ is not simply-connected, a duality structure $\Delta$ defined on
$N$ can be non-trivial even if the underlying symplectic vector bundle
$(\cS,\omega)$ is symplectically trivial; that situation is discussed
in Section \ref{sec:Strivial}. We refer the reader to Appendix
\ref{app:twistedconstructions} for certain twisted constructions
associated to duality structures.

\subsection{Electromagnetic structures}

Let $N$ be a manifold. This subsection assumes familiarity with the
notion of tamings (a.k.a. positive compatible complex structures)
\cite{Vaisman, Berndt, DuffSalamon} on symplectic vector spaces and
symplectic vector bundles, for which we refer the reader to
loc. cit. and to Appendices \ref{app:space_tamings} and
\ref{app:bundle_tamings}. 

\paragraph{The category of electromagnetic structures defined on $N$.}

Recall that a complex structure on a real vector bundle $\cS$ defined
over $N$ is an endomorphism $J\in \End_\R(\cS)$ such that
$J^2=-\id_\cS$. When $(\cS,\omega)$ is a symplectic vector bundle, a
complex structure $J$ on $\cS$ is called compatible with $\omega$ if
$\omega(J\xi_1,J\xi_2)=\omega(\xi_1,\xi_2)$ for all $\xi_1,\xi_2\in
\Gamma(N,\cS)$. A compatible complex structure $J$ induces a symmetric
bilinear pairing $Q=Q_{J,\omega}$ on $\cS$, defined through:
\ben
\label{Qdef}
Q(\xi_1,\xi_2)\eqdef \omega(J\xi_1,\xi_2)=-\omega(\xi_1,J\xi_2)~~
\een
and a Hermitian form $h=h_{J,\omega}\eqdef Q+\i \omega$ on the complex
vector bundle $(\cS,J)$.  A compatible complex structure is called a
{\em taming} of $(\cS,\omega)$ if $Q$ is positive-definite, which is
equivalent with the requirement that $h$ is a Hermitian scalar product
on the complex vector bundle $(\cS,J)$.

\begin{Remark}
The symplectic geometry literature \cite{Vaisman, Berndt, DuffSalamon}
generally uses the convention that a taming of $(\cS,\omega)$ is an
$\omega$-compatible complex structure $J$ for which the symmetric
pairing \eqref{Qdef} is {\em negative} definite. By contrast, the
convention used in the present paper is that common in the
Abelian variety literature \cite{BL}. The two conventions are related
by either of the substitutions $\omega\rightarrow -\omega$ or
$J\rightarrow -J$.
\end{Remark}

\begin{Definition}
An {\em electromagnetic structure} (or {\em tamed flat symplectic
  vector bundle}) defined on $N$ is a quadruplet $\Xi\eqdef
(\cS,D,J,\omega)$, where $\Xi_0\eqdef (\cS,D,\omega)$ is a duality
structure defined on $N$ (called the {\em duality structure underlying
  $\Xi$}) and $J$ is a taming of the symplectic vector bundle
$(\cS,\omega)$. The {\em rank} of $\Xi$ is defined to be the rank of
the underlying duality structure $\Xi_0$, i.e. the rank of the real
vector bundle $\cS$.
\end{Definition}

\noindent Notice that $J$ need not be covariantly-constant with
respect to $D$. Thus $D$ is generally {\em not} a complex-linear
connection on the Hermitian vector bundle $(\cS,J,h)$, but only a
connection which is symplectic relative to $\Im h$. The space
$\fJ_+(\cS,\omega)\subset \End_\R(\cS)$ of tamings of $(\cS,\omega)$
is a non-empty and contractible topological space \cite[Sec. 2.6]{DuffSalamon}.

\begin{Definition}
Let $\Xi_1=(\cS_1,D_1,J_1,\omega_1)$ and
$\Xi_2=(\cS_1,D_1,J_1,\omega_1)$ be two electromagnetic structures
defined on $N$. A {\em morphism of electromagnetic structures} from
$\Xi_1$ to $\Xi_2$ is a morphism of duality structures
$f:(\cS_1,D_1,\omega_1)\rightarrow (\cS_2,D_2,\omega_2)$ such that
$J_2\circ f =f\circ J_1$.
\end{Definition}

\noindent Thus a morphism of electromagnetic structures is a morphism
of the underlying duality structures which is also a complex-linear
morphism between the underlying complex vector spaces $(\cS_1,J_1)$
and $(\cS_2,J_2)$. Equivalently, it is a morphism between the
associated Hermitian vector spaces which intertwines the flat
connections $D_1$ and $D_2$. With this notion of morphism,
electromagnetic structures defined on $N$ form a category which we
denote by $\ES(N)$. This fibers over the category of duality
structures $\DS(N)$, the fiber at a duality structure
$\Delta=(\cS,D,\omega)$ being given by the set of those
electromagnetic structures $\Xi$ whose underlying duality structure
$\Xi_0$ equals $\Delta$. This fiber can be identified with the set of
tamings $\fJ_+(\cS,\omega)$. Accordingly, the set of isomorphism
classes of $\ES(N)$ fibers over the disjoint union of character
varieties $\sqcup_{n\geq 0}C_{\pi_1(N)}(\Sp(2n,\R))$.

\paragraph{The fundamental form of an electromagnetic structure.}

\noindent Let $\Xi=(\cS,D,J,\omega)$ be an electromagnetic structure
defined on $N$ and $h=Q+\i\omega$ be the Hermitian scalar product
defined by $\omega$ and $J$ on $\cS$. Let
$D^\ad:\Gamma(N,End(\cS))\rightarrow \Omega^1(N,End(\cS))$ be the
adjoint connection of $D$, i.e. the connection induced by $D$ on the
vector bundle $End(\cS)$. For any
$A\in \End(\cS)=\Gamma(N,End(\cS))$, we have:
\be
D^\ad(A)=D\circ A-(\id_{\Omega^1(N)}\otimes A)\circ D~~,
\ee
where $A$ is viewed as a map $A:\Gamma(N,\cS)\rightarrow \Gamma(N,\cS)$. 
Thus: 
\be
D^\ad_X(A)=D_X\circ A-A\circ D_X~~\forall X\in \cX(N)~~.
\ee

\begin{Definition}
The {\em fundamental form} of $\Xi$ is the $End(\cS)$-valued 1-form
defined on $N$ through the relation:
\ben
\label{ThetaDef}
\Theta_\Xi \eqdef D^\ad(J)\in \Omega^1(N,End(\cS))~~.
\een
\end{Definition}

\noindent We have $\Theta_\Xi(X)=D_X \circ J-J\circ D_X\in
\Gamma(N,End(\cS))$ for all $X\in \cX(N)$. Let $\Theta\eqdef
\Theta_\Xi$.

\begin{Proposition}
For all $X\in \cX(N)$ and all $\xi_1,\xi_2\in \Gamma(N,\cS)$, we have:
\beqan
J\circ \Theta(X) &=& -\Theta(X)\circ J\label{p1}\\
\omega(\Theta(X)\xi_1,\xi_2) &=& -\omega(\xi_1,\Theta(X)\xi_2) \label{p2}\\
Q(\Theta(X)\xi_1,\xi_2) &=& Q(\xi_1,\Theta(X)\xi_2) \label{p3}~~.
\eeqan
In particular, $\Theta(X)$ is an antilinear self-adjoint endomorphism
of the Hermitian vector bundle $(\cS,J,h)$:
\ben
\label{p4}
h(\Theta(X)\xi_1,\xi_2)=\overline{h(\xi_1,\Theta(X)\xi_2)}~~.
\een
\end{Proposition}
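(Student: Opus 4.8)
The plan is to establish the four identities in sequence, deriving the later ones from the earlier, using only \eqref{Qdef}, the $\omega$-compatibility \eqref{Domega} of $D$, and the taming relations $J^2=-\id_\cS$ and $\omega(J\cdot,J\cdot)=\omega(\cdot,\cdot)$ as inputs. The one structural fact needed at the outset is that $D^\ad$ obeys the Leibniz rule with respect to composition of endomorphisms, namely $D^\ad_X(A\circ B)=D^\ad_X(A)\circ B+A\circ D^\ad_X(B)$ for $A,B\in\End(\cS)$, together with $D^\ad_X(\id_\cS)=0$; both follow at once from $D^\ad_X(A)=D_X\circ A-A\circ D_X$ by cancelling the mixed terms.

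First I would prove \eqref{p1}. Applying $D^\ad_X$ to $J\circ J=-\id_\cS$ and using the Leibniz rule gives $D^\ad_X(J)\circ J+J\circ D^\ad_X(J)=0$, i.e. $\Theta(X)\circ J+J\circ\Theta(X)=0$, which is the anticommutation relation \eqref{p1}; this already exhibits $\Theta(X)$ as complex-antilinear for $J$.

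For \eqref{p2} I would show that $\omega$-adjunction intertwines with $D^\ad$. Writing $A^{\dagger}$ for the $\omega$-adjoint of $A\in\End(\cS)$, defined by $\omega(A\xi_1,\xi_2)=\omega(\xi_1,A^{\dagger}\xi_2)$, the compatibility condition \eqref{Domega} (equivalently, invariance of $\omega$ under the parallel transport of $D$) yields $(D^\ad_X A)^{\dagger}=D^\ad_X(A^{\dagger})$. This is the one computation requiring care with signs and is the only mildly delicate point: one expands both sides using \eqref{Domega} to move $D_X$ across $\omega$, and they agree term by term. Since $J$ is $\omega$-compatible with $J^2=-\id_\cS$, its $\omega$-adjoint is $J^{\dagger}=-J$ (this is precisely $\omega(J\xi_1,\xi_2)=-\omega(\xi_1,J\xi_2)$ from \eqref{Qdef}). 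Combining, $\Theta(X)^{\dagger}=(D^\ad_X J)^{\dagger}=D^\ad_X(J^{\dagger})=-D^\ad_X(J)=-\Theta(X)$, which is \eqref{p2}. Alternatively, \eqref{p2} follows by directly expanding $\omega(\Theta(X)\xi_1,\xi_2)+\omega(\xi_1,\Theta(X)\xi_2)$, using \eqref{Domega} to move $D_X$ across $\omega$ and \eqref{Qdef} to move $J$ across $\omega$, whereupon every term cancels pairwise.

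Finally I would deduce \eqref{p3} and the concluding statement \eqref{p4} purely formally. For \eqref{p3}, chaining \eqref{Qdef}, then \eqref{p1} and \eqref{p2}: $Q(\Theta(X)\xi_1,\xi_2)=\omega(J\Theta(X)\xi_1,\xi_2)=-\omega(\Theta(X)J\xi_1,\xi_2)=\omega(J\xi_1,\Theta(X)\xi_2)=Q(\xi_1,\Theta(X)\xi_2)$. For \eqref{p4}, substitute $h=Q+\i\omega$ and apply \eqref{p3} to the real part and \eqref{p2} to the imaginary part; since $Q$ and $\omega$ are real-valued, $h(\Theta(X)\xi_1,\xi_2)=Q(\xi_1,\Theta(X)\xi_2)-\i\omega(\xi_1,\Theta(X)\xi_2)=\overline{h(\xi_1,\Theta(X)\xi_2)}$, which together with the antilinearity from \eqref{p1} exhibits $\Theta(X)$ as an antilinear $h$-self-adjoint endomorphism. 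No genuine obstacle arises beyond the sign bookkeeping in the intertwining property underlying \eqref{p2}.
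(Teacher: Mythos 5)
Your proof is correct, and its overall architecture matches the paper's: \eqref{p1} by applying the derivation $D^\ad_X$ to $J^2=-\id_\cS$, \eqref{p3} by the same four-step chain through \eqref{Qdef}, \eqref{p1} and \eqref{p2}, and \eqref{p4} by splitting $h=Q+\i\omega$ into real and imaginary parts. The only genuine divergence is your treatment of \eqref{p2}: where the paper expands $\omega(\Theta(X)\xi_1,\xi_2)$ directly, repeatedly using \eqref{Domega} to move $D_X$ across $\omega$ and the $J$-invariance of $\omega$, you instead isolate a structural lemma --- that taking $\omega$-adjoints commutes with $D^\ad_X$, i.e. $(D^\ad_X A)^{\dagger}=D^\ad_X(A^{\dagger})$, which combined with $J^{\dagger}=-J$ gives $\Theta(X)^{\dagger}=-\Theta(X)$ immediately. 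Both arguments rest on exactly the same two inputs (\eqref{Domega} and \eqref{Qdef}), and your lemma's proof is essentially the paper's computation in disguise; what the lemma buys is reusability and a cleaner conceptual statement (the $\omega$-adjoint is $D^\ad$-parallel because $D$ is $\omega$-compatible), at the cost of one extra layer of formalism. Since you also note the direct expansion as an alternative, your proposal effectively contains the paper's proof as a special case.
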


\begin{proof}
The first relation follows by applying $D_X^{\ad}$ to the identity
$J^2=-\id_{\cS}$, using the fact that $D_X^{\ad}$ is a derivation
of the composition of $\End(\cS)$. For the second relation, we compute:
\beqa
&& \omega(\Theta(X)\xi_1,\xi_2)=\omega(D_X(J\xi_1),\xi_2)-\omega(J (D_X\xi_1),\xi_2)=\omega(D_X(J\xi_1),\xi_2)+\omega(D_X \xi_1,J\xi_2)=\nn\\
&& X[\omega(J\xi_1,\xi_2)+\omega(\xi_1,J\xi_2)]-\omega(J\xi_1,D_X\xi_2)-\omega(\xi_1,D_X(J\xi_2))=\omega(\xi_1,J(D_X\xi_2))-\omega(\xi_1,D_X(J\xi_2))\nn\\
&& =-\omega(\xi_1,\Theta(X)\xi_2)~~,
\eeqa
where we used $J$-invariance of $\omega$ as well as relation
\eqref{Domega}. For the third relation, we compute:
\be
Q(\Theta(X)\xi_1,\xi_2)=\omega(J\Theta(X)\xi_1,\xi_2)=-\omega(\Theta(X)J\xi_1,\xi_2)=\omega(J\xi_1,\Theta(X)\xi_2)=Q(\xi_1,\Theta(X)\xi_2)~~,
\ee
where in the second equality we used \eqref{p1} and in the third
equality we used \eqref{p2}. Property \eqref{p1} means that
$\Theta(X)$ is $J$-antilinear, while properties \eqref{p2} and
\eqref{p3} are equivalent with \eqref{p4}, which means that we have:
\ben
\label{p5}
\Theta(X)^\dagger=\Theta(X)~~\forall X\in \cX(N)~~,
\een
where $\Theta(X)^\dagger$ is the antilinear adjoint of $\Theta(X)$
with respect to $h$. Hence $\Theta(X)$ is an antilinear self-adjoint
endomorphism of the Hermitian vector bundle $(\cS,J,h)$.
\end{proof}

\begin{Definition} 
Let $\Theta_c\in \Omega^1(N,Hom(\cS\otimes \cS,\C))$ be defined through:
\ben
\label{hTd}
\Theta_c(X)(\xi_1,\xi_2)\eqdef h(\xi_1,\Theta(X)\xi_2) \in \cC^\infty(N,\C)~~,
\een
for all $X\in \cX(N)$ and all $\xi_1,\xi_2\in \Gamma(N,\cS)$. 
\end{Definition}

\noindent Let $\cS^\vee=Hom_\C(\cS,\C)$ be the complex-linear dual of
the complex vector bundle $(\cS,J)$. Let $\Sym^2_\C(\cS^\vee)$ denote the
second symmetric power (over $\C$) of the complex vector bundle $\cS^\vee$.

\begin{Proposition}
We have $\Theta_c\in \Omega^1(N,\Sym^2_\C(\cS^\vee))$. Thus, for every
$X\in \cX(N)$, the $\C$-valued pairing $\Theta_c(X)\in \Hom(\cS\otimes
\cS,\C)$ is symmetric and complex-bilinear with respect to the complex
structure $J$.
\end{Proposition}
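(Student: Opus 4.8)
The plan is to deduce the statement entirely from the structural properties of $\Theta(X)$ recorded in the preceding Proposition together with the standard features of the Hermitian scalar product $h=Q+\i\omega$. Recall that, by the paper's conventions, $h$ is complex-linear in its first argument, conjugate-linear in its second argument, and conjugate-symmetric, i.e. $h(\xi_1,\xi_2)=\overline{h(\xi_2,\xi_1)}$ (this last identity follows from $Q$ being real symmetric and $\omega$ being real antisymmetric). Fixing $X\in\cX(N)$, it suffices to show that the pairing $\Theta_c(X)(\xi_1,\xi_2)=h(\xi_1,\Theta(X)\xi_2)$ is complex-linear in each slot and symmetric, because fiberwise the symmetric $\C$-bilinear forms on the complex vector bundle $(\cS,J)$ are exactly the sections of $\Sym^2_\C(\cS^\vee)$; this identification then yields $\Theta_c\in\Omega^1(N,\Sym^2_\C(\cS^\vee))$.

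For complex-bilinearity I would treat the two arguments separately. Linearity in the first argument is immediate from complex-linearity of $h$ in its first slot:
\[
\Theta_c(X)(J\xi_1,\xi_2)=h(J\xi_1,\Theta(X)\xi_2)=\i\,h(\xi_1,\Theta(X)\xi_2)=\i\,\Theta_c(X)(\xi_1,\xi_2)~~.
\]
For the second argument the point is that a composite of two $\C$-antilinear operations is $\C$-linear: $\Theta(X)$ is $J$-antilinear by \eqref{p1}, and $h$ is conjugate-linear in its second slot. Concretely, \eqref{p1} gives $\Theta(X)(J\xi_2)=-J\,\Theta(X)\xi_2$, whence
\[
\Theta_c(X)(\xi_1,J\xi_2)=h(\xi_1,-J\,\Theta(X)\xi_2)=-(-\i)\,h(\xi_1,\Theta(X)\xi_2)=\i\,\Theta_c(X)(\xi_1,\xi_2)~~,
\]
which is the desired $\C$-linearity.

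For symmetry I would use the antilinear self-adjointness \eqref{p4} directly, combined with conjugate-symmetry of $h$:
\[
\Theta_c(X)(\xi_2,\xi_1)=h(\xi_2,\Theta(X)\xi_1)=\overline{h(\Theta(X)\xi_1,\xi_2)}=\overline{\overline{h(\xi_1,\Theta(X)\xi_2)}}=\Theta_c(X)(\xi_1,\xi_2)~~.
\]
As an alternative and more elementary route, one may split $\Theta_c(X)(\xi_1,\xi_2)=Q(\xi_1,\Theta(X)\xi_2)+\i\,\omega(\xi_1,\Theta(X)\xi_2)$ and check each summand: symmetry of $Q$ together with the $Q$-self-adjointness \eqref{p3} makes the real part symmetric, while antisymmetry of $\omega$ together with the $\omega$-anti-self-adjointness \eqref{p2} makes the imaginary part symmetric as well.

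I do not expect a genuine obstacle, since the statement is in essence a repackaging of the previous Proposition. The only point demanding care is the bookkeeping of conventions: one must keep straight in which slot $h$ is complex-linear versus conjugate-linear, and use the precise meaning of ``$J$-antilinear'' --- namely $\Theta(X)\circ J=-J\circ\Theta(X)$, which is exactly \eqref{p1} --- when commuting $J$ through $\Theta(X)$ in the second-slot computation. Once these conventions are fixed, both bilinearity and symmetry follow from the short computations above, and the pointwise identification of symmetric $\C$-bilinear forms on $(\cS,J)$ with $\Sym^2_\C(\cS^\vee)$ completes the argument.
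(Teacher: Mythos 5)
Your proof is correct and follows essentially the same route as the paper, whose proof is simply the one-line remark that the claim ``follows from \eqref{hTd} and \eqref{p4} using the fact that $h$ is $J$-Hermitian while $\Theta(X)$ is $J$-antilinear''. You have merely written out in full the slot-by-slot computations (linearity in the first argument from $\C$-linearity of $h$, linearity in the second from the composition of the two antilinear operations, symmetry from \eqref{p4} and conjugate-symmetry of $h$) that the paper leaves implicit, and all of these checks are accurate in the paper's conventions.
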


\begin{proof}
Follows from \eqref{hTd} and \eqref{p4} using the fact that $h$ is
$J$-Hermitian while $\Theta(X)$ is $J$-antilinear.
\end{proof}

\noindent Since $h$ is non-degenerate, the bundle-valued one-form
$\Theta_c\in \Omega^1(N,\Sym^2_\C(\cS^\vee))$ encodes the same
information as the fundamental form $\Theta$.

\paragraph{Unitary electromagnetic structures.}

\begin{Definition}
An electromagnetic structure $\Xi=(\cS,D,J,\omega)$ is called
{\em unitary} if $\Theta_\Xi=0$, i.e. if $J$ is covariantly constant
with respect to $D$:
\be
D_X \circ J=J\circ D_X~~\forall X\in \cX(N)~~
\ee
\end{Definition}

\

\noindent By definition, the {\em category $\UES(N)$ of unitary
  electromagnetic structures} on $N$ is the full sub-category of
$\ES(N)$ whose objects are the unitary electromagnetic structures.
When an electromagnetic structure $\Xi$ is unitary, the connection $D$
preserves both $J$ and $\omega$, so $(\cS,D,J,h_{J,\omega})$ is a flat
Hermitian vector bundle, i.e. a Hermitian vector bundle endowed with a
flat complex-linear connection which is compatible with the Hermitian
pairing. This gives an equivalence of categories between $\UES(N)$ and
the category of flat Hermitian vector bundles defined on $N$.  A
unitary electromagnetic structure defines a parallel transport functor
valued in the unit groupoid of the category $\Herm$ of
finite-dimensional Hermitian vector spaces. This gives an equivalence
of categories between $\UES(N)$ and the functor category
$[\pi_1(N),\Herm^\times]$.  Thus isomorphism classes of unitary
electromagnetic structures of rank $2n$ are in bijection with
isomorphism classes of flat Hermitian vector bundles and hence
correspond to points of the character variety:
\be
C_{\pi_1(N)}(\U(n))\eqdef \Hom(\pi_1(N),\U(n))/\U(n)~~,
\ee
where $\U(n)$ acts by conjugation. In particular, unitary
electromagnetic structures over $N$ admit a finite-dimensional
classification (unlike arbitrary electromagnetic structures). 

\subsection{Scalar-duality and scalar-electromagnetic structures}

\begin{Definition} 
A {\em scalar-duality structure} is an ordered system $(\Sigma,\Delta)$, where:
\begin{enumerate}[1.]
\itemsep 0.0em
\item $\Sigma=(\cM,\cG,\Phi)$ is a scalar structure
\item $\Delta=(\cS,D,\omega)$ is an electromagnetic structure defined
  on $\cM$.
\end{enumerate}
\end{Definition}

\begin{Definition} 
A {\em scalar-electromagnetic structure} is an ordered system
$\cD=(\Sigma,\Xi)$, where:
\begin{enumerate}[1.]
\itemsep 0.0em
\item $\Sigma=(\cM,\cG,\Phi)$ is a scalar structure
\item $\Xi=(\cS,D,J,\omega)$ is an electromagnetic structure defined
  on $\cM$.
\end{enumerate}
In this case, $\cD_0\eqdef (\Sigma,\Xi_0)$ (where
$\Xi_0=(\cS,D,\omega)$ is the underlying duality structure of $\Xi$)
is called the {\em underlying scalar-duality structure of $\cD$}.
\end{Definition}

\noindent Let $\cD$ be a scalar-electromagnetic structure as in the
definition and $\sharp_\cG:T^\ast\cM\rightarrow T \cM$ be the musical
isomorphism defined by $\cG$.

\begin{Definition}
An $End(\cS)$-valued vector field defined on $\cM$ is a smooth global
section of the vector bundle $T\cM\otimes End(\cS)$.
\end{Definition}

\noindent Let $\cX(M,End(\cS))\eqdef \Gamma(\cM,T\cM\otimes End(\cS))$
be the $\cC^\infty(\cM,\R)$-module of $End(\cS)$-valued vector fields
defined on $\cM$.

\begin{Definition}
The {\em fundamental field} of the scalar-electromagnetic structure
$\cD=(\Sigma,\Xi)$ is the $End(\cS)$-valued vector field:
\ben
\label{PsiDef}
\Psi_\cD \eqdef (\sharp_\cG\otimes \id_{End(\cS)})(\Theta_\Xi) \in \cX(\cM,End(\cS))~~,
\een
where $\Theta_\Xi$ is the fundamental form of the electromagnetic
structure $\Xi$.
\end{Definition}

\subsection{Positively-polarized $\cS^\varphi$-valued 2-forms}

Let $(M,g)$ be a Lorentzian four-manifold. Let $\cD=(\Sigma,\Xi)$ be a
scalar-electromagnetic structure with underlying scalar structure
$\Sigma=(\cM,\cG,\Phi)$ and underlying electromagnetic structure
$\Xi=(\cS,D,J, \omega)$. Let $\varphi:M\rightarrow\cM$ be a smooth map. 

\begin{Definition}
The $\varphi$-pullback of the electromagnetic structure $\Xi$ defined
on $\cM$ is the electromagnetic structure
$\Xi^\varphi\eqdef(\cS^\varphi, D^\varphi,J^\varphi, \omega^\varphi)$
defined on $M$, where $D^\varphi$ denotes the pull-back of the
connection $D$.
\end{Definition}

\noindent The Hodge operator $\ast_g:\wedge T^\ast M\rightarrow \wedge
T^\ast M$ of $(M,g)$ induces the endomorphism $\bast_g\eqdef
\ast_g\otimes \id_{\cS^{\varphi}}$ of the {\em twisted exterior bundle} 
$\wedge_M(\cS^\varphi)\eqdef \wedge T^\ast M\otimes \cS^{\varphi}$. For ease of notation we will sometimes denote $\bast_g$ simply by $\ast_g$.

\begin{Definition}
The {\em twisted Hodge operator} of $\Xi^\varphi$ is the bundle
endomorphism of $\wedge_M(\cS^\varphi)$ defined through:
\be
\star_{g,J^\varphi}\eqdef\ast_g\otimes J^\varphi=\ast_g\circ J^\varphi=J^\varphi\circ \ast_g~~.
\ee
\end{Definition}

\

\noindent Let: 
\ben
\label{alphadef}
\alpha \eqdef \oplus_{k=0}^4 (-1)^k\id_{\wedge^k T^\ast
  M}
\een 
be the {\em main automorphism} of the exterior bundle $\wedge T^\ast M$. We have:
\ben
\label{HodgeSquare}
\star_{g,J^\varphi}^2=\alpha\otimes \id_{\cS^\varphi}~~.
\een
The operator $\star_{g,J^\varphi}$ preserves the sub-bundle
$\wedge^2_N(\cS^\varphi)\eqdef \wedge^2T^\ast M\otimes \cS^\varphi$ of
the twisted exterior bundle, on which it squares to plus the
identity. Accordingly, we have a direct sum decomposition:
\be
\wedge^2 T^\ast M\otimes \cS^\varphi=(\wedge^2 T^\ast M\otimes \cS^\varphi)^+\oplus (\wedge^2 T^\ast M\otimes \cS^\varphi)^-~~,
\ee 
where $(\wedge^2 T^\ast M\otimes \cS^\varphi)^\pm$ are the sub-bundles
of eigenvectors of $\star_{g,J^\varphi}$ corresponding to the
eigenvalues $\pm 1$.

\begin{Definition}
An $\cS^\varphi$-valued 2-form $\eta\in \Omega^2(M,\cS^\varphi)$
defined on $M$ is called {\em positively polarized} with respect to
$g$ and $J^\varphi$ if it is a section of the vector bundle $(\wedge^2
T^\ast M\otimes \cS^\varphi)^+$, which amounts to the requirement that
it satisfies the {\em positive polarization condition}:
\ben
\label{2formpol}
\star_{g,J^\varphi} \eta=\eta~~\mathrm{i.e.}~~\ast_g \eta=-J^\varphi\eta~~.
\een
\end{Definition}

\

\noindent For any open subset $U$ of $M$, let $g_U\eqdef g|_U$,
$\varphi_U\eqdef \varphi|_U$ and:
\be
\iOmega^{\Xi, g, \varphi} (U)\eqdef \Gamma(U,(\wedge^2 T^\ast M\otimes \cS^\varphi)^+)=\{\eta\in \Omega^2(U,\cS^\varphi)|\star_{g,J^\varphi} \eta=\eta\}
\ee
denote the $\cC^\infty(U,\R)$-submodule of $\Omega^2(U,\cS^\varphi)$
consisting of those 2-forms defined on $U$ which are positively
polarized with respect to $g|_U$ and $J^\varphi|_U$. This defines a sheaf of
$\cC^\infty$-modules $\iOmega^{\Xi, g, \varphi}$ on $M$, namely the
sheaf of smooth sections of the bundle $(\wedge^2 T^\ast M\otimes
\cS^\varphi)^+$. The globally-defined and positively-polarized
$\cS^\varphi$-valued forms are the global sections of this
sheaf. Notice that $\eta\in \Omega^2(M,\cS^\varphi)$ is positively
polarized iff $\star_{g,J^\varphi} \eta$ is (because
$\star_{g,J^\varphi}$ squares to the identity on
$\Omega^2(M,\cS^\varphi)$).

\subsection{Generalized four-dimensional Einstein-Scalar-Maxwell theories}
\label{sec:completetheoryI}

Let $\cD=(\Sigma,\Xi)$ be a scalar-electromagnetic structure with
underlying scalar structure $\Sigma=(\cM,\cG,\Phi)$ and underlying
electromagnetic structure $\Xi=(\cS,D,J, \omega)$. Let $Q$ denote the
Euclidean scalar product induced by $\omega$ and $J$ on $\cS$ (see
\eqref{Qdef}). Let $M$ be a four-manifold.

\paragraph{Preparations.}

Let $\varphi:M\rightarrow \cM$ be a smooth map and $\Xi^\varphi\eqdef
(\cS^\varphi, D^\varphi,J^\varphi,\omega^\varphi)$ be the pulled-back
electromagnetic structure defined on $M$. Before describing our
models, we introduce certain operations which enter the geometric
formulation of the equations of motion. Let $\wedge_M\eqdef \wedge
T^\ast M$ be the exterior bundle of $M$ and
$\wedge_M(\cS^\varphi)\eqdef \wedge_M \otimes \cS^\varphi$ be the
twisted exterior bundle. Let $\alpha$ be the main automorphism of $\wedge_M$ 
(see \eqref{alphadef}). 

\begin{Definition} 
The {\em twisted wedge product} on $\wedge_M(\cS^\varphi)$ is the
$\wedge_M$-valued fiberwise bilinear pairing
$\bwedge_\omega:\wedge_M(\cS^\varphi)\times_M\wedge_M(\cS^\varphi)\rightarrow
\wedge_M$ which is determined uniquely by the condition (we write
$\bwedge_\omega$ in infix notation):
\ben
\label{bwedge}
(\rho_1\otimes \xi_1) \bwedge_{\omega} (\rho_2\otimes \xi_2)=\alpha(\rho_2)\omega(\xi_1,\xi_2) \rho_1\wedge \rho_2
\een
for all $\rho_1,\rho_2\in \Omega(M)$ and all $\xi_1,\xi_2\in
\Gamma(M,\cS^\varphi)$.
\end{Definition}

\noindent Notice that the twisted wedge product depends only on the
pulled-back symplectic vector bundle $(\cS^\varphi,\omega^\varphi)$.
More information about this operation can be found in Appendix
\ref{app:twistedwedge}. Given any vector bundle $\cT$ on $M$, we
trivially extend the twisted wedge product to a $\cT \otimes
\wedge_M$-valued pairing (denoted by the same symbol) between the
bundles $\cT \otimes \wedge_M(\cS^\varphi)$ and
$\wedge_M(\cS^\varphi)$. Thus:
\be
(t\otimes \eta_1)\bwedge_\omega \eta_2\eqdef  t\otimes (\eta_1\bwedge_\omega \eta_2)~~,~~\forall t\in \Gamma(M,\cT)~~\forall \eta_1,\eta_2\in \wedge_M(\cS^\varphi)~~.
\ee

\noindent Let $g\in \Met_{3,1}(M)$ be a Lorentzian metric on $M$. The {\em
  exterior pairing} $(~,~)_g$ is the pseudo-Euclidean metric induced
by $g$ on the exterior bundle $\wedge_M$.

\begin{Definition}
The {\em twisted exterior pairing} $(~,~):=(~,~)_{g,Q^\varphi}$ is the
unique pseudo-Euclidean scalar product on the twisted exterior bundle
$\wedge_M(\cS^\varphi)$ which satisfies:
\be
(\rho_1\otimes \xi_1,\rho_2\otimes \xi_2)_{g,Q^\varphi}=(\rho_1,\rho_2)_g Q^\varphi(\xi_1,\xi_2)~~
\ee
for any $\rho_1,\rho_2\in \Omega(M)$ and any $\xi_1,\xi_2\in
\Gamma(M,\cS^\varphi)$.
\end{Definition}

\noindent For any vector bundle $\cT$ defined on $M$, we trivially
extend the twisted exterior pairing to a $\cT$-valued pairing (denoted
by the same symbol) between the bundles $\cT \otimes
\wedge_M(\cS^\varphi)$ and $\wedge_M(\cS^\varphi)$. Thus:
\be
(t\otimes \eta_1,\eta_2)_{g,Q^\varphi} \eqdef  t\otimes (\eta_1,\eta_2)_{g,Q^\varphi}~~\forall t\in \Gamma(M,\cT)~~\forall \eta_1,\eta_2\in \wedge_M(\cS^\varphi)~~.
\ee

\noindent The {\em inner $g$-contraction of 2-tensors} is the bundle
morphism $\oslash_g:(\otimes^2T^\ast M)^{\otimes 2}\rightarrow
\otimes^2 T^\ast M$ which is uniquely determined by the condition:
\be
(\omega_1\otimes\omega_2)\oslash_g (\omega_3\otimes \omega_4)=(\omega_2,\omega_3)_g\omega_1\otimes \omega_4~~\forall \omega_1,\omega_2,\omega_3,\omega_4\in \Omega^1(M)~~.
\ee
Viewing $\wedge^2 T^\ast M$ as the sub-bundle of antisymmetric
2-tensors inside $\otimes^2 T^\ast M$, the morphism $\oslash_g$
restricts to a morphism $\wedge^2 T^\ast M \otimes \wedge^2 T^\ast
M\stackrel{\oslash_g}{\rightarrow} \otimes^2 T^\ast M$, which is called the 
{\em inner $g$-contraction of 2-forms}. 

\begin{Definition}
\label{def:twistedinner}
The {\em twisted inner contraction} of $\cS^\varphi$-valued 2-forms
is the unique morphism of vector bundles
$\loslash:=\loslash_{g,J,\omega,\varphi}:\wedge_M^2(\cS^\varphi)\times_M\wedge_M^2(\cS^\varphi)\rightarrow
\otimes^2(T^\ast M)$ which satisfies:
\be
(\rho_1\otimes \xi_1)\loslash_{\Xi,g,\varphi} (\rho_2\otimes \xi_2)= Q^\varphi(\xi_1,\xi_2)\rho_1\oslash_g\rho_2
\ee
for all $\rho_1,\rho_2\in \Omega^2(M)$ and all $\xi_2,\xi_2\in
\Gamma(M,\cS^\varphi)$. 
\end{Definition}

\begin{Remark}
For any $\rho\in \Omega^2(M)$, we have $\rho\oslash_g\rho\in
\Gamma(M,\Sym^2(T^\ast M))$. For any $\eta\in
\Omega^2(M,\cS^\varphi)$, we have $\eta\loslash \eta\in
\Gamma(M,\Sym^2(T^\ast M))$.
\end{Remark}

\noindent Let $\Psi\eqdef \Psi_\cD\in \Gamma(\cM,T\cM\otimes End(\cS))$ be the
fundamental field of $\cD$ and $\Psi^\varphi\in
\Gamma(M,(T\cM)^\varphi\otimes End(\cS^\varphi))$ be its pullback
through $\varphi$. 

\begin{Definition} 
Let $\eta\in \Omega^2(M,\cS^\varphi)$ be a $\cS^\varphi$-valued 2-form. 
The {\em $\eta$-average} of $\Psi^\varphi$ is the following section of $(T\cM)^\varphi$: 
\be
(\Psi^\varphi \eta,\eta)_{g,Q^\varphi}\in \Gamma(M,(T\cM)^\varphi)~~. 
\ee
\end{Definition}

\noindent Let $\star\eqdef \star_{g,J,\varphi}$. Using relation \eqref{starwedgegen}, we find:
\be
(\Psi^\varphi\eta,\eta)\nu=(\Psi^\varphi \eta)\bwedge_\omega\star \eta~~\forall \eta\in \Omega^2(M,\cS^\varphi)~~.
\ee
When $\eta$ is positively polarized, we have $\star\eta=\eta$ and hence:  
\be
(\Psi^\varphi\eta,\eta)\nu=(\Psi^\varphi \eta)\bwedge_\omega\eta~~\forall \eta\in \iOmega^{\Xi,g,\varphi}(M)~~.
\ee

\paragraph{General Einstein-Scalar-Maxwell models.}

\begin{Definition}
The {\em sheaf of Einstein-Scalar-Maxwell (ESM) configurations}
$\Conf_\cD$ determined by $\cD$ on $M$ is the sheaf of sets defined through:
\be
\Conf_\cD(U)\eqdef \{(g,\varphi,\cV)|g\in \Met_{3,1}(U),\varphi\in \cC^\infty(U,\cM),\cV\in \iOmega^{\Xi,g,\varphi}(U)\}~~
\ee
for all open subsets $U$ of $M$, with the obvious restriction maps. An
element $(g,\varphi,\cV)\in \Conf_\cD(U)$ is called a {\em local ESM
  configuration} of type $\cD$ defined on $U$. The {\em set of global
  configurations} determined by $\cD$ on $M$ is the set:
\be
\Conf_\cD(M)\eqdef \{(g,\varphi,\cV)|g\in \Met_{3,1}(M),\varphi\in \cC^\infty(M,\cM),\cV\in \iOmega^{\Xi,g,\varphi}(M)\}~~.
\ee
of global sections of this sheaf. An element $(g,\varphi,\cV)\in
\Conf_\cD(M)$ is called a {\em global ESM configuration} of type
$\cD$ defined on $M$.
\end{Definition}

\begin{Definition}
Let $U$ be an open subset of $M$. The {\em Einstein-Scalar-Maxwell
  (ESM) equations} determined by $\cD$ on $U$ are the following equations
for local configurations $(g,\varphi,\cV)\in \Conf_\cD(U)$:
\begin{enumerate}[1.]
\item The Einstein equations:
\ben
\label{eins}
\G(g) =\kappa\, \mathrm{T}_\cD(g,\varphi,\cV)~~,
\een
with energy-momentum tensor $\mathrm{T}_\cD$ given by:
\begin{equation}
\rT_\cD(g,\varphi,\cV) \eqdef  T_\Sigma(g,\varphi)  + 2~\cV\loslash \cV ~~.
\end{equation}
\item The scalar equations:
\ben
\label{sc}
\theta_\Sigma(g,\varphi) - \frac{1}{2} (\bast \cV , \Psi^\varphi\cV) = 0~~.
\een
\item The electromagnetic equations:
\ben
\label{em}
\dd_{D^{\varphi}}\cV = 0~~.
\een
\end{enumerate}
A {\em local ESM solution} of type $\cD$ defined on $U$ is a smooth
solution $(g, \varphi,\cV)$ of these equations which is defined on $U$. A {\em global ESM
  solution} of type $\cD$ is a smooth solution of these equations
which is defined on $M$. The {\em sheaf of local ESM solutions}
$\Sol_\cD$ determined by $\cD$ is the sheaf of sets defined on $M$ whose
set of sections on an open subset $U\subset M$ is the set of all local
solutions of type $\cD$ defined on $U$.
\end{Definition}

\

\noindent Notice that $\Sol_\cD$ is a subsheaf of $\Conf_\cD$. In
Appendix \ref{app:localeom}, we show that the equations listed above
reproduce {\em locally} the ordinary equations of motion used the
supergravity literature (see, for example \cite{Andrianopoli,Ortin})
upon choosing a local duality frame of $(\cS,D,\omega)$. We stress
that such a frame need not exist globally, since the flat connection
$D$ may have non-trivial holonomy when the scalar manifold $\cM$ is
not simply-connected. We refer the reader to Section
\ref{sec:relation} for comparison with the usual model discussed in
the supergravity literature --- a model which assumes a trivial
duality structure. One motivation to consider the models defined above
is that they generalize the bosonic sector of all four-dimensional
ungauged supergravity theories and hence they afford a generalization of 
such theories.

\subsection{The sheaves of scalar-electromagnetic configurations and
  solutions relative to a space-time metric}

Let $\cD=(\Sigma,\Xi)$ be a scalar-electromagnetic structure, where
$\Sigma=(\cM,\cG,\Phi)$ and $\Xi=(\cS,D,J,\omega)$. Let $g\in
\Met_{3,1}(M)$ be a Lorentzian metric on $M$.

\begin{Definition}
The {\em sheaf of local scalar-electromagnetic configurations}
$\Conf^g_\cD$ relative to $g$ is the sheaf of sets defined on $M$
whose set of sections above any open subset $U$ of $M$ is defined through: 
\be
\Conf^g_\cD(U) \eqdef \{(\varphi,\cV)|\varphi\in \cC^\infty(U,\cM),\cV\in \iOmega^{\Xi,g,\varphi}(U)\}~~.
\ee
The {\em set of global scalar-electromagnetic configurations} relative
to $g$ is the set $\Conf^g_\cD(M)$ of global sections of this sheaf.
\end{Definition}

\begin{Definition}
The {\em sheaf of local scalar-electromagnetic solutions} relative to
$g$ is the sheaf of sets defined on $M$ whose set of sections
$\Sol^g_\cD(U)$ on an open subset $U$ of $M$ is defined as the set of
all pairs $(\varphi,\cV)\in \Conf_\cD^g(U)$ such that $\varphi$ and
$\cV$ satisfy equations \eqref{sc} and \eqref{em} on $U$. The {\em set
  of global scalar-electromagnetic solutions} relative to $g$ is the
set $\Sol^g_\cD(M)$ of global sections of $\Sol^g_\cD$.
\end{Definition}

\subsection{Electromagnetic field strengths}

Let $\cD=(\Sigma,\Xi)$ be a scalar-electromagnetic structure with
underlying scalar structure $\Sigma=(\cM,\cG,\Phi)$ and underlying
electromagnetic structure $\Xi=(\cS,D,J, \omega)$.

\begin{Definition}
An {\em electromagnetic field strength} on $M$ with respect to $\cD$
and relative to $g\in \Met_{3,1}(M)$ and to a map $\varphi\in
\cC^\infty(M,\cM)$ is an $\cS^\varphi$-valued 2-form $\cV\in
\Omega^2(M,\cS^\varphi)$ which satisfies the following two conditions:
\begin{enumerate}[1.]
\itemsep 0.0em
\item $\cV$ is positively polarized with respect to $g$ and $J^\varphi$,
  i.e. we have $\star_{g,J^\varphi}\cV=\cV$.
\item $\cV$ is $\dd_{D^\varphi}$-closed, i.e.:
\ben
\label{GaugeEOM0}
\dd_{D^\varphi} \cV=0~~.
\een
\end{enumerate}
The second condition is called {\em the electromagnetic equation}. 
\end{Definition}

\

\noindent For any open subset $U$ of $M$, let:
\ben
\iOmega^{\Xi,g,\varphi}_\cl(U)\eqdef \{\cV\in \iOmega^{\Xi,g,\varphi}(U) | \dd_{D^\varphi} \cV=0\}
\een
denote the set of electromagnetic field strengths defined on $U$,
which is an (infinite-dimensional) subspace of the $\R$-vector space
$\iOmega^{\Xi,g,\varphi}(U)$. Together with the obvious restriction
maps, this defines the {\em sheaf of electromagnetic field strengths}
$\iOmega^{\Xi,g,\varphi}_\cl$ relative to $\varphi$ and $g$, which is a
locally-constant sheaf of $\R$-vector spaces
defined on $M$. Notice that $\cV$ is a globally-defined electromagnetic
field strength iff $\star_{g,J^\varphi} \cV$ is. The positive
polarization condition implies that an electromagnetic field strength
$\cV$ is both $\dd_{D^\varphi}$ and $\updelta_{D^\varphi}$-closed:
\be
\dd_{D^\varphi}\cV=\updelta_{D^\varphi}\cV=0
\ee
and hence it satisfies the twisted d'Alembert equation
$\Box_{D^\varphi}\cV=0$, where the operators $\updelta_{D^\varphi}$
and $\Box_{D^\varphi}$ are defined in Appendix
\ref{app:twistedconstructions}.

\section{Scalar-electromagnetic dualities and symmetries}
\label{sec:duality}

In this section, we describe the groups of scalar-electromagnetic
duality transformations and scalar-electromagnetic symmetries 
for ESM models with non-trivial duality structure. As we shall see
below, the definitions of these groups and of their actions on the
sets of configurations and solutions are rather subtle in the
general case. In particular, the duality group constructed below can
differ markedly from that commonly found in the supergravity
literature (which considers only trivial duality structures). This
section uses certain constructions involving unbased automorphisms
of vector bundles, for which we refer the reader to Appendix
\ref{app:unbased}.

\subsection{Covariantly-constant unbased automorphisms of a vector bundle with connection}

Let $\cM$ be a manifold and $\cS$ be a real vector bundle over $\cM$.
Let $\Aut^\ub(\cS)$ denote the group of unbased automorphisms of
$\cS$. Let $\cX(M,\cS)\eqdef \cX(M)\otimes_{\cC^\infty(\cM,\R)}
\Gamma(\cM,\cS)$ and $\Omega(\cM,\cS)\eqdef
\Omega(M)\otimes_{\cC^\infty(\cM,\R)} \Gamma(M,\cS)$ denote the
$\cC^\infty(\cM,\R)$-modules of $\cS$-valued vector fields and
$\cS$-valued differential forms defined on $\cM$. Any unbased
automorphism $f\in \Aut^\ub(\cS)$ covering a diffeomorphism $f_0\in
\Diff(\cM)$ determines an $\R$-linear map
$\mf:\Gamma(\cM,\cS)\rightarrow \Gamma(\cM,\cS)$ as well as
$\R$-linear push-forward and pull-back operations
$f_\ast:\cX(\cM,\cS)\rightarrow \cX(\cM,\cS)$, and
$f^\ast:\Omega(\cM,\cS)\rightarrow \Omega(\cM,\cS)$ and an $\R$-linear
map $\bAd(f):\Gamma(\cM, End(\cS))\rightarrow \Gamma(\cM, End(\cS))$,
whose precise definitions and properties are given in Appendix
\ref{app:unbased}. We have $f^\ast(\xi)=\mf^{-1}(\xi)$ for all $\xi\in
\Omega^0(\cM,\cS)=\Gamma(\cM,\cS)$.  On the other hand, $f$ induces a based
isomorphism ${\hat f}:\cS\rightarrow \cS^{f_0}$, where $\cS^{f_0}$ is
the $f_0$-pullback of $\cS$. Let $D:\Gamma(\cM,\cS)\rightarrow
\Omega^1(\cM,\cS)$ be a connection on $\cS$.

\begin{Definition}
An unbased automorphism $f\in \Aut^\ub(\cS)$ is called {\em
  covariantly constant} with respect to $D$ if:
\ben
\label{Df}
f^\ast\circ D=D\circ \mf^{-1}~~.
\een
\end{Definition}

\begin{Proposition}
The following statements are equivalent for $f\in \Aut^\ub(\cS)$:
\begin{enumerate}[(a)]
\itemsep 0.0em
\item $f$ is covariantly constant with respect to $D$.
\item We have: 
\ben
\label{DfX}
D_{(f_0)_\ast(X)} ~\mf(\xi)=\mf(D_X(\xi))~~,~~\forall \xi\in \Gamma(\cM,\cS)~~\forall X\in \cX(\cM)~~.
\een
\item We have: 
\ben
\label{Dhatf}
D^{f_0}\circ {\hat f}={\hat f}\circ D~~,
\een
i.e. ${\hat f}$ is an isomorphism of bundles with connection from
$(\cS,D)$ to $(\cS^{f_0}, D^{f_0})$, where
$D^{f_0}:\Gamma(\cM,\cS^{f_0})\rightarrow \Omega^1(\cM,\cS^{f_0})$ is
the pulled-back connection.
\end{enumerate}
\end{Proposition}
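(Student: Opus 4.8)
The plan is to prove the cyclic chain of equivalences (a)~$\Leftrightarrow$~(b)~$\Leftrightarrow$~(c) by reducing each condition to an identity between sections of $\cS$ and then unwinding the explicit pointwise descriptions of $\mf$, $f^\ast$, ${\hat f}$ and the pulled-back connection $D^{f_0}$ recorded in Appendix~\ref{app:unbased}. Writing $f_0\in\Diff(\cM)$ for the diffeomorphism covered by $f$, I will use throughout that $(\mf\xi)(p)=f(\xi(f_0^{-1}(p)))$, hence $(\mf^{-1}\eta)(p)=f^{-1}(\eta(f_0(p)))$ for $\eta\in\Gamma(\cM,\cS)$, together with the infinitesimal relation $\dd f_0(X_p)=((f_0)_\ast X)(f_0(p))$ for $X\in\cX(\cM)$, and the pull-back formula $(f^\ast\alpha)_p(v)=f^{-1}\big(\alpha_{f_0(p)}(\dd f_0(v))\big)$ for $\alpha\in\Omega^1(\cM,\cS)$.

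For the equivalence (a)~$\Leftrightarrow$~(b), I would test the defining relation \eqref{Df} against an arbitrary vector field $X\in\cX(\cM)$ and section $\eta\in\Gamma(\cM,\cS)$. The pull-back formula gives $(f^\ast(D\eta))_p(X_p)=f^{-1}\big((D_{\dd f_0(X_p)}\eta)(f_0(p))\big)$, which by the two displayed identities equals $\big(\mf^{-1}(D_{(f_0)_\ast X}\eta)\big)(p)$, while the right-hand side of \eqref{Df} evaluates to $D_X(\mf^{-1}\eta)$. Thus \eqref{Df} is equivalent to $\mf^{-1}(D_{(f_0)_\ast X}\eta)=D_X(\mf^{-1}\eta)$ for all $X$ and $\eta$. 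Since $\mf$ is a bijection of $\Gamma(\cM,\cS)$, I may replace $\eta$ by $\mf\eta$ and apply $\mf$ to both sides, which turns this identity into \eqref{DfX}. Hence (a)~$\Leftrightarrow$~(b).

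For (b)~$\Leftrightarrow$~(c), the bridge is the single identity ${\hat f}(\eta)=(\mf\eta)^{f_0}$ relating the based isomorphism ${\hat f}:\cS\rightarrow\cS^{f_0}$ to the section map $\mf$; it follows at once from the pointwise descriptions, since both sides send $p$ to $f(\eta(p))\in\cS_{f_0(p)}=(\cS^{f_0})_p$. Combining this with the standard characterization of the connection pulled back along $f_0$, namely $D^{f_0}_X(s^{f_0})=(D_{(f_0)_\ast X}s)^{f_0}$ for $s\in\Gamma(\cM,\cS)$, I can rewrite the two sides of \eqref{Dhatf} after contracting with $X$: the left-hand side becomes $\big(D_{(f_0)_\ast X}(\mf\eta)\big)^{f_0}$ and the right-hand side becomes $\big(\mf(D_X\eta)\big)^{f_0}$. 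Because $f_0$ is a diffeomorphism the topological pull-back $(\cdot)^{f_0}$ is injective, so \eqref{Dhatf} holds iff $D_{(f_0)_\ast X}(\mf\eta)=\mf(D_X\eta)$, i.e. iff (b) holds.

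I expect the main difficulty to be organizational rather than conceptual: at each step one must keep precise track of whether $f_0$ or $f_0^{-1}$, and $\dd f_0$ or $(f_0)_\ast$, appears, since the pull-back of forms, the push-forward of sections and the pull-back of the connection shift base points in opposite directions, and a single misplaced inverse would break the substitutions above. The clean way to avoid this is to fix the pointwise formulas for $\mf$ and $f^\ast$ from Appendix~\ref{app:unbased} at the outset and to isolate the one bridging identity ${\hat f}(\eta)=(\mf\eta)^{f_0}$; once these are in place, each of the two equivalences reduces to a one-line substitution followed by the remark that $\mf$ and $(\cdot)^{f_0}$ are injective.
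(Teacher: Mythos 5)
Your proof is correct and follows essentially the same route as the paper's: for (a)$\Leftrightarrow$(b) you rederive pointwise what the paper cites as relation \eqref{pullpush} and then perform the same substitution $\xi\mapsto\mf(\xi)$, and for (b)$\Leftrightarrow$(c) your bridging identity ${\hat f}(\xi)=(\mf(\xi))^{f_0}$ is just the paper's relation \eqref{mf}, combined with the same characterization of the pulled-back connection that the paper packages as \eqref{Dpsi}. The only cosmetic difference is that you conclude via injectivity of the topological pull-back $(\cdot)^{f_0}$ instead of conjugating by $\Phi_\cS(f_0)$, which is an equivalent reformulation.
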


\begin{proof}
Relation \eqref{pullpush} implies that \eqref{Df} is
equivalent with the condition
$\mf^{-1}(D_{(f_0)_\ast(X)}\xi)=D_X(\mf^{-1}(\xi))$ for all $X\in
\cX(\cM)$ and all $\xi\in \Gamma(\cM,\cS)$.  Multiplying both sides
with $\mf$ and replacing $\xi$ with $\mf(\xi)$ shows that this
relation is equivalent with \eqref{DfX}. For any $\psi\in \Diff(\cM)$,
the pull-back connection $D^\psi$ on $\cS^\psi$ is characterized by
the condition:
\be
D_X^\psi(\xi^\psi)=(D_{(\dd\psi)(X)}\xi)^\psi~~\forall \xi\in \Gamma(\cM,\cS)~~.
\ee
For any section $\xi\in \Gamma(\cM,\cS)$, we have $\xi^\psi=\Phi_\cS(\psi)^{-1}(\xi_\psi)$. 
Substituting this into the relation above shows that $D^\psi$ is given by: 
\ben
\label{Dpsi}
D^\psi_X=\Phi_\cS(\psi)^{-1}\circ D_{\psi_\ast(X)}\circ \Phi_\cS(\psi)~~\forall X\in \cX(\cM)~~,
\een
where we used the fact that $\psi_\ast(X)_\psi=(\dd\psi)(X)$.  Using
this relation and equation \eqref{hatf} shows that \eqref{Dhatf} is
equivalent with \eqref{DfX}.
\end{proof}

\noindent Covariantly constant automorphisms form a subgroup
$\Aut^\ub(\cS,D)$ of $\Aut^\ub(\cS)$. An unbased automorphism $f\in
\Aut^\ub(\cS)$ is covariantly constant iff it is compatible with the
parallel transport $T$ of $D$ in the sense that the following relation
holds for all $\gamma\in \cP(\cM)$:
\be
f_{\gamma(1)}\circ T_\gamma=T_{f_0\circ \gamma}\circ f_{\gamma(0)}~~.
\ee
Let $\Aut(\cS,D)$ denote the group of {\em based} automorphisms of
$\cS$ which are covariantly constant with respect to $D$. We have a
short exact sequence:
\be
1\rightarrow \Aut(\cS,D)\hookrightarrow \Aut^\ub(\cS,D)\longrightarrow \Diff^{\cS,D}(\cM)\rightarrow 1~~,
\ee
where $\Diff^{\cS,D}(\cM)$ is the subgroup of $\Diff^\cS(\cM)$
consisting of those diffeomorphisms of $\cM$ which admit a lift to a
covariantly constant unbased automorphism of $(\cS,D)$. When $D$ is
flat, the group $\Aut(\cS,D)$ of based covariantly constant
automorphisms is isomorphic with the commutant inside $\Aut(\cS_p)$ of
the holonomy group $\Hol_D^p$ of $D$ at any point $p\in \cM$.

\begin{Proposition} 
Let $D$ be a flat connection on $\cS$ and
$\dd_D:\Omega(\cM,\cS)\rightarrow \Omega(\cM,\cS)$ be the de Rham
operator twisted by $D$. For any $f\in \Aut^\ub(\cS,D)$, we have:
\beqan
&& f^\ast\circ \dd_D=\dd_D\circ f^\ast ~~\label{diffrel}\\
&& \dd_{D^{f_0}}\circ {\hat f} ={\hat f} \circ \dd_{D}\label{hatdiffrel}~~,
\eeqan
where ${\hat f}$ denotes $\id_{\Omega(\cM)}\otimes {\hat f}$. In
particular, $f^\ast$ induces $\R$-linear automorphisms of the twisted
cohomology spaces $H^k(M,\cS)$ while ${\hat f}$ induces $\R$-linear
isomorphisms $H^k(\cM,\cS)\stackrel{\sim}{\rightarrow}
H^k(\cM,\cS^{f_0})$.
\end{Proposition}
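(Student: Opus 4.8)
The plan is to prove both commutation relations by bootstrapping from their degree-zero instances via the Leibniz rule for the twisted de Rham operator, and then to read off the cohomological consequences from the general fact that a chain map descends to cohomology. It is worth noting at the outset that neither \eqref{diffrel} nor \eqref{hatdiffrel} actually uses flatness of $D$: flatness enters only at the very end, where it guarantees $\dd_D^2=0$ (and likewise $\dd_{D^{f_0}}^2=0$, since the pull-back of a flat connection is flat), so that the twisted cohomology spaces $H^k(\cM,\cS)$ and $H^k(\cM,\cS^{f_0})$ are defined.

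For \eqref{diffrel} I would proceed as follows. By $\R$-linearity it suffices to verify the identity on decomposable elements $\alpha\otimes\xi$ with $\alpha\in\Omega^k(\cM)$ and $\xi\in\Gamma(\cM,\cS)=\Omega^0(\cM,\cS)$. Two structural facts are needed, both supplied by Appendix \ref{app:unbased}: that the pull-back acts as $f^\ast(\alpha\otimes\xi)=f_0^\ast\alpha\otimes\mf^{-1}\xi$ and is multiplicative over scalar forms, and that on $0$-forms it reduces to $f^\ast|_{\Omega^0}=\mf^{-1}$. Combining the Leibniz rule $\dd_D(\alpha\otimes\xi)=\dd\alpha\otimes\xi+(-1)^{k}\alpha\wedge D\xi$ with the ordinary relation $f_0^\ast\circ\dd=\dd\circ f_0^\ast$ on scalar forms, expanding both $f^\ast\dd_D(\alpha\otimes\xi)$ and $\dd_D f^\ast(\alpha\otimes\xi)$ shows that the two sides agree term by term, the only nontrivial matching being between $f^\ast(D\xi)$ and $D(\mf^{-1}\xi)$. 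But these coincide precisely by the covariant-constancy hypothesis $f^\ast\circ D=D\circ\mf^{-1}$, i.e. by \eqref{Df}. Thus \eqref{diffrel} is nothing more than the Leibniz-extension of the defining relation of $\Aut^\ub(\cS,D)$.

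The proof of \eqref{hatdiffrel} is formally identical, with $\hat f=\id_{\Omega(\cM)}\otimes\hat f$ in place of $f^\ast$ and with the base case furnished by \eqref{Dhatf} instead of \eqref{Df}. Since $\hat f$ acts only on the $\cS$-factor and is $\cC^\infty(\cM,\R)$-linear over the base, it leaves the scalar form part untouched and commutes with $\dd$ there, so applying it to the Leibniz expansion reduces the claim to the single identity $\hat f(D\xi)=D^{f_0}(\hat f\xi)$ on sections, which is exactly \eqref{Dhatf}. I expect no genuine obstacle in either argument: the one point demanding care is the bookkeeping of the multiplicativity of $f^\ast$ (respectively $\hat f$) over $\Omega(\cM)$ together with the Koszul signs in the Leibniz rule, which is precisely why I would invoke the clean statement of the pull-back on bundle-valued forms from Appendix \ref{app:unbased} rather than re-derive it.

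Finally, the cohomological consequences are immediate. Relation \eqref{diffrel} says that $f^\ast$ is a chain endomorphism of the complex $(\Omega^\bullet(\cM,\cS),\dd_D)$, so it descends to $\R$-linear endomorphisms of each $H^k(\cM,\cS)$; since $f$ is an automorphism, $f^\ast$ is invertible (its inverse being induced by $f^{-1}$), whence the induced maps are automorphisms. Likewise \eqref{hatdiffrel} exhibits $\hat f$ as a chain map from $(\Omega^\bullet(\cM,\cS),\dd_D)$ to $(\Omega^\bullet(\cM,\cS^{f_0}),\dd_{D^{f_0}})$ which is bijective because the bundle morphism $\hat f:\cS\to\cS^{f_0}$ is an isomorphism, so it induces the asserted isomorphisms $H^k(\cM,\cS)\stackrel{\sim}{\rightarrow}H^k(\cM,\cS^{f_0})$.
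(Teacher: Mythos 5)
Your proof is correct and follows essentially the same route as the paper's: both verify the two identities on decomposable elements $\rho\otimes\xi$ via the Leibniz rule for $\dd_D$, reducing each to the defining relations \eqref{Df} and \eqref{Dhatf} respectively (using Proposition \ref{prop:twistedpullback} for the action of $f^\ast$ on bundle-valued forms), with the cohomological statements following since chain maps induced by automorphisms descend to (iso)morphisms in cohomology. Your added remark that flatness of $D$ enters only to make the twisted cohomology well-defined is accurate but does not change the argument.
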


\begin{proof}
For any $\rho\in \Omega^k(\cM)$ and $\xi\in \Gamma(M,\cS)$, we have: 
\be
\dd_D(\rho\otimes \xi)=(\dd \rho)\otimes \xi+(-1)^k \rho\wedge D\xi~~.
\ee
Using  \eqref{Df} and Proposition \ref{prop:twistedpullback}, this gives:
\beqa
f^\ast(\dd_D(\rho\otimes \xi)) &=& f_0^\ast(\dd\rho)\otimes \mf^{-1}(\xi)+(-1)^k f_0^\ast(\rho)\wedge f^\ast(D\xi)\\
&=& (\dd f_0^\ast(\rho))\otimes \mf^{-1}(\xi)+(-1)^k f_0^\ast(\rho)\wedge D(\mf^{-1}(\xi))=\dd_D(f^\ast(\rho\otimes \xi))~~.
\eeqa
We also compute:
\beqa
\dd_{D^{f_0}}({\hat f}(\rho\otimes \xi))&=&\dd_{D^{f_0}}(\rho\otimes {\hat f}(\xi))=(\dd \rho)\otimes {\hat f}(\xi)+(-1)^k\rho\wedge D^{f_0}({\hat f}(\xi))\nn\\
&=& (\dd \rho)\otimes {\hat f}(\xi)+(-1)^k\rho\wedge {\hat f}(D\xi)={\hat f}(\dd_D(\rho\otimes \xi))~~,
\eeqa
where we used relation \eqref{Dhatf}.
\end{proof} 

\noindent Let $D^\ad:\Gamma(\cM,End(\cS))\rightarrow \Omega^1(\cM,End(\cS))$ be the adjoint connection of $D$.

\begin{Proposition}
For any $f\in \Aut^\ub(\cS,D)$, we have:
\ben
\label{fDad}
\Ad(f)^\ast\circ D^\ad=D^\ad \circ \bAd(f^{-1})~~.
\een
\end{Proposition}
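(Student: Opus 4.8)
The plan is to recognize \eqref{fDad} as nothing but the assertion that the unbased automorphism $\Ad(f)$ of the bundle $End(\cS)$ is covariantly constant with respect to the adjoint connection $D^\ad$, and then to verify this through the criterion already established for covariantly constant automorphisms. Recall that $\Ad(f)$ acts on sections by $\bAd(f)(A)=\mf\circ A\circ \mf^{-1}$ (viewing $A\in \End(\cS)$ as an operator on $\Gamma(\cM,\cS)$), that $\Ad(f)^\ast$ is the pull-back operation on $End(\cS)$-valued forms induced by this unbased automorphism, and that $\bAd(f)^{-1}=\bAd(f^{-1})$ by functoriality of the section-level push-forward (Appendix \ref{app:unbased}). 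Under these identifications, relation \eqref{fDad} is exactly relation \eqref{Df} with the triple $(\cS,D,f)$ replaced by $(End(\cS),D^\ad,\Ad(f))$; that is, it states $\Ad(f)\in \Aut^\ub(End(\cS),D^\ad)$.

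By the equivalence (a)$\Leftrightarrow$(b) proved above, applied now to the bundle $End(\cS)$, the connection $D^\ad$, and the automorphism $\Ad(f)$ (which covers the same diffeomorphism $f_0$), it therefore suffices to verify the analogue of \eqref{DfX}, namely
\be
D^\ad_{(f_0)_\ast(X)}\big(\bAd(f)(A)\big)=\bAd(f)\big(D^\ad_X(A)\big)\quad\forall A\in \End(\cS),~\forall X\in \cX(\cM)~~.
\ee
First I would record the consequence of the covariant constancy of $f$ in operator form: writing $Y\eqdef (f_0)_\ast(X)$, relation \eqref{DfX} reads $D_Y\circ \mf=\mf\circ D_X$ on $\Gamma(\cM,\cS)$, whence also $\mf^{-1}\circ D_Y=D_X\circ \mf^{-1}$.

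The computation then collapses directly: using $D^\ad_Y(B)=D_Y\circ B-B\circ D_Y$ with $B=\bAd(f)(A)=\mf\circ A\circ \mf^{-1}$ together with the two operator identities above, one obtains
\be
D^\ad_Y(\mf\circ A\circ \mf^{-1})=\mf\circ D_X\circ A\circ \mf^{-1}-\mf\circ A\circ D_X\circ \mf^{-1}=\mf\circ (D_X\circ A-A\circ D_X)\circ \mf^{-1}=\bAd(f)(D^\ad_X(A))~~,
\ee
which is the desired identity. The steps are all routine, so I do not anticipate a genuine obstacle; the only point requiring care is the bookkeeping of the identifications in the first paragraph --- confirming that $\Ad(f)^\ast$ and $\bAd(f^{-1})$ are precisely the pull-back and inverse section-map attached to the unbased automorphism $\Ad(f)$ of $End(\cS)$, so that the general definition \eqref{Df} specializes correctly to \eqref{fDad}. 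Once this is granted, the two-line cancellation above finishes the argument.
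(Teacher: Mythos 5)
Your proof is correct, but it takes a genuinely different route from the paper's. The paper proves \eqref{fDad} by a direct computation at the level of $End(\cS)$-valued $1$-forms: it evaluates $\Ad(f)^\ast(D^\ad(T))$ on an arbitrary section $\xi$ using \eqref{Adfast}, expands the adjoint connection, pulls the endomorphism $T$ through $f^\ast$ via \eqref{fast}, applies the hypothesis in its form-level version \eqref{Df}, and finally recognizes the result as $(D^\ad\circ\bAd(f^{-1})(T))(\xi)$ via \eqref{bAdxi}. You instead observe that \eqref{fDad} is literally the statement that $\Ad(f)$ is a covariantly constant unbased automorphism of $(End(\cS),D^\ad)$ --- which rests on the identifications that the section-level action of $\Ad(f)$ is $\bAd(f)$ (its projection again being $f_0$) and that $\bAd(f)^{-1}=\bAd(f^{-1})$ --- and then invoke the equivalence (a)$\Leftrightarrow$(b) twice: once for $(\cS,D,f)$ to convert the hypothesis into the operator identity $D_{(f_0)_\ast(X)}\circ\mf=\mf\circ D_X$, and once for $(End(\cS),D^\ad,\Ad(f))$ to reduce the goal to the vector-field-level identity \eqref{DfX}, which you settle by a two-line commutator computation. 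Your route buys conceptual clarity (the proposition becomes an instance of ``covariant constancy passes to the adjoint bundle'') and entirely avoids manipulating pulled-back forms in the computation itself; the paper's route is more self-contained in that it only chains together the already-recorded identities \eqref{Adfast}, \eqref{fast}, \eqref{Df} and \eqref{bAdxi} without re-invoking the equivalence lemma for a second bundle. The bookkeeping point you flag --- that $\Ad(f)^\ast$ and $\bAd(f)$ are precisely the twisted pull-back and section-level action attached to the unbased automorphism $\Ad(f)$ of $End(\cS)$, so that \eqref{Df} specializes to \eqref{fDad} --- is indeed the only delicate step, and you handle it correctly.
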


\begin{proof}
For any $T\in \End(\cS)$, we have $D^\ad(T)\in
\Omega^1(\cM,End(\cS))$. For any $\xi\in \Gamma(\cM,\cS)$, relation
\eqref{Adfast} gives:
\be
\Ad(f)^\ast(D^\ad(T))(\xi)=f^\ast(D^\ad(T)(\mf(\xi)))=f^\ast(D(T\mf(\xi)))-f^\ast(TD(\mf(\xi)))=f^\ast(D(T\mf(\xi)))-(\mf^{-1}T\mf) f^\ast(D(\mf(\xi)))~~,
\ee
where in the last equality we used \eqref{fast}. Using \eqref{Df}, this gives: 
\be
\Ad(f)^\ast(D^\ad(T))(\xi)=D((\mf^{-1}T\mf)(\xi))-(\mf^{-1}T\mf) D(\xi)=(D\circ \bAd(f^{-1})(T)-\bAd(f^{-1})(T)\circ D)(\xi)=(D^\ad\circ \bAd(f^{-1})(T))(\xi)~,
\ee
where in the second equality we used \eqref{bAdxi}. Since $\xi$ and
$T$ are arbitrary, this gives \eqref{fDad}.
\end{proof}

\subsection{Symplectic, complex and unitary unbased automorphisms}

Let $(\cS,\omega)$ be a symplectic vector bundle over $\cM$ and $J$ be
a taming of $(\cS,\omega)$.

\begin{Definition}
An unbased automorphism $f\in \Aut^\ub(\cS)$ is called:
\begin{enumerate}[1.]
\item {\em symplectic}, if $f$ preserves $\omega$ in the sense that
  $f_p$ is a symplectic isomorphism from $(\cS_p,\omega_p)$ to
  $(\cS_{f_0(p)},\omega_{f_0(p)})$ for all $p\in \cM$.
\item {\em complex}, if $f$ preserves $J$ in the sense that $f_p\circ
  J_p=J_{f_0(p)}\circ f_p$ for all $p\in \cM$.
\item {\em unitary}, if $f$ is both symplectic and complex.
\end{enumerate}
\end{Definition}

\noindent Notice that $f$ is symplectic iff ${\hat f}$ is an
isomorphism of symplectic vector bundles from $(\cS,\omega)$ to
$(\cS^{f_0},\omega^{f_0})$ and $f$ is complex iff ${\hat f}$ is an
isomorphism of complex vector bundles from $(\cS,J)$ to
$(\cS^{f_0},J^{f_0})$. Finally, $f$ is unitary iff it is an
isomorphism of tamed symplectic vector bundles from $(\cS,J,\omega)$
to $(\cS^{f_0},J^{f_0},\omega^{f_0})$. Relation \eqref{bAdf2} implies
that $f$ is complex iff $\bAd(f)(J)=J$. Symplectic, complex and
unitary automorphisms form subgroups of $\Aut^\ub(\cS)$, which we
denote by respectively by $\Aut^\ub(\cS,\omega)$, $\Aut^\ub(\cS,J)$
and $\Aut^\ub(\cS,J,\omega)$. We have:
\be
\Aut^\ub(\cS,J,\omega)=\Aut^\ub(\cS,\omega)\cap \Aut^\ub(\cS,J)~~.
\ee
For any $f\in \Aut^\ub(\cS)$ and any complex structure $J$ of $\cS$,
the endomorphism $\bAd(f)(J)$ (see Appendix \ref{app:unbased}) is
again a complex structure of $\cS$. For any $f\in
\Aut^\ub(\cS,\omega)$ and any taming $J$ of $(\cS,\omega)$, the
endomorphism $\bAd(f)(J)$ is a taming of $(\cS,\omega)$ while
$\Ad({\hat f})(J)$ is a taming of $(\cS^{f_0},\omega^{f_0})$.
Relation \eqref{mfp2} implies that $f\in \Aut^\ub(\cS)$ is symplectic 
iff the following condition is satisfied: 
\ben
\label{fsymp}
\omega(\mf(\xi_1),\mf(\xi_2))=\omega(\xi_1,\xi_2)\circ f_0^{-1}~~,~~\forall \xi_1,\xi_2\in \Gamma(\cM,\cS)~~.
\een

\subsection{Symmetries of duality and electromagnetic structures}

Let $\Delta=(\cS,D,\omega)$ be a duality structure on $\cM$ and $J$ be a taming
of $(\cS,\omega)$. Let $\Xi=(\cS,D,J,\omega)$ be the corresponding electromagnetic
structure with underlying duality structure $\Xi_0=\Delta$.

\begin{Definition}
An {\em unbased} automorphism $f\in \Aut^\ub(\cS)$ is called:
\begin{enumerate}[1.]
\itemsep 0.0em
\item A {\em symmetry of the duality structure} $\Delta$, if $f$ is
  symplectic with respect to $\omega$ and covariantly constant with
  respect to $D$.
\item A {\em symmetry of the electromagnetic structure} $\Xi$, if $f$
  is complex with respect to $J$ and is a symmetry of the duality
  structure $\Delta$, i.e. if $f$ is unitary and covariantly-constant
  with respect to $D$.
\end{enumerate}
\end{Definition}

\

\noindent Let $\Aut^\ub(\Delta)=\Aut^\ub(\cS,D,\omega)$ and
$\Aut^\ub(\Xi)=\Aut^\ub(\cS,D,J,\omega)$ denote the groups of
symmetries of $\Delta$ and $\Xi$. We have:
\beqa
&& \Aut^\ub(\Xi)=\Aut^\ub(\Xi_0)\cap \Aut(\cS,J)=\Aut^\ub(\cS,D)\cap \Aut^\ub(\cS,J,\omega)\\
&& \Aut^\ub(\Delta)=\Aut^\ub(\cS,\omega)\cap \Aut^\ub(\cS,D)~~.
\eeqa

\subsection{The action of $\Aut^\ub(\Delta)$ on electromagnetic and scalar-electromagnetic structures}

Let $\Delta=(\cS,D,\omega)$ be a fixed duality structure on
$\cM$. Given a taming $J$ of $(\cS,\omega)$ and a symplectic
automorphism $f\in \Aut^\ub(\cS,\omega)$, the endomorphism
$\bAd(f)(J)$ is again a taming of $(\cS,\omega)$. Hence for any
electromagnetic structure $\Xi=(\cS,D,J,\omega)$ having $\Delta$ as
its underlying duality structure, the quadruplet:
\ben
\label{Xif}
\Xi_f\eqdef (\cS,D,\bAd(f)(J),\omega)
\een
is again an electromagnetic structure having $\Delta$ as its
underlying duality structure.  This defines a left action of the group
$\Aut^\ub(\cS,\omega)$ on the set $\ES_{\Delta}(\cM)\simeq
\fJ_+(\cS,\omega)$ of all electromagnetic structures whose underlying
duality structure equals $\Delta$. Let $Q_{J,\omega}$ denote the
Euclidean scalar product induced on $\cS$ by $\omega$ and $J$ (see
\eqref{Qdef}).

\begin{Proposition}
For any $f\in \Aut^\ub(\cS,\omega)$, we have: 
\ben
\label{Qtf}
Q_{\bAd(f)(J),\omega}^{f_0}(\hat{f}(\xi_1), \hat{f}(\xi_2)) =Q_{J,\omega}(\xi_1,\xi_2)~~,~~\forall \xi_1,\xi_2\in \Gamma(\cM,\cS)~~,
\een
where $Q_{\bAd(f)(J),\omega}^{f_0}\eqdef
(Q_{\bAd(f)(J),\omega})^{f_0}:\cS^{f_0}\times_\cM \cS^{f_0}\rightarrow
\R_\cM$ is the $f_0$-pullback of $Q_{\bAd(J),\omega}$.
\end{Proposition}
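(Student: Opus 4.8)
The plan is to reduce the identity \eqref{Qtf} to a short algebraic manipulation at the level of sections, chasing the definition \eqref{Qdef} of $Q$ together with the behaviour of $\bAd(f)$, $\hat{f}$ and $\mf$ recorded in Appendix \ref{app:unbased}. The two inputs I will lean on are: (i) the description of the adjoint action on the taming, namely that $\bAd(f)(J)=\mf\circ J\circ \mf^{-1}$ as an $\R$-linear operator on $\Gamma(\cM,\cS)$ (this is exactly the form of $\bAd$ extracted in the proof of \eqref{fDad}); and (ii) the symplectic condition \eqref{fsymp}, which states $\omega(\mf(\xi_1),\mf(\xi_2))=\omega(\xi_1,\xi_2)\circ f_0^{-1}$ for $f\in \Aut^\ub(\cS,\omega)$.

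First I would translate the left-hand side of \eqref{Qtf}, which is phrased via the based isomorphism $\hat{f}:\cS\to\cS^{f_0}$ and the pullback form $Q^{f_0}_{\bAd(f)(J),\omega}$, back into an expression involving only $\mf$ and the original form on $\cS$. Using the relation between $\hat f$ and $\mf$ from Appendix \ref{app:unbased} (fibrewise $\hat f_p=f_p$ and $(\mf\xi)(f_0(p))=f_p(\xi(p))$, so that $\hat f(\xi)=(\mf\xi)^{f_0}$), together with the defining property of the $f_0$-pullback of a fibrewise bilinear pairing, $B^{f_0}(\eta_1^{f_0},\eta_2^{f_0})=B(\eta_1,\eta_2)\circ f_0$, I obtain
\[
Q^{f_0}_{\bAd(f)(J),\omega}(\hat f(\xi_1),\hat f(\xi_2))=Q_{\bAd(f)(J),\omega}(\mf\xi_1,\mf\xi_2)\circ f_0~~.
\]

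Next I would expand the right-hand side using \eqref{Qdef} and input (i): $Q_{\bAd(f)(J),\omega}(\mf\xi_1,\mf\xi_2)=\omega(\bAd(f)(J)\,\mf\xi_1,\mf\xi_2)=\omega(\mf J\mf^{-1}\mf\xi_1,\mf\xi_2)=\omega(\mf(J\xi_1),\mf\xi_2)$, where the inner $\mf^{-1}\mf=\id$ collapses. Applying \eqref{fsymp} to the pair $(J\xi_1,\xi_2)$ gives $\omega(\mf(J\xi_1),\mf\xi_2)=\omega(J\xi_1,\xi_2)\circ f_0^{-1}=Q_{J,\omega}(\xi_1,\xi_2)\circ f_0^{-1}$. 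Composing with $f_0$ as demanded by the previous display makes $f_0^{-1}$ and $f_0$ cancel, yielding $Q_{J,\omega}(\xi_1,\xi_2)$, which is exactly \eqref{Qtf}.

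The computation is routine once the conventions are pinned down, so the main obstacle is purely bookkeeping: using the direction of $\bAd$ and the identification $\hat f(\xi)=(\mf\xi)^{f_0}$ consistently, and ensuring the factors of $f_0$ and $f_0^{-1}$ assemble to the identity rather than to $f_0^2$. A safe alternative I would keep in reserve, should the section-level conventions prove awkward, is to verify \eqref{Qtf} fibrewise at a point $p\in\cM$: there $(\bAd(f)(J))_{f_0(p)}=f_p\circ J_p\circ f_p^{-1}$ and $\hat f_p=f_p$, so the outer copies of $f_p$ and $f_p^{-1}$ cancel and one is left with $\omega_{f_0(p)}(f_p J_p\xi_1(p),f_p\xi_2(p))$; the fibrewise symplectic property $\omega_{f_0(p)}(f_p u,f_p v)=\omega_p(u,v)$ then reduces this immediately to $\omega_p(J_p\xi_1(p),\xi_2(p))=(Q_{J,\omega})_p(\xi_1(p),\xi_2(p))$. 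Both routes are short, but the fibrewise one makes the cancellations most transparent.
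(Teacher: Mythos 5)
Your proposal is correct and follows essentially the same route as the paper's proof: both rest on the identity $\bAd(f)(J)(\mf\xi)=\mf(J\xi)$ (relation \eqref{bAdxi}), the symplectic condition \eqref{fsymp}, and the translation $\hat f(\xi)=\mf(\xi)^{f_0}$ (relation \eqref{mf}) together with the pullback property of pairings; you merely perform the $\hat f\to\mf$ translation first and the $\omega$-algebra second, while the paper does it in the opposite order, and the composition with $f_0$ cancels the $f_0^{-1}$ in both cases. Your fibrewise backup via \eqref{bAdf2} is also valid, but the main argument already matches the paper.
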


\begin{proof}
For any $\xi_1,\xi_2\in \Gamma(\cM,\cS)$, we have:
\be
Q_{\bAd(f)(J),\omega}(\mf(\xi_1),\mf(\xi_2))=\omega(\bAd(f)(J)(\mf(\xi_1)),\mf(\xi_2))=\omega(\mf(J\xi_1),\mf(\xi_2))=\omega(J\xi_1,\xi_2)\circ f_0^{-1}=Q_{J,\omega}(\xi_1,\xi_2)\circ f_0^{-1}~~,
\ee
where we used \eqref{bAdxi} and \eqref{fsymp}. Using relation \eqref{mf}, this gives:
\be
Q_{\bAd(f)(J),\omega}^{f_0}(\hat{f}(\xi_1), \hat{f}(\xi_2))=Q_{\bAd(f)(J),\omega}^{f_0}(\mf(\xi_1)^{f_0}, \mf(\xi_2)^{f_0})=Q_{\bAd(f)(J),\omega}(\mf(\xi_1),\mf(\xi_2))\circ f_0=Q_{J,\omega}(\xi_1,\xi_2)~~.
\ee
\end{proof}

\noindent The action of $\Aut^\ub(\cS,\omega)$ on $\ES_\Delta(\cM)$
restricts to an action of the subgroup $\Aut^\ub(\Delta)\subset
\Aut^\ub(\cS,\omega)$ on $\ES_\Delta(\cM)$.  For $f\in
\Aut^\ub(\Delta)$, the fundamental forms $\Theta_\Xi,\Theta_{\Xi_f}\in
\Omega^1(\cM,End(\cS))$ of $\Xi$ and $\Xi_f$ are related as follows:

\begin{Proposition}
For any $f\in \Aut^\ub(\Delta)$, we have: 
\ben
\label{Thetafast}
\Theta_{\Xi_f}=\Ad(f^{-1})^\ast(\Theta_\Xi)~~.
\een
\end{Proposition}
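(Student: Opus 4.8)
The plan is to reduce the statement to the already-established relation \eqref{fDad}, applied with $f$ replaced by $f^{-1}$. First I would unwind the two definitions involved. By \eqref{Xif}, the electromagnetic structure $\Xi_f$ has underlying duality structure $\Delta$ and taming $\bAd(f)(J)$; hence, by the definition \eqref{ThetaDef} of the fundamental form, $\Theta_{\Xi_f}=D^\ad(\bAd(f)(J))$, while $\Theta_\Xi=D^\ad(J)$. The goal is therefore to establish the operator-level identity $D^\ad(\bAd(f)(J))=\Ad(f^{-1})^\ast(D^\ad(J))$.

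Next I would invoke the preceding Proposition. Since $f\in\Aut^\ub(\Delta)=\Aut^\ub(\cS,D,\omega)\subset\Aut^\ub(\cS,D)$ and the covariantly-constant unbased automorphisms form a group, the inverse $f^{-1}$ again lies in $\Aut^\ub(\cS,D)$. Relation \eqref{fDad} thus applies verbatim with $f$ replaced by $f^{-1}$, yielding
\[
\Ad(f^{-1})^\ast\circ D^\ad=D^\ad\circ \bAd(f)
\]
(using that $(f^{-1})^{-1}=f$, so $\bAd((f^{-1})^{-1})=\bAd(f)$). Evaluating both sides of this identity on the endomorphism $J\in\End(\cS)$ gives $\Ad(f^{-1})^\ast(\Theta_\Xi)=D^\ad(\bAd(f)(J))=\Theta_{\Xi_f}$, which is exactly \eqref{Thetafast}.

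The computation is thus a one-line substitution once the definitions are unwound, and I expect no serious obstacle. The only points requiring care are bookkeeping ones: verifying that $f^{-1}\in\Aut^\ub(\cS,D)$ so that \eqref{fDad} is legitimately applicable to it (guaranteed by the group structure of $\Aut^\ub(\cS,D)$), and tracking the inversion so that the right-hand side of \eqref{fDad} produces $\bAd(f)$ rather than $\bAd(f^{-1})$. I would also note explicitly that \eqref{fDad} holds for an \emph{arbitrary} section of $End(\cS)$, so applying it to the taming $J$ --- which need not be covariantly constant with respect to $D$ --- is fully justified.
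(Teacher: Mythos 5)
Your proposal is correct and follows essentially the same route as the paper's own proof: both substitute $f^{-1}$ for $f$ in \eqref{fDad} to obtain $\Ad(f^{-1})^\ast\circ D^\ad=D^\ad\circ \bAd(f)$ and then evaluate on $J$. The extra bookkeeping you flag (that $f^{-1}$ remains covariantly constant by the group structure of $\Aut^\ub(\cS,D)$, and that \eqref{fDad} holds for arbitrary sections of $End(\cS)$) is sound and merely makes explicit what the paper leaves implicit.
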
 

\begin{proof}
Replacing $f$ with $f^{-1}$ in relation \eqref{fDad} gives $\Ad(f^{-1})^\ast\circ D^\ad=D^\ad\circ
\bAd(f)$. Thus:
\beqa
\Ad(f^{-1})^\ast(\Theta_{\Xi})&=& \Ad(f^{-1})^\ast (D^\ad(J))=D^\ad(\bAd(f)(J))=\Theta_{\Xi_f}~~.
\eeqa
\end{proof}

\noindent Let $\cD=(\Sigma,\Xi)$ be a scalar-electromagnetic
structure, where $\Sigma=(\cM,\cG,\Phi)$ and $\Xi=(\cS,D,J, \omega)$,
with underlying duality structure $\Delta=(\cS,D,\omega)$.  For any
$f\in \Aut^\ub(\Delta)$, consider the scalar-electromagnetic structure:
\ben
\label{cDfDef}
\cD_f\eqdef (\Sigma_f, \Xi_f)~~,
\een
where: 
\ben
\label{SigmafDef}
\Sigma_f\eqdef (\cM,(f_0^{-1})^\ast(\cG),\Phi\circ f_0^{-1})~~.
\een
This defines a left action of $\Aut^\ub(\Delta)$ on the set of
scalar-electromagnetic structures whose scalar manifold is $\cM$ and
whose underlying duality structure is $\Delta$. The fundamental fields
of $\cD$ and $\cD_f$ are related as follows:

\begin{Proposition}
For any $f\in \Aut^\ub(\Delta)$, we have: 
\ben
\label{Psitf}
\Psi_{\cD_f}=\Ad(f)_\ast(\Psi_\cD)~~.
\een
\end{Proposition}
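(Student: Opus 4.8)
The plan is to reduce the claim to the already-established transformation law \eqref{Thetafast} for the fundamental form, combined with a single compatibility between the musical isomorphism and the diffeomorphism $f_0$. Recall that by definition $\Psi_{\cD_f}=(\sharp_{\cG_f}\otimes \id_{End(\cS)})(\Theta_{\Xi_f})$, where $\cG_f\eqdef (f_0^{-1})^\ast(\cG)$ is the scalar metric of $\Sigma_f$ and $\Theta_{\Xi_f}$ is the fundamental form of $\Xi_f$. Invoking \eqref{Thetafast}, I would first replace $\Theta_{\Xi_f}$ by $\Ad(f^{-1})^\ast(\Theta_\Xi)$, so that the entire computation is expressed in terms of objects attached to $\cD$ rather than to $\cD_f$.

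Next I would unwind the two transport operations into their tensor factors. On the $End(\cS)$-valued $1$-form $\Theta_\Xi$, the pull-back $\Ad(f^{-1})^\ast$ acts as $(f_0^{-1})^\ast$ on the $T^\ast\cM$ factor and as $\bAd(f)$ on the $End(\cS)$ factor; this matches the action implicit in \eqref{Thetafast}, under which $J$ is sent to $\bAd(f)(J)$. Dually, on the $End(\cS)$-valued vector field $\Psi_\cD$ the push-forward $\Ad(f)_\ast$ acts as $(f_0)_\ast$ on the $T\cM$ factor and again as $\bAd(f)$ on the $End(\cS)$ factor. Since $\bAd(f)$ touches only the endomorphism slot and commutes with the index-raising $\sharp$, the identity \eqref{Psitf} is reduced to the purely Riemannian statement
\be
\sharp_{\cG_f}\circ (f_0^{-1})^\ast=(f_0)_\ast\circ \sharp_\cG
\ee
acting on ordinary $1$-forms on $\cM$.

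The heart of the argument --- and the only genuinely non-formal point --- is this last compatibility. I would establish it by observing that $f_0$ is by construction an isometry from $(\cM,\cG)$ to $(\cM,\cG_f)$: indeed $f_0^\ast(\cG_f)=f_0^\ast((f_0^{-1})^\ast \cG)=(f_0^{-1}\circ f_0)^\ast \cG=\cG$. An isometry intertwines the musical isomorphisms of source and target, and since the push-forward of a $1$-form along $f_0$ is exactly $(f_0)_\ast\theta=(f_0^{-1})^\ast\theta$, the displayed relation follows. Concretely, one checks pointwise that $\dd_p f_0$ is a linear isometry $(T_p\cM,\cG_p)\to (T_{f_0(p)}\cM,(\cG_f)_{f_0(p)})$, so raising an index with $\cG_f$ after transporting by $(f_0^{-1})^\ast$ coincides with transporting by $(f_0)_\ast$ after raising with $\cG$.

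Assembling these pieces, applying $\sharp_{\cG_f}\otimes \id$ to $\Ad(f^{-1})^\ast(\Theta_\Xi)$ produces exactly $(f_0)_\ast$ on the vector-field slot and $\bAd(f)$ on the endomorphism slot, which is the definition of $\Ad(f)_\ast(\Psi_\cD)$. The main obstacle is bookkeeping rather than substance: one must keep straight which tensor slot each operation acts on and, crucially, verify that the endomorphism-valued transports appearing in \eqref{Thetafast} and in the definition of $\Ad(f)_\ast$ involve the same $\bAd(f)$ (and not $\bAd(f^{-1})$). Once the conventions of Appendix \ref{app:unbased} are pinned down, the remaining content is precisely the isometry property of $f_0$ noted above.
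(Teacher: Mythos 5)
Your proof is correct and follows essentially the same route as the paper's: after invoking \eqref{Thetafast}, the paper applies Proposition \ref{prop:pullpush} (with $\cS$ replaced by $End(\cS)$ and $f$ by $\Ad(f)$, then with $f$ replaced by $f^{-1}$), which is precisely your reduction $\sharp_{\cG_f}\circ (f_0^{-1})^\ast=(f_0)_\ast\circ \sharp_\cG$ tensored with $\bAd(f)$ acting on the endomorphism slot. The only difference is organizational: you re-derive that instance inline via the observation that $f_0$ is tautologically an isometry from $(\cM,\cG)$ to $(\cM,(f_0^{-1})^\ast(\cG))$, whereas the paper cites its appendix lemma, whose proof rests on the same pointwise musical-isomorphism compatibility.
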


\begin{proof}
Proposition \ref{prop:pullpush} (with $\cS$ replaced by $End(\cS)$ and
$f$ replaced by $\Ad(f)$) implies that the following relation holds
for any $\Theta\in \Omega^1(\cM,End(\cS))$:
\ben
\label{Addual}
[\Ad(f)^\ast(\Theta)]^{\sharp_{f_0^\ast(\cG)}}=(\Ad(f^{-1}))_\ast (\Theta^{\sharp_\cG})~~.
\een
Thus:
\be
\Psi_{\cD_f}=(\Theta_{\Xi_f})^{\sharp_{(f_0^{-1})^\ast(\cG)}}=[\Ad(f^{-1})^\ast(\Theta_\Xi)]^{\sharp_{(f_0^{-1})^\ast(\cG)}}=\Ad(f)_\ast [(\Theta_\Xi)^{\sharp_\cG}]=\Ad(f)_\ast(\Psi_{\cD})~~,
\ee
where in the second equality we used \eqref{Thetafast} and in the
third equality we used \eqref{Addual} with $f$ replaced by
$f^{-1}$.
\end{proof}

\subsection{The action of $\Aut^\ub(\cS,D,\omega)$ on electromagnetic fields}

Let $\cD=(\Sigma,\Xi)$ be a scalar-electromagnetic structure, where $\Sigma=(\cM,\cG,\Phi)$ and $\Xi=(\cS,D,J,\omega)$, 
with underlying duality structure $\Delta=(\cS,D,\omega)$. Let $M$ be a
four-manifold and $\varphi:M\rightarrow \cM$ be a smooth map. The
pull-back of the based isomorphism ${\hat
  f}:\cS\stackrel{\sim}{\rightarrow} \cS^{f_0}$ gives a based
isomorphism of vector bundles ${\hat
  f}^\varphi:\cS^\varphi\stackrel{\sim}{\rightarrow} \cS^{f_0\circ
  \varphi}=(\cS^{f_0})^{\varphi}$. This extends to an isomorphism from
$\wedge T^\ast M\otimes \cS^\varphi$ to $\wedge T^\ast M\otimes
\cS^{f_0\circ \varphi}$ (where the action on $\wedge T^\ast M$ is
trivial) and hence induces an isomorphism of $\cinf$-modules which we
denote by the same symbol:
\ben
\label{linaction0}
{\hat f}^\varphi\equiv \id_{\Omega(\cM)}\otimes_{\cC^\infty(\cM,\R)} {\hat f}^\varphi:\Omega(M,\cS^\varphi) \rightarrow \Omega(M,\cS^{f_0\circ \varphi})~~.
\een 
We have:
\be
{\hat f}^\varphi(\rho\otimes \xi)=\rho\otimes {\hat f}^\varphi(\xi)~~,~~\forall \rho\in \Omega(M)~~\forall \xi\in \Gamma(M,\cS^\varphi)~~.
\ee

\begin{Lemma}
\label{lemma:Dcommutation}
For any $f\in \Aut^\ub(\cS,D)$, we have:
\ben
\label{diffrelphi}
\dd_{D^{f_0\circ \varphi}}\circ {\hat f}^\varphi ={\hat f}^\varphi \circ \dd_{D^\varphi}~~.
\een
\end{Lemma}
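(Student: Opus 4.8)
The plan is to follow the proof of relation \eqref{hatdiffrel} --- the corresponding statement on $\cM$ itself --- but carried out on $M$ with the $\varphi$-pulled-back connections. The one substantive input is relation \eqref{Dhatf}, which asserts that ${\hat f}$ intertwines $D$ and $D^{f_0}$, i.e. $D^{f_0}\circ {\hat f}={\hat f}\circ D$. I would first reduce the full statement \eqref{diffrelphi} to a single ``degree-zero'' commutation relation between the two pulled-back connections, and then establish that relation by pulling back \eqref{Dhatf} along $\varphi$.

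For the reduction, I would consider the operator $L\eqdef \dd_{D^{f_0\circ\varphi}}\circ {\hat f}^\varphi-{\hat f}^\varphi\circ \dd_{D^\varphi}$ and check that it is graded-linear over $\Omega(M)$. Both twisted de Rham operators obey the graded Leibniz rule with respect to the wedge product by ordinary forms, while ${\hat f}^\varphi$ acts only on the $\cS^\varphi$-factor and hence commutes with such wedging; a short computation on elements of the form $\rho\wedge s$ with $\rho\in \Omega^k(M)$ shows that the $\dd\rho$-terms cancel and $L(\rho\wedge s)=(-1)^k\rho\wedge L(s)$. Therefore $L$ vanishes identically as soon as it vanishes on $\Omega^0(M,\cS^\varphi)=\Gamma(M,\cS^\varphi)$, i.e. it suffices to prove $D^{f_0\circ\varphi}\circ {\hat f}^\varphi={\hat f}^\varphi\circ D^\varphi$ on sections. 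Equivalently, exactly as in the proof of \eqref{hatdiffrel}, I would verify \eqref{diffrelphi} directly on decomposables $\rho\otimes \xi$ using the Leibniz expansion $\dd_{D^\varphi}(\rho\otimes \xi)=\dd\rho\otimes \xi+(-1)^k\rho\wedge D^\varphi\xi$ and the trivial action of ${\hat f}^\varphi$ on $\rho$; either route isolates the same degree-zero relation.

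For the degree-zero relation, I would note that the difference $D^{f_0\circ\varphi}\circ {\hat f}^\varphi-{\hat f}^\varphi\circ D^\varphi$ is $\cC^\infty(M,\R)$-linear (the derivation terms cancel because ${\hat f}^\varphi$ is a bundle morphism acting trivially on forms), so it is enough to test it on topological pullbacks $\xi^\varphi$ of sections $\xi\in \Gamma(\cM,\cS)$, which locally generate $\Gamma(M,\cS^\varphi)$. On such sections I would invoke the defining property of the pullback connection, $D^\varphi(\xi^\varphi)=\varphi^\ast(D\xi)$, together with the functoriality of the pullback of bundles-with-connection under composition of maps, which identifies $(\cS^{f_0})^\varphi$ with $\cS^{f_0\circ\varphi}$ and $(D^{f_0})^\varphi$ with $D^{f_0\circ\varphi}$. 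Since ${\hat f}^\varphi(\xi^\varphi)=({\hat f}\xi)^\varphi$ and ${\hat f}^\varphi$ commutes with the differential pullback $\varphi^\ast$ of $\cS$-valued forms, I would then compute $D^{f_0\circ\varphi}({\hat f}^\varphi(\xi^\varphi))=\varphi^\ast(D^{f_0}({\hat f}\xi))=\varphi^\ast({\hat f}(D\xi))={\hat f}^\varphi(\varphi^\ast(D\xi))={\hat f}^\varphi(D^\varphi(\xi^\varphi))$, where the second equality is precisely \eqref{Dhatf}. This closes the degree-zero relation and hence \eqref{diffrelphi}.

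I expect the only genuine subtlety to be the bookkeeping in this last step: namely the naturality statement $(D^{f_0})^\varphi=D^{f_0\circ\varphi}$ --- that pulling $D$ back first along $f_0$ and then along $\varphi$ agrees with pulling it back along $f_0\circ\varphi$ --- and the compatibility ${\hat f}^\varphi\circ \varphi^\ast=\varphi^\ast\circ {\hat f}$ of the induced map with the differential pullback. Both are standard functoriality properties of the pullback construction, but they are where essentially all the content of the lemma sits; everything else is the same formal Leibniz manipulation already used for \eqref{hatdiffrel}.
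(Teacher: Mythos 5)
Your proposal is correct and takes essentially the same route as the paper's proof: expand on decomposables $\rho\otimes\xi$ via the Leibniz rule so that everything reduces to the degree-zero intertwining relation $D^{f_0\circ\varphi}\circ{\hat f}^\varphi={\hat f}^\varphi\circ D^\varphi$, which is then obtained from \eqref{Dhatf}. The only difference is one of detail: the paper asserts this pulled-back relation as an immediate consequence of \eqref{Dhatf}, whereas you justify it explicitly (via $\cC^\infty(M,\R)$-linearity of the difference, testing on topological pullbacks $\xi^\varphi$, and the functorial identifications $(\cS^{f_0})^\varphi\equiv\cS^{f_0\circ\varphi}$, $(D^{f_0})^\varphi=D^{f_0\circ\varphi}$), which fills in precisely the step the paper leaves implicit.
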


\begin{proof}
For any $\rho\in \Omega^k(M)$ and any $\xi\in \Gamma(M,\cS^\varphi)$, we have: 
\ben
\label{s1}
(\dd_{D^{f_0\circ \varphi}}\circ {\hat f}^\varphi)(\rho\otimes \xi)=(\dd\rho)\otimes {\hat f}^\varphi(\xi)+(-1)^k \rho\wedge D^{f_0\circ \varphi}({\hat f}^\varphi(\xi))~~.
\een
Relation \eqref{Dhatf} implies:
\be
D^{f_0\circ \varphi}\circ {\hat f}^\varphi={\hat f}^\varphi\circ D^\varphi~~.
\ee
Hence \eqref{s1} becomes: 
\be
(\dd_{D^{f_0\circ \varphi}}\circ {\hat f}^\varphi)(\rho\otimes \xi)=(\dd\rho)\otimes {\hat f}^\varphi(\xi)+(-1)^k \rho \wedge {\hat f}^\varphi( D^\varphi(\xi))=({\hat f}^\varphi\circ \dd_{D^\varphi})(\rho\otimes \xi)~~.
\ee
\end{proof} 

\begin{Proposition}
\label{prop:emtf}
The following properties hold:
\begin{enumerate}[1.]
\item For any $f\in \Aut^\ub(\cS,\omega)$, we have:
\be
{\hat f}^\varphi(\iOmega^{\Xi,g,\varphi}(M))=\iOmega^{\Xi_f,g,f_0\circ \varphi}(M)
\ee
Thus a 2-form $\cV\in \Omega^2(M,\cS^\varphi)$ satisfies the positive
polarization condition with respect to $g$ and $J$ in the scalar
field background $\varphi$ iff ${\hat f}^\varphi(\cV)$ satisfies the
positive polarization condition with respect to $g$ and 
$\bAd(f)(J)$ in the scalar field background $f_0\circ \varphi$.
\item  For any $f\in \Aut^\ub(\Delta)$, we have:
\be
{\hat f}^\varphi(\iOmega_\cl^{\Xi,g,\varphi}(M))=\iOmega_\cl^{\Xi_f,g,f_0\circ \varphi}(M)~~.
\ee
Thus, when the duality structure $\Delta=(\cS,D,\omega)$ is fixed, a
2-form $\cV\in \Omega^2(M,\cS^\varphi)$ is an electromagnetic field strength
with respect to $g$ and $J$ in the scalar field background $\varphi$
iff ${\hat f}^\varphi(\cV)$ is an electromagnetic field strength with respect
to $g$ and $\bAd(f)(J)$ in the scalar field background $f_0\circ
\varphi$.
\end{enumerate}
\end{Proposition}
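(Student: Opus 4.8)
The plan is to transport each defining condition through the $\cC^\infty(M,\R)$-module isomorphism ${\hat f}^\varphi:\Omega(M,\cS^\varphi)\to \Omega(M,\cS^{f_0\circ \varphi})$, using that this map acts as the identity on the $\wedge T^\ast M$ factor (so it commutes with $\ast_g$) and is fiberwise linear. The algebraic heart of part 1 is the intertwining identity
\be
(\bAd(f)(J))^{f_0\circ \varphi}\circ {\hat f}^\varphi={\hat f}^\varphi\circ J^\varphi~~,
\ee
which relates the pulled-back transformed taming to the original one. First I would prove the un-pulled-back version $(\bAd(f)(J))^{f_0}\circ {\hat f}={\hat f}\circ J$ as maps $\cS\to \cS^{f_0}$: on a section $\xi$, relation \eqref{bAdxi} gives $\bAd(f)(J)=\mf\circ J\circ \mf^{-1}$, while \eqref{mf} gives ${\hat f}(\xi)=(\mf(\xi))^{f_0}$, so both sides evaluate to $(\mf(J\xi))^{f_0}$. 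Since such pulled-back sections span $\cS^{f_0}$ fiberwise, this is an identity of bundle maps, and applying the functorial $\varphi$-pullback (under which $(A^{f_0})^\varphi=A^{f_0\circ \varphi}$) yields the displayed relation.

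Given this, part 1 is a one-line manipulation. For $\cV\in \iOmega^{\Xi,g,\varphi}(M)$ the positive polarization condition \eqref{2formpol} reads $\ast_g\cV=-J^\varphi\cV$; applying ${\hat f}^\varphi$, commuting it past $\ast_g$, and using the intertwining gives
\be
\ast_g({\hat f}^\varphi\cV)={\hat f}^\varphi(\ast_g\cV)=-{\hat f}^\varphi(J^\varphi\cV)=-(\bAd(f)(J))^{f_0\circ \varphi}({\hat f}^\varphi\cV)~~,
\ee
which is exactly positive polarization for $\Xi_f=(\cS,D,\bAd(f)(J),\omega)$ in the background $f_0\circ \varphi$. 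Hence ${\hat f}^\varphi(\iOmega^{\Xi,g,\varphi}(M))\subseteq \iOmega^{\Xi_f,g,f_0\circ \varphi}(M)$. The reverse inclusion follows because ${\hat f}^\varphi$ is an isomorphism and the intertwining can be read backwards as $({\hat f}^\varphi)^{-1}\circ (\bAd(f)(J))^{f_0\circ \varphi}=J^\varphi\circ ({\hat f}^\varphi)^{-1}$, so $({\hat f}^\varphi)^{-1}$ carries any positively polarized $\cW$ for $\Xi_f$ back to a positively polarized form for $\Xi$; this gives the asserted equality of submodules.

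For part 2 I would add the closedness condition, which is where the stronger hypothesis $f\in \Aut^\ub(\Delta)$ (in particular covariant constancy with respect to $D$) enters. By Lemma \ref{lemma:Dcommutation}, $\dd_{D^{f_0\circ \varphi}}\circ {\hat f}^\varphi={\hat f}^\varphi\circ \dd_{D^\varphi}$, so $\dd_{D^\varphi}\cV=0$ immediately forces $\dd_{D^{f_0\circ \varphi}}({\hat f}^\varphi\cV)={\hat f}^\varphi(\dd_{D^\varphi}\cV)=0$. Combining this with part 1 shows that ${\hat f}^\varphi$ sends $\iOmega^{\Xi,g,\varphi}_\cl(M)$ into $\iOmega^{\Xi_f,g,f_0\circ \varphi}_\cl(M)$, and the reverse inclusion follows exactly as before since the commutation relation of Lemma \ref{lemma:Dcommutation} inverts.

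I expect the only genuinely delicate step to be the intertwining identity, where one must carefully distinguish the endomorphism $\bAd(f)(J)$ of $\cS$, its topological pullback $(\bAd(f)(J))^{f_0}$ to $\cS^{f_0}$, and the conjugate $\Ad({\hat f})(J)={\hat f}\circ J\circ {\hat f}^{-1}$ of $\cS^{f_0}$, and verify that the first two agree. Once relation \eqref{mf} tying ${\hat f}$ to $\mf$ is invoked correctly this is a short computation; everything else reduces to fiberwise linearity of ${\hat f}^\varphi$, its commuting with $\ast_g$, and Lemma \ref{lemma:Dcommutation}.
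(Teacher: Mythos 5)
Your proof is correct and takes essentially the same route as the paper's: the intertwining identity you establish from \eqref{bAdxi} and \eqref{mf} is precisely relation \eqref{AdHatAd} (i.e. $\bAd(f)(J)^{f_0}=\Ad({\hat f})(J)$), which the paper's proof invokes directly, and the remaining steps — commutation of ${\hat f}^\varphi$ with $\bast_g$ for part 1 and Lemma \ref{lemma:Dcommutation} for part 2 — coincide with the paper's argument. Your explicit treatment of the reverse inclusion via invertibility of ${\hat f}^\varphi$ is implicit in the paper's ``iff'' phrasing and is fine as written.
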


\begin{proof}

\

\begin{enumerate}[1.]
\item We have $\bAd(f)(J)^{f_0}=\Ad({\hat f})(J)$ by relation
  \eqref{AdHatAd}. Thus $\bAd(f)(J)^{f_0\circ \varphi} \circ
  \hat{f}^\varphi=\Ad({\hat f})(J)^\varphi \circ
  \hat{f}^\varphi=[\Ad({\hat f})(J)\circ {\hat f}]^\varphi= ({\hat
    f}\circ J)^\varphi={\hat f}^\varphi \circ J^\varphi$, where we
  used \eqref{AdHatAd}. Since $\ast_g$ acts trivially on
  $\cS^\varphi$, we also have $\ast_g \circ {\hat f}^\varphi={\hat
    f}^\varphi \circ \bast_{g}$. Hence a 2-form $\eta\in
  \Omega^2(M,\cS^\varphi)$ satisfies the positive polarization
  condition $\ast_g\eta=-J^\varphi\eta$ iff the 2-form ${\hat
    f}^\varphi(\eta)\in \Omega^2(M,\cS^{f_0\circ \varphi})$ satisfies
  the positive polarization condition $\ast_g\hat{f}^\varphi
  (\eta)=-\bAd(f)(J)^{f_0\circ \varphi} \hat{f}^\varphi(\eta)$.
\item Relation \eqref{diffrelphi} implies that $\eta\in
\Omega(M,\cS^\varphi)$ is $\dd_{D^\varphi}$-closed iff ${\hat
  f}^\varphi (\eta)$ is $\dd_{D^{f_0\circ \varphi}}$-closed. Combining
this with the statement at point 1. gives the conclusion.
\end{enumerate}
\end{proof}

\noindent Let $Q_{J,\omega}$ denote the scalar product induced on
$\cS$ by $J$ and $\omega$ and $(~,~)_{g, J,\omega,\varphi}=(~,~)_{g,Q_{J,\omega}^\varphi}$ denote the
pseudo-Euclidean scalar product induced by $g$ and $Q_{J,\omega}^\varphi$ on
$\wedge T^\ast M\otimes \cS^\varphi$ (See subsection \ref{sec:completetheoryI}).

\begin{Lemma}
For any $f\in \Aut^\ub(\cS,\omega)$ and $\cV_1,\cV_2\in \Omega^2(M,\cS^\varphi)$, we have: 
\ben
\label{prodtf}
({\hat f}^\varphi(\cV_1),{\hat f}^\varphi(\cV_2))_{g, \bAd(f)(J),\omega, f_0\circ \varphi}=(\cV_1,\cV_2)_{g, J,\omega, \varphi}~~.
\een
\end{Lemma}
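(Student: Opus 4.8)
The plan is to reduce the identity to decomposable tensors and then transport the pointwise $Q$-invariance already recorded in \eqref{Qtf} along $\varphi$. Since the twisted exterior pairing $(~,~)_{g,Q^\varphi}$ is $\cC^\infty(M,\R)$-bilinear and $\hat{f}^\varphi$ is $\cC^\infty(M,\R)$-linear, it suffices to verify \eqref{prodtf} on decomposable 2-forms $\cV_i=\rho_i\otimes \xi_i$ with $\rho_i\in \Omega^2(M)$ and $\xi_i\in \Gamma(M,\cS^\varphi)$; the general case then follows by bilinearity.

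First I would use that $\hat{f}^\varphi$ acts only on the bundle factor, so that $\hat{f}^\varphi(\rho_i\otimes \xi_i)=\rho_i\otimes \hat{f}^\varphi(\xi_i)$. Applying the defining property of the twisted exterior pairing to the left-hand side of \eqref{prodtf}, now formed with the taming $\bAd(f)(J)$ and the scalar map $f_0\circ \varphi$, factors it as the product of the exterior pairing $(\rho_1,\rho_2)_g$ of the form parts with the fiber pairing $Q_{\bAd(f)(J),\omega}^{f_0\circ \varphi}(\hat{f}^\varphi(\xi_1),\hat{f}^\varphi(\xi_2))$ of the bundle parts. The form factor is untouched by $f$, so the entire content of the Lemma is concentrated in the bundle factor.

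Next I would dispatch that bundle factor using \eqref{Qtf}. That identity is genuinely fiberwise: at each point $p\in \cM$ and for all $s_1,s_2\in \cS_p$ one has $Q_{\bAd(f)(J),\omega}^{f_0}(\hat{f}_p(s_1),\hat{f}_p(s_2))=Q_{J,\omega}(s_1,s_2)$. Because both sides are fiberwise pairings and $\hat{f}^\varphi$ is precisely the $\varphi$-pullback of $\hat{f}$ — with $(\cS^{f_0})^\varphi=\cS^{f_0\circ \varphi}$ and $(Q_{\bAd(f)(J),\omega}^{f_0})^\varphi=Q_{\bAd(f)(J),\omega}^{f_0\circ \varphi}$ by functoriality of pullback along $\varphi$ — this pointwise identity transports verbatim to $M$, giving $Q_{\bAd(f)(J),\omega}^{f_0\circ \varphi}(\hat{f}^\varphi(\xi_1),\hat{f}^\varphi(\xi_2))=Q_{J,\omega}^\varphi(\xi_1,\xi_2)$ for all $\xi_i\in \Gamma(M,\cS^\varphi)$.

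Substituting this into the factored left-hand side yields $(\rho_1,\rho_2)_g\, Q_{J,\omega}^\varphi(\xi_1,\xi_2)$, which is exactly $(\rho_1\otimes \xi_1,\rho_2\otimes \xi_2)_{g,J,\omega,\varphi}$ by the defining property of the twisted exterior pairing, proving \eqref{prodtf} on decomposable elements and hence everywhere. I do not anticipate any genuine obstacle: the only delicate point is the bookkeeping of pullbacks, namely confirming that the $Q$-invariance \eqref{Qtf}, originally stated over the scalar manifold $\cM$, descends correctly along $\varphi$ to the pulled-back bundle $\cS^\varphi$ over $M$ — and this is automatic precisely because \eqref{Qtf} is a pointwise statement.
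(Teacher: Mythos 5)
Your proposal is correct and follows essentially the same route as the paper's proof: reduce to decomposable elements $\cV_i=\rho_i\otimes\xi_i$, factor the twisted exterior pairing into $(\rho_1,\rho_2)_g$ times the fiber pairing, and conclude from the $\varphi$-pullback of relation \eqref{Qtf}. The only difference is that you spell out the pullback bookkeeping (that \eqref{Qtf} transports pointwise along $\varphi$), which the paper treats as implicit.
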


\begin{proof}
It suffices to consider the case $\cV_i=\rho_i\otimes \xi_i$ with
$\rho_i\in \Omega^2(M)$ and $\xi_i\in \Gamma(M,\cS^\varphi)$ (where
$i=1,2$). Then:
\be
({\hat f}^\varphi(\cV_1),{\hat f}^\varphi(\cV_2))_{g,\bAd(f)(J),\omega, f_0\circ \varphi}=(\rho_1,\rho_2)_g Q_{\bAd(f)(J),\omega}^{f_0\circ\varphi}({\hat f}^\varphi(\xi_1),{\hat f}^\varphi(\xi_2))=
(\rho_1,\rho_2)_g Q_{J,\omega}^{\varphi}(\xi_1,\xi_2)=(\cV_1,\cV_2)_{g, J,\omega, \varphi}~~,
\ee
where $(~,~)_g$ denotes the Euclidean scalar product induced by $g$ on
the exterior bundle $\wedge T^\ast M$ and we used relation
\eqref{Qtf}.
\end{proof}

\noindent Recall the twisted inner contraction 
$\loslash:=\loslash_{g, J,\omega, \varphi}:
\wedge_M^2(\cS^\varphi)\times_M\wedge_M^2(\cS^\varphi)\rightarrow
\otimes^2(T^\ast M)$ of Definition \ref{def:twistedinner}.

\begin{Proposition}
\label{prop:VPsiVtf}
The following properties hold for any $\cV_1,\cV_2\in \Omega^2(M,\cS^\varphi)$: 
\begin{enumerate}[1.]
\itemsep 0.0em
\item For any $f\in \Aut^\ub(\cS,\omega)$, we have:
\ben
\label{VVtf}
\hat{f}^{\varphi}(\cV_1) \loslash_{g, \bAd(f)(J),\omega, f_0\circ \varphi} \hat{f}^{\varphi}(\cV_2)=\cV_1\loslash_{g, J,\omega, \varphi}\cV_2~~.
\een
\item For any $f\in \Aut^\ub(\Delta)$, we have: 
\ben
\label{PsiVtf}
(\Psi_{\cD_f}^{f_{0}\circ\varphi }\hat{f}^{\varphi}(\cV_1),
\hat{f}^{\varphi}(\cV_2))_{g, \bAd(f)(J),\omega, f_0\circ
  \varphi}=(\widehat{\dd f_0})^\varphi ((\Psi_{\cD}^\varphi\cV_1,
\cV_2)_{g, J,\omega, \varphi})~~,
\een
where $(\widehat{\dd f_0})^\varphi:(T\cM)^\varphi\rightarrow
(T\cM)^{f_0\circ \varphi}$ is the $\varphi$-pullback of the map
$\widehat{\dd f_0}:T\cM\rightarrow (T\cM)^{f_0}$ induced by the
differential $\dd f_0:T\cM\rightarrow T\cM$.
\end{enumerate}
\end{Proposition}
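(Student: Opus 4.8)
The plan is to prove both \eqref{VVtf} and \eqref{PsiVtf} by reducing to decomposable 2-forms and transporting all the pointwise data through the based isomorphism $\hat{f}^\varphi$, relying on the compatibility relations already established for $\hat{f}$. For \eqref{VVtf} I would argue exactly as in the proof of \eqref{prodtf}. Since $\loslash$ is fiberwise $\cC^\infty(M,\R)$-bilinear, it suffices to take $\cV_i=\rho_i\otimes \xi_i$ with $\rho_i\in \Omega^2(M)$ and $\xi_i\in \Gamma(M,\cS^\varphi)$. Using $\hat{f}^\varphi(\rho_i\otimes \xi_i)=\rho_i\otimes \hat{f}^\varphi(\xi_i)$ together with Definition \ref{def:twistedinner}, the left-hand side equals $Q_{\bAd(f)(J),\omega}^{f_0\circ\varphi}(\hat{f}^\varphi(\xi_1),\hat{f}^\varphi(\xi_2))\,\rho_1\oslash_g \rho_2$, and the $\varphi$-pullback of \eqref{Qtf} turns the scalar factor into $Q_{J,\omega}^\varphi(\xi_1,\xi_2)$, yielding $\cV_1\loslash_{g,J,\omega,\varphi}\cV_2$. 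This is the same computation as for \eqref{prodtf}, with the inner $g$-contraction $\oslash_g$ in place of the exterior pairing $(\cdot,\cdot)_g$.

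The substance lies in \eqref{PsiVtf}, and my plan is to isolate the equivariance of the $\Psi$-action as a lemma:
\begin{equation*}
\Psi_{\cD_f}^{f_0\circ\varphi}\,\hat{f}^\varphi(\cV)=\big[(\widehat{\dd f_0})^\varphi\otimes \hat{f}^\varphi\big]\big(\Psi_\cD^\varphi\,\cV\big)\qquad \forall\, \cV\in \Omega^2(M,\cS^\varphi)~.
\end{equation*}
Granting this, \eqref{PsiVtf} follows formally. Writing $\Psi_\cD^\varphi\cV_1=\sum_i X_i^\varphi\otimes W_i$ with $X_i\in \cX(\cM)$ and $W_i\eqdef A_i^\varphi\cV_1\in \Omega^2(M,\cS^\varphi)$, the lemma gives $\Psi_{\cD_f}^{f_0\circ\varphi}\hat{f}^\varphi(\cV_1)=\sum_i (\widehat{\dd f_0})^\varphi(X_i^\varphi)\otimes \hat{f}^\varphi(W_i)$. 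Since the twisted exterior pairing does not act on the $(T\cM)^{f_0\circ\varphi}$-coefficient, the left-hand side of \eqref{PsiVtf} becomes $\sum_i (\widehat{\dd f_0})^\varphi(X_i^\varphi)\,(\hat{f}^\varphi(W_i),\hat{f}^\varphi(\cV_2))_{g,\bAd(f)(J),\omega,f_0\circ\varphi}$; applying \eqref{prodtf} to each inner pairing replaces it by $(W_i,\cV_2)_{g,J,\omega,\varphi}$, and pulling the common factor $(\widehat{\dd f_0})^\varphi$ back out of the sum recovers $(\widehat{\dd f_0})^\varphi\big((\Psi_\cD^\varphi\cV_1,\cV_2)_{g,J,\omega,\varphi}\big)$, which is the right-hand side.

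It then remains to prove the lemma, which I expect to be the main obstacle, since it forces one to track both tensor slots of $\Psi$ at once while keeping the two distinct pullbacks $(\cdot)^{f_0\circ\varphi}$ and $(\cdot)^\varphi$ straight. I would start from \eqref{Psitf}, which gives $\Psi_{\cD_f}=\Ad(f)_\ast(\Psi_\cD)$, decompose $\Psi_\cD=\sum_i X_i\otimes A_i$ locally with $X_i\in \cX(\cM)$ and $A_i\in \End(\cS)$, so that $\Psi_{\cD_f}=\sum_i (f_0)_\ast(X_i)\otimes \bAd(f)(A_i)$, and then treat the two slots separately after pulling back by $f_0\circ\varphi$. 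On the $T\cM$-slot a pointwise check gives $[(f_0)_\ast(X)]^{f_0\circ\varphi}=(\widehat{\dd f_0})^\varphi(X^\varphi)$, because at $m\in M$ both sides equal $\dd_{\varphi(m)}f_0\big(X_{\varphi(m)}\big)$. On the $\End(\cS)$-slot, relation \eqref{AdHatAd} yields $\bAd(f)(A)^{f_0}=\Ad(\hat{f})(A)=\hat{f}\circ A\circ \hat{f}^{-1}$, whence $\bAd(f)(A)^{f_0\circ\varphi}=\hat{f}^\varphi\circ A^\varphi\circ (\hat{f}^\varphi)^{-1}$ and therefore $\bAd(f)(A)^{f_0\circ\varphi}\,\hat{f}^\varphi(\cV)=\hat{f}^\varphi(A^\varphi\cV)$. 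Assembling the two slots over the sum in $i$ gives the lemma. The delicate point throughout is precisely this $\End$-slot bookkeeping: verifying that $\hat{f}^\varphi$ intertwines the $\End(\cS^\varphi)$-action on $\cV$ with the $\End(\cS^{f_0\circ\varphi})$-action on $\hat{f}^\varphi(\cV)$, which is what makes the conjugation identity \eqref{AdHatAd} do the heavy lifting.
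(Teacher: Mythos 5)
Your proof is correct and follows essentially the same route as the paper: part 1 is the paper's computation verbatim, and the lemma you isolate for part 2 is precisely the $\varphi$-pullback of the paper's Proposition \eqref{Adpush}, namely $R_{\hat f}(\Ad(f)_\ast(\Psi)^{f_0})=(\widehat{\dd f_0}\otimes L_{\hat f})(\Psi)$, which the paper cites and then combines with \eqref{Psitf} and \eqref{prodtf} exactly as you do. Your re-derivation of that lemma (decomposing $\Psi_\cD=\sum_i X_i\otimes A_i$, the pointwise check on the $T\cM$-slot, and the conjugation identity \eqref{AdHatAd} on the $End(\cS)$-slot) reproduces the appendix proof of \eqref{Adpush} in pulled-back form, so the two arguments coincide in substance.
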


\begin{proof}
It suffices to consider the case $\cV_i = \rho_i \otimes \xi_i$, with
$\rho_i\in \Omega^2(M)$ and $\xi_i\in \Gamma(M,\cS^\varphi)$. Then:
\ben
\label{E1}
\hat{f}^{\varphi}(\cV_1) \loslash_{g,\bAd(f)(J),\omega, f_0\circ \varphi} \hat{f}^{\varphi}(\cV_2) = Q^{f_{0}\circ \varphi}_{\bAd(f)(J),\omega}(\hat{f}^{\varphi}(\xi_1),\hat{f}^{\varphi}(\xi_2)) \rho_1\oslash_g \rho_2=
Q^{\varphi}_{J,\omega}(\xi_1,\xi_2) \rho_1\oslash_g \rho_2=\cV_1\loslash_{g,J,\varphi,\omega}\cV_2~~,
\een
where we used \eqref{Qtf}. This shows that \eqref{VVtf} holds. Using
relations \eqref{Psitf} and \eqref{Adpush}, we find:
\be
R_{\hat f}(\Psi_{\cD_f}^{f_{0}})=R_{\hat f}(\Ad(f)_\ast (\Psi_\cD)^{f_0})=(\widehat{\dd f_0}\otimes L_{{\hat f}})(\Psi_\cD)~~.
\ee
Remembering that $(~,~)_{g,J,\omega,\varphi}$ is extended trivially to
$(T\cM)^\varphi$ (see Subsection \ref{sec:completetheoryI}), this
gives:
\beqa
&& (\Psi_{\cD_f}^{f_{0}\circ\varphi} \hat{f}^{\varphi}(\cV_1),
\hat{f}^{\varphi}(\cV_2) )_{g, \bAd(f)(J),\omega, f_0\circ \varphi}
=((\widehat{\dd f_0}\otimes L_{{\hat f}})^\varphi (\Psi_\cD^\varphi)
(\cV_1),\hat{f}^{\varphi}(\cV_2))_{g, \bAd(f)(J),\omega, f_0\circ
  \varphi}=\\ && =(\widehat{\dd f_0})^\varphi
((\hat{f}^{\varphi}(\Psi_\cD^\varphi\cV_1),
\hat{f}^{\varphi}(\cV_2))_{\bAd(f)(J),f_0\circ \varphi}) =(\widehat{\dd
  f_0})^\varphi (( \Psi_\cD^\varphi \cV_1, \cV_2)_{g, J,\varphi,
  \omega})~~,
\eeqa
where in the last line we used relation \eqref{prodtf}. Thus
\eqref{PsiVtf} holds.
\end{proof}

\subsection{Scalar-electromagnetic dualities and symmetries}

Let $\cD=(\Sigma,\Xi)$ be a scalar-electromagnetic structure with
underlying scalar structure $\Sigma=(\cM,\cG,\Phi)$ and underlying
electromagnetic structure $\Xi=(\cS,D,J,\omega)$. Let
$\Delta=(\cS,D,\omega)$ be the underlying duality structure of $\Xi$ and
$\cD_0=(\Sigma,\Delta)$ be the underlying scalar-duality structure of
$\cD$. Fixing a Lorentzian metric $g\in Met_{3,1}(M)$, consider the set:
\ben
\label{ConfcD0}
\Conf^g_{\cD_0}(M)\eqdef \cup_{J'\in \fJ_+(\cS,\omega)}\Conf^g_{(\Sigma,(\cS,D,J',\omega))}(M)~~.
\een

\begin{Definition} 
The {\em scalar-electromagnetic duality group} of $\cD_0$ is the
subgroup of $\Aut^\ub(\Delta)$ defined through:
\be
\Aut(\cD_0)\eqdef \{f\in \Aut^\ub(\Delta)|f_0\in \Aut(\Sigma)\}=\{f\in \Aut^\ub(\Delta)|f_0\in \Iso(\cM,\cG)~\&~\Phi\circ f_0=\Phi)\}~~.
\ee
An element of this group is called a {\em scalar-electromagnetic
  duality}. The {\em duality action} is the action of
$\Aut(\cD_0)$ on the set $\Conf^g_{\cD_0}(M)$ given by:
\be
\label{dualityaction0}
f \diamond (\varphi, \cV)\eqdef (f_0\circ \varphi, {\hat f}^\varphi (\cV))~~\forall f\in \Aut(\cD_0)~~.
\ee
\end{Definition}

\noindent For any $f\in \Aut(\Sigma)$, we have $\Sigma_f=\Sigma$ and $\cD_f=(\Sigma,\Xi_f)$, 
where $\Sigma_f$ and $\cD_f$ were defined in \eqref{SigmafDef} and \eqref{cDfDef}.

\begin{Theorem}
\label{thm:DualityInvariance}
For any $g\in Met_{3,1}(M)$ and any scalar-electromagnetic duality
transformation $f\in \Aut(\cD_0)$, we have:
\ben
f \diamond \Sol^g_{\cD}(M)=\Sol^g_{\cD_f}(M)~~.
\een
\end{Theorem}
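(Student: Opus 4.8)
The plan is to show that the duality action $f\diamond$ carries the two defining equations \eqref{sc} and \eqref{em} of $\Sol^g_\cD(M)$ into those of $\Sol^g_{\cD_f}(M)$, using crucially that $f_0\in\Aut(\Sigma)$ forces $\Sigma_f=\Sigma$ (as noted just before the theorem), so that $\theta_{\Sigma_f}=\theta_\Sigma$. Throughout I write $\varphi'\eqdef f_0\circ\varphi$ and $\cV'\eqdef {\hat f}^\varphi(\cV)$, so that $f\diamond(\varphi,\cV)=(\varphi',\cV')$. Since every transformation statement I invoke is an equality of bijections, I expect to obtain both inclusions simultaneously rather than treating the two directions separately.

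First I would dispose of the electromagnetic equation together with the positive-polarization condition. Since $f\in\Aut^\ub(\Delta)$, Proposition \ref{prop:emtf} gives ${\hat f}^\varphi(\iOmega^{\Xi,g,\varphi}_\cl(M))=\iOmega^{\Xi_f,g,\varphi'}_\cl(M)$, and because ${\hat f}^\varphi$ is a vector-bundle isomorphism this is a bijection. Combined with point 1 of that proposition (which handles the polarization alone), it follows that $(\varphi,\cV)\in\Conf^g_\cD(M)$ satisfies \eqref{em} if and only if $(\varphi',\cV')\in\Conf^g_{\cD_f}(M)$ satisfies $\dd_{D^{\varphi'}}\cV'=0$. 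This settles the third defining condition and also shows $f\diamond$ restricts to a bijection between the two configuration sets.

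Next I would handle the scalar equation. Applying Proposition \ref{prop:thetatf} with $\psi=f_0\in\Aut(\Sigma)$ (cf. \eqref{thetatf2}) gives $\theta_{\Sigma_f}(g,\varphi')=\theta_\Sigma(g,f_0\circ\varphi)=(\widehat{\dd f_0})^\varphi(\theta_\Sigma(g,\varphi))$. For the $\Psi$-average term I would first note that $\bast_g$ commutes with ${\hat f}^\varphi$ (since it acts trivially on the $\cS^\varphi$-factor, as recorded in the proof of Proposition \ref{prop:emtf}), so $\bast_g\cV'={\hat f}^\varphi(\bast_g\cV)$; then apply Proposition \ref{prop:VPsiVtf} with $\cV_1=\cV$ and $\cV_2=\bast_g\cV$, together with the symmetry of the twisted exterior pairing, to obtain
\[
(\bast_g\cV',\Psi_{\cD_f}^{\varphi'}\cV')_{g,\bAd(f)(J),\omega,\varphi'}=(\widehat{\dd f_0})^\varphi\big((\bast_g\cV,\Psi_\cD^\varphi\cV)_{g,J,\omega,\varphi}\big).
\]
Combining the two computations yields the intertwining identity
\[
\theta_{\Sigma_f}(g,\varphi')-\frac{1}{2}(\bast_g\cV',\Psi_{\cD_f}^{\varphi'}\cV')=(\widehat{\dd f_0})^\varphi\Big(\theta_\Sigma(g,\varphi)-\frac{1}{2}(\bast_g\cV,\Psi_\cD^\varphi\cV)\Big).
\]
Since $(\widehat{\dd f_0})^\varphi$ is a fibrewise isomorphism (the differential of a diffeomorphism being invertible), the left-hand side vanishes if and only if the right-hand side does, so \eqref{sc} holds for $(\varphi',\cV')$ in the $\cD_f$-theory exactly when it holds for $(\varphi,\cV)$ in the $\cD$-theory.

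Putting the two steps together, $(\varphi,\cV)\in\Sol^g_\cD(M)$ if and only if $f\diamond(\varphi,\cV)\in\Sol^g_{\cD_f}(M)$; since $f\diamond$ is a bijection on the relevant configurations, this is equivalent to the asserted equality of sets. The step I expect to require the most care is the scalar equation: one must correctly track which taming ($J$ versus $\bAd(f)(J)$) and which base map ($\varphi$ versus $\varphi'$) enter each pairing, verify the commutation of $\bast_g$ with ${\hat f}^\varphi$, and use the symmetry of the pairing to put Proposition \ref{prop:VPsiVtf} into the form needed. Once the intertwining identity is in place, invertibility of $(\widehat{\dd f_0})^\varphi$ delivers both inclusions at once. (Note that the Einstein equation does not enter here, since $\Sol^g_\cD$ is defined relative to the fixed background metric $g$ by \eqref{sc} and \eqref{em} only.)
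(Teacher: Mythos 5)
Your proposal is correct and follows exactly the paper's route: the paper's proof simply cites Propositions \ref{prop:thetatf}, \ref{prop:emtf} and \ref{prop:VPsiVtf} together with the form of equations \eqref{sc} and \eqref{em}, and your argument is a careful expansion of precisely that combination (including the commutation of $\bast_g$ with ${\hat f}^\varphi$ and the use of $\Sigma_f=\Sigma$). Nothing is missing and no step deviates from the paper's intended argument.
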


\begin{proof}
Follows immediately from Proposition \ref{prop:thetatf}, Proposition
\ref{prop:emtf} and Proposition \ref{prop:VPsiVtf} upon using the form
of the scalar-electromagnetic equations \eqref{sc} and \eqref{em}.
\end{proof}

\begin{Definition}
The {\em scalar-electromagnetic symmetry group} of $\cD$ is the subgroup
$\Aut(\cD)$ of $\Aut^\ub(\Xi)$ defined through:
\be
\Aut(\cD)\eqdef\{f\in \Aut^\ub(\Xi)|f\in \Aut(\Sigma)\}=\{f\in \Aut^\ub(\Xi)|f_0\in \Iso(\cM,\cG)~\&~\Phi\circ f_0=\Phi\}~~.
\ee
An element of this group is called a {\em scalar-electromagnetic
  symmetry}.
\end{Definition}

\noindent We have: 
\be
\Aut(\cD)\eqdef\{f\in \Aut(\cD_0)|\bAd(f)(J)=J\}~~.
\ee
In particular, $\Aut(\cD)$ is a subgroup of $\Aut(\cD_0)$. 

\begin{Corollary}
For all $f\in \Aut(\cD)$, we have:
\be
f\diamond \Sol^g_{\cD}(M)=\Sol^g_{\cD}(M)~~.
\ee
Thus $\Aut(\cD)$ consists of symmetries of the scalar-electromagnetic
equations of motion \eqref{sc} and \eqref{em}, for any fixed Lorentzian
metric $g$.
\end{Corollary}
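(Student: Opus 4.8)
The plan is to deduce this Corollary directly from Theorem \ref{thm:DualityInvariance}, which already gives $f\diamond\Sol^g_\cD(M)=\Sol^g_{\cD_f}(M)$ for every $f\in\Aut(\cD_0)$. Since $\Aut(\cD)\subset\Aut(\cD_0)$, it suffices to show that the transformed scalar-electromagnetic structure $\cD_f$ coincides with $\cD$ whenever $f\in\Aut(\cD)$; the equality $f\diamond\Sol^g_\cD(M)=\Sol^g_\cD(M)$ will then follow at once.

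To this end I would unpack $\cD_f=(\Sigma_f,\Xi_f)$, where $\Xi_f=(\cS,D,\bAd(f)(J),\omega)$ is the electromagnetic structure of \eqref{Xif} and $\Sigma_f=(\cM,(f_0^{-1})^\ast(\cG),\Phi\circ f_0^{-1})$ is the scalar structure of \eqref{SigmafDef}. The two defining conditions of $\Aut(\cD)$ match these two components exactly. First, membership in $\Aut(\cD)$ imposes $\bAd(f)(J)=J$, which gives $\Xi_f=(\cS,D,J,\omega)=\Xi$ immediately.

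For the scalar factor I would use that $f\in\Aut(\cD_0)$ forces $f_0\in\Aut(\Sigma)$, i.e.\ $f_0\in\Iso(\cM,\cG)$ and $\Phi\circ f_0=\Phi$. Because the inverse of an isometry is again an isometry, $f_0^{-1}\in\Iso(\cM,\cG)$, so $(f_0^{-1})^\ast(\cG)=\cG$; and precomposing $\Phi\circ f_0=\Phi$ with $f_0^{-1}$ yields $\Phi\circ f_0^{-1}=\Phi$. Hence $\Sigma_f=(\cM,\cG,\Phi)=\Sigma$, and therefore $\cD_f=\cD$.

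Since the whole argument is bookkeeping of definitions, there is no genuine obstacle here; the only point deserving care is checking that the invariance conditions $f_0^\ast(\cG)=\cG$ and $\Phi\circ f_0=\Phi$ transfer to the inverse map $f_0^{-1}$ that actually appears in $\Sigma_f$. The concluding interpretive remark --- that $\Aut(\cD)$ consists of symmetries of the equations of motion \eqref{sc} and \eqref{em} for fixed $g$ --- then follows directly, since $f\diamond\Sol^g_\cD(M)=\Sol^g_\cD(M)$ means the duality action preserves the full set of scalar-electromagnetic solutions.
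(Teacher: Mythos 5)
Your proposal is correct and is exactly the paper's argument: the paper's proof simply notes that $\cD_f=\cD$ for $f\in\Aut(\cD)$ and invokes Theorem \ref{thm:DualityInvariance}, which is precisely what you do, only with the verification $\Sigma_f=\Sigma$ (via $f_0^{-1}\in\Iso(\cM,\cG)$ and $\Phi\circ f_0^{-1}=\Phi$) and $\Xi_f=\Xi$ (via $\bAd(f)(J)=J$) written out explicitly rather than left implicit.
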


\begin{proof}
For any $f\in \Aut(\cD)$, we have $D_{f} = \cD$ and the desired statement follows
from Theorem \ref{thm:DualityInvariance}.
\end{proof}

\

\noindent We have short exact sequences:
\beqa
&& 1\rightarrow \Aut(\Delta)\hookrightarrow \Aut(\cD_0)\longrightarrow \Aut^\Delta(\Sigma)\rightarrow 1~~\\
&& 1\rightarrow \Aut(\Xi)\hookrightarrow \Aut(\cD)\longrightarrow \Aut^\Xi(\Sigma)\rightarrow 1~~,
\eeqa
where $\Aut(\Delta)$ and $\Aut(\Xi)$ are the groups of {\em based}
symmetries of $\Delta$ and $\Xi$, i.e. the groups of automorphism of
$\Delta$ and $\Xi$ in the categories $DS(\cM)$ and $ES(\cM)$,
respectively. The groups appearing in the right hand side consist of
those automorphisms of the scalar structure $\Sigma$ which
respectively admit lifts to scalar-electromagnetic dualities of
$\cD_0=(\Sigma,\Delta)$ and scalar-electromagnetic symmetries of
$\cD=(\Sigma,\Xi)$. Fixing a point $p\in \cM$, we can identify
$\Aut(\Delta)$ with the commutant of $\Hol_D^p$ inside the group
$\Aut(\cS_p,\omega_p)\simeq \Sp(2n,\R)$.  In particular, the exact
sequences above show that $\Aut(\cD_0)$ and $\Aut(\cD)$ are Lie
groups.

\begin{Remark}
Consider the case when the duality structure $\Delta$ is trivial. Then
$\Hol_D^p=1$ and we have $\Aut(\Delta)\simeq \Sp(2n,\R)$.  In this
case, we have $\Aut^\Delta (\Sigma)=\Aut(\Sigma)$. Moreover, the first
exact sequence above splits, giving an isomorphism $\Aut(\cD_0)\simeq
\Sp(2n,\R)\times \Aut(\Sigma)$. This recovers the
scalar-electromagnetic duality group traditionally discussed in the
supergravity literature. The taming $J$ can be identified with a
smooth map $\tau:\cM\rightarrow \SH_n$ into the Siegel upper half
space (where $\rk\cS=2n$) and we have isomorphisms of groups:
\beqan
\Aut(\Xi) &\simeq & \{A\in \Sp(2n,\R)|A\bullet\tau=\tau\}\nn\\
\Aut(\cD)&\simeq& \{(\psi,A)\in \Aut(\Sigma)\times \Sp(2n,\R)|\tau\circ \psi=A\bullet \tau\}~~,
\eeqan
where $\bullet$ is the action of $\Sp(2n,\R)$ on $\SH_n$ by matrix
fractional transformations (see Appendix \ref{app:space_tamings}).
The electromagnetic structure $\Xi$ is unitary iff $\tau$ is a
constant map, in which case we have $\Aut(\Xi)\simeq \U(n)$,
$\Aut^\Xi(\Sigma)=\Aut(\Sigma)$ and $\Aut(\cD)\simeq \U(n)\times
\Aut(\Sigma)$. It is clear from the above that the group of
scalar-electromagnetic dualities is considerably more complicated in
the general case when the duality structure $\Delta$ is nontrivial.
\end{Remark}

\section{The Dirac quantization condition}
\label{sec:Dirac}

In this section, we explain how the Dirac quantization condition can
be formulated when the model has a non-trivial duality structure
$\Delta=(\cS,D,\omega)$ defined on the scalar manifold $\cM$. In this
situation, the Dirac lattice is replaced by a {\em Dirac system},
defined as a smooth fiber sub-bundle $\Lambda$ of full lattices inside
$\cS$, which is preserved by the parallel transport of the flat
connection $D$ and which has the property that $\omega$ is
integer-valued when restricted to $\Lambda$. Pairs
$\bDelta=(\Delta,\Lambda)$ are called {\em integral duality
  structures}. The fibers of $\bDelta$ are objects of a category
$\Symp_0$ of {\em integral symplectic spaces}, which is discussed in
Appendix \ref{app:integral}. Integral duality structures correspond to
local systems $T_\bDelta$ valued in the unit groupoid of $\Symp_0$.
Any integral duality structure has a {\em type} $\bt$, which is a
vector of positive integers obeying certain divisibility
properties. Integral duality structures of type $\bt$ defined over
$\cM$ are classified by points of the $\Sp_\bt(2n,\Z)$-character
variety of the fundamental group $\pi_1(\cM)$, where
$\Sp_\bt(2n,\Z)\subset \Sp(2n,\R)$ is the modified Siegel modular
group of type $\bt$. A taming $J$ of $(\cS,\omega)$ defines an {\em
  integral electromagnetic structure} $\bXi=(\Xi,\Lambda)$, where
$\Xi=(\cS,D,J,\omega)$ is the electromagnetic structure defined by $J$
and $\Delta$. The fibers of an integral electromagnetic structure are
{\em integral tamed symplectic spaces}, being objects of a category
$\TamedSymp_0$ which is discussed in Appendix \ref{app:integral}.
As explained in Appendix \ref{app:integral}, there exists a
commutative diagram of categories and functors whose vertical arrows
are equivalences and whose horizontal arrows are forgetful functors 
(this is one of the sides of diagram \eqref{diag:cat}):
\be
\label{diag:cat0}
\scalebox{1.2}{
\xymatrix{
\TamedSymp_0 \ar_{X_h}[d] \ar[r] & \Symp_0\!\!\!\ar^{X_s}[d]\\
\AbVar \ar[r]  & \TorSymp  \\
}}
\ee
Here, $\TorSymp$ denotes the category of symplectic tori and $\AbVar$
is the category of polarized complex Abelian varieties. The vertical
equivalences are given by taking the quotient through the lattice. The
equivalence of categories $X_s$ allows one to encode the data of an
integral duality structure using a bundle $\cX_s(\bDelta)$ of symplectic
tori, endowed with a flat Ehresmann connection whose transport
preserves the symplectic structure of the fibers. The equivalence of
categories $X_h$ allows one to encode the data of an integral
electromagnetic structure using a (smooth) bundle $\cX_s(\bXi)$ of
polarized Abelian varieties, endowed with a flat Ehresmann connection
whose transport preserves the symplectic structure of the torus fibers
but need not preserve their complex structure.

A smooth scalar map $\varphi:M\rightarrow \cM$ induces a pulled-back
integral electromagnetic structure
$\Xi^\varphi=(\cS^\varphi,D^\varphi,J^\varphi,
\omega^\varphi,\Lambda^\varphi)$ defined on $M$. The Dirac
quantization condition requires that the de Rham cohomology class
$[\cV]\in H^2_{\dd_{D^\varphi}}(M,\cS^\varphi)$ of the electromagnetic
field strength $\cV$ of a {\em semiclassical} (as opposed to
classical) Abelian gauge field configuration be {\em integral} in the
sense that it belongs to the image through the universal coefficient
map of the second cohomology of $M$ with coefficients in the
$\Symp_0$-valued local system $T_{\bDelta^\varphi}$ defined by the
pulled-back integral duality structure
$\bDelta^\varphi=(\Delta^\varphi, \Lambda^\varphi)$.  This condition
indicates that a geometric model of semiclassical Abelian field
configurations can be constructed using a certain version of twisted
differential cohomology.

\subsection{Dirac systems and integral duality structures}

Let $\Delta=(\cS,D,\omega)$ be a duality structure of rank $2n$
defined over a manifold $N$. As explained in Appendix
\ref{app:integral}, an {\em integral symplectic space} is a
finite-dimensional vector space over $\R$ containing a full lattice on
which the symplectic pairing takes integer values.  Integral
symplectic spaces form a category denoted by $\Symp_0$. The elementary
divisor theorem implies that an integral symplectic space of dimension
$2n$ is determined up to isomorphism by its {\em type} $\bt$, which
is an element of the set $\Div^n$ of ordered $n$-uples of positive
integers $(t_1,\ldots, t_n)$ satisfying the divisibility conditions
$t_1|t_2|\ldots |t_n$. The integral symplectic space is called {\em principal} 
if its type equals $\bdelta(n)\eqdef (1,1,\ldots, 1)\in \Div^n$. 

\begin{Definition}
A {\em Dirac system} for $\Delta$ is a fiber sub-bundle
$\Lambda\subset \cS$ which satisfies the following conditions:
\begin{enumerate}
\item For any $x\in N$, the triple $(\cS_x,\omega_x,\Lambda_x)$ is an
  integral symplectic space, i.e. $\Lambda_x$ is a full lattice inside
  $\cS_x$ and we have $\omega_x(\Lambda_x,\Lambda_x)\subset \Z$.
\item $\Lambda$ is invariant under the parallel transport $T^\Delta$
  of $D$, in the sense that the following condition is satisfied for
  any path $\gamma\in \cP(N)$:
\ben
\label{LambdaFlatness}
T_\gamma^\Delta(\Lambda_{\gamma(0)})=\Lambda_{\gamma(1)}~~.
\een
\end{enumerate}
For every $x\in N$, the lattice $\Lambda_x\subset \cS_x$ is called the
{\em Dirac lattice defined by $\Lambda$ at the point $x$}. A duality
structure is called {\em semiclassical} if it admits a Dirac system.
\end{Definition}

\begin{Definition}
An {\em integral duality structure} defined on $N$ is a pair
$\bDelta\eqdef (\Delta,\Lambda)$, where $\Delta$ is is a duality
structure defined on $N$ and $\Lambda$ is a Dirac system for
$\Delta$. 
\end{Definition}

\

\noindent Since $T_\gamma^\Delta$ is an isomorphism of symplectic
vector spaces from $(\cS_{\gamma(0)},\omega_{\gamma(0)})$ to
$(\cS_{\gamma(1)},\omega_{\gamma(1)})$ and $N$ is connected, relation
\eqref{LambdaFlatness} implies that the type $\bt$ of the integral
symplectic space $(\cS_x,\omega_x,\Lambda_x)$ (see Appendix
\ref{app:integral}) does not depend on the point $x\in N$. This
quantity is denoted $\bt(\bDelta)$ and called the {\em type} of the
integral duality structure $\bDelta$. The integral duality structure
$\bDelta$ (and the corresponding Dirac system $\Lambda$) is called
{\em principal} if $\bt(\bDelta)=\bdelta(n)$,
i.e. if every fiber $(\cS_x,\omega_x,\Lambda_x)$ is a principal
integral symplectic space.

\begin{Definition}
Let $\bDelta=(\Delta_1,\Lambda_1)$ and
$\bDelta_2=(\Delta_2,\Lambda_2)$ be two integral duality structures
defined on $N$. A {\em morphism of of integral duality structures}
from $\bDelta_1$ to $\bDelta_2$ is a morphism of duality structures
$f:\Delta_1\rightarrow \Delta_2$ such that $f(\Lambda_1)\subset
\Lambda_2$.
\end{Definition}

\noindent With this definition, integral duality structures defined on
$N$ form a category which we denote by $\DS_0(N)$. Let
$\bDelta=(\Delta,\Lambda)$ be an integral duality structure defined on
$N$, where $\Delta=(\cS,D,\omega)$.

\begin{Definition}
The {\em parallel transport functor of $\bDelta$} is the functor
$T_{\bDelta}:\Pi_1(N)\rightarrow \Symp_0^\times$ which associates to
any point $x\in N$ the integral symplectic space $T_\bDelta(x)\eqdef
(\cS_x,\omega_x,\Lambda_x)$ and to any homotopy class $c\in \Pi_1(N)$
with fixed endpoints the isomorphism of integral symplectic spaces
$T_\bDelta(c)=T_\gamma^\Delta:(\cS_{\gamma(0)},\omega_{\gamma(0)},\Lambda_{\gamma(0)})\rightarrow
(\cS_{\gamma(1)},\omega_{\gamma(1)},\Lambda_{\gamma(1)})$, where
$\gamma\in \cP(N)$ is any path representing $c$ and $T_\gamma^\Delta$
is the parallel transport of $D$ along $\gamma$.
\end{Definition}

\

\noindent Clearly $T_\bDelta$ is a well-defined
$\Symp_0^\times$-valued local system on $N$. The correspondence which
associates $T_\bDelta$ to $\bDelta$ gives an equivalence of categories
between $\DS_0(N)$ and the functor category
$[\Pi_1(N),\Symp_0^\times]$. The parallel transport functors
$T_\bDelta$ and $T_\Delta$ of $\bDelta$ and $\Delta$ are related
through:
\be
T_\Delta=\cF\circ T_\bDelta~~,
\ee
where $\cF:\Symp_0^\times \rightarrow \Symp^\times$ is the surjective
functor which forgets the lattice. Hence giving a Dirac system on
$\Delta$ amounts to specifying a lift of $T_\Delta$ along $\cF$. As
explained in the next paragraph, the obstruction to existence of such
a lift can be described using the holonomy of $D$.

\paragraph{Classification of integral duality structures.}

Let $\bDelta=(\cS,D,\omega,\Lambda)$ be an integral duality structure
and consider a point $x\in N$. Then the Dirac lattice
$\Lambda_x\subset \cS_x$ at $x$ is invariant under the tautological
(fundamental) action of the holonomy group $\Hol_D^x$ (see
\eqref{hol}). Hence the latter consists of automorphisms
of the integral symplectic space $(\cS_x,\omega_x,\Lambda_x)$:
\be
\Hol_D^x=T_\bDelta(\pi_1(N,x))\subset \Sp(\cS_x,\omega_x,\Lambda_x)~~.
\ee
Accordingly, the holonomy representation \eqref{holrep} at $x$ gives a group morphism:
\be
\hol_D^x=T_\bDelta|_{\pi_1(N,x)}:\pi_1(N,x)\rightarrow \Sp(\cS_x,\omega_x,\Lambda_x)~~.
\ee
Let $\bt$ be the type of $(\cS,D,\omega)$. By Proposition
\ref{prop:intsympbasis}, any integral symplectic basis of
$(\cS_x,\omega_x,\Lambda_x)$ induces an isomorphism of integral
symplectic spaces $(\cS_x,\omega_x,\Lambda_x)\simeq
(\R^{2n},\omega_n,\Lambda_\bt)$ and an isomorphism of groups
$\Sp(\cS_x,\omega_x,\Lambda_x)\simeq \Sp_\bt(2n,\Z)$, where $\Sp_\bt(2n,\Z)$ is 
modified Siegel modular group of type $\bt$ and $\Lambda_\bt$ is the 
full lattice inside $\R^{2n}$ defined as in \eqref{LambdaDef}. Let ${\tilde
  N}\rightarrow N$ be the universal covering space (viewed as a
principal $\pi_1(N,x)$-bundle defined over $N$). Then $\Lambda$ is
isomorphic with the fiber bundle associated to ${\tilde N}$ through
the corestriction of $\hol_D^x$ to $\Lambda_x$:
\be
\Lambda\simeq {\tilde N}\times_{\hol_D^x}\Lambda_x~~.
\ee
This implies that the set of isomorphism classes of integral duality
structures of type $\bt$ defined on $N$ is in bijection with the
character variety:
\be
C_{\pi_1(N)}(\Sp_\bt(2n,\Z))=\Hom(\pi_1(N),\Sp_\bt(2n,\Z))/\Sp_\bt(2n,\Z)~~.
\ee
In particular, a duality structure $\Delta=(\cS,D,\omega)$ admits a
Dirac system of type $\bt$ iff its holonomy representation $\hol_D^x$
can be conjugated such that its image lies inside the group
$\Sp_\bt(2n,\Z)$.

\paragraph{The bundle of symplectic tori defined by an integral duality structure.}

Let $\bDelta\eqdef (\cS,D,\omega,\Lambda)$ be an integral duality
structure or rank $2n$ and type $\bt\in \Div^n$, defined on $N$.  For
any $x\in N$, the integral symplectic space
$(\cS_x,\omega_x,\Lambda_x)$ defines a symplectic torus
$X_s(\cS_x,\omega_x,\Lambda_x)$ of type $\bt$. When $x$ varies in $N$,
these symplectic tori fit into a fiber bundle $\cX_s(\bDelta)$, endowed
with a complete flat Ehresmann connection\footnote{A complement of the
  vertical distribution inside the tangent bundle of the total space.}
$\cH_\bDelta$induced by $D$. The Ehresmann transport of this
connection proceeds through isomorphisms of symplectic tori, so it
preserves the Abelian group structure and the symplectic structure of
the fibers. In particular, the holonomy group of $\cH_\bDelta$ is
contained in $\Sp_\bt(2n,\Z)$.

\begin{Definition}
The fiber bundle with connection $(\cX_s(\bDelta),\cH_\bDelta)$ is called the {\em bundle of
  symplectic tori} defined by the integral duality structure
$\bDelta$.
\end{Definition}

\subsection{Integral electromagnetic structures}

The formulation of the Dirac quantization condition for our models
involves the notion of integral electromagnetic structure, which we
discuss in this subsection.

\begin{Definition} 
An {\em integral electromagnetic structure} defined on $N$ is a pair
$\bXi=(\Xi,\Lambda)$, where $\Xi=(\Delta,J)$ is an electromagnetic
structure defined on $N$ and $\Lambda$ is a Dirac system for the
underlying duality structure $\Delta=(\cS,D,\omega)$ of $\Xi$.
The type of the integral duality structure $\bDelta=(\cS,D,\omega,
\Lambda)$ is called the {\em type} $\bt(\bXi)$ of $\bXi$:
\be
\bt(\bXi)\eqdef \bt(\bDelta)~~.
\ee
\end{Definition}

\

\noindent Let $\bXi=(\cS,D, J,\omega,\Lambda)$ be an integral
electromagnetic structure of real rank $2n$ and type $\bt$, with
underlying duality structure $\Delta=(\cS,D,\omega)$. Thus
$\bDelta=(\cS,D,\omega,\Lambda)$ is an integral duality structure of
type $\bt$. For every $x\in N$, the fiber
$(\cS_x,J_x,\omega_x,\Lambda_x)$ is an integral tamed symplectic space
(see Appendix \ref{app:integral}), which defines a polarized Abelian
variety $X_h(\cS_x,J_x,\omega_x,\Lambda_x)$ of type $\bt$. 
The underlying complex and symplectic tori of this Abelian variety are given by
$X_c(\cS_x,J_x,\Lambda_x)$ and $X_s(\cS_x,\omega_x,\Lambda_x)$, where $X_c$ is defined 
in Appendix \ref{app:integral}. When
$x$ varies in $N$, these polarized Abelian varieties fit into a smooth
fiber bundle $\cX_h(\bXi)$, whose fiber at $x\in N$ is given by:
\be
\cX_h({\bXi})_x=X_h(\cS_x,J_x,\omega_x,\Lambda_x)~~.  
\ee
Forgetting the complex structure on the fibers of $\cX_h(\bXi)$
produces the bundle of symplectic tori $\cX_s(\bDelta)$. As above, the
connection $D$ induces a complete integrable Ehresmann connection
$\cH_\bXi\eqdef \cH_\bDelta$ on $\bcX_h(\bDelta)$. The Ehresmann
transport of $\cH_\bXi$ is through isomorphisms of {\em symplectic}
tori, so it preserves the group structure and symplectic structure of
the fibers but it need not preserve their complex structure. In
particular, the holonomy group of $\cH_\bXi$ is contained in
$\Sp_\bt(2n,\Z)$.

\begin{Definition}
The fiber bundle with connection $(\cX_h({\bXi}),\cH_\bXi)$ is called the {\em bundle of
  polarized Abelian varieties} defined by the integral electromagnetic
structure $\bXi$.
\end{Definition}

\subsection{Integral electromagnetic fields}

Let $(M,g)$ be a Lorentzian four-manifold and $(\cM,\cG)$ be a
Riemannian manifold. Let $\varphi\in \cC^\infty(M,\cM)$ be a 
smooth map from $M$ to $\cM$. Let $\bXi=(\Xi,\Lambda)$ be an integral
electromagnetic structure defined on $\cM$, with underlying
electromagnetic structure $\Xi=(\cS,D,J,\omega)$ and underlying
duality structure $\Delta=(\cS,D,\omega)$.  Then the system
$\bXi^\varphi=(\Xi^\varphi,\Lambda^\varphi)$ is an electromagnetic
structure on $M$, where $\Lambda^\varphi$ is the $\varphi$-pullback of
the fiber sub-bundle $\Lambda\subset \cS$. We have
$\Xi^\varphi=(\Delta^\varphi,J^\varphi)$, with
$\Delta^\varphi=(\cS^\varphi,D^\varphi,\omega^\varphi)$. Let
$\bDelta^\varphi\eqdef (\Delta^\varphi,\Lambda^\varphi)$ denote the
underlying integral duality structure.

The twisted de Rham theorem implies that the twisted de Rham cohomology
space $H^\bullet_{\dd_D^{\varphi}}(M,\cS^\varphi)$ is naturally
isomorphic (as a graded vector space) with the singular cohomology
space $H^\bullet(M,\cS^\varphi)$ with coefficients in the local system
defined by $(\cS^\varphi,D^\varphi)$ (which in turn can be viewed as
the sheaf cohomology of the sheaf $\fS^\varphi$ of flat local sections
of $(\cS^\varphi,D^\varphi)$). We shall identify these cohomology
spaces in what follows and denote them by $H(N,\Delta^\varphi)$: 
\be
H(M,\Delta^\varphi)\eqdef H^\bullet_{\dd_{D^\varphi}}(M,\cS^\varphi)\equiv H^\bullet(M,\cS^\varphi)\equiv H^\bullet(M,\fS^\varphi)~~.
\ee
Let $H^\bullet(M,\bDelta^\varphi)$ be the twisted singular cohomology
of $M$ with coefficients in the $\Vect_0^\sp$-valued local system
$T_{\bDelta^\varphi}$ determined by the integral duality structure
$\bDelta^\varphi$.  Since $\cS^\varphi=\Lambda^\varphi\otimes_\Z\R$,
the coefficient sequence gives a map
$j_\ast:H^\bullet(M,\bDelta^\varphi)\rightarrow
H^\bullet(M,\Delta^\varphi)$, whose image is a lattice
$H^\bullet_{\Lambda^\varphi}(M,\Delta^\varphi)\subset
H^\bullet(M,\Delta^\varphi)$. We have
$H_{\Lambda^\varphi}^\bullet(M,\Delta^\varphi)\simeq
H^\bullet(M,\bDelta^\varphi)_\tf\eqdef
H^\bullet(M,\bDelta^\varphi)/\Tors(H^\bullet(M,\bDelta^\varphi))$ and
$H^\bullet(M,\Delta^\varphi)\simeq
H^\bullet(M,\bDelta^\varphi)\otimes_\Z \R$.

\begin{Definition}
An electromagnetic field $\cV\in \Omega^2(M,\cS^\varphi)$ is called
{\em $\Lambda^\varphi$-integral} if its $D^\varphi$-twisted cohomology
class $[\cV]\in H^2(M,\Delta^\varphi)$ belongs to
$H_{\Lambda^\varphi}^2(M,\Delta^\varphi)$:
\ben
\label{integrality}
[\cV]\in H_{\Lambda^\varphi}^2(M,\Delta^\varphi)=j_\ast(H^2(M,\bDelta^\varphi))~~.
\een
\end{Definition}

\

\noindent The condition that $\cV$ be integral is called the {\em
  (twisted) Dirac quantization condition} defined by the integral
electromagnetic structure $\bXi$.

\subsection{Integral duality transformations and integral scalar-electromagnetic symmetries}

\begin{Definition}
An {\em integral scalar-duality structure} is a pair $\bD_0\eqdef
(\bcD_0,\Lambda)$, where $\bD=(\Sigma,\Delta)$ is a scalar-duality
structure and $\Lambda$ is a Dirac system for $\Delta$.
\end{Definition}

\begin{Definition}
An {\em integral scalar-electromagnetic structure} is a pair
$\bcD=(\cD,\Lambda)$, where $\cD=(\Sigma,\Xi)$ is a
scalar-electromagnetic structure and $\Lambda$ is a Dirac system
for the underlying duality structure $\Delta$ of $\Xi$.
\end{Definition}

\

\noindent Let $\bcD=(\cD,\Lambda)$ be an integral
scalar-electromagnetic structure with
$\cD=(\cM,\cG,\Phi,\cS,D,J,\omega)$. Let $\Sigma=(\cM,\cG,\Phi)$,
$\Delta=(\cS,D,\omega)$ and $\Xi\eqdef (\cS,D,J,\omega)$ be the
underlying scalar, duality and electromagnetic structures of
$\cD$. Let $\bDelta=(\Delta, \Lambda)$ and $\bXi=(\Xi,\Lambda)$ be the
underlying integral duality structure and integral electromagnetic
structure. Let $\cD_0=(\Sigma, \Delta)$ be the underlying
scalar-duality structure and $\bD_0=(\cD_0,\Lambda)$ be the underlying
integral scalar-duality structure.

\begin{Definition}
The {\em integral duality group} defined by the integral
scalar-duality structure $\bcD_0$ is the following subgroup of the
duality group $\Aut(\cD_0)=\Aut^\ub_\Sigma(\Delta)$:
\be
\Aut(\bcD_0)\eqdef \{f\in \Aut(\cD_0)|f(\Lambda)=\Lambda\}\subset \Aut(\cD_0)~~.
\ee
Elements of this group are called {\em integral duality
  transformations}.
\end{Definition}

\begin{Definition}
The {\em integral scalar-electromagnetic group} defined by the
integral scalar-electromagnetic structure $\bcD$ is the following
subgroup of the scalar-electromagnetic group
$\Aut(\cD)=\Aut^\ub_\Sigma(\Xi)$:
\be
\Aut(\bcD)\eqdef \{f\in \Aut(\cD)|f(\Lambda)=\Lambda \}\subset \Aut(\cD)~~.
\ee
Elements of this group are called {\em integral scalar-electromagnetic
  symmetries}.
\end{Definition}

\noindent Notice that $\Aut(\bcD)$ is a subgroup of $\Aut(\bcD_0)$. 

\begin{Remark}
Consider the case when the duality structure $\Delta=(\cS,D,\omega)$
is trivial. Then the parallel transport of $D$ identifies the data
encoded by the the integral duality structure
$\bDelta=(\cS,D,\omega,\Lambda)$ with an integral symplectic space
$(\cS_0,\omega_0,\Lambda_0)$, which we can take to be the fiber of
$\bDelta$ at some fixed point of $\cM$. In this case, it is easy to see that:
\be
\Aut(\bcD)\simeq \Aut(\Sigma)\times \Aut(\cS_0,\omega_0,\Lambda_0)\simeq \Aut(\Sigma)\times \Sp_\bt(2n,\Z)~~,
\ee
where $\bt$ is the type of $(\cS_0,\omega_0,\Lambda_0)$. This recovers
statements familiar from the supergravity literature, which apply only
to the case of trivial duality structures. Notice that most of the 
supergravity literature considers only the case of principal Dirac lattices. 
\end{Remark}

\subsection{Remarks on the space of semiclassical fields}
\label{sec:semiclassical}

Since $H_{\Lambda^\varphi}^2(M,\cS^\varphi)$ does not capture the
torsion part of the group $H^2(M,\Lambda^\varphi)$, the integrality
condition \eqref{integrality} is weaker than the condition which is
expected to constrain the semiclassical fields of the quantized
Abelian gauge theory in the background metric $g$ and background
scalar field $\varphi$ (when an appropriate quantization can be
expected, such as when the space-time $(M,g)$ is globally
hyperbolic). In this case, the natural model for semiclassical fields
is provided by a twisted version of differential cohomology. Since we
plan to discuss this in a separate paper, we only make some brief
remarks about this aspect.

Consider an integral symplectic space $(\cS_0,\omega_0, \Lambda_0)$
defined on $N$.  One can define a version of differential cohomology
valued in such objects, which satisfies a variant of the axioms of
\cite{Hopkins, SimonsSullivan}; an explicit construction can be given
for example using Cheeger-Simons characters valued in the (affine)
symplectic torus $\cS_0/\Lambda_0$ defined by
$(\cS_0,\omega_0,\Lambda_0)$. This can be promoted to a twisted
version $\check{H}^k(N,T)$, whereby the coefficient object is replaced
by a local system $T:\Pi_1(N)\rightarrow \Symp_0^\times$. Given an
integral electromagnetic structure $\bXi$ whose underlying integral
duality structure $\bDelta$ corresponds to the local system $T$, the
correct model for the set of semiclassical Abelian gauge fields is
provided by a certain subset of $\check{H}^2(N,T)$; the curvature
$\cV=\curv(\alpha)$ of any element $\alpha$ of this subset is a
polarized closed 2-form which satisfies the integrality condition
\eqref{integrality}. When $(M,g)$ is globally hyperbolic, the initial
value problem for the twisted electromagnetic field is well-posed by
the results of \cite{Bar} (because the twisted d'Alembert operator is
normally hyperbolic) and restriction to a Cauchy hypersurface allows
one to describe explicitly the space of semiclassical fields. In that
case, one can use a version of the method of \cite{Szabo} to quantize
the electromagnetic theory in a manner which reproduces this model for
the space of semiclassical fields. It turns out that elements
$\alpha\in \check{H^2}(N,T)$ classify affine symplectic
$T^{2n}$-bundles\footnote{Non-principal fiber bundles with fiber a
  symplectic $2n$-torus and whose structure group reduces to the
  affine symplectic group of such a torus.} with connection. This
allows one to represent $\cV$ as the curvature of a connection on such
a bundle, which gives the geometric interpretation of semiclassical
Abelian gauge fields.

\section{Global solutions and classical locally geometric U-folds}
\label{sec:GeometricU}

In this section we show that, when the duality structure is
non-trivial, then general solutions of twisted ESM theories can be
interpreted as locally geometric U-folds. We also discuss the case
when $(\cS,\omega)$ is symplectically trivial but the scalar-manifold
is not simply-connected and $D$ is a non-trivial flat connection
(i.e. $D$ has non-trivial holonomy). 

\subsection{U-fold interpretation of global solutions}

Let $M$ be a four-manifold and $\cD=(\Sigma,\Xi)$ be a
scalar-electromagnetic structure with underlying scalar structure
$\Sigma = (\cM,\cG,\Phi)$ and underlying electromagnetic structure
$\Xi = (\cS,D,J,\omega)$. Let $(g, \varphi, \cV)\in \Sol_{\cD}(M)$ be
a global solution of the ESM equations defined by $\cD$ on $M$. Let
$(\cU_\alpha)_{\alpha\in I}$ be a good open cover of $\cM$.  Since
each $\cU_\alpha$ is contractible, we can find a flat a symplectic
local frame $\cE_\alpha=(e_1^{(\alpha)}\ldots e_n^{(\alpha)},
f_1^{(\alpha)}\ldots f_n^{(\alpha)})=(u_1^{(\alpha)}\ldots
u_{2n}^{(\alpha)})$ of $\Delta=(\cS,D,\omega)$ defined over
$\cU_\alpha$, which trivializes $\Delta$ over $\cU_\alpha$.  Let
$\cU_{\alpha\beta}\eqdef \cU_\alpha\cap \cU_\beta$. When
$\cU_{\alpha\beta}\neq \emptyset$, we have
$u^{(\beta)}_M=_{\cU_{\alpha\beta}}(f_{\alpha\beta})^N_{~M}
u^{(\alpha)}_N$ for some uniquely-defined matrix $f_{\alpha\beta}\in
\Sp(2n,\R)$, where we use the Einstein summation convention. When
$\cU_{\alpha\beta}=\emptyset$, we set $f_{\alpha\beta}=I_{2n}$.  Then
$(f_{\alpha\beta})_{\alpha,\beta\in I}$ satisfy the cocycle
conditions:
\ben
\label{fcocycle}
f_{\alpha\beta}f_{\beta\delta}=_{\cU_{\alpha\beta\delta}}f_{\alpha\delta}~~,
\een
where $\cU_{\alpha\beta\delta}\eqdef \cU_{\alpha}\cap \cU_{\beta}\cap
\cU_\delta$.  Let $U_\alpha\eqdef \varphi^{-1}(\cU_\alpha)$. Then
$(U_\alpha)_{\alpha\in I}$ is an open cover of $M$ and
$\cE_\alpha^\varphi\eqdef ((e_1^{(\alpha)})^\varphi,\ldots,
(e_n^{(\alpha)})^\varphi, (f_1^{(\alpha)})^\varphi,\ldots,
(f_n^{(\alpha)})^\varphi)=((u_1^{(\alpha)})^\varphi,\ldots,(u_{2n}^{(\alpha)})^\varphi)$
are flat symplectic local frames of $(\cS^\varphi,
D^\varphi,\omega^\varphi)$ defined over $U_\alpha$. Define:
\begin{equation}
(g_{\alpha}, \varphi_{\alpha}, \cV_{\alpha}) \eqdef (g|_{U_\alpha}, \varphi|_{U_\alpha}, \cV|_{U_{\alpha}}) \in \Sol_{\cD}(U_{\alpha})~~.
\end{equation}
When $\cU_{\alpha\beta}\neq \emptyset$, we have $U_{\alpha\beta}\eqdef
U_{\alpha}\cap U_\beta=\varphi^{-1}(\cU_{\alpha\beta})\neq \emptyset$
and:
\ben
\label{ualphabeta}
(u_M^{(\beta)})^\varphi=_{U_{\alpha\beta}}(f_{\alpha\beta})^N_{~M} (u^{(\alpha)}_N)^\varphi
\een
as well as:
\ben
\label{eq:ggluing}
g_{\alpha}|_{U_{\alpha\beta}} = g_{\beta}|_{U_{\alpha\beta}}\, , \qquad \varphi_{\alpha}|_{U_{\alpha\beta}} = \varphi_{\beta}|_{U_{\alpha\beta}}~~,~~
\cV_{\alpha}|_{U_{\alpha\beta}}=\cV_{\beta}|_{U_{\alpha\beta}}~~.
\end{equation}
Expand:
\begin{equation}
\cV_{\alpha} = \cV^M_{\alpha}\otimes (u^{(\alpha)}_M)^\varphi=F^i_\alpha \otimes (e_i^{(\alpha)})^\varphi + G^i_\alpha \otimes (f_i^{(\alpha)})^\varphi~~,~~\mathrm{with}~~\cV^M_{\alpha}\in\Omega^2(U_{\alpha})~~,
\end{equation}
where we use Einstein summation over $M=1\ldots 2n$ and $i=1\ldots n$
and we set $F^i_\alpha\eqdef \cV^{i}_\alpha$ and $G^i_\alpha\eqdef
\cV^{i+n}_\alpha$ for all $i=1\ldots n$. Combining this with
\eqref{ualphabeta} gives:
\be
\cV^M_\alpha=(f_{\alpha\beta})^M_{~N}\cV^N_\beta~~.
\ee
Let ${\hat \cV}_\alpha \eqdef \left[\begin{array}{c} \cV^1_\alpha
    \\\ldots\\ \cV^{2n}_\alpha\end{array}\right]\in
\Omega^2(U_\alpha,\R^{2n})$. The relation above reads:
\ben
\label{cVgluing}
{\hat \cV}_\alpha=_{U_{\alpha\beta}}f_{\alpha\beta}{\hat \cV}_\beta~~,
\een
showing that when going from patch to patch one must perform a
locally-defined duality transformation $f_{\alpha\beta}\in
\Sp(2n,\R)$. Thus global solutions are ``glued'' from local solutions
using duality transformations.

Conversely, given a good open cover $(\cU_{\alpha})_{\alpha\in I}$ of
$M$ and elements $f_{\alpha\beta}\in \Sp(2n,\R)$ satisfying the
cocycle condition \eqref{fcocycle}, any family of local solutions
$(g_\alpha,\varphi_\alpha,\cV_\alpha)\in \Sol_\cD(U_\alpha)$ (where
$U_\alpha\eqdef \varphi^{-1}(\cU_\alpha)$) which satisfies conditions
\eqref{eq:ggluing} corresponds to a global scalar-electromagnetic
solution $(g,\varphi,\cV)$ of the ESM theory of type $\cD$. When the
ESM theory is the bosonic sector of a four-dimensional supergravity,
the elements $f_{\alpha\beta}\in \Sp(2n,\R)$ correspond to
locally-defined U-duality transformations. These observations justify
the following:

\begin{Definition}
A {\em locally geometric classical U-fold of type $\cD$} defined
on $M$ is a global solution $(g,\varphi,\cV)\in \Sol_{\cD}(M)$ of the
equations of motion of the ESM theory defined by $\cD$ on $M$.
\end{Definition}

\subsection{The case when $(\cS,\omega)$ is symplectically trivial}
\label{sec:Strivial}

Recall that a symplectic vector bundle $(\cS,\omega)$ is called {\em
  symplectically trivial} if it admits a globally-defined symplectic
frame. Symplectic triviality of $(\cS,\omega)$ implies topological
triviality of $\cS$, but the converse is {\em not} true. If $P_0$
denotes the principal $\GL(2n,\R)$-bundle of frames of $\cS$ and $P$
denotes the principal $\Sp(2n,\R)$-bundle of symplectic frames of
$(\cS,\omega)$, then $\cS$ is topologically trivial iff $P_0$ is
trivial while $(\cS,\omega)$ is symplectically trivial iff $P$ is
trivial. Notice that $P_0$ is associated to $P$ through the inclusion
morphism $\Sp(2n,\R)\hookrightarrow \GL(2n,\R)$. The symplectic frame
bundle $P$ reduces to a principal $\U(n)$-bundle upon choosing a
taming on $(\cS,\omega)$, two-different choices giving isomorphic
reductions. This also follows from the fact that the groups
$\Sp(2n,\R)$ and $\U(n)$ are homotopy equivalent, the latter being a
maximal compact form of the former. The Chern-classes
$c_k(\cS,\omega)$ of a symplectic vector bundle $(\cS,\omega)$ are
defined as the Chern classes of the unitary vector bundle
$(\cS,\omega,J)$, where $J$ is any taming on $(\cS,\omega)$. A
necessary (but generally not sufficient) condition that
$(\cS,\momega)$ be symplectically trivial is that $c_k(\cS,\omega)=0$
for all $k=1\ldots \dim \cM$.

When $\pi_1(\cM)\neq 1$ and $(\cS,\omega)$ is symplectically trivial,
there exist flat symplectic connections $D$ on $(\cS,\omega)$ which
have non-trivial holonomy. Hence a duality structure
$(\cS,D,\omega)$ defined on a non-simply connected scalar manifold
$\cM$ can be nontrivial even when $(\cS,\omega)$ is symplectically
trivial.

Let $\Xi=(\cS,D,J,\omega)$ be an electromagnetic structure defined on
$\cM$ such that $(\cS,\omega)$ is symplectically trivial and let
$\cE=(e_1\ldots e_n, f_1\ldots f_n)=(u_1\ldots u_{2n})$ be a global
symplectic frame of $(\cS,\omega)$. Let ${\hat
  A}=(A^M_{~N})_{M,N=1\ldots 2n}\in \Omega^1(\cM,\Mat(2n,\R))$ be the
matrix-valued connection 1-form of $D$ with respect to $\cE$, whose
entries are defined through:
\be
Du_M=A^N_{~M}u_N~~.
\ee
The condition that $A$ is flat reads: 
\ben
\label{hatAflat}
\dd {\hat A}+{\hat A}\wedge {\hat A}=0~~.
\een
The matrix ${\hat J}=(J^M_{~N})_{M,N=1\ldots 2n}$ of the taming $J$
with respect to $\cE$ has entries defined through:
\be
Ju_M=J^N_{~M}u_N~~.
\ee
Consider a solution $(g,\varphi,\cV)\in \Sol_\cD(M)$ of the ESM
equations. Then the pulled-back symplectic vector bundle
$(\cS^\varphi, \omega^\varphi)$ is symplectically trivial and $\cE$
induces a global symplectic frame $\cE^\varphi=(e_1^\varphi,\ldots,
e_n^\varphi, f_1^\varphi,\ldots, f_n^\varphi)=(u_1^\varphi,\ldots,
u_{2n}^\varphi)$ of $(\cS^\varphi,\omega^\varphi)$. The matrix-valued
connection 1-form ${\hat A}^\varphi\in \Omega^1(M,\Mat(2n,\R))$ of the
pulled-back connection $D^\varphi$ with respect to $\cE^\varphi$ has
entries:
\be
(A^\varphi)^{M}_{~N}=\varphi^\ast(A^M_{~N})\in \Omega^1(M)~~,
\ee
while the matrix ${\hat J}^\varphi$ of the pulled-back taming
$J^\varphi$ with respect to this frame has entries:
\be
(J^\varphi)^M_{~N}=J^{M}_{~N}\circ \varphi~~.
\ee
Writing:
\begin{equation}
\cV = \cV^{M} \otimes u^{\varphi}_{M}~~\mathrm{with}~~\cV^{M}\in \Omega^2(M)
\end{equation} 
shows that $\cV$ is globally equivalent with the vector-valued 2-form
${\hat \cV}\eqdef \left[\begin{array}{c}\cV^1\\\ldots
    \\ \cV^{2n} \end{array} \right]\in \Omega^2(M,\R^{2n})$. The
polarization condition $\ast \cV=-J^\varphi\cV$ amounts to:
\be
\ast {\hat \cV}=-{\hat J}^\varphi {\hat \cV}
\ee
while the electromagnetic equation takes the form:
\be
\dd {\hat \cV}+{\hat A}\wedge {\hat \cV}=0~~.
\ee
We distinguish the cases:
\begin{enumerate}
\itemsep 0.0em
\item When the duality structure $(\cS,D,\omega)$ is trivial, we can
  find a globally-defined {\em flat} symplectic frame of
  $(\cS,\omega)$, i.e. a global symplectic frame $\cE=(u_1\ldots
  u_{2n})$ such that $D(u_M)=0$ for all $M=1\ldots 2n$. In this case,
  we have ${\hat A}=0$ and the electromagnetic equations reduce to the
  condition that the vector-valued globally-defined one-form ${\hat
    \cV}$ is closed.
\item When the duality structure $(\cS,D,\omega)$ is not trivial,
  ${\hat A}$ is nonzero with respect to any globally-defined
  symplectic frame $\cE$ and hence ${\hat A}^\varphi$ will be nonzero
  if $\varphi$ is generic enough. Thus ${\hat \cV}$ cannot, in
  general, be globally identified with a {\em closed} vector-valued
  2-form defined on $M$.
\end{enumerate}

\paragraph{Relation to the construction of
\cite{AndrianopoliUduality, AndrianopoliFlat}.}

References \cite{AndrianopoliUduality, AndrianopoliFlat} consider the
particular setting of the bosonic sector of {\em extended} ($\cN\geq
2$) supergravity theories, taking the scalar manifold $(\cM,\cG)$ to
be a {\em simply-connected} symmetric space of non-compact type (which
therefore is contractible \footnote{As explained in
  \cite{GeometricUfolds}, the symmetric spaces traditionally
  considered in the extended supergravity literature are contractible,
  being diffeomorphic with $\R^N$ for some positive integer $N>0$.})
of the form $\cM=G/H$. In this case, we have $\Iso(\cM,\cG)=G$. The
assumption that $\cM$ is contractible insures that the principal
$H$-bundle $P_H$ defined over $\cM$ by the natural projection
$\pi:G\rightarrow \cM=G/H$ is trivial, and hence there exists a global
smooth choice of coset representatives, i.e a smooth global section of
$P_H$ (such a section is denoted $L$ in
\cite[eq. (3.2)]{AndrianopoliUduality} and
\cite[eq. (3.2)]{AndrianopoliFlat}).  This assumption also allows
loc. cit. to use globally-defined coordinates and thus to ignore all
topological aspects. For that particular case, any duality structure
is necessarily trivial since $\pi_1(\cM)=1$. This means that the flat
symplectic connection $D$ has trivial holonomy in the situation of
loc. cit., being gauge-equivalent to the trivial connection. Hence
even the very special setting discussed in the present subsection is
more general than that considered in loc. cit. This is because a
symplectically trivial duality structure on a scalar manifold $\cM$
which is {\em not} simply connected need not be trivial, since its
flat connection $D$ may have non-trivial holonomy. When
$\pi_1(\cM)\neq 1$, there generally exists an infinity of
gauge-equivalence classes of non-trivial flat connections, to which
the construction of \cite{AndrianopoliUduality, AndrianopoliFlat}
cannot apply globally unless it is first modified using the results of the
present paper. 

\paragraph{Conditions for symplectic triviality of $(\cS^\varphi,\omega^\varphi)$.}
The pulled-back bundle $(\cS^\varphi,\omega^\varphi)$ may be
symplectically trivial even when $(\cS,\omega)$ is symplectically
non-trivial. In this case, the twisted Abelian gauge theory defined on
$(M,g)$ by the pulled-back electromagnetic structure
$\Xi^\varphi=(\cS^\varphi,D^\varphi,J^\varphi,\omega^\varphi)$ can be
described as a theory of vector-valued 2-forms ${\hat
  \cV}=\left[\begin{array}{c}\cV^1\\\ldots \\ \cV^{2n} \end{array}
  \right]\in \Omega^2(M,\cS^\varphi)$, where $\cV^M$ are the
coefficients of $\cV$ in a globally-defined symplectic frame
$\bu_1\ldots \bu_{2n}$ of $\cS^\varphi$ (which need not be the
pull-back of a symplectic frame of $(\cS,\omega)$):
\be
\cV=\cV^M\otimes \bu_M~~.
\ee
Most of the formulas above apply, except that $A^\varphi$ and
$J^\varphi$ are no longer the pull-back of objects defined on
$\cM$. Every symplectic vector bundle over a (not-necessarily compact)
four-manifold is associated to a $\U(2)$ principal bundle
\cite{Husemoller}. Thus $(\cS^{\varphi},\omega^{\varphi})$ is
symplectically trivial iff its first and second Chern classes vanish.
Since $c_k(\cS^\varphi,\omega^\varphi)=\varphi^\ast(c_k(\cS,\omega))$,
this amounts to the conditions:
\ben
\label{U2cond}
\varphi^\ast(c_{1}(\cS,\omega))=0~~\mathrm{and}~~\varphi^\ast(c_{2}(\cS,\omega))=0~~.
\een
Let us give some sufficient conditions for symplectic triviality of
$(\cS^\varphi,\omega^\varphi)$.  Recall that an electromagnetic
structure $\Xi = (S,D,J, \omega)$ defined on $M$ is unitary iff the
parallel transport of $D$ preserves $J$, which means that
$(\cS,D,J,\omega)$ is a flat Hermitian vector bundle. Accordingly, the
holonomy bundle of $(\cS,D,\omega)$ is a flat principal
$\U(n)$-bundle.  

\begin{Definition}
A unitary electromagnetic structure $\Xi = (S,D,J, \omega)$ defined on
$\cM$ is called \emph{reduced} if the structure group of its holonomy
bundle reduces to $\U(2)$.
\end{Definition}

\begin{Proposition}
Assume that $H^\ast(M,\Z)$ has no torsion. Then the $\varphi$-pull-back of
a reduced unitary electromagnetic structure defined over $\cM$ is
symplectically trivial.
\end{Proposition}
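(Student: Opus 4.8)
The plan is to reduce symplectic triviality of the pullback to the vanishing of two integral Chern classes, and then to kill these classes using flatness together with the torsion-free hypothesis. First I would invoke the criterion recalled immediately before the statement: since $(\cS^\varphi,\omega^\varphi)$ is a symplectic vector bundle over the four-manifold $M$, it is associated to a $\U(2)$ principal bundle and is therefore symplectically trivial if and only if $c_1(\cS^\varphi,\omega^\varphi)=0$ and $c_2(\cS^\varphi,\omega^\varphi)=0$ in $H^\ast(M,\Z)$. Using naturality in the form $c_k(\cS^\varphi,\omega^\varphi)=\varphi^\ast(c_k(\cS,\omega))$, it thus suffices to show that $\varphi^\ast(c_1(\cS,\omega))$ and $\varphi^\ast(c_2(\cS,\omega))$ vanish.

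The heart of the argument is that the pulled-back bundle is flat. Because $\Xi=(\cS,D,J,\omega)$ is unitary, $D$ is a complex-linear connection on $(\cS,J)$, so the pulled-back connection $D^\varphi$ is a flat complex-linear connection on $(\cS^\varphi,J^\varphi)$; that is, $(\cS^\varphi,J^\varphi)$ is a flat Hermitian vector bundle over $M$. The reducedness hypothesis ensures that the holonomy bundle already reduces to $\U(2)$, so the only Chern data carried by the bundle are $c_1$ and $c_2$, precisely matching the four-manifold criterion of the first step.

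I would then apply the classical fact that a flat complex vector bundle has torsion integral Chern classes: Chern-Weil theory applied to the flat connection $D^\varphi$ shows that all Chern forms vanish, so the images of $c_1(\cS^\varphi,\omega^\varphi)$ and $c_2(\cS^\varphi,\omega^\varphi)$ in $H^\ast(M,\R)$ are zero, which means these integral classes lie in the torsion subgroup of $H^\ast(M,\Z)$. Since $H^\ast(M,\Z)$ has no torsion by hypothesis, both classes vanish identically, and the criterion of the first step then gives symplectic triviality of $(\cS^\varphi,\omega^\varphi)$.

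The one genuinely nonformal ingredient is the vanishing of the real Chern classes of a flat bundle; everything else is bookkeeping. I do not expect a serious obstacle, but the point requiring care is that the complex structure used to define $c_k(\cS,\omega)$ is exactly the taming $J$ preserved by $D$ in the unitary case, so that $D^\varphi$ is genuinely complex-linear and Chern-Weil theory applies on the nose.
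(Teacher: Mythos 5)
Your proof is correct and follows essentially the same route as the paper's: reduce to the vanishing of $c_1$ and $c_2$ via the $\U(2)$-criterion over four-manifolds, observe that unitarity plus reducedness makes $(\cS^\varphi,D^\varphi,J^\varphi)$ a flat $\U(2)$-bundle, apply Chern--Weil to conclude the integral Chern classes are torsion, and kill them with the torsion-free hypothesis. The only difference is cosmetic emphasis (you phrase the flatness step through naturality $c_k(\cS^\varphi,\omega^\varphi)=\varphi^\ast(c_k(\cS,\omega))$ before invoking Chern--Weil, while the paper applies it directly to the pulled-back flat bundle), which changes nothing of substance.
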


\begin{proof}
Let $\Xi=(\cS,D,J,\omega)$ be a reduced unitary electromagnetic
structure defined on $\cM$. Then
$\Xi^\varphi=(\cS^\varphi,D^\varphi,J^\varphi,\omega^\varphi)$ is
associated to a flat principal $\U(2)$-bundle defined over $M$. By
Chern-Weil theory, this implies that the first two Chern classes of
$(\cS^\varphi,\omega^\varphi)$ are pure torsion, while the higher
Chern classes vanish. Since $H^\ast(M,\Z)$ has no torsion, it follows 
that conditions \eqref{U2cond} are satisfied. 
\end{proof}

\begin{Proposition}
Assume that $M$ is non-compact and that
$(\cS^{\varphi},\omega^{\varphi})$ admits a real Lagrangian
subbundle. Then $(\cS^{\varphi},\omega^{\varphi})$ is symplectically
trivial.
\end{Proposition}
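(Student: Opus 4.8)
The plan is to reduce symplectic triviality to the vanishing of Chern classes of the associated unitary bundle, and then to exploit both the non-compactness of $M$ and the presence of the Lagrangian subbundle to kill those classes. Recall from the remarks preceding the statement that, after fixing a taming $J$ of $(\cS^\varphi,\omega^\varphi)$, the symplectic frame bundle reduces to a principal $\U(n)$-bundle (where $2n=\rk\cS$), and that $(\cS^\varphi,\omega^\varphi)$ is symplectically trivial if and only if the associated complex vector bundle $(\cS^\varphi,J)$ is trivial, i.e. if and only if $c_1(\cS^\varphi,\omega^\varphi)=c_2(\cS^\varphi,\omega^\varphi)=0$.

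First I would record the consequence of non-compactness. Since $M$ is a connected, non-compact (open) four-manifold, it has the homotopy type of a CW complex of dimension at most three; in particular $H^k(M;\Z)=0$ for $k\ge 4$, so $c_2(\cS^\varphi,\omega^\varphi)\in H^4(M;\Z)=0$ vanishes automatically. Over a complex of dimension $\le 3$ the only potentially non-zero obstruction to trivializing a complex vector bundle is $c_1\in H^2(M;\Z)$, because the relevant higher homotopy of the classifying space, $\pi_3(B\U(n))=\pi_2(\U(n))=0$, contributes in degree $3$ and all obstructions of degree $\ge 4$ land in $H^{\ge 4}(M)=0$. Hence it suffices to prove $c_1(\cS^\varphi,\omega^\varphi)=0$.

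Next I would use the Lagrangian subbundle $L\subset\cS^\varphi$. Choosing $J$ compatibly with $L$, the subbundle $JL$ is a complementary Lagrangian and $\omega^\varphi$ identifies it with $L^\ast$, yielding a symplectic isomorphism $\cS^\varphi\cong L\oplus L^\ast$ and a complex-linear isomorphism $(\cS^\varphi,J)\cong L\otimes_\R\C$. Thus the structure group reduces from $\U(n)$ to $\O(n)$, with classifying map factoring through complexification $B\O(n)\to B\U(n)$, and the total Chern class of $(\cS^\varphi,J)$ is that of a complexification: its even part is expressed through the Pontryagin classes of $L$ while its odd part is $2$-torsion. Concretely $c_2(\cS^\varphi,\omega^\varphi)=-p_1(L)$ (already zero by the previous step) and $c_1(\cS^\varphi,\omega^\varphi)=\beta(w_1(L))$, the integral Bockstein of the first Stiefel--Whitney class of $L$.

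The step I expect to be the real obstacle is the vanishing of this single $2$-torsion class $\beta(w_1(L))\in H^2(M;\Z)$: every other step is a formal consequence of non-compactness and of the splitting induced by the Lagrangian. I would emphasize that the canonical symplectic orientation only gives $w_1(\cS^\varphi)=0$, i.e. $2\,w_1(L)=0$, which is automatic and carries no information, so triviality genuinely requires $w_1(L)$ to be the reduction of an integral class. The cleanest sufficient condition is that $L$ be orientable, whence $w_1(L)=0$, $\det_\R L$ is trivial, $c_1(L\otimes_\R\C)=0$, and the first two paragraphs then produce a global symplectic frame. I would therefore either establish orientability of a suitably chosen Lagrangian or argue directly that $\beta(w_1(L))$ vanishes in the setting at hand, since this Bockstein class is precisely the point at which the conclusion could fail.
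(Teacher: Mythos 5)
Your argument coincides, step for step, with the paper's own proof up to the very last point: non-compactness gives $H^4(M;\Z)=0$ and hence $c_2(\cS^\varphi,\omega^\varphi)=0$; symplectic triviality is reduced to $c_1=c_2=0$ (the paper does this via the fact that a symplectic bundle over a four-manifold is associated to a $\U(2)$-bundle, you via obstruction theory over a $3$-complex model of the open manifold $M$ --- equivalent routes here); and the Lagrangian subbundle is used through the reduction of structure group to $\O(n)$, i.e. $(\cS^\varphi,J)\cong L\otimes_\R\C$. The single step you leave open --- the vanishing of $c_1=\beta(w_1(L))$ --- is exactly the step the paper settles by citation: it invokes \cite[Prop.~4.4.6]{Vaisman} to assert that the odd-degree Chern classes of a symplectic bundle admitting a Lagrangian subbundle vanish, and concludes $c_1=0$. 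So, measured against the paper, your attempt is the same proof minus that one citation, and is therefore formally incomplete.

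However, your hesitation at that point is mathematically well-founded rather than a failure of technique. For a complexification the \emph{integral} statement is only $2\,c_{\mathrm{odd}}=0$, with $c_1(L\otimes_\R\C)=\beta(w_1(L))$; the genuine vanishing of odd Chern classes holds in real (de Rham) cohomology, which is the natural setting of Vaisman-type results (it is what makes secondary/Maslov classes definable), but not integrally. Moreover, the hypotheses of the Proposition do not rule out $\beta(w_1(L))\neq 0$: take $\cM=\R P^2$ with the flat rank-two duality structure whose holonomy representation sends the generator of $\pi_1(\R P^2)$ to $-I_2\in\Sp(2,\R)$, so that $\cS\cong\gamma\oplus\gamma$ ($\gamma$ the tautological real line bundle) with flat Lagrangian subbundle $\gamma$; let $M$ be the total space of $\gamma\oplus\R_{\R P^2}$, a connected non-compact four-manifold which is orientable because $w_1(T\R P^2)=w_1(\gamma\oplus\R_{\R P^2})$, and let $\varphi$ be the bundle projection, a homotopy equivalence. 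Then $(\cS^\varphi,\omega^\varphi)$ admits the Lagrangian subbundle $\gamma^\varphi$, yet $w_2(\cS^\varphi)=w_1(\gamma^\varphi)^2\neq 0$, so $\cS^\varphi$ is not even topologically trivial. Thus the gap you isolated is real, it is shared by the paper's own proof (whose citation must be read with real coefficients, under which it only yields that $c_1$ is $2$-torsion), and your proposed repair --- requiring the Lagrangian subbundle to be orientable, so that $w_1(L)=0$ and hence $c_1=\beta(w_1(L))=0$ --- is precisely the additional hypothesis under which the argument, and the statement, become correct.
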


\begin{proof}
Since $(\cS^{\varphi},\omega^{\varphi})$ admits a Lagrangian
subbundle, by \cite[Proposition 3.1.6.]{Vaisman} the symplectic frame
bundle of $(\cS^\varphi,\omega^\varphi)$ admits a reduction to a
$\O(n)$-principal bundle. By \cite[Proposition 4.4.6.]{Vaisman}, the
odd-dimensional Chern-classes of $(\cS^\varphi,\omega^\varphi)$ are
zero and thus $c_1(\cS^\varphi,\omega^\varphi)=0$.  On the other hand,
$c_2(\cS^\varphi,\omega^\varphi)$ vanishes since $H^4(M,\Z)=0$ by
non-compactness of $M$.
\end{proof}

\section{Relation to the literature}
\label{sec:relation}

The fact that flat symplectic vector bundles arise in four-dimensional
{\em extended} supergravity theories (theories with $\cN\geq 2$ local
supersymmetry) is known since the work of \cite{Strominger, Freed,
  Cortes} on (projective) special K\"ahler geometry. In the
situation of references \cite{Strominger, Freed,Cortes}, one deals
with an $\cN=2$ supergravity theory coupled to scalars and to Abelian
gauge fields and the duality structure $\Delta=(\cS,D,\omega)$ relevant to 
loc. cit. is highly constrained by other data due to supersymmetry.

In contrast to the above, the theory considered in the present paper
is a {\em purely bosonic} sigma model coupled to gravity and to
Abelian gauge fields and the duality structure $\Delta$ is arbitrarily
chosen and introduced without any need to couple the model to fermions
and without any requirement of supersymmetry. Such a generalization of
the ordinary Einstein-Scalar-Maxwell theory is allowed because Abelian
gauge theories defined on oriented Lorentzian four-manifolds can be
``twisted'' by a flat symplectic vector bundle --- a fact which has
nothing to do with fermions or supersymmetry. A generally non-trivial
duality structure structure {\em must} in fact be present in the
purely bosonic theory, if one wishes to view $\cN=2$ supergravity as a
particular case of $\cN=0$ theories. The models considered in
this paper provide a wide generalization of
the traditional ESM theory (which corresponds to the case of trivial
duality structures). To our knowledge, this extension of
Einstein-Scalar-Maxwell theory was not studied systematically before.
As we showed in the present paper, our generalized models possess some
of the features and structures which were noticed previously in models
with $\cN=2$ supersymmetry. In particular, the Dirac quantization
condition produces a smooth bundle of polarized Abelian varieties
defined over the scalar manifold, a feature which does not require any
coupling to fermions and hence is unrelated to supersymmetry.

The crucial object allowing for a general and frame-free formulation
of our models is the taming $J$ of $(\cS,\omega)$, which encodes the
gauge couplings and theta angles of the Abelian gauge theory in a
manner that makes no mention of any choice of local flat symplectic
frame. Picking such a local frame $\cE$ of $\cS$ (supported on a
simply-connected open subset $\cU$ of $\cM$) allows one to {\em
  locally} encode the taming using a function $\tau^\cE:\cU\rightarrow
\SH_n$ valued in the Siegel upper half space, whose entries
$\tau^\cE_{ij}$ are the gauge-kinetic functions appearing in
references such as \cite{Andrianopoli, AndrianopoliUduality,
  AndrianopoliFlat} (which discuss only the case of extended
supergravity theories with trivial duality structure). Unlike the
taming $J$, the gauge-kinetic functions depend on the local frame
$\cE$, so they are both frame-dependent and only locally defined. The
reliance of the supergravity literature on the gauge-kinetic functions
forces the traditional formulation to be frame-dependent and thus not
truly geometric. That formulation cannot work globally unless the
duality structure $(\cS,D,\omega)$ is trivial, since a
globally-defined flat symplectic frame only exists in that case. The
relation with the construction of \cite{AndrianopoliUduality,
  AndrianopoliFlat} is discussed briefly in Subsection
\ref{sec:Strivial}.

Two Lie groups play an important role in the general theory, namely
the holonomy group $\Hol_D\subset \Sp(2n,\R)$ of the flat symplectic
connection $D$ and the group of duality transformations
$\Aut(\cD_0)$. The nature of these groups depends markedly on whether
or not the duality structure is trivial, showing one difference
between the generalized theories constructed in this paper and those
traditionally considered in the supergravity literature:
\begin{enumerate}[(a)]
\itemsep 0.0em
\item In the traditional case of a trivial duality structure
  $\Delta_0$ of rank $2n$, we have $\Hol_D=1$ and the classical
  duality group is $\Aut(\cD_0)\simeq \Aut(\Sigma)\times
  \Aut(\Delta_0)$.  We have $\Aut(\Delta_0)\simeq
  \Aut(\cS_0,\omega_0)\simeq \Sp(2n,\R)$, where $(\cS_0,\omega_0)$ is
  the typical fiber of $(\cS,\omega)$. When passing to the
  semiclassical theory, the Dirac system $\Lambda$ can be
  identified with a lattice $\Lambda_0\subset \cS_0$, which is the
  Dirac lattice of the traditional model. Accordingly, the integral
  duality structure $\bDelta=(\Delta_0,\Lambda)$ of the semiclassical
  theory can be identified with the integral symplectic vector space
  $(\cS_0,\omega_0,\Lambda_0)$, whose type we denote by $\bt$. The
  semiclassical duality group is the group of integral duality
  transformations $\Aut(\bcD_0)\simeq \Aut(\Sigma)\times
  \Aut(\bDelta)$, where $\Aut(\bDelta)\simeq
  \Aut(\cS_0,\omega_0,\Lambda_0)\simeq \Sp_\bt(2n,\Z)$ is the group
  consisting of those symplectomorphisms of $(\cS_0,\omega_0)$ which
  stabilize the Dirac lattice $\Lambda_0$. In the principal case
  $\bt=\bdelta(n)$, this subgroup is isomorphic with $\Sp(2n,\Z)$,
  while in general it is a modified Siegel modular group.
\item The situation is much more complicated when the duality
  structure $\Delta$ is non-trivial. In this case, we have $\Hol_D\neq
  1$ and the classical duality group $\Aut(\cD_0)$ {\em does not}
  decompose as a product of $\Aut(\Sigma)$ and
  $\Aut(\Delta)$. Moreover, $\Aut(\Delta)$ need {\em not} be a full
  symplectic group. The structure of the groups $\Aut(\cD_0)$ and
  $\Aut(\cD)$ depends on the holonomy representation
  $\hol_D:\pi_1(\cM)\rightarrow \Sp(2n,\R)$. In the semiclassical
  theory, the holonomy group $\Hol_D$ for an integral duality
  structure $\bDelta$ of type $\bt$ must be a subgroup of
  $\Sp_\bt(2n,\Z)$. On the other hand, the group $\Aut(\bcD_0)$ of
  semiclassical duality transformations can be quite different from
  $\Aut(\Sigma)\times \Sp_\bt(2n,\Z)$.
\end{enumerate}
It should be clear from these remarks that our models behave
quite differently from those considered traditionally in the
supergravity literature. These theories may seem exotic even in the
simple case when $(\cS,\omega)$ is symplectically trivial but the flat
symplectic connection $D$ has non-trivial holonomy, as illustrated in
Subsection \ref{sec:Strivial}.

\section{A simple example}

In this section, we illustrate the salient features of electromagnetic theories 'twisted' 
by a nontrivial duality structures in a very simple example. 

Consider the space-time $M=\R^3\times S^1$ and let $x=(x^1,x^2,x^3)$ be the Cartesian coordinates on $\R^3$ and $\theta$ be an 
angular coordinate on $\rS^1$. Also consider the scalar structure $\Sigma=(M,\cG,\Phi)$, where $M=\rS^1$ is the unit circle, 
$\cG$ is the metric on $\rS^1$ given by: 
\be
\dd s^2_\cG\eqdef \rho^2 \dd \psi^2
\ee
and $\Phi:\rS^1\rightarrow \R$ is a fixed smooth function. 
Up to isomorphisms, duality structures $\Delta=(\cS, D,\omega)$ of rank $2n$ defined on $\rS^1$ correspond to the 
symplectic character variety: 
\be
C_{\pi_1(\rS^1)}(\Sp(2n,\R))=\Hom(\Z,\Sp(2n,\R))/\Sp(2n,\R)\simeq \Sp(2n,\R)/_{\Ad} \Sp(2n,\R)~~,
\ee
which coincides with the set of conjugacy classes of the group $\Sp(2n,\R)$, where the element of $\Sp(2n,\R)$ 
associated to $\Delta$ is given by the holonomy of the flat symplectic connection $D$ around the circle. 

Let $g$ be a Lorentzian metric on $M$ which satisfies the vacuum Einstein equation. 
Then a Scalar-Maxwell theory on $(M,g)$ with scalar structure $\Sigma$ is
an Einstein-Scalar-Maxwell theory considered in the background
approximation for $(M,g)$, i.e. when back-reaction is neglected in the
Einstein equation. Such a theory is obtained by picking a taming $J$ of
$(\cS,\omega)$, and hence an electromagnetic structure
$\Xi=(\cS,D,\omega,J)$. This theory contains a single scalar field
described by a smooth map $\varphi:M\rightarrow \rS^1$ and an
electromagnetic field strength $\cV\in \Omega^2(M,\cS^\varphi)$, which
satisfies the positive polarization condition $\star_{g,J^\varphi}
\cV=\cV$ and the `twisted' Maxwell equation $\dd_{D^\varphi} \cV=0$.
In the background approximation for the scalar field $\varphi$
(i.e. when $\varphi$ is fixed and one neglects back-reaction in the
scalar equation), the only equation of motion is the electromagnetic
equation $\dd_{D^\varphi}\cV=0$, which gives a `twisted
electromagnetic theory' defined on $(M,g)$.

A particularly simple class of examples is obtained as follows.
Consider the Lorentzian metric\footnote{Notice that $(M,g)$ is a
solution of the {\em vacuum} Einstein equation.} $g\in \Met_{3,1}(M)$
whose squared line element is given by:
\be
\dd s_g^2\eqdef \dd s_0^2+R^2\dd \theta^2~~,
\ee
where $\dd s_0^2$ is the Minkowski metric on $\R^3$. Moreover, take
$\varphi$ to be independent of $x$, i.e. set $\varphi=\phi\circ
\pi_2$, where $\phi:\rS^1\rightarrow \rS^1$ is a smooth map and
$\pi_2:M\rightarrow \rS^1$ is the canonical projection of
$M=\R^3\times \rS^1$ onto the second factor. For any such smooth map, the 
pull-back connection is `independent of $x$' in the sense that it has the form: 
\be
D^\varphi=(D^\phi)^{\pi_2}~~,
\ee
where $D^\phi$ is the $\phi$-pullback of $D$ to a flat symplectic connection 
on the symplectic vector bundle $(\cS^\phi,\omega^\phi)$ defined over the circle. 
Moreover, the pull-back taming is `independent of $x$' in the sense that we have: 
\be
J^\varphi=(J^\phi)^\pi~~,
\ee
where $J^\phi$ is the $\phi$-pullback of $J$ to a taming of $(S^\phi,\omega^\phi)$.
It is clear that the holonomy of $D^\phi$ along $\rS^1$ (starting at a reference point 
$0\in \rS^1$) is given by: 
\be
\Hol_{D^\phi}(0)=\Hol_D(\phi(0))^{\iota(\phi)}~~,
\ee
where $\Hol_D(\phi(0))$ is the holonomy of $D$ starting at $\phi(0)$
and $\iota(\phi)$ is the index of the map $\phi$. In particular, the
flat connection $D^\phi$ is non-trivial (and hence the pull-back
duality structure $(\cS^\varphi, D^\varphi, \omega^\varphi)$ is
non-trivial iff that the duality structure $(\cS,D,\omega)$ is
non-trivial and $\phi$ has non-trivial index. As a simple example, one can take 
$\phi$ to be the $n$-fold isogeny $\varphi_m$ for some $m\in \Z$: 
\be
\phi_m(\theta)=m\theta~~\forall \theta\in \R/(2\pi \Z)~~.
\ee
Since $\iota(\phi_m)=m$, this gives: 
\be
\Hol_{D^\phi}(0)=\Hol_D(\phi(0))^m~~.
\ee 
When the duality structure $(\cS,D,\omega)$ is non-trivial and the map $\phi$ has non-zero index, 
the electromagnetic theory defined on $(M,g)$ by the modified Maxwell equations $\dd_{D^\varphi} \cV=0$ 
differs {\em globally} from ordinary $n$-field electromagnetism, even though it is locally indistinguishable 
from the latter.  

\section{Conclusions and further directions}
\label{sec:conclusions}

We gave a the global mathematical formulation, which in particular is
coordinate-free and frame-free, of four-dimensional bosonic sigma
models coupled to gravity and to Abelian gauge fields for the general
case when the duality structure of the Abelian gauge theory is
described by a non-trivial flat symplectic vector bundle
$(\cS,D,\omega)$ defined over the scalar manifold $\cM$. We showed
that such a model is naturally described using a taming $J$ of
$(\cS,\omega)$, which encodes the inverse gauge couplings and theta
angles of the Abelian gauge theory in a globally well-defined and
frame-free manner. We discussed the groups of duality transformations
and electro-magnetic symmetries of such models, which involve lifting
isometries of the scalar manifold to the bundle $\cS$.  We also gave
the mathematical formulation of Dirac quantization for such theories,
which involves certain discrete local systems and leads to a smooth
bundle of polarized Abelian varieties defined over $\cM$. We also
showed that global classical solutions of such models are classical
locally geometric U-folds.

Our generalized formulation of the Abelian gauge sector involves a
twist by the duality structure and treats that sector as a twisted
version of a theory of self-dual forms. As we will show in a different
publication, this allows one to approach the quantization of that
sector by adapting methods used in the theory of self-dual form fields
\cite{Witten,BM,Szabo, FMS}. Among other aspects, this involves a
version of differential cohomology twisted by local coefficients.

The coupled system consisting of equations \eqref{eins}, \eqref{sc}
and \eqref{em} can be approached from various points of view. For
example, it would be interesting to study the initial value problem
for this system. One could also use it to construct classical
U-fold solutions and to study their general properties. One advantage
of our approach it that it allows one to address various questions
regarding this class of U-folds using the well-developed tools of 
global and geometric analysis.

We comment briefly on the construction of solutions with non-trivial
duality structure. As pointed out in \cite{GeometricUfolds} (where
some simple examples were given), it is easy to construct examples of
non-simply connected scalar manifolds $(\cM,\cG)$ by considering
quotients $\cM_0/\Gamma$ of simply-connected scalar manifolds
$(\cM_0,\cG_0)$ through discrete subgroups $\Gamma$ of the isometry
group $\Iso(\cM_0,\cG_0)$. Non-trivial duality structures on such
quotients arise from symplectic representations of $\Gamma$. When
starting from a simply-connected scalar manifold $\cM_0$ endowed with
a scalar potential $\Phi_0\in \cC^\infty(\cM_0,\R)$, one can take
$\Gamma$ to be a subgroup of $\Aut(\cM_0,\cG_0,\Phi_0)$. This insures
that $\Phi_0$ descends to a scalar potential $\Phi$ on $\cM$. As
explained in loc. cit., the most interesting examples of U-fold
solutions are those for which the space-time also has non-trivial
fundamental group. Such space-times can also be produced through
quotient constructions, being generic among Lorentzian four-manifolds
which satisfy the usual causality conditions. When $(M,g)$ is
globally-hyperbolic with a Cauchy hypersurface $C$, we have
$\pi_1(M)=\pi_1(C)$, so it suffices to consider situations when
$C$ is not simply connected. It is clear from these observations
that classical U-folds of the type described in this paper are
abundant.

As we will show in separate paper, the models constructed
here can be extended further to a class of theories which afford a
general, globally-geometric description of classical locally-geometric
U-folds in four dimensions. The present paper is part of a larger program
aimed at understanding the global geometric structure of supergravity
theories. As pointed out in \cite{Lipschitz,lip}, this involves
clarifying not only the global structure of the bosonic
sector (for which the work presented here constitutes one step) but
also addressing a number of questions in spin geometry, which have
remained largely unexplored until recently.

\acknowledgments{The authors thank Tomas Ortin for discussions and
  correspondence.  The work of C. I. L. is supported by grant
  IBS-R003-S1. The work of C.S.S. is supported by the ERC Starting
  Grant 259133 – Observable String.}

\appendix

\section{Tamings and positive polarizations of symplectic vector spaces}
\label{app:space_tamings}

We recall some well-known facts regarding tamings of symplectic 
vector spaces and positive polarizations of their complexifications (see
\cite{Vaisman, Berndt}).

\subsection{Period matrices of complex vector spaces}
\label{app:periods}

Let $V$ be an $\R$-vector space of real dimension $2n$ and
$J\in \End_\R(V)$ be a complex structure on $V$ (thus
$J^2=-\id_S$). This makes $V$ into a vector space over $\C$ upon using the 
multiplication of scalars defined through: 
\be
(\alpha+\i \beta )\xi\eqdef \alpha\xi+\beta J\xi~~,~~\forall \alpha,\beta\in \R~~\forall \xi\in V~~.
\ee
Consider a basis $\cE=(e_1\ldots e_{2n})$ of $V$ over
$\R$ and a basis $\cW=(w_1\ldots w_n)$ of $(V,J)$ over $\C$; thus
$(w_1,\ldots w_n, J w_1,\ldots, Jw_n)$ is a basis of $V$ over
$\R$. 

\begin{Definition}
The {\em fundamental matrix} of $(V,J)$ with respect to $\cE$ and
$\cW$ is the matrix $\Pi:=\Pi_\cE^\cW \in \Mat(n,2n,\C)$ defined through:
\ben
\label{PiDef}
e_\alpha=\sum_{k=1}^n \Pi_{k,\alpha} w_k=\sum_{k=1}^n [(\Re \Pi_{k,\alpha})w_k+(\Im \Pi_{k,\alpha}) J w_k]~~(\alpha=1\ldots 2n)~~.
\een
\end{Definition}

\noindent Notice that $\Pi$ has rank $n$ as a complex matrix, i.e. its
columns are linearly independent over $\C$.  It is easy to see that
this amounts to the condition that the determinant of square
matrix $\left[\begin{array}{c}\Pi \\ {\bar \Pi} \end{array}\right]\in
\Mat(2n,2n,\C)$ is non-zero. Let: 
\be
\Lambda \eqdef \{m_1e_1+\ldots +m_{2n}e_n|m_1,\ldots, m_{2n}\in \Z\}
\ee
be the full lattice spanned by $\cE$ in $V$. The $\Pi_\cE^\cW$ is
the period matrix of the complex torus $X_c(V,J,\Lambda)$
determined by the integral complex vector space $(V,J,\Lambda)$,
computed with respect to the integral basis $\cE$ of $\Lambda$ and the
complex basis $\cW$ of $(V,J)$ (see \cite{BL}).

\subsection{Symplectic spaces}
\label{app:sympspaces}

By definition, a {\em symplectic space} is a finite-dimensional
symplectic vector space $(V,\omega)$ defined over $\R$. Given two
symplectic spaces $(V_1,\omega_1)$ and $(V_2,\omega_2)$ over $\R$, a
{\em symplectic morphism} from $(V_1,\omega_1)$ to $(V_2,\omega_2)$ is
an $\R$-linear map $f:V_1\rightarrow V_2$ such that
$\omega_2(f(\xi_1),f(\xi_2))=\omega_1(\xi_1,\xi_2)$. Such a map is
necessarily injective ($\ker f=0$) but it need not be
surjective. Symplectic vector spaces over $\R$ and symplectic
morphisms form a category denoted $\Symp$. Given a symplectic 
space $(V,\omega)$, let $\Sp(V,\omega)$ denote its group of 
automorphisms in the category $\Symp$:
\be
\Sp(V,\omega)\eqdef\{f\in \Aut_\R(V,\omega)|\omega\circ (f\otimes f)=\omega\}~~;
\ee
the elements of this group are called {\em symplectomorphisms of $(V,\omega)$}.
We set $\Sp(2n,\R)\eqdef \Sp(\R^{2n},\Omega_n)$. Notice that $\dim \Sp(2n,\R)=n(2n+1)$. 

A linear transformation $\varphi\in \Aut_\R(V)$ is symplectic with
respect to $\omega$ iff its matrix $M:=M_\cE(\varphi)$ in the
symplectic basis $\cE$ satisfies the condition:
\ben
\label{sympmatrix}
M^T\Omega_n M=\Omega_n~~,
\een
whose solution set inside $\GL(2n,\R)$ equals the symplectic group
$\Sp(2n,\R)$. Recall that the entries of
$M=(M_{\alpha,\beta})_{\alpha,\beta=1\ldots 2n}$, are defined through
the relations:
\ben
\label{sympbasis}
\varphi(e_i)=\sum_{j=1}^n M_{ji}e_j+ \sum_{j=1}^n M_{n+j
  i}f_j\, , ~~ \varphi(f_i)=\sum_{j=1}^n M_{j n+i} e_j + \sum_{j=1}^n
M_{n+j,n+i} f_j~~,
\een
which take the form $\varphi(\cE)=M^T\cE$ upon viewing $\cE$ and
$\varphi(\cE)\eqdef (\varphi(e_1)\ldots \varphi(e_n),
\varphi(f_1),\ldots \varphi(f_n))$ as a column vectors. We have:
\be
M_{\cE}(\varphi_1\circ \varphi_2)=M_\cE(\varphi_1)M_\cE(\varphi_2)=M_{\cE}(\varphi_2)M_{\varphi_2(\cE)}(\varphi_1)~~\forall
\varphi_1,\varphi_2\in \Sp(V,\omega)~~.
\ee
and: 
\be
M_\cE(\varphi^{-1})=M_{\varphi^{-1}(\cE)}(\varphi)^{-1}~~.
\ee
Writing:
\ben
\label{M}
M=\left[\begin{array}{cc} A & B\\ C & D\end{array}\right]~~,
\een
where $A,B,C,D$ are real square matrices of size $n$, relations
\eqref{sympbasis} become:
\ben
\label{matrixsymp}
\varphi(e_i)=\sum_{j=1}^n A_{ji}e_j+ \sum_{j=1}^n C_{ji}f_j~~,~~\varphi(f_i)=\sum_{j=1}^n B_{ji} e_j+ \sum_{j=1}^n D_{ji} f_j~~.
\een
Equations \eqref{sympmatrix} amount to the conditions: 
\be
A^T C=C^T A~~,~~B^T D=D^T B~~,~~A^T D-C^T B=I_n~~,
\ee
which are equivalent with: 
\be
A B^T =B A^T~~,~~C D^T=D C^T~~,~~AD^T-BC^T=I_n~~.
\ee
Given $M=\left[\begin{array}{cc} A & B\\ C & D\end{array}\right]\in \Sp(2n,\R)$, each of the following matrices is again an
element of $\Sp(2n,\R)$:
\ben
\label{Mder}
~~M^T=\left[\begin{array}{cc} A^T &
    C^T\\ B^T & D^T\end{array}\right]~~,~~
M^{-1}=\left[\begin{array}{cc} D^T & -B^T\\ -C^T &
    A^T\end{array}\right]~~,~~M^{-T}=\left[\begin{array}{cc} D &
    -C\\ -B & A\end{array}\right]~~.
\een

\subsection{Tamed symplectic spaces}

\paragraph{The category of tamed symplectic spaces.}

Let $(V,\omega)$ be a symplectic space of dimension $\dim_\R
V=2n$. A complex structure $J\in \End_\R(V)$ ($J^2=-\id_V$) is
called {\em compatible with $\omega$ } if
$\omega(Jv_1,Jv_2)=\omega(v_1,v_2)$ for all $v_1,v_2\in V$. Given such
a complex structure, the $\R$-valued bilinear pairing $Q:=Q_{J,\omega}:V\times
V\rightarrow \R$ defined through: 
\ben
\label{gomega}
Q(v_1,v_2)\eqdef \omega(Jv_1,v_2)=-\omega(v_1,Jv_2)~~,~~\forall v_1,v_2\in V
\een
is symmetric, non-degenerate and $J$-compatible, the latter meaning
that it satisfies $Q(Jv_1,Jv_2)=Q(v_1,v_2)$ for all $v_1,v_2$ in $V$.

\begin{Definition}
We say that $J$ is a {\em taming}\footnote{Also known as a positive
  $\omega$-compatible complex structure.} of $\omega$ if
$Q_{J,\omega}$ is positive-definite (and hence a Euclidean scalar
product on $V$). In this case, the triple $(V,J,\omega)$ is called a
{\em tamed symplectic space}.
\end{Definition}

\begin{Definition}
Let $(V_1,J_1,\omega_1)$ and $(V_2,J_2,\omega_2)$ be two tamed symplectic spaces. 
An $\R$-linear map $f:V_1\rightarrow V_2$ is called a {\em morphism of tamed symplectic 
spaces} if the following conditions hold: 
\begin{enumerate}[1.]
\itemsep 0.0em
\item $f$ is a morphism of symplectic spaces, i.e.:
\be
\omega_2(f(\xi_1),f(\xi_2))=\omega_1(\xi_1,\xi_2)~~,~~\forall \xi_1,\xi_2\in V_1
\ee
\item $f$ is a morphism of $\C$-vector spaces, i.e.:
\be
J_2\circ f=f\circ J_1~~.
\ee
\end{enumerate}
\end{Definition}

\noindent With this definition, tamed symplectic spaces form a category denoted
$\TamedSymp$. 

\begin{Remark}
References \cite{Vaisman, Berndt} use a different convention for the definition of $Q$ and $h$, 
which is related to ours by replacing $\omega$ with $-\omega$. Our convention for $Q$ and $h$ 
agrees with that of \cite{BL}. 
\end{Remark}

\paragraph{Relation to the category of Hermitian spaces.}

A Hermitian vector space is a triple $(V,J,h)$, where $V$ is a finite-dimensional real vector space, $J\in \End_\R(V)$ is a complex structure on $V$ and $h:V\times V\rightarrow \C$ is a Hermitian pairing with respect to $J$. Then a morphism of Hermitian spaces $f:(V_1,J_1,h_1)\rightarrow (V_2,J_2,h_2)$ is an $\R$-linear map $f:V_1\rightarrow V_2$ such that $f\circ J_1=J_2\circ
f$ and such that $h_2(f(\xi_1),f(\xi_2))=h_1(\xi_1,\xi_2)$ for all
$\xi_1,\xi_2\in V_1$. Notice that such a morphism is necessarily
injective ($\ker f=0$). Let $\Herm$ denote the category of Hermitian
spaces and morphisms of such. Consider the functors
$F:\TamedSymp\rightarrow \Herm$ and $G:\Herm\rightarrow \TamedSymp$
defined as follows:
\begin{itemize}
\item Definition of $F$. Given a tamed symplectic space $(V,J,\omega)$, the map $h_{J,\omega}:V\times
V\rightarrow \C$ given by:
\ben
\label{hdef}
h_{J,\omega}(v_1,v_2)\eqdef Q_{J,\omega}(v_1,v_2)+\i\omega(v_1,v_2)=\omega(Jv_1,v_2)+\i\omega(v_1,v_2)~~,~~\forall v_1,v_2\in V
\een
is a Hermitian scalar product on the complex vector space $(V,J)$
(complex-linear in the first variable) such that $\Re
h_{J,\omega}=Q_{J,\omega}$ and $\Im h_{J,\omega}=\omega$. If
$f:(V_1,\omega_1,J_1)\rightarrow (V_2,J_2,\omega_2)$ is a morphism of
tamed symplectic spaces, then $f$ is also a morphism of Hermitian
vector spaces from $(V_1,J_1,h_{J_1,\omega_1})$ to
$(V_2,J_2,h_{J_2,\omega_2})$. By definition, the functor $F$ takes a
tamed symplectic space $(V,J,\omega)$ into the Hermitian space
$F(V,J,\omega)=(V,J,h_{J,\omega})$ and the a morphism of tamed
symplectic spaces $f:(V_1,J_1,\omega_1)\rightarrow (V_2,J_2,\omega_2)$
into the morphism of Hermitian spaces
$F(f)=f:(V_1,J_1,h_{J_1,\omega_1})\rightarrow
(V_2,J_2,h_{J_2,\omega_2})$.
\item Definition of $G$. Given a Hermitian space $(V,J,h)$, the
  Hermitian pairing $h$ determines a symplectic pairing
  $\omega_h\eqdef \Im h$ and a Euclidean scalar product $Q_h\eqdef \Re
  h$ on $V$ which satisfy \eqref{gomega}. A morphism of Hermitian
  spaces $f:(V_1,J_1,h_1)\rightarrow (V_2,J_2,h_2)$ is also a morphism
  of tamed symplectic spaces $f:(V_1,J_1,\omega_{h_1})\rightarrow
  (V_2,J_2,\omega_{h_2})$.  The functor $F$ sends a Hermitian space
  $(V,J,h)$ into the tamed symplectic space $G(V,J,h)=(V,J,\omega_h)$
  and a morphism $f$ of Hermitian spaces into the morphism $G(f)=f$ of
  tamed symplectic spaces.
\end{itemize}

\noindent The proof of the following statement is immediate: 

\begin{Proposition}
The functors $F$ and $G$ are mutually quasi-inverse, thus giving and
equivalence of categories between $\TamedSymp$ and $\Herm$.
\end{Proposition}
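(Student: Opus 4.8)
The plan is to prove something slightly stronger than a quasi-inverse relationship: since both $F$ and $G$ preserve the underlying real vector space together with its complex structure $J$, and both send each morphism to the very same underlying $\R$-linear map, it suffices to verify that the composite functors $G\circ F$ and $F\circ G$ agree with the respective identity functors \emph{on objects}. The composites then automatically act as the identity on morphisms, so the natural transformations witnessing the quasi-inverse relation may be taken to be the identity natural transformations and no separate naturality argument is needed.

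First I would check $G\circ F=\id_{\TamedSymp}$. Starting from a tamed symplectic space $(V,J,\omega)$, the functor $F$ produces the Hermitian space $(V,J,h_{J,\omega})$ with $h_{J,\omega}=Q_{J,\omega}+\i\omega$ as in \eqref{hdef}. Applying $G$ extracts the symplectic pairing $\omega_{h_{J,\omega}}=\Im h_{J,\omega}=\omega$ while retaining the same complex structure $J$, so $G(F(V,J,\omega))=(V,J,\omega)$ on the nose. This direction is immediate from the tautology $\Im(Q_{J,\omega}+\i\omega)=\omega$, and one notes that $Q$ is not independent data, being determined by $J$ and $\omega$ through \eqref{gomega}.

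The slightly more delicate direction is $F\circ G=\id_{\Herm}$, where one must recover the \emph{full} Hermitian form and not merely its imaginary part. Starting from a Hermitian space $(V,J,h)$, the functor $G$ yields the tamed symplectic space $(V,J,\omega_h)$ with $\omega_h=\Im h$, and applying $F$ returns the Hermitian form $h_{J,\omega_h}=Q_{J,\omega_h}+\i\omega_h$. The key point is that the real part is correctly restored: using that $h$ is complex-linear in its first variable and that $J$ acts as multiplication by $\i$, one has $h(Jv_1,v_2)=\i\,h(v_1,v_2)$, whence $Q_{J,\omega_h}(v_1,v_2)=\omega_h(Jv_1,v_2)=\Im h(Jv_1,v_2)=\Im(\i\,h(v_1,v_2))=\Re h(v_1,v_2)$. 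Therefore $h_{J,\omega_h}=\Re h+\i\,\Im h=h$, so $F(G(V,J,h))=(V,J,h)$ exactly.

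I expect no genuine obstacle: the entire content reduces to the bookkeeping identity $\Re h(v_1,v_2)=\Im h(Jv_1,v_2)$, and the only place where an error could creep in is the sign/convention governing whether $J$ corresponds to multiplication by $\i$ and whether $h$ is linear in the first or the second variable. Once these are tracked consistently with the conventions fixed in the notations paragraph and in \eqref{hdef}, both composites are literally the identity functor; hence $F$ and $G$ are in fact mutually inverse, giving an isomorphism of categories, which in particular is the asserted equivalence.
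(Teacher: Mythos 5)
Your proof is correct and fills in precisely the direct verification that the paper declares ``immediate'': both composites $G\circ F$ and $F\circ G$ act as the identity on objects and on morphisms, the only nontrivial point being the identity $Q_{J,\Im h}(v_1,v_2)=\Im h(Jv_1,v_2)=\Re h(v_1,v_2)$, which you establish correctly under the paper's convention that $h$ is complex-linear in the first variable and that $J$ is multiplication by $\i$. In fact you obtain the slightly stronger conclusion that $F$ and $G$ are mutually inverse, so $\TamedSymp$ and $\Herm$ are isomorphic categories, which in particular yields the asserted equivalence.
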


\begin{Remark}
Notice that $h_{J,\omega}$ is completely determined by the
$J$-compatible Euclidean scalar product $Q_{J,\omega}$, since
$\omega=\Im h_{J,\omega}$ can be recovered as $\omega(v_1,v_2)=Q_{J,\omega}(v_1,Jv_2)$.
\end{Remark}

\noindent Consider tamed symplectic space $(V,J,\omega)$ with
associated Hermitian space $F(V,J,\omega)=(V,J,h_{J,\omega})$.  Define
$\Aut(V,J,\omega)$ as the automorphism group of $(V,J,\omega)$ in the
category $\TamedSymp$:
\be
\Aut(V,J,\omega)=\{f\in \Sp(V,\omega)|f\circ J=J\circ f\}~~.
\ee
Let $\U(V,J,h_{J,\omega})$ be the unitary group of
$(V,J,h_{J,\omega})$, which is its automorphism group in the category
$\Herm$. Then the equivalence of categories discussed above gives
the equality: 
\be
\Aut(V,J,\omega)=\U(V,J,h_{J,\omega})~~.
\ee

\paragraph{The space of tamings of a symplectic space.}

Let $(V,\omega)$ be a symplectic space and $\fJ_+(V,\omega)$ denote
the set of all tamings of $(V,\omega)$. Let $J\in \fJ_+(V,\omega)$ be
a taming and $Q\eqdef Q_{J,\omega}$, $h\eqdef h_{J,\omega}$.  A
symplectomorphism of $(V,\omega)$ commutes with $J$ iff it preserves
$Q$, in which case it also preserves $h$. Thus:
\be
\Aut(V,J,\omega)=\U(V,J,h)=\Sp(V,\omega)\cap \O(V,Q)=\{A\in \Sp(V,\omega)|A\circ J=J\circ A\}~~.
\ee 
In particular, $\Aut(V,J,\omega)$ is a maximal compact subgroup of $\Sp(V,\omega)$. 
The adjoint action $\Ad:\Sp(V,\omega)\rightarrow
\Aut_\R(\End_\R(V))$ of $\Sp(V,\omega)$:
\be
\Ad(\varphi)(A)\eqdef \varphi\circ A\circ \varphi^{-1}~~\forall \varphi\in \Sp(V,\omega)~~\forall A\in \End_\R(V)\, ,
\ee 
preserves the set of tamings of $\fJ_+(V,\omega)$, on which it induces
a smooth action which we denote by $\Ad_0:\Sp(V,\omega)\rightarrow \Diff(\fJ_+(V,\omega))$: 
\be
\Ad_0(\varphi)\eqdef \Ad(\varphi)|_{\fJ_+(V,\omega)}~~.
\ee
It is known \cite{Vaisman} that $\Ad_0$ is transitive, with isotropy
subgroup at $J\in \fJ_+(V,\omega)$ given by $\Aut(V,J,\omega)$. This
gives a homogeneous space presentation:
\ben
\label{fJhom}
\fJ_+(V,\omega)\simeq \Sp(2n,\R)/\U(n)~~.
\een
This shows that $\fJ_+(V,\omega)$ is a non-compact irreducible
Hermitian symmetric space of type III in Cartan's classification; in
particular, $\fJ_+(V,\omega)$ is a simply-connected K\"{a}hler manifold of complex
dimension:
\be
\dim_\C \fJ_+(V,\omega)=\frac{1}{2}n(n+1)~~.
\ee
Notice that $\fJ_+(V,\omega)$ is diffeomorphic with
$\R^{n(n+1)}$ and hence contractible. 

\paragraph{Characterization of tamings using a basis.}

Let $V$ be a vector space over $\R$ of dimension $2n$. Let $\omega$ be
a symplectic pairing on $V$ and $J$ be a complex structure on $V$. Let
$\cE=(e_1\ldots e_{2n})$ be a real basis of $V$ and $\cW=(w_1\ldots
w_n)$ be a complex basis of $(V,J)$. Let $\Pi\eqdef \Pi_\cE^\cW$ be
the fundamental matrix of $(V,J)$ with respect to the bases $\cE$ and
$\cW$. Let $A\in \Mat(2n,\R)$ be the matrix of $\omega$ with respect
to the real basis $\cE$:
\be
A_{\alpha\beta}\eqdef \omega(e_\alpha,e_\beta)~~\forall \alpha,\beta=1\ldots 2n~~.
\ee
The following version of Riemann bilinear relations gives the condition for $J$ to be a taming of $(V,\omega)$: 

\begin{Proposition}
\label{prop:Riemann}
The following statements hold: 
\begin{enumerate}[I.]
\itemsep 0.0em
\item $J$ is compatible with $\omega$ iff the following relation holds: 
\ben
\label{Riemann1}
\Pi A^{-1} \Pi^T=0
\een
\item $J$ is a taming of $(V,\omega)$ iff \eqref{Riemann1} holds and the real matrix:
\ben
\label{Riemann2}
\i \Pi A^{-1} \Pi^\dagger
\een
is strictly positive definite. 
\end{enumerate}
\end{Proposition}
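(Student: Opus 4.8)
The plan is to reduce the statement to a computation with the period matrix by passing to complex coordinates, so that both Riemann relations become statements about the block structure of a single auxiliary matrix. First I would pass from the complex basis $\cW$ to the associated real basis $w_1,\dots,w_n,Jw_1,\dots,Jw_n$ of $V$. Writing $\Pi=\Re\Pi+\i\,\Im\Pi$ and using \eqref{PiDef}, the change-of-basis matrix $S=\left[\begin{smallmatrix}\Re\Pi\\ \Im\Pi\end{smallmatrix}\right]$ from this real basis to $\cE$ is invertible precisely because $\Pi$ has rank $n$, i.e. $\det\left[\begin{smallmatrix}\Pi\\\bar\Pi\end{smallmatrix}\right]\neq0$. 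Setting $M\eqdef\left[\begin{smallmatrix}\Pi\\\bar\Pi\end{smallmatrix}\right]\in\Mat(2n,2n,\C)$, the complex coordinate $z=c(v)\in\C^n$ of $v\in V$ relative to $\cW$ and its real coordinate $x$ relative to $\cE$ satisfy $\left[\begin{smallmatrix}z\\\bar z\end{smallmatrix}\right]=Mx$. Since $A_{\alpha\beta}=\omega(e_\alpha,e_\beta)$ gives $\omega(v,v')=x^{T}Ax'$, substituting $x=M^{-1}\left[\begin{smallmatrix}z\\\bar z\end{smallmatrix}\right]$ yields $\omega(v,v')=\left[\begin{smallmatrix}z\\\bar z\end{smallmatrix}\right]^{T}B\left[\begin{smallmatrix}z'\\\bar z'\end{smallmatrix}\right]$ with $B\eqdef M^{-T}AM^{-1}$. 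The crucial identity is then $B^{-1}=MA^{-1}M^{T}$, whose four $n\times n$ blocks are exactly $\Pi A^{-1}\Pi^{T}$, $\Pi A^{-1}\Pi^{\dagger}$ (using $\bar\Pi^{T}=\Pi^{\dagger}$) and their conjugates; thus both quantities in the Proposition are blocks of one matrix.

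The technical core is a single lemma: writing $B$ in $n\times n$ blocks, $J$ is $\omega$-compatible iff the diagonal blocks $B_{11},B_{22}$ vanish. Indeed $J$-compatibility means $\omega(Jv,Jv')=\omega(v,v')$; since $c(Jv)=\i z$ this becomes $z^{T}B_{11}z'+\bar z^{T}B_{22}\bar z'=0$ for all $z,z'\in\C^n$, and a sum of a $\C$-bilinear and a conjugate-bilinear form can vanish identically only if each term vanishes (separate them by replacing $z\mapsto\i z$), forcing $B_{11}=B_{22}=0$. For part I, I would apply this lemma to the dual triple $(V^{\ast},\omega^{-1},J^{\ast})$: the rows of $\Pi$ are the $\cE^{\ast}$-coordinates of the $(1,0)$-covectors dual to $\cW$, so $(\Pi A^{-1}\Pi^{T})_{jk}$ is the value of the $\C$-bilinear extension of $\omega^{-1}$ on these covectors, identifying $\Pi A^{-1}\Pi^{T}$ with the diagonal block of the dual's $B$-matrix $B^{-1}$. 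Hence $\Pi A^{-1}\Pi^{T}=0$, together with its conjugate (automatic from the reality relation $\bar B=EBE$ with $E=\left[\begin{smallmatrix}0&I\\ I&0\end{smallmatrix}\right]$, which passes to $B^{-1}$), is equivalent to $J^{\ast}$ being $\omega^{-1}$-compatible, hence to $J$ being $\omega$-compatible. This proves I.

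For part II I would assume I, so that $B=\left[\begin{smallmatrix}0&B_{12}\\ -B_{12}^{T}&0\end{smallmatrix}\right]$ and, combining skew-symmetry with $\bar B=EBE$, $B_{12}$ is anti-Hermitian. Computing $h=Q+\i\omega$ in complex coordinates via $Q(v,v')=\omega(Jv,v')$ and $c(Jv)=\i z$ gives $h(v,v')=2\i\,z^{T}B_{12}\bar z'$, so the Gram matrix of $h$ in $\cW$ is the Hermitian matrix $H=2\i B_{12}$; thus $J$ is a taming iff $H>0$. Inverting the off-diagonal block gives $\Pi A^{-1}\Pi^{\dagger}=(B^{-1})_{12}=-(B_{12}^{T})^{-1}$, whence $\i\,\Pi A^{-1}\Pi^{\dagger}=2\,\overline{H^{-1}}$. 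Since $H$ is Hermitian, $H>0\iff H^{-1}>0\iff\overline{H^{-1}}>0$, so $J$ is a taming iff $\i\,\Pi A^{-1}\Pi^{\dagger}$ is strictly positive definite, completing II.

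The main obstacle I anticipate is the converse direction of part I — deducing $\omega$-compatibility from $\Pi A^{-1}\Pi^{T}=0$ — because $B_{11}$ cannot be read off $(B^{-1})_{11}$ by a naive Schur-complement manipulation. The clean route is the dual reformulation above, whose one genuinely nontrivial input is verifying that $B^{-1}=MA^{-1}M^{T}$ really is the complex-coordinate matrix of $\omega^{-1}$ for $(V^{\ast},J^{\ast})$; once this identification is secured, both relations reduce to the vanishing and positivity of blocks of a single matrix. Careful bookkeeping of the reality relation $\bar B=EBE$ and of the conjugations and positive constants entering part II is the remaining place where sign errors could creep in.
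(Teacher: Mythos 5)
Your proof is correct. For comparison: the paper offers no argument of its own here --- its proof of Proposition \ref{prop:Riemann} is the single line deferring to \cite[Theorem 4.2.1]{BL} --- and your argument is in essence a self-contained reconstruction of that classical computation: pass to the coordinates $\left[\begin{smallmatrix} z \\ \bar z\end{smallmatrix}\right]=Mx$ with $M=\left[\begin{smallmatrix}\Pi\\ \bar\Pi\end{smallmatrix}\right]$, observe that both matrices in the statement are blocks of $B^{-1}=MA^{-1}M^{T}$, characterize $\omega$-compatibility of $J$ as block anti-diagonality, and identify the off-diagonal block with the Gram matrix of $h$; your identity $\i\,\Pi A^{-1}\Pi^{\dagger}=2\,\overline{H^{-1}}$ is precisely the Birkenhake--Lange formula $H=2\i\,(\bar\Pi A^{-1}\Pi^{T})^{-1}$. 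The only point where you genuinely deviate from that standard route is the dual-space detour in part I, and it is avoidable: your reality relation $\bar B=EBE$ passes to $B^{-1}$, so \eqref{Riemann1} forces \emph{both} diagonal blocks of $B^{-1}$ to vanish, and since an invertible block anti-diagonal matrix has block anti-diagonal inverse, namely $\left[\begin{smallmatrix}0&X\\ Y&0\end{smallmatrix}\right]^{-1}=\left[\begin{smallmatrix}0&Y^{-1}\\ X^{-1}&0\end{smallmatrix}\right]$, this already gives $B_{11}=B_{22}=0$, so your key lemma applies directly to $(V,\omega,J)$. In other words, the ``naive manipulation'' you feared is in fact sound --- no Schur complements and no verification of the dual identification $B^{-1}=MA^{-1}M^{T}$ as the matrix of $(\omega^{-1})_{\C}$ are needed --- which shortens the argument without changing its substance.
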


\begin{proof}
The proof is identical to that of \cite[Theorem 4.2.1.]{BL}.
\end{proof}

\paragraph{The modular matrix of a tamed symplectic space.}

\noindent Recall that the Siegel upper half space $\SH_n$ is defined through:
\be
\SH_n\eqdef \{\tau\in \Mat_s(n,\C)|\Im \tau~\mathrm{is~strictly~positive~definite}\}~~,
\ee
where $\Mat_s(n,\C)$ is the set of all square and symmetric
complex-valued matrices of size $n$. When endowed with the natural
complex structure induced from the affine space $\Mat_s(n,\C)$, the
space $\SH_n$ is a complex manifold of complex dimension
$\frac{n(n+1)}{2}$ which is biholomorphic with the simply-connected
bounded complex domain:
\be
\{Z\in\Mat_s(n,\C)|I-{\bar Z}Z^T>0\}~~.
\ee

A symplectic basis of
$(V,\omega)$ is a basis of the form $\cE=(e_1\ldots e_n, f_1\ldots
f_n)$, whose elements satisfy the conditions:
\be
\omega(e_i,e_j)=\omega(f_i,f_j)=0~~,~~\omega(e_i,f_j)=-\omega(f_i,e_j)=\delta_{ij}~~,
\ee
which state that matrix of $\omega$ in such a basis equals the
antisymmetric matrix \eqref{Omegan}. 

\begin{Proposition}
\label{prop:PiE}
Let $\cE=(e_1\ldots e_n, f_1\ldots f_n)$ be a symplectic basis of
$(V,\omega)$. Then the vectors $\cE_2\eqdef (f_1,\ldots, f_n)$ form a
basis of the complex vector space $(V,J)$ over $\C$ and the
fundamental matrix of $(V,J)$ with respect to $\cE$ and $\cE_2$ has
the form:
\ben
\label{PiE}
\Pi_\cE^{\cE_2}=[\btau^\cE, I_n]^{T}\, ,
\een
where $\btau^\cE\in \Mat(n,n,\C)$ is a complex-valued square matrix of
size $n$. Moreover, $J$ is a taming of $(V,\omega)$ iff $\btau_\cE$
belongs to $\SH_n$. In this case, the matrix of the Hermitian form $h$
with respect to the basis $\cE_2$ equals $(\Im\btau^\cE)^{-1}$.
\end{Proposition}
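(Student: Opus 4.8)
My plan is to reduce the whole proposition to the Riemann bilinear relations of Proposition~\ref{prop:Riemann}, which I may use freely, by computing the fundamental matrix explicitly in the given basis. Let $A\in\Mat(2n,\R)$ be the matrix of $\omega$ in the symplectic basis $\cE$; since $\cE$ is symplectic, $A=\begin{pmatrix}0&I_n\\-I_n&0\end{pmatrix}$, whence $A^2=-I_{2n}$ and $A^{-1}=\begin{pmatrix}0&-I_n\\I_n&0\end{pmatrix}$.

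First I would show that $\cE_2=(f_1,\dots,f_n)$ is a $\C$-basis of $(V,J)$ whenever $J$ is a taming. As $\dim_\C(V,J)=n$, it suffices that $(f_1,\dots,f_n,Jf_1,\dots,Jf_n)$ be an $\R$-basis, i.e.\ that the Lagrangian subspace $L\eqdef\mathrm{span}_\R(f_1,\dots,f_n)$ satisfy $V=L\oplus JL$. From $Q(u,v)=-\omega(u,Jv)$ for $u,v\in L$, the $\omega$-pairing of $L$ with $JL$ is $-Q|_L$; positive-definiteness of $Q$ makes $Q|_L$ nondegenerate, forcing $L\cap JL=0$ and hence the direct sum. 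This step is where the taming hypothesis is essential: for an $\omega$-compatible $J$ with indefinite $Q$ one can have $Jf_1=f_2$, so that $\cE_2$ is not even $\C$-linearly independent. Granting the complex-basis property, the relation $f_j=\sum_k\delta_{kj}f_k$ shows the last $n$ columns of $\Pi_\cE^{\cE_2}$ form $I_n$, while the first $n$ columns define $\btau^\cE$ through $e_i=\sum_k\btau^\cE_{ki}f_k$; thus $\Pi_\cE^{\cE_2}=[\btau^\cE,I_n]$.

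Next I would substitute $\Pi=[\btau^\cE,I_n]$ into Proposition~\ref{prop:Riemann}. A block computation gives $\Pi A^{-1}=[I_n,-\btau^\cE]$, hence $\Pi A^{-1}\Pi^T=(\btau^\cE)^T-\btau^\cE$ and $\Pi A^{-1}\Pi^\dagger=(\btau^\cE)^\dagger-\btau^\cE$. The compatibility relation \eqref{Riemann1} then reads $(\btau^\cE)^T=\btau^\cE$, i.e.\ $\btau^\cE$ is symmetric; using symmetry, $(\btau^\cE)^\dagger=\overline{\btau^\cE}$ and therefore $i\,\Pi A^{-1}\Pi^\dagger=2\,\Im\btau^\cE$. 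By part~II of Proposition~\ref{prop:Riemann}, $J$ is a taming iff the matrix \eqref{Riemann2} is strictly positive-definite, i.e.\ iff $\btau^\cE$ is symmetric with $\Im\btau^\cE>0$, which is precisely $\btau^\cE\in\SH_n$. The two directions must be read keeping in mind that the symbol $\btau^\cE\in\SH_n$ presupposes that $\cE_2$ is a $\C$-basis; the forward implication supplies this from the first step, while the converse takes it as given before invoking the Riemann relations.

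Finally, for the Hermitian form write $\btau^\cE=X+iY$ with $X,Y$ real symmetric and $Y=\Im\btau^\cE$ invertible. Expanding $e_i=\sum_k X_{ki}f_k+\sum_k Y_{ki}Jf_k$ in the real basis and solving for $Jf_j$ yields $Jf_j=\sum_i(Y^{-1})_{ij}e_i-\sum_i(Y^{-1}X)_{ij}f_i$. Since $h(f_j,f_k)=Q(f_j,f_k)=\omega(Jf_j,f_k)$ with $\omega(e_i,f_k)=\delta_{ik}$ and $\omega(f_i,f_k)=0$, this collapses to $h(f_j,f_k)=(Y^{-1})_{jk}$, so the Gram matrix of $h$ in $\cE_2$ is $(\Im\btau^\cE)^{-1}$. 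I expect the main obstacle to be the first step: the complex-basis assertion both requires the taming hypothesis and dictates the correct reading of the ``iff''; once $\Pi_\cE^{\cE_2}=[\btau^\cE,I_n]$ is established, the remaining claims are routine consequences of Proposition~\ref{prop:Riemann}.
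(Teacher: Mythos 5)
Your proof is correct and follows essentially the same route as the paper's: you show that the Lagrangian span of the $f_i$ meets its $J$-image trivially using positive-definiteness of $Q$, read off the fundamental matrix as $[\btau^\cE,I_n]$, reduce the taming criterion to Proposition \ref{prop:Riemann}, and obtain the Gram matrix $(\Im\btau^\cE)^{-1}$ of $h$ by pairing the expansion of $e_i$ against the $f_j$ under $\omega$. The only difference is that you make explicit the block computation $\Pi A^{-1}\Pi^T=(\btau^\cE)^T-\btau^\cE$ and $\i\,\Pi A^{-1}\Pi^\dagger=2\,\Im\btau^\cE$ (and the logical reading of the ``iff''), which the paper leaves implicit in its appeal to Proposition \ref{prop:Riemann}.
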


\begin{proof}
Let $W$ denote the span of $f_1,\ldots, f_n$ over $\R$, which is a
Lagrangian subspace of $(V,\omega)$.  For any $x\in W\cap J(W)$, we
have $Jx\in W$ and hence $Q(x,x)=\omega(x,Jx)=0$, which implies $x=0$.
Thus $W\cap J(W)=0$, which implies that $\cE_2$ is a basis of $(V,J)$
over $\cE$. It is clear that the fundamental matrix $\Pi_\cE^{\cE_2}$
has the form \eqref{PiE} for some complex-valued matrix
$\btau:=\btau^\cE\in \Mat(n,n,\C)$. The fact that $J$ is a taming of
$(V,\omega)$ iff $\btau$ belongs to $\SH_n$ follows from Proposition
\ref{prop:Riemann}. Relation \eqref{hdef} gives:
\be
h(f_i,f_j)=\omega(f_i,Jf_j)
\ee
Using this and \eqref{PiDef}, we compute:
\be
\omega(e_i,f_j)=(\Im \btau)_{ik}\omega(f_k,Jf_j)=h(f_j,f_k)(\Im\btau)_{ki}~~.
\ee
Since $\omega(e_i,f_j)=\delta_{ij}$, this implies that the matrix of
$h$ in the basis $\cE_2$ equals $(\Im \btau)^{-1}$. 
\end{proof}

\begin{Definition} 
Let $(\cS,J,\omega)$ be a tamed symplectic space and $\cE$ be a
symplectic basis of $(V,\omega)$. The fundamental matrix
$\Pi_\cE^{\cE_2}$ is called the {\em canonical fundamental matrix} of
$(\cS,J,\omega)$ with respect to $\cE$, while $\btau^\cE\in \SH_n$ is
called the {\em modular matrix} of $(\cS,J,\omega)$ with respect to
$\cE$. 
\end{Definition}

\subsection{The modular map of a symplectic space relative to a symplectic basis}

\noindent Let $(V,\omega)$ be a $2n$-dimensional symplectic
space and $\cE=(e_1\ldots e_n, f_1\ldots f_n)$ be a symplectic basis
of $(V,\omega)$. 

\begin{Definition}
The {\em modular map} of $(V,\omega)$ relative to the symplectic basis
$\cE$ is the map $\btau^\cE:\fJ_+(V,\omega)\rightarrow \SH_n$ which
associates to a taming $J$ the modular matrix $\btau_\cE(J)$ of the
tamed symplectic space $(V,J,\omega)$.
\end{Definition}

\noindent Proposition \ref{prop:Riemann} implies that $\btau^\cE$ is a
bijection from $\fJ_+(V,\omega)$ to the Siegel upper half space
$\SH_n$. Using \eqref{fJhom}, this gives the homogeneous space presentation:
\be
\SH^n\simeq \Sp(2n,\R)/\U(n)~~.
\ee 
The inverse of the modular map is given explicitly as follows:

\begin{Proposition}
Let $\btau=\btau_R+i\btau_I \in \SH_n$ be a point of the Siegel upper
half space, where $\btau_R,\btau_I\in \Mat_s(n,n,\R)$ are the real and
imaginary parts of $\btau$:
\be
\btau_R\eqdef \Re\btau~~,~~\btau_I\eqdef \Im \btau~~.
\ee
Then the matrix of the taming $J(\btau)\eqdef \btau_\cE^{-1}(\btau)$ in the
symplectic basis $\cE$ is given by:
\ben
\label{hatJ}
{\hat J}(\btau)=\left[\begin{array}{cc}  \btau_I^{-1}\btau_R~~ &~~\btau_I^{-1}\\-\btau_I-\btau_R\btau_I^{-1}\btau_R
    &~~-\btau_R\btau_I^{-1}\end{array}\right]~~.
\een
\end{Proposition}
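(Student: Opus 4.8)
The plan is to invert the modular map $\btau^\cE$ directly by reconstructing the taming $J=J(\btau)$ from its fundamental matrix and then reading off its matrix in the symplectic basis $\cE=(e_1\ldots e_n,f_1\ldots f_n)$. By Proposition \ref{prop:PiE}, the taming $J(\btau)$ is characterized by the requirement that the fundamental matrix of $(V,J)$ with respect to $\cE$ and the complex basis $\cE_2=(f_1,\ldots,f_n)$ equals $\Pi_\cE^{\cE_2}=[\btau,I_n]$. Unwinding the defining relation \eqref{PiDef} of the fundamental matrix, this is equivalent to the complex-linear expansion $e_i=\sum_{k=1}^n\btau_{ki}f_k$ inside the complex vector space $(V,J)$, where multiplication by $\i$ is realized by $J$. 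The existence and uniqueness of such a $J$ are guaranteed by the fact, noted just before the statement, that $\btau^\cE$ is a bijection onto $\SH_n$, so no separate existence argument is needed.

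First I would split this complex relation into real and imaginary parts. Writing $\btau_{ki}=(\btau_R)_{ki}+\i(\btau_I)_{ki}$ and using $\i f_k=Jf_k$, the expansion becomes the real-linear identity
\be
e_i=\sum_{k=1}^n(\btau_R)_{ki}\,f_k+\sum_{k=1}^n(\btau_I)_{ki}\,Jf_k~~,
\ee
which I would record in matrix form as $\mathbf e=\btau_R\mathbf f+\btau_I\,(J\mathbf f)$, using that $\btau_R$ and $\btau_I$ are symmetric because $\btau\in\SH_n\subset\Mat_s(n,\C)$.

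Next, since $\btau\in\SH_n$ forces $\btau_I=\Im\btau$ to be strictly positive-definite and hence invertible, I would solve this linear system for $J\mathbf f$, obtaining $J\mathbf f=\btau_I^{-1}\mathbf e-\btau_I^{-1}\btau_R\mathbf f$. Applying $J$ to the displayed identity and using $J^2=-\id_V$ gives $J\mathbf e=\btau_R\,(J\mathbf f)-\btau_I\mathbf f$; substituting the expression just found for $J\mathbf f$ yields $J\mathbf e=\btau_R\btau_I^{-1}\mathbf e-(\btau_I+\btau_R\btau_I^{-1}\btau_R)\mathbf f$. At this point both $Jf_j$ and $Je_i$ are expressed in the basis $\cE$, and it remains only to read off the blocks of $\hat J(\btau)$ according to the convention \eqref{matrixsymp} that defines the matrix of an endomorphism in a symplectic basis.

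The one place that demands care --- and which I regard as the main obstacle --- is precisely this last bookkeeping step: the convention \eqref{matrixsymp} records the matrix entries as coefficients in the expansion of $J(e_i)$ and $J(f_i)$, so the blocks $A,B,C,D$ of $\hat J(\btau)$ are the \emph{transposes} of the coefficient matrices appearing above. Tracking these transposes and then invoking the symmetry of $\btau_R,\btau_I$ (so that, e.g., $(\btau_R\btau_I^{-1})^T=\btau_I^{-1}\btau_R$) converts the coefficient matrices into exactly the blocks $A=\btau_I^{-1}\btau_R$, $B=\btau_I^{-1}$, $C=-\btau_I-\btau_R\btau_I^{-1}\btau_R$ and $D=-\btau_R\btau_I^{-1}$ displayed in \eqref{hatJ}. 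As a consistency check one can verify that the resulting matrix squares to $-I_{2n}$ and satisfies \eqref{sympmatrix}, reflecting that $J(\btau)$ is a genuine $\omega$-compatible complex structure; but these facts follow automatically and need not be part of the argument.
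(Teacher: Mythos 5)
Your proof is correct and follows essentially the same route as the paper's: starting from the fundamental-matrix relation $e_i=(\btau_R)_{ki}f_k+(\btau_I)_{ki}Jf_k$, solving for $J\mathbf f$ using invertibility of $\btau_I$, applying $J$ and $J^2=-\id_V$ to get $J\mathbf e$, and reading off the blocks in the basis $\cE$. The only difference is cosmetic: you make the transpose bookkeeping (and its cancellation via symmetry of $\btau_R,\btau_I$) explicit, whereas the paper absorbs it silently when passing from the index relations to the blocks $K,P,Q,R$.
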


\begin{proof}
We have:
\ben
\label{Jmatrix}
{\hat J}=\left[\begin{array}{cc} K & P \\ Q & R\end{array}\right]~~,
\een
where:
\beqa
Je_i &=& K_{ji} e_j+Q_{ji} f_j\nn\\
Jf_i & = &P_{ji} e_j+R_{ji} f_j~~,
\eeqa
and we use Einstein summation over $j=1\ldots n$. Since the canonical
fundamental matrix in the basis $\cE$ has the form $\Pi=[\btau,I_n]$,
we have:
\be
e_i=(\btau_R)_{ij}f_j+(\btau_I)_{ij} J f_j~~,
\ee
which gives: 
\be
J f_i=(\btau_I^{-1})_{ij} e_j -(\btau_I^{-1}\btau_R)_{ij} f_j
\ee
and: 
\be
Je_i=(\btau_R)_{ij} J f_j-(\btau_I)_{ij} f_j=(\btau_R \btau_I^{-1})_{ij} e_j-(\btau_I +\btau_R \btau_I^{-1}\btau_R)_{ij} f_j~~.
\ee
Thus $K=\btau_I^{-1}\btau_R$, $Q=-\btau_I -\btau_R \btau_I^{-1}\btau_R$,
$P=\btau_I^{-1}$ and $R=-\btau_R\btau_I^{-1}$.
\end{proof}

\begin{Remark}
The matrix ${\hat J}$ satisfies the relations: 
\be
{\hat J}^2=-I_{2n}~~,~~{\hat J}^t\Omega_n{\hat J}=\Omega_n~~,
\ee
which are equivalent with: 
\be
{\hat J}^2=-I_{2n}~~,~~{\hat J}^t=\Omega_n{\hat J}\Omega_n~~,
\ee
where we used the relation $\Omega_n^2=-I_{2n}$. The proposition
implies that any matrix ${\hat J}\in \Mat(2n,2n,\R)$ satisfying these
relations has the form \eqref{hatJ} for some uniquely-determined 
complex matrix $\btau=\btau_R+\i \btau_I$ belonging to $\SH_n$. Conversely, 
the matrix \eqref{hatJ} satisfies these relations for any $\btau\in \SH_n$. 
\end{Remark}

\paragraph{Equivariance properties of the modular map.}

Let $\mu:\Sp(2n,\R)\rightarrow \Diff(\SH_n)$ denote the modular action
of $\Sp(2n,\R)$ on the Siegel upper half space, i.e. the action
through matrix fractional transformations:
\ben
\label{fracmatrix}
\mu(M)(\btau):=M\bullet \btau\eqdef (A\btau+B)(C\btau+D)^{-1}~~
\forall M=\left[\begin{array}{cc} A & B\\ C & D\end{array}\right]\in \Sp(2n,\R)~~\forall \btau\in \SH_n~~.
\een
The group $\Sp(V,\omega)$ acts freely and transitively on the space
$\SB(V,\omega)$ of symplectic bases of $(V,\omega)$ as follows:
\be
\varphi(\cE)=(\varphi(e_1),\ldots \varphi(e_n),\varphi(f_1),\varphi(f_n))~~,~~\forall \varphi\in \Sp(V,\omega)~~,~~\forall \cE=(e_1,\ldots, e_n, f_1,\ldots, f_n)\in \SB(V,\omega)~~.
\ee 

\begin{Proposition}
For any $\varphi\in \Sp(V,\omega)$, we have:
\beqan
\label{tauequiv}
&& \btau^{\varphi(\cE)}(J)= M_\cE(\varphi)^T \bullet \btau^{\cE}(J)\nn\\
&& \btau^\cE(\Ad_0(\varphi)(J)) = M_\cE(\varphi^{-1})^T \bullet \btau^\cE(J)~~,
\eeqan
where $M_\cE(\varphi)\in \Sp(2n,\R)$ is the matrix of $\varphi$ in the
symplectic basis $\cE$. In particular, the modular map is invariant under
the diagonal action:
\be
\btau^{\varphi(\cE)}(\Ad_0(\varphi)(J))= \btau^\cE(J)~~\forall \varphi\in \Sp(V,\omega)~~.
\ee
\end{Proposition}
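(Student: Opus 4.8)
The plan is to reduce all three identities to a single equivariance law for the matrix $\hat{J}(\btau)$ of \eqref{hatJ}, and then to prove that law. The key observation is that, by definition, the modular matrix $\btau^\cE(J)$ is the unique element of $\SH_n$ for which the matrix of the endomorphism $J$ in the symplectic basis $\cE$ equals $\hat{J}(\btau^\cE(J))$; and by the Remark following \eqref{hatJ} the assignment $\btau\mapsto\hat{J}(\btau)$ is injective. Hence each of the two displayed transformation formulas is equivalent to a matrix identity between $\Omega_n$-compatible complex structures, so it suffices to compare matrices and invoke injectivity.

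First I would record two elementary bookkeeping facts, writing $M\eqdef M_\cE(\varphi)$. Passing from $\cE$ to the symplectic basis $\varphi(\cE)=M^{T}\cE$ conjugates the matrix of the fixed endomorphism $J$ to $M^{-1}\hat{J}(\btau)M$, while keeping $\cE$ fixed and replacing $J$ by $\Ad_0(\varphi)(J)=\varphi\circ J\circ\varphi^{-1}$ conjugates it to $M\hat{J}(\btau)M^{-1}$. Thus both formulas follow at once from the single equivariance law $\hat{J}(N\bullet\btau)=N^{-T}\hat{J}(\btau)N^{T}$, valid for all $N\in\Sp(2n,\R)$: taking $N=M^{T}$ reproduces the first identity and taking $N=M^{-T}$ reproduces the second, where one uses \eqref{Mder} to identify $M_\cE(\varphi^{-1})=M^{-1}$ and its transpose. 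The diagonal invariance then needs no further computation: the matrix of $\varphi\circ J\circ\varphi^{-1}$ in the basis $\varphi(\cE)$ coincides with the matrix of $J$ in $\cE$ (simultaneous conjugation of endomorphism and basis), so injectivity of $\btau\mapsto\hat{J}(\btau)$ forces $\btau^{\varphi(\cE)}(\Ad_0(\varphi)(J))=\btau^\cE(J)$.

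The substantive step, and the main obstacle, is the equivariance law, which I would prove geometrically. Complexify to $V_\C=V\otimes_\R\C$ and extend $J,\omega$ $\C$-linearly; the holomorphic subspace $V^{1,0}_J\eqdef\ker(J-\i\,\id)$ is then a Lagrangian subspace of $(V_\C,\omega_\C)$, transverse (by Proposition \ref{prop:PiE}) to the span of $e_1,\dots,e_n$. Writing it as a graph $\{(x,y): y=Sx\}$ in $\cE$-coordinates, a direct reading of \eqref{hatJ} shows $S=-\overline{\btau}$, so $\btau$ is determined by the slope of $V^{1,0}_J$. For $N\in\Sp(2n,\R)$ the complex structure $N^{-T}\hat{J}(\btau)N^{T}$ is again an $\Omega_n$-compatible taming, and its holomorphic subspace is the image of $V^{1,0}_{\hat J(\btau)}$ under the linear map with matrix $N^{-T}$; the image of a graph under a symplectic matrix is again a graph whose slope is obtained by the matrix fractional transformation \eqref{fracmatrix}, and using \eqref{Mder} together with the reality of the entries of $N$ this new slope is exactly $-\overline{N\bullet\btau}$. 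Hence $N^{-T}\hat{J}(\btau)N^{T}=\hat{J}(N\bullet\btau)$, the desired law. The delicate points are the identification of the slope of $V^{1,0}_J$ with $-\overline{\btau}$ (which is where \eqref{hatJ} and Proposition \ref{prop:Riemann} enter) and the transpose/inverse bookkeeping that makes $M^{T}$ and $M^{-T}$ — rather than $M$ or $M^{-1}$ — appear in the final formulas; both are forced once the slope computation is carried out carefully. Alternatively, the equivariance law can be verified by brute-force substitution of \eqref{hatJ} into \eqref{fracmatrix}, which is entirely routine but computationally heavy.
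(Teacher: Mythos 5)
Your proof is correct, but it takes a genuinely different route from the paper's. The paper proves the first identity of \eqref{tauequiv} by a direct two-line computation inside the complex vector space $(V,J)$: writing $\varphi(\cE)=M^T\cE$ in block form and substituting the defining relation $e=\btau f$ of the modular matrix, it solves $e'=\btau' f'$ to obtain $\btau'=(A^T\btau+C^T)(B^T\btau+D^T)^{-1}=M^T\bullet\btau$; the diagonal invariance is then observed exactly as you observe it (simultaneous conjugation of $J$ and $\cE$ leaves the matrix of $J$ unchanged), and the second identity is deduced by replacing $\cE$ with $\varphi^{-1}(\cE)$ in the diagonal invariance and applying the first identity. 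You instead factor everything through the single matrix identity $\hat{J}(N\bullet\btau)=N^{-T}\hat{J}(\btau)\,N^{T}$ for $N\in\Sp(2n,\R)$, reduce both identities to it by conjugation bookkeeping plus injectivity of $\btau\mapsto\hat{J}(\btau)$, and prove that identity in the complexified picture, where the $+\i$-eigenspace of $\hat{J}(\btau)^\C$ is the graph of slope $-\overline{\btau}$. Both steps check out: the image of $\{(x,\tau x)\}$ under $N^{-T}$ is indeed the graph of $(A\tau-B)(D-C\tau)^{-1}$, which equals $-\overline{N\bullet\btau}$ when $\tau=-\overline{\btau}$. What your route buys is an explicit, reusable statement of the $\Sp(2n,\R)$-equivariance of the parametrization \eqref{hatJ} (the precise sense in which conjugation of tamings corresponds to the modular action); what it costs is dependence on material proved later in the appendix (the bijection $\lambda$ and relation \eqref{modmatrix}), whereas the paper's argument needs only the definition of $\btau^\cE$ and \eqref{Mder}. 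There is no circularity, since \eqref{modmatrix} and the properties of $\lambda$ are established independently of this proposition.

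Three small imprecisions, none fatal. First, the characterization of $\btau^\cE(J)$ as the unique $\btau\in\SH_n$ for which the matrix of $J$ in $\cE$ equals $\hat{J}(\btau)$ is not the paper's definition (which is via the fundamental matrix $\Pi_\cE^{\cE_2}=[\btau^\cE,I_n]$), but a consequence of the proposition establishing \eqref{hatJ} together with bijectivity of the modular map; cite it as such. Second, the holomorphic subspace, being a graph $y=Sx$ over the $e$-coordinates, is transverse to the complexified span of $f_1,\ldots,f_n$, not of $e_1,\ldots,e_n$ as you wrote. Third, the slope of the image of a graph under a symplectic matrix $P$ is a fractional-linear expression in the blocks of $P$, but it is not literally the action \eqref{fracmatrix} of $P$ (the blocks enter in permuted positions); your final formula $-\overline{N\bullet\btau}$ is nonetheless correct, and to write the image as a graph you should note that $D-C\tau=\overline{C\btau+D}$ is invertible, which is the standard fact underlying the well-definedness of \eqref{fracmatrix}.
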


\begin{proof}
Let $\cE'\eqdef \varphi(\cE)$ and $\btau\eqdef \btau_\cE(J)$,
$\btau'\eqdef \btau_{\cE'}(J)$.  Setting $M\eqdef M_\cE(\varphi)$, we
have $\cE'=M^T\cE$, where $M$ has the form \eqref{M}. Using relations
\eqref{Mder}, this gives $e'=A^T e+C^T f=(A^T\btau +C^T)f$. On the
other hand, we have $e'=\btau' f'=\btau' (B^T e+D^T f)=\btau'(B^T\btau +
D^T)f$. Comparing these two expressions gives:
\be
\btau'=(A^T\btau +C^T)(B^T\btau + D^T)^{-1}=M^T \bullet \btau~~,
\ee
which shows that the first relation in \eqref{tauequiv} holds.  The
matrix of $\Ad_0(\varphi)(J)$ in the symplectic basis $\cE$ coincides
with the matrix of $J$ in the symplectic basis $\varphi(\cE)$. This
implies
$\btau^{\varphi(\cE)}(\Ad_0(\varphi)(J))=\btau^\cE(J)$. Replacing $\cE$
by $\varphi^{-1}(\cE)$ and using the first relation in
\eqref{tauequiv}, this gives:
\be
\btau^{\cE}(\Ad_0(\varphi)(J))=\btau^{\varphi^{-1}(\cE)}(J)=M_\cE(\varphi^{-1})^T \bullet \btau^{\cE}(J)~~,
\ee
which shows that the second relation in \eqref{tauequiv} holds. 
\end{proof}

\paragraph{The Lagrangian Grassmannian.}

Let $(V,J,\omega)$ be a tamed symplectic space. 
Then one can show \cite{Vaisman, Berndt} that a subspace $L\subset V$ is
Lagrangian with respect to $\omega$ iff $V=L\oplus J(L)$ and
$h(L\times L)\subset \R$. Let $\cL(V,\omega)$ be the Lagrangian
Grassmannian of $(V,\omega)$ (i.e. the set of all
Lagrangian subspaces).  Then $\cL(V,\omega)$ is a real analytic
manifold of dimension $\frac{n(n+1)}{2}$, being an embedded
submanifold of the ordinary Grassmannian $\Gr_n(V)$ of $n$-planes of
$V$. 

\subsection{Positive polarizations of $(V_\C,\omega_\C)$}
\label{app:pospol}

Let $V_\C$ denote the complexification of $V$ and $\bar{~}$ denote the
complex conjugation of $V_\C$. Any $x\in V_\C$ decomposes as:
\be
x=\Re x+\i~\Im x~~,
\ee
where $\Re x\eqdef \frac{1}{2}(x+\bar{x})$ and $\Im x\eqdef
\frac{1}{2\i}(x-\bar{x})$ belong to $V$.  Let $\omega_\C:V_\C\times
V_\C\rightarrow \C$ denote the complexification of $\omega$, which is
a complex symplectic form on $V_\C$. Consider the non-degenerate
Hermitian form $q:V_\C\times V_\C\rightarrow \C$ (complex-linear in
the first variable) defined through:
\be
q(x,y)\eqdef \i \omega_\C(x,\bar{y})~~,~~\forall x,y\in V_\C~~.
\ee
A complex Lagrangian subspace $L\subset V_\C$ of the complex
symplectic space $(V_\C,\omega_\C)$ is called {\em positive} if the
restriction of $q$ is a Hermitian scalar product on $L$, i.e. if:
\be
\i\omega_\C(x,\bar{x})>0~~,~~\forall x\in L\setminus\{0\}~~.
\ee
A positive Lagrangian subspace of $(V_\C,\omega_\C)$ is also called a
{\em positive polarization} of $(V_\C,\omega_\C)$. Let
$\cL_+(V_\C,\omega_\C)$ denote the set of all positive
polarizations, which is an open real submanifold of the 
complex Lagrangian Grassmannian $\cL(V_\C,\omega_\C)$.
For any positive polarization $L$, the space $\bar{L}$
is also a positive polarization and we have a direct sum decomposition
$V_\C=L\oplus \bar{L}$, which implies $V=\Re(L)=\Re(\bar{L})=\Im
(L)=\Im (\bar{L})$. Since $\dim_\R L=\dim_\R V=2n$, this implies that
the $\R$-linear maps $\Re|_L:L\rightarrow V$ and $\Im |_L:L\rightarrow
V$ are bijective. The action $\rho:\Sp(V,\omega)\rightarrow
\Diff(\cL(V_\C,\omega_\C))$ given by:
\ben
\label{rhodef}
\rho(\varphi)(L)\eqdef \varphi^\C(L)~~\forall L\in \cL(V_\C,\omega_\C)~~\forall \varphi\in \Sp(V,\omega)
\een
preserves $\cL_+(V_\C,\omega_\C)$, on which it restricts to an action
which we denote by $\rho_0$. One can show that $\rho_0$ is
transitive, with isotropy group at $L\in \cL_+(V_\C,\omega_\C)$ given
by $\U(L,q)$, thus giving a homogeneous space presentation:
\be
\cL_+(V_\C,\omega_\C)\simeq \Sp(2n,\R)/\U(n)~~.
\ee 

\subsection{Equivalence of tamings and positive polarizations}

Given a taming $J\in \fJ_+(V,\omega)$, let $H:V_\C\times
V_\C\rightarrow \C$ denote the Hermitian scalar product (complex-linear
in the first variable) induced by $Q$ on $V_\C$:
\be
H(v_1+\i w_1,v_2+\i w_2)=Q_\C(x,\bar{y})=Q(v_1,v_2)+Q(w_1,w_2)+\i[Q(w_1,v_2)-Q(v_1,w_2)]~~\forall v_1,v_2,w_1,w_2\in V~~.
\ee
We have: 
\be
H(x,y)=\omega_\C(J^\C(x),\bar{y})=-\omega_\C(x,J^\C(\bar{y}))~~\mathrm{and}~~q(x,y)=-\i H(J^\C(x), y)~~\forall x,y\in V_\C\, ,
\ee
where $J^\C\in \End_\C(V_\C)$ denote the complexification of $J$, which
has eigenvalues $\pm \i$. Then $J^\C$ is $\omega_\C$-compatible and
commutes with the complex conjugation of $V_\C$. Let:
\be
L_J^\pm\eqdef \ker(J^\C\mp \i~\id_{V_\C})\subset V_\C
\ee 
denote the eigenspaces of $J^\C$ corresponding to the eigenvalues $\pm
\i$, which are complementary complex Lagrangian subspaces of
$(V_\C,\omega_\C)$. We have $V_\C=L^+_J\oplus L^-_J$,
$\sigma(L_J^\pm)=L_J^\mp$ and the operators $P^\pm_J\eqdef
\frac{1}{2}(\id_{V_\C}\mp \i J^\C)\in \End_\C(V_\C)$ are complementary
$H$-Hermitian projectors on $L^\pm_J$, hence $L_J^+$ and $L_J^-$
are $H$-orthogonal to each other. In particular, we have
$L_J^\pm=\ker P_J^\mp$. The space:
\be
L_J\eqdef L_J^+=\ker(J^\C- \i~\id_{V_\C})
\ee
is called {\em the complex Lagrangian subspace of $(V_\C, \omega_\C)$
  determined by the taming $J$}.  With this notation, we have
$L_J^-=\overline{L_J}$ and an $H$-orthogonal decomposition
$V_\C=L_J\oplus \overline{L_J}$. Moreover, the map:
\ben
\label{kappaJ}
\kappa_J\eqdef \sqrt{2}P^+_J|_V=\frac{1}{\sqrt{2}}(\id_V- \i J):V\stackrel{\sim}{\rightarrow} L_J
\een
is an isomorphism of Hermitian vector spaces between $(V,J,h)$ and $(L_J,H)$, whose 
inverse is given by: 
\ben
\label{Re}
\kappa_J^{-1}=\sqrt{2}\Re|_{L_J}:L_J\stackrel{\sim}{\rightarrow} V~~.
\een
We have: 
\be
L_J=\{x\in V_\C|\Im x=-J(\Re x)\}=\{x\in V_\C|\Re x=J(\Im x)\}=\{v-\i Jv|v\in V\}~~.
\ee
and $q|_{L_J\times L_J}=H|_{L_J\times L_J}$, which gives: 
\be
h(v_1,v_2)=q(\kappa_J(v_1), \kappa_J(v_2))~~\forall v_1, v_2 \in V~~.
\ee
In particular, $\kappa_J$ is an isomorphism from $(V,J,h)$ to
$(L_J,q)$ and hence $L_J$ is a positive polarization of $(V_\C,\omega_\C)$:
\be
L_J\in \cL_+(V_\C,\omega_\C)~~\forall J\in \fJ_+(V,\omega)~. 
\ee
In fact, giving a taming of $(V,\omega)$ is {\em equivalent} to giving
a positive polarization of $(V_\C,\omega_\C)$:

\begin{Proposition}{\cite{Vaisman}} 
The map $\lambda:\fJ_+(V,\omega)\rightarrow \cL_+(V_\C,\omega_\C)$ given by
$\lambda(J)=L_J$ is bijective and its inverse
$\lambda^{-1}:\cL_+(V_\C,\omega_\C)\rightarrow \fJ_+(V,\omega)$ is
given as follows. Given $L\in \cL_+(V_\C,\omega_\C)$, the
corresponding taming $J_L\eqdef \lambda^{-1}(L)$ is the complex
structure of $V$ obtained by transporting the complex structure of $L$
through the $\R$-linear bijection $ \Re|_L:L\rightarrow
V$. Explicitly, we have $J_L\circ \Re|_L=\Re|_L\circ (\i\cdot)$, i.e.:
\be
J_L(x+\bar{x})=\i (x-\bar{x})~~\forall x\in L~~,
\ee
where $\i\cdot:L\rightarrow L$ denotes the operator of multiplication
by $\i$ acting on $L$.
\end{Proposition}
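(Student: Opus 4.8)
The plan is to exhibit the inverse explicitly and verify the two round-trips, leaning on the structural facts already established above: that $L_J=\ker(J^\C-\i\,\id_{V_\C})$ is a positive polarization with $V_\C=L_J\oplus\overline{L_J}$, that $\kappa_J=\tfrac{1}{\sqrt2}(\id_V-\i J)\colon V\stackrel{\sim}{\rightarrow}L_J$ is a Hermitian isomorphism with $\kappa_J^{-1}=\sqrt2\,\Re|_{L_J}$, and that for any positive polarization $L$ the real-linear maps $\Re|_L,\Im|_L\colon L\to V$ are bijective. Since $\lambda$ is already known to land in $\cL_+(V_\C,\omega_\C)$, the work is to show that the assignment $L\mapsto J_L\eqdef\Re|_L\circ(\i\cdot)\circ(\Re|_L)^{-1}$ is well-defined, takes values in $\fJ_+(V,\omega)$, and is a two-sided inverse of $\lambda$.

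First I would confirm that $J_L$ is a complex structure: $\Re|_L$ is an $\R$-linear bijection and $\i\cdot$ is a $\C$-linear (in particular $\R$-linear) endomorphism of the $\C$-subspace $L$ squaring to $-\id_L$, whence $J_L^2=-\id_V$. Unwinding the transport and writing $v=\tfrac12(x+\bar x)$ for the unique $x\in L$ with $\Re x=v$ recovers the stated formula $J_L(x+\bar x)=\i(x-\bar x)$, i.e. $J_Lv=\tfrac{\i}{2}(x-\bar x)$.

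The computational core is that $J_L$ tames $\omega$. Writing $v_i=\tfrac12(x_i+\bar x_i)$ and $J_Lv_i=\tfrac{\i}{2}(x_i-\bar x_i)$ with $x_i\in L$, I would expand $\omega(v_1,v_2)$ and $\omega(J_Lv_1,J_Lv_2)$ through $\omega_\C$ and annihilate the terms $\omega_\C(x_1,x_2)$ and $\omega_\C(\bar x_1,\bar x_2)$ using that $L$ and $\overline L$ are $\omega_\C$-Lagrangian; both then collapse to $\tfrac14[\omega_\C(x_1,\bar x_2)+\omega_\C(\bar x_1,x_2)]$, establishing $\omega$-compatibility. The same bookkeeping gives $Q_{J_L,\omega}(v,v)=\omega(J_Lv,v)=\tfrac{\i}{2}\omega_\C(x,\bar x)=\tfrac12 q(x,x)$, which is strictly positive for $v\neq0$ exactly because $L$ is a positive polarization and $\Re|_L$ is injective. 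Hence $J_L\in\fJ_+(V,\omega)$.

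It then remains to check the round-trips. For $\lambda(J_L)=L$: for $x\in L$ I would compute $J_L^\C x$ by decomposing $x=\Re x+\i\,\Im x$ and using $\Im x=\Re(-\i x)$ to evaluate $J_L(\Im x)$, obtaining $J_L^\C x=\i x$; thus $L\subseteq L_{J_L}$, and equality follows since both are complex subspaces of dimension $n$. For $J_{L_J}=J$: from $\Re(\kappa_Jv)=\tfrac{1}{\sqrt2}v$ (so $(\Re|_{L_J})^{-1}=\sqrt2\,\kappa_J$) and $\Re(\i\kappa_Jv)=\tfrac{1}{\sqrt2}Jv$, the definition of $J_{L_J}$ reduces directly to $J$. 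The main obstacle I anticipate is nothing conceptual but rather keeping the conjugations and real and imaginary parts consistent throughout the $\omega$-compatibility expansion and the eigenspace identification --- a single sign or bar slip is what would derail the argument --- whereas positivity drops out immediately once $Q_{J_L,\omega}(v,v)$ is rewritten via $q$.
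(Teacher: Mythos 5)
Your proof is correct. Note that the paper itself offers no argument for this proposition --- it is quoted from the cited reference (Vaisman) --- so there is no internal proof to compare against; what you have written is a complete, self-contained verification built from exactly the facts the paper establishes nearby: bijectivity of $\Re|_L$ for a positive polarization, the Lagrangian property of $L$ and $\bar L$, and the isomorphism $\kappa_J=\tfrac{1}{\sqrt2}(\id_V-\i J)$ with $\kappa_J^{-1}=\sqrt2\,\Re|_{L_J}$. All the computational steps check out: the compatibility expansion does collapse both $\omega(v_1,v_2)$ and $\omega(J_Lv_1,J_Lv_2)$ to $\tfrac14\left[\omega_\C(x_1,\bar x_2)+\omega_\C(\bar x_1,x_2)\right]$ once the Lagrangian conditions kill $\omega_\C(x_1,x_2)$ and $\omega_\C(\bar x_1,\bar x_2)$; the identity $Q_{J_L,\omega}(v,v)=\tfrac12 q(x,x)$ is right and immediately yields positivity; the eigenvalue computation $J_L^\C x=-\Im x+\i\,\Re x=\i x$ for $x\in L$ gives $L\subseteq L_{J_L}$, with equality by the dimension count $\dim_\C L=\dim_\C L_{J_L}=n$; and $\Re(\i\kappa_J v)=\tfrac{1}{\sqrt2}Jv$ closes the second round trip. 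Since the two round trips establish that $L\mapsto J_L$ is a two-sided inverse of $\lambda$, bijectivity and the stated formula for $\lambda^{-1}$ follow simultaneously, which is exactly what the proposition asserts.
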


\

\noindent For any $\varphi\in \Sp(V,\omega)$, we have $\ker
(\Ad(\varphi^\C)(J^\C)- \i~\id_{V_\C})=\varphi^\C(\ker(J^\C-
\i~\id_{V_\C}))$.  This reads:
\ben
\label{lambda_equiv}
\lambda\circ \Ad_0(\varphi)=\rho(\varphi)\circ \lambda~~\forall \varphi\in \Sp(V,h)~~,
\een
showing that $\lambda$ is $\Sp(V,h)$-equivariant. 

\subsection{The modular matrix of a positive polarization}
\label{app:Siegel}

Given a symplectic basis $\cE=(e_1\ldots
e_n, f_1\ldots f_n)$ of $(V,\omega)$ and a matrix $\tau\in \SH_n$, the
complex subspace $L^\cE(\tau)$ of $V_\C$ spanned by the following vectors
(which are linearly-independent over $\C$):
\ben
\label{uk}
u_k^\cE(\tau)\eqdef e_k+\sum_{l=1}^n\tau_{kl}f_l\in V_\C~~(k=1\ldots n)
\een
is a positive complex Lagrangian subspace of $(V_\C,\omega_\C)$. 

\begin{Remark}
If one works with the conventions of references \cite{Vaisman,
  Berndt}, then one has to interchange $e_i$ and $f_i$ in
relation \eqref{uk}. This is because $\cE=(e_i,f_i)$ is a symplectic basis for 
$\omega$ iff $\cE'=(f_i,e_i)$ is a symplectic basis for $-\omega$. 
\end{Remark}

\begin{Proposition}{\cite{Vaisman}} For any symplectic basis $\cE$ of
$(V,\omega)$, the map $L^\cE:\SH_n\rightarrow \cL_+(V_\C,\omega_\C)$
given by $L^\cE(\tau)=\oplus_{k=1}^n \C u_k^\cE(\tau)$ is a bijection
from $\SH_n$ to $\cL_+(V_\C,\omega_\C)$.
\end{Proposition}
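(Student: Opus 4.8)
The plan is to identify each $L^\cE(\tau)$ with the \emph{graph} of the matrix $\tau$ relative to the splitting $V_\C=\Span_\C(e_1,\ldots,e_n)\oplus\Span_\C(f_1,\ldots,f_n)$ induced by $\cE$, and then to run the standard graph argument for Lagrangian subspaces; the positivity condition is precisely what converts ``Lagrangian'' into a bijection onto the Siegel upper half space. In the coordinates furnished by $\cE$, relation \eqref{uk} shows that $u_k^\cE(\tau)$ has $e$-part equal to the $k$-th standard basis vector, so the $u_k^\cE(\tau)$ are automatically $\C$-linearly independent and $L^\cE(\tau)=\{(a,\tau a)\mid a\in\C^n\}$ is an $n$-dimensional complex subspace.

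First I would verify that $L^\cE(\tau)\in\cL_+(V_\C,\omega_\C)$. A one-line computation using the symplectic relations $\omega(e_i,f_j)=\delta_{ij}$ and $\omega(e_i,e_j)=\omega(f_i,f_j)=0$ gives $\omega_\C(u_k^\cE(\tau),u_m^\cE(\tau))=\tau_{mk}-\tau_{km}$, which vanishes exactly because $\tau$ is symmetric; hence $L^\cE(\tau)$ is complex Lagrangian. For positivity I would expand $x=\sum_k a_k u_k^\cE(\tau)$ and compute, using $\bar\tau-\tau=-2\i\,\Im\tau$, that $\i\,\omega_\C(x,\bar x)=2\,\bar a^{\,T}(\Im\tau)\,a$; since $\Im\tau$ is strictly positive definite this is positive for $a\neq 0$, so $L^\cE(\tau)$ is a positive polarization. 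Injectivity is then immediate: a graph determines the linear map defining it, so $L^\cE(\tau)=L^\cE(\tau')$ forces $\tau=\tau'$.

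The substantive step — and the one I expect to be the main obstacle — is surjectivity. Given an arbitrary $L\in\cL_+(V_\C,\omega_\C)$, I must show that $L$ is again a graph over $\Span_\C(e_1,\ldots,e_n)$, i.e. that $L\cap\Span_\C(f_1,\ldots,f_n)=0$. This is exactly where positivity (and not merely the Lagrangian property) is needed: any nonzero $x=\sum_l b_l f_l$ supported only on the $f$'s satisfies $\omega_\C(x,\bar x)=0$ and hence $\i\,\omega_\C(x,\bar x)=0$, contradicting positivity. Once this transversality is in hand, $\dim_\C L=n=\dim_\C\Span_\C(f_1,\ldots,f_n)$ together with $L\cap\Span_\C(f_l)=0$ gives $V_\C=L\oplus\Span_\C(f_l)$, so the projection onto $\Span_\C(e_k)$ restricts to an isomorphism $L\stackrel{\sim}{\rightarrow}\Span_\C(e_k)$ and therefore $L=\{(a,Ta)\mid a\in\C^n\}$ for a unique $T\in\Mat(n,\C)$.

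Finally I would run the computations of the first step in reverse: the Lagrangian condition for $L=\{(a,Ta)\}$ forces $T=T^{T}$, and positivity forces $\Im T$ to be strictly positive definite, so $T\in\SH_n$ and $L=L^\cE(T)$. This exhibits $L^\cE$ as onto and completes the proof. As an alternative, one can instead factor $L^\cE$ through the two bijections already established — the modular map $\btau^\cE:\fJ_+(V,\omega)\to\SH_n$ and the equivalence $\lambda:\fJ_+(V,\omega)\to\cL_+(V_\C,\omega_\C)$ of the previous subsection — but this route requires carefully tracking the complex-conjugation bookkeeping flagged in the Remark following \eqref{uk}, whereas the direct graph argument above avoids it entirely.
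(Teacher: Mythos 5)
Your proof is correct, and every step checks out: the computation $\omega_\C(u_k^\cE(\tau),u_m^\cE(\tau))=\tau_{mk}-\tau_{km}$, the positivity identity $\i\,\omega_\C(x,\bar x)=2\,\bar a^T(\Im\tau)a$, and the crucial transversality step $L\cap\Span_\C(f_1,\ldots,f_n)=0$ (which is exactly where positivity, rather than mere Lagrangianity, is used — a Lagrangian such as $\Span_\C(f_1,\ldots,f_n)$ itself is not a graph over the $e$-span) are all accurate, and the reverse direction correctly recovers $T=T^T$ from the Lagrangian condition and $\Im T>0$ from positivity.

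There is, however, nothing in the paper to compare your argument against: the paper states this Proposition with a citation to Vaisman and gives no proof at all, so you have supplied a self-contained elementary argument where the paper relies on the literature. Your closing remark about the alternative route is also well judged, and in fact understates the problem with that route \emph{within this paper's logical order}: the paper defines the modular matrix $\tau^\cE(L)$ of a positive polarization only \emph{after} (and by means of) this Proposition, and the conjugation relation \eqref{modmatrix}, $\tau^\cE\circ\lambda=-\overline{\btau^\cE}$, is likewise proved afterwards. So factoring $L^\cE$ through the bijections $\lambda:\fJ_+(V,\omega)\to\cL_+(V_\C,\omega_\C)$ and $\btau^\cE:\fJ_+(V,\omega)\to\SH_n$ would require independently re-proving that every positive polarization arises as $\lambda(J)$ with $L^\cE(-\overline{\btau^\cE(J)})=\lambda(J)$ — essentially the same work as your direct argument, plus the sign/conjugation bookkeeping you flagged. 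The direct graph argument is the cleaner choice and is the one that should be spliced in as a proof.
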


\noindent Thus any positive polarization $L\in \cL_+(V_\C,\omega_\C)$
and any symplectic basis $\cE$ of $(V,\omega)$ determine a unique
matrix $\tau^\cE(L)\in \SH_n$ with the property that the vectors
\eqref{uk} form a basis of $L$. This is called the {\em modular matrix
  of $L$} with respect to the symplectic basis $\cE$.

\begin{Proposition} 
We have:
\ben
\label{modmatrix}
\tau^\cE\circ \lambda=-\overline{\btau^\cE}~~.
\een
\end{Proposition}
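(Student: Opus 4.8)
The plan is to unpack both the modular map $\btau^\cE$ and the modular-matrix map $\tau^\cE$ in terms of the fixed symplectic basis $\cE=(e_1,\ldots,e_n,f_1,\ldots,f_n)$, and then to exhibit an explicit $\C$-basis of the positive polarization $L_J=\lambda(J)$ in the normal form \eqref{uk}. First I would record what $\btau\eqdef\btau^\cE(J)$ means concretely: by Proposition \ref{prop:PiE} the fundamental matrix is $\Pi_\cE^{\cE_2}=[\btau,I_n]$, which via \eqref{PiDef} and the symmetry of $\btau$ becomes the identity $e_k=\sum_m\left[(\Re\btau_{km})f_m+(\Im\btau_{km})Jf_m\right]$ holding in $V$, where the complex scalar multiplication of $(V,J)$ has been written out through $\btau_{km}\cdot f_m=(\Re\btau_{km})f_m+(\Im\btau_{km})Jf_m$. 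Applying the $\R$-linear operator $J$ to this relation and using $J^2=-\id_V$ gives the companion identity $Je_k=-\sum_m(\Im\btau_{km})f_m+\sum_m(\Re\btau_{km})Jf_m$. These two real identities encode all the information about $\btau$ that the argument needs.

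Next I would test the candidate vectors
$$u_k\eqdef e_k-\sum_{l=1}^n\overline{\btau}_{kl}f_l\in V_\C\quad(k=1,\ldots,n),$$
claiming that each lies in $L_J=\ker(J^\C-\i\,\id_{V_\C})$. To verify $J^\C u_k=\i u_k$, I would extend $J$ complex-linearly so that $J^\C u_k=Je_k-\sum_l\overline{\btau}_{kl}Jf_l$, substitute the companion identity for $Je_k$ from the previous step, and collect the $f_m$- and $Jf_m$-terms separately. The essential simplification is $\Re\btau_{km}-\overline{\btau}_{km}=\i\,\Im\btau_{km}$, which makes both $J^\C u_k$ and $\i u_k$ collapse to the single expression $-\sum_m(\Im\btau_{km})f_m+\i\sum_m(\Im\btau_{km})Jf_m$. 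This confirms $u_k\in L_J$.

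Finally I would observe that the $u_k$ are $\C$-linearly independent — their real parts already involve the distinct vectors $e_k$ — and that there are $n=\dim_\C L_J$ of them, so they form a $\C$-basis of $L_J$ in exactly the form \eqref{uk} with $\tau_{kl}=-\overline{\btau}_{kl}$. By the uniqueness of the modular matrix characterizing $\tau^\cE(L)$, this yields $\tau^\cE(L_J)=-\overline{\btau}$, i.e. $\tau^\cE\circ\lambda=-\overline{\btau^\cE}$, which is \eqref{modmatrix}. As a consistency check I would note $-\overline{\btau}\in\SH_n$, since it is symmetric and $\Im(-\overline{\btau})=\Im\btau>0$.

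The main obstacle here is not analytic but one of bookkeeping and conventions: keeping straight the two distinct ``complexifications'' at play, namely the \emph{intrinsic} complex structure $J$ on $V$ (through which $\btau$ enters, via $J$-linear combinations of the $f_m$) versus the \emph{extrinsic} complexification $V_\C$ (in which $L_J$ and the vectors \eqref{uk} literally live, with honestly complex coefficients on the real vectors $e_k,f_l$). The appearance of the conjugate $-\overline{\btau}$ rather than $\btau$ itself is precisely the shadow of this distinction, and the one place where a sign or a bar can be lost is in the identity $\Re\btau_{km}-\overline{\btau}_{km}=\i\,\Im\btau_{km}$ together with the correct use of the symmetry of $\btau$ when matching the column index of $\Pi_\cE^{\cE_2}$ to the index placement in \eqref{uk}.
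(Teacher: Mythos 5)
Your proof is correct and is essentially the paper's computation run in the opposite direction: the paper takes the basis $u_k^\cE(\tau)$ of $L=\lambda(J)$ (with $\tau\eqdef\tau^\cE(L)$, whose existence comes from the preceding proposition) and imposes the membership criterion $\Re u_k = J(\Im u_k)$ of $L_J$, then compares with \eqref{PiDef}, whereas you start from $\btau$, build the candidate vectors $u_k^\cE(-\overline{\btau})$, verify the eigenvalue equation $J^\C u_k=\i\, u_k$, and then conclude via a dimension count and the uniqueness of the modular matrix. The underlying algebraic identity (including the use of the symmetry of $\btau$ to fix the index placement in \eqref{PiDef}) is identical in both arguments; the only substantive difference is that your direction requires the extra linear-independence and uniqueness step, which the paper's direction sidesteps.
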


\begin{proof}
Fix $J\in \fJ_+(V,\omega)$ and $L\eqdef \lambda(J)$. Let $\tau\eqdef
\tau^\cE(L)=(\tau^\cE\circ \lambda)(J)$.  Let $u_k\eqdef
u_k^\cE(\tau)$. Since $u_k$ belong to $L$, we have $J(\Re u_k)=\Re(\i
u_k)$ which gives:
\be
\Re u_k=-J(\Re (\i u_k))~~.
\ee
Substituting $u_k=e_k+ \tau_{kl}f_l$ in both terms of this relation gives:
\be
e_k+(\Re \tau)_{kl} f_l= (\Im \tau)_{kl}J(f_l)\Longrightarrow e_k=-(\Re \tau)_{kl} f_l+(\Im \tau)_{kl} J(f_l)~~,
\ee
which shows that $\tau=-\overline{\btau^\cE(J)}$. 
\end{proof}

\noindent Combining everything, we have a commutative diagram of bijections:
\ben
\label{diag:fJ}
\scalebox{1.2}{
\xymatrix{
\fJ_+(V,\omega) \ar[d]_{\lambda} \ar[dr]^{~~-\overline{\btau^\cE}} \\
\cL_+(V_\C,\omega_\C)  \ar[r]^{~~~~\tau^\cE} & \SH_n\\
}}
\een
where all maps are $\Sp(V,\omega)$-equivariant for the actions on their
argument. Moreover, $\tau^\cE$ and $\btau^\cE$ are separately
$\Sp(V,\omega)$-equivariant for the action on $\cE$.

\begin{Remark}
Let $\theta\eqdef \Re \tau$ and $\gamma\eqdef \Im \tau$. Then
$\tau=\theta+\i \gamma$ and $\btau\eqdef -\bar{\tau}=-\theta+\i
\gamma$. Using \eqref{hatJ}, we have:
\ben
\label{hatJtg}
{\hat J}(\btau)=\left[\begin{array}{cc}  -\gamma^{-1}\theta~~ &~~\gamma^{-1}\\-\gamma-\theta\gamma^{-1}\theta
    &~~+\theta\gamma^{-1}\end{array}\right]~~.
\een
\end{Remark}

\section{Tamings and positive polarizations of symplectic vector bundles}
\label{app:bundle_tamings}

\subsection{Tamings}

Let $(\cS,\omega)$ be a real symplectic vector bundle over a connected
manifold $N$, where $\rk \cS=2n$. Let $P$ denote the
principal $\Sp(2n,\R)$-bundle of symplectic frames of
$(\cS,\omega)$. Then $\cS$ is associated to $P$ through the
fundamental (=tautological) representation of $\Sp(2n,\R)$ in
$\R^{2n}$. Let $\bfJ_+(\cS,\omega)$ be the {\em bundle of tamings} of
$(\cS,\omega)$, which we define to be the fiber bundle whose fiber at
$p\in N$ is the space $\fJ_+(\cS_p,\omega_p)$ of tamings of the
symplectic vector space $(\cS_p,\omega_p)$. The discussion of Appendix
\ref{app:space_tamings} shows that $\bfJ_+(\cS,\omega)$ is a bundle of
homogeneous spaces (with fibers isomorphic to $\Sp(2n,\R)/\U(n)$)
which is associated to $P$ through the representation
$\Ad_0$. Finally, let $\SH(\cS,\omega)$ denote the {\em Siegel bundle}
of $(\cS,\omega)$, which we define as the fiber bundle with typical
fiber $\SH_n$ which is associated to $P$ through the modular
representation $\mu$ of $\Sp(2n,\R)$.

\begin{Definition}
A {\em taming} of $(\cS,\omega)$ is a smooth section of the fiber
bundle $\bfJ_+(\cS,\omega)$, i.e. an almost complex structure $J$ on
$\cS$ such that $J_p$ is a taming of $(\cS_p,\omega_p)$ for every $p\in
N$.
\end{Definition}

\

\noindent A {\em tamed symplectic vector bundle} is a triple
$(\cS,J,\omega)$, where $(\cS,\omega)$ is a symplectic vector bundle
for which $J$ is a taming. The pair $(J,\omega)$ defines a Hermitian
scalar product $h$ on $\cS$ through relation \eqref{hdef}, which in
turn determines $\omega$ uniquely once $J$ is given. Thus $(\cS, J,h)$
is a Hermitian vector bundle and this correspondence defines an
equivalence of categories between the category of tamed symplectic
vector bundles on $N$ and the category of Hermitian vector bundles on
$N$.

Giving a taming $J$ of $(\cS,\omega)$ amounts to giving a reduction of
structure group from $\Sp(2n,\R)$ to $\U(n)$. Since the homogeneous
space $\Sp(2n,\R)/\U(n)\simeq \SH_n$ is contractible, such a reduction
always exists. In fact, the space $\fJ_+(\cS,\omega)$ of tamings of
$(\cS,\omega)$ is non-empty and contractible \cite{DuffSalamon}.

\paragraph{Local description.} 
Suppose that we are given a local symplectic frame $\cE\in
\Gamma(U,P)$ of $(\cS,\omega)$ defined above an open subset $U\subset
N$. Then the restriction $P|_U$ is trivial and so is the restriction
$\SH(\cS,\omega)|_U\simeq U\times \SH_n$. The discussion of Subsection
\ref{app:Siegel} shows that specifying a taming $J\in
\Gamma(M,\bfJ_+|_U)$ of $(\cS|_U,\omega|_U)$ amounts to specifying a
smooth section $\tau\circ (\cE\times_M J)$ of $\SH(\cS,\omega)|_U$,
which can be identified with a smooth map $\tau^\cE(J)\in
\cC^\infty(U,\SH_n)$. Let $(\cU_\alpha)_{\alpha\in I}$ be an open
cover of $N$ which supports a symplectic trivialization atlas of
$(S,\omega)$, whose transition functions $U_\alpha$ to $U_\beta$ we
denote by $g_{\alpha\beta}:U_\alpha\cap U_\beta\rightarrow
\Sp(2n,\R)$. Then giving a global taming of $(\cS,\omega)$ amounts to
specifying maps $\tau_\alpha\in \cC^\infty(U_\alpha,\SH_n)$ such that
$\tau_\beta=g_{\alpha\beta}\bullet \tau_\alpha$.

\subsection{Positive polarizations}

Let $\cS_\C$ denote the complexification of $\cS$ and $\omega_\C$
denote the complexification of $\omega$. Then $\cS_\C$ is associated
to $P$ through the complexification of the fundamental representation
of $\Sp(2n,\R)$. Let $\bcL_+(\cS_\C,\omega_\C)$ be the {\em bundle of
  positive polarizations} of $(\cS_\C,\omega_\C)$, which we define to
be the fiber bundle whose fiber at $p\in N$ is the set of
$\cL_+(\cS_{\C, p},\omega_{\C, p})$ of positive complex Lagrangian subspaces
of $(\cS_{\C, p},\omega_{\C, p})$; this is a bundle of homogeneous spaces
(with fibers isomorphic with $\Sp(2n,\R)/\U(n)$) which is associated
to $P$ through the representation $\rho$ defined in \eqref{rhodef}.
The equivariant diffeomorphism $\lambda$ of Subsection
\ref{app:pospol} globalizes to a natural isomorphism of fiber bundles
$\lambda:\bfJ_+(\cS,\omega)\stackrel{\sim}{\rightarrow}
\bcL_+(\cS,\omega)$, while the period maps globalize to natural bundle
maps $\btau:P\times_N \bfJ_+(\cS,\omega)\rightarrow \SH(\cS,\omega)$
and $\tau=-\bar{\btau}\circ
(\id\times_M\lambda):P\times_N\bcL_+(\cS,\omega)\rightarrow
\SH(\cS,\omega)$.

\begin{Definition}
A {\em positive polarization} of
$(\cS_\C,\omega_\C)$ is a smooth section of the fiber bundle
$\bcL_+(\cS_\C,\omega_\C)$, i.e a $\C$-linear sub-bundle $L\subset
\cS_\C$ whose fiber $L_p$ at each $p\in N$ is a positive complex
Lagrangian subspace of $(\cS_{\C, p},\omega_{\C, p})$.
\end{Definition}

\subsection{Correspondence between tamings and positive polarizations}

Let $\fJ_+(\cS,\omega)=\Gamma(N,\bfJ_+(\cS,\omega))$ and
$\cL_+(\cS,\omega)=\Gamma(N,\bcL_+(\cS,\omega))$ denote respectively
the sets of tamings and positive polarizations. The bundle isomorphism
$\lambda$ induces a bijection between these sets. This takes a taming
$J\in \fJ_+(\cS,\omega)$ of $(\cS,\omega)$ into the positive
polarization $L_J=\ker(J_\C-\i~\id_{\cS_\C})$ of
$(\cS_\C,\omega_\C)$. The inverse of $\lambda$ takes a positive
polarization $L$ of $(\cS_\C,\omega_\C)$ into the unique complex
structure $J_L$ of $\cS$ which makes the $\R$-linear isomorphism
$\Re|_L:L\stackrel{\sim}{\rightarrow}\cS$ into a complex-linear
map. Thus specifying a taming of $(\cS,\omega)$ amounts to specifying
a positive polarization of $(\cS_\C,\omega_\C)$. 

\subsection{Polarized two-forms valued in a tamed symplectic vector bundle}

\begin{Definition}
Let $J\in \fJ_+(\cS,\omega)$ be a taming of
$(\cS,\omega)$ and $L\eqdef \lambda(J)\in \cL_+(\cS_\C,\omega_\C)$ be the
corresponding positive polarization.  The {\em J-projection} of an
$\cS$-valued 2-form $\eta\in \Omega^2(N,\cS)$ is the unique $L$-valued
2-form $\omega_J\in \Omega^2(N,L)$ such that $\Re\eta_J=\eta$, namely:
\be
\eta_J=2 P_J^+(\eta)=\sqrt{2}\kappa_J(\eta)=\eta-\i J^\C\eta~~.
\ee
\end{Definition}

\noindent Assume that $N$ is four-dimensional and endowed with a
Lorentzian metric $g$. Then the Hodge operator
$\ast_g:\Omega^2(N)\rightarrow \Omega^2(N)$ defined by $g$ induces an
operator $\ast_g^\C\eqdef \ast_g\otimes
\id_{\cS_\C}:\Omega^2(N,\cS_\C)\rightarrow \Omega^2(N,\cS_\C)$, which
is the complexification of the operator $\ast_g\eqdef \ast_g\otimes
\id_{\cS}$.  The complexification of the operator $\star\eqdef
\ast_g\otimes J=\ast_g\circ J$ is given by $\star^\C=\ast_g\otimes
J^\C=\ast_g^\C\circ J^\C$.  The operator $\ast_g^\C$ squares to
minus the identity, so it has eigenvalues $\pm \i$. On the other hand,
$\star_g^\C$ squares to the identity, so it has eigenvalues $\pm 1$.
Let:
\beqa
\Omega^{2\pm}_g(N,L) &\eqdef&  \{\rho\in \Omega^2(N,L)|\star_g^\C \rho=\pm \rho\} =\{\rho\in \Omega^2(N,L)|\ast_g^\C \rho=\mp \i\rho\}\nn\\
\Omega^{2\pm}_{g,J}(N,\cS) &\eqdef& \{\eta\in \Omega^2(N,\cS)|\star_g \eta= \pm \eta\}=\{\eta\in \Omega^2(N,\cS)|\ast_g \eta=\mp J \eta\}
\eeqa
Since the isomorphism $\kappa_J:\cS\stackrel{\sim}{\rightarrow} L$
maps $J$ into multiplication by $\i$, we have:
\be
\kappa_J(\Omega^{2\pm}_{g,J}(N,\cS))=\Omega^{2\pm}_g(N,L)
\ee
In particular, $\kappa_J$ identifies the space
$\Omega^{2+}_{g,J}(N,\cS)$ of positively-polarized 2-forms valued in
$\cS$ with the space $\Omega^{2+}(N,L)$. Thus $\eta\in
\Omega^2(N,\cS)$ is positively polarized with respect to $g$ and $J$
iff $\eta_J$ belongs to $\Omega^{2+}(N,L)$.

\paragraph{Local form of the $J$-polarization condition.}

Let $\cE=(e_1\ldots e_n, f_1\ldots f_n)$ be a symplectic local frame
of $\cS$ defined on an open subset $U\subset N$. Given $\eta\in
\Omega^2(N,\cS)$, expand:
\ben
\label{cVexp}
\eta=_U \alpha_k \otimes e_k + \beta_k \otimes f_k
\een
with $\alpha_k,\beta_k\in \Omega^2(U)$. Set $\btau\eqdef
\btau^\cE(L)\in \cC^\infty(U,\SH_n)$ and define ${\hat \alpha}, {\hat
  \beta}\in \Mat(n,1, \Omega^2(U))$ and ${\hat \eta}\in \Mat(2n,1,
\Omega^2(U))$ through:
\be
{\hat \alpha}\eqdef
\left[\begin{array}{c}\alpha_1\\\ldots\\ \alpha_n\end{array}\right]~,~{\hat \beta}\eqdef
\left[\begin{array}{c}\beta_1\\\ldots\\ \beta_n\end{array}\right]~~,~~{\hat \eta}\eqdef 
\left[\begin{array}{c}{\hat \alpha} \\ {\hat \beta} \end{array}\right]
\ee
The $J$-projecton of $\eta$ expands as:
\ben
\label{etaexp}
\eta_J=_U \alpha_k^+ \otimes e_k + \beta_k^+\otimes f_k
\een 
with $\alpha_k^+,\beta_k^+\in \Omega^2_\C(U)$, where $\Re
\alpha_k^+=\alpha_k$ and $\Re \beta_k^+=\beta_k$. Define ${\hat
  \alpha}^+, {\hat \beta}^+\in \Mat(n,1, \Omega^2_\C(U))$ and ${\hat
  \eta}_J\in \Mat(2n,1, \Omega^2_\C(U))$ through:
\be
{\hat \alpha}^+\eqdef
\left[\begin{array}{c}\alpha_1^+\\\ldots\\ \alpha_n^+\end{array}\right]~,~{\hat \beta}^+\eqdef
\left[\begin{array}{c}\beta_1^+\\\ldots\\ \beta_n^+\end{array}\right]~~,~~{\hat \eta}_J\eqdef 
\left[\begin{array}{c}{\hat \alpha}^+ \\ {\hat \beta}^+ \end{array}\right]
\ee
Finally, write:
\be
\tau=\theta+i\gamma~~,
\ee
where $\theta,\gamma\in \cC^\infty(U,\Mat_s(n,\R))$ and $\gamma$ is strictly positive-definite. 

\begin{Proposition}
\label{prop:polrels}
The following relation holds: 
\ben
\label{abmatrix}
{\hat \beta}^+=\tau {\hat \alpha}^+~~.
\een
Moreover, the following statements are equivalent: 
\begin{enumerate}[1.]
\itemsep 0.0em
\item The 2-form $\eta|_U\in \Omega^2(U,\cS)$ is positively polarized with respect to $g$ and $J$.
\item The following relation holds: 
\ben
\label{etapol}
{\hat \beta}+\i \ast_g {\hat \beta}=\tau ({\hat \alpha}+\i \ast_g {\hat \alpha})~~.
\een
\item The following relation holds: 
\ben
\label{etapol2}
{\hat \beta}=\theta {\hat \alpha} - \gamma \ast_g {\hat \alpha}~~.
\een
\item The following relation holds: 
\ben
\label{etapol3}
\ast_g {\hat \beta}=\gamma {\hat \alpha}+\theta \ast_g {\hat \alpha}~~.
\een
\item The following relation holds: 
\ben
\label{etapol3}
\ast_g {\hat \eta} =\left[\begin{array}{cc} \gamma^{-1}\theta~~&-\gamma^{-1}\\\gamma+\theta\gamma^{-1}\theta 
    &-\theta \gamma^{-1}\end{array}\right]{\hat \eta}~~.
\een

\end{enumerate}
\end{Proposition}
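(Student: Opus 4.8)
The plan is to reduce every assertion to a single matrix identity in the chosen symplectic frame $\cE$ and then read off the equivalent reformulations. I would first establish relation \eqref{abmatrix}, which does not involve the metric $g$. Since $\eta_J$ is by construction a section of the positive polarization $L=\lambda(J)$, and since $L=L^\cE(\tau)$ is spanned over $\C$ by the vectors $u_k^\cE(\tau)=e_k+\sum_l\tau_{kl}f_l$ of \eqref{uk}, I can write $\eta_J=\sum_k\zeta_k\otimes u_k^\cE(\tau)$ for unique $\zeta_k\in\Omega^2_\C(U)$. Here $\tau=\tau^\cE(L)=\theta+\i\gamma$ is the modular matrix of $L$, related to the modular matrix $\btau=-\theta+\i\gamma$ of the taming $J$ by $\tau=-\overline{\btau}$ through \eqref{modmatrix}. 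Expanding $u_k^\cE(\tau)$ and comparing with the expansion \eqref{etaexp} of $\eta_J$ gives $\alpha_k^+=\zeta_k$ and $\beta_l^+=\sum_k\tau_{kl}\zeta_k$; using the symmetry $\tau=\tau^T$ this becomes $\hat\beta^+=\tau\hat\alpha^+$, which is \eqref{abmatrix}.

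Next I would prove the equivalence of statement $1$ with statement $5$ by a direct frame computation. Expressing the taming in the frame $\cE$ through its matrix $\hat{J}$ of \eqref{hatJtg} (obtained from \eqref{hatJ} with $\btau=-\theta+\i\gamma$), and recalling the convention \eqref{Jmatrix} for how $\hat{J}$ acts, the $\cS$-valued $2$-form $J\eta$ carries the coefficient column $\hat{J}\hat\eta$. Hence the defining polarization condition $\ast_g\eta=-J\eta$ of \eqref{2formpol} becomes, after collecting coefficients, the matrix identity $\ast_g\hat\eta=-\hat{J}\hat\eta$. Since $-\hat{J}$ is exactly the coefficient matrix displayed in the last relation \eqref{etapol3}, this yields $1\Leftrightarrow 5$ immediately. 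The only care needed here is to use the action convention of \eqref{Jmatrix} so that $J\eta$ carries the column $\hat{J}\hat\eta$ and not its transpose.

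The remaining equivalences are linear algebra combined with the single fact that $\ast_g^2=-\id$ on $\Omega^2(U)$ in Lorentzian signature. Splitting the matrix identity of statement $5$ into its two $n$-dimensional blocks, the first block is $\ast_g\hat\alpha=\gamma^{-1}\theta\hat\alpha-\gamma^{-1}\hat\beta$, which rearranges to $\hat\beta=\theta\hat\alpha-\gamma\ast_g\hat\alpha$, i.e. statement $3$ (relation \eqref{etapol2}); applying $\ast_g$ and using $\ast_g^2=-\id$ converts this into $\ast_g\hat\beta=\gamma\hat\alpha+\theta\ast_g\hat\alpha$, i.e. statement $4$ (the first relation \eqref{etapol3}), and the step is reversible, so $3\Leftrightarrow 4$. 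I would then verify that the second block of statement $5$ holds automatically once \eqref{etapol2} does, by substituting $\hat\beta=\theta\hat\alpha-\gamma\ast_g\hat\alpha$ into both of its sides and simplifying with $\ast_g^2=-\id$; this gives $5\Leftrightarrow 3$. Finally, assembling the real relation \eqref{etapol2} and the imaginary relation \eqref{etapol3} into one complex equation and factoring out $\tau=\theta+\i\gamma$ produces $\hat\beta+\i\ast_g\hat\beta=\tau(\hat\alpha+\i\ast_g\hat\alpha)$, which is statement $2$ (relation \eqref{etapol}); thus statement $2$ is equivalent to statements $3$ and $4$ together, closing the cycle.

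I expect the only genuine obstacle to be bookkeeping with the two modular matrices. The point of care is to keep straight the distinction between the modular matrix $\btau=-\theta+\i\gamma$ of the taming $J$, which enters $\hat{J}$ through \eqref{hatJ} and \eqref{hatJtg}, and the modular matrix $\tau=\tau^\cE(L)=\theta+\i\gamma$ of the associated positive polarization $L=\lambda(J)$, which enters the spanning vectors \eqref{uk} and hence \eqref{abmatrix}. The sign relation $\tau=-\overline{\btau}$ from \eqref{modmatrix} must be applied consistently so that the physical data $\theta=\Re\tau$ and $\gamma=\Im\tau$ appear with correct signs in all five relations; once this is fixed, the rest is a short and mechanical computation.
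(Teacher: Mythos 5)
Your proof is correct, and it organizes the argument differently from the paper. The paper's main route runs $1\Leftrightarrow 2\Leftrightarrow(3\wedge 4)\Rightarrow 5$: it links item 1 to item 2 by observing that positive polarization of $\eta$ is equivalent to the anti-self-duality condition $\bast_g^\C\,\eta_J=-\i\,\eta_J$ on the $J$-projection, which forces ${\hat\alpha}^+={\hat\alpha}+\i\ast_g{\hat\alpha}$ and ${\hat\beta}^+={\hat\beta}+\i\ast_g{\hat\beta}$; combining this with the unconditional relation \eqref{abmatrix} gives \eqref{etapol}, after which items 3, 4, 5 follow by separating real and imaginary parts and a $2\times 2$ block-matrix inversion. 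In your proof, \eqref{abmatrix} is established the same way (via the frame $u_k^\cE(\tau)$ of $L$ and symmetry of $\tau$), but it plays no role in the equivalences: instead you enter the chain at item 5 by writing the polarization condition as $\ast_g{\hat\eta}=-{\hat J}{\hat\eta}$ with ${\hat J}$ given by \eqref{hatJtg}, which is exactly the alternative the paper relegates to the closing remark of its proof, and you then close the cycle $5\Leftrightarrow 3\Leftrightarrow 4$ and $2\Leftrightarrow(3\wedge 4)$ by elementary real linear algebra (I checked your block computations; the second block of item 5 is indeed automatic given item 3). The trade-off: the paper's route stays inside the complexified polarization formalism and never needs the explicit matrix of the taming, which is also the form in which the proposition is later applied (e.g.\ ${\hat G}^+=\tau^\varphi{\hat F}^+$ in Appendix \ref{app:localeom}); your route is more mechanical but requires the explicit matrix \eqref{hatJtg} together with the convention \eqref{Jmatrix} and the sign relation $\btau=-\overline{\tau}$ from \eqref{modmatrix}, all of which you track correctly.
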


\begin{proof}
Recall that the local sections $u_k\eqdef e_k+\sum_{l=1}^n
\tau_{kl}f_l\in \Gamma(U,\cS_\C)$ (where $k=1\ldots n$) form a frame
of $L|_U$. The fact that $\eta_J$ belongs to $\Omega^2(M,L)$ amounts
to the condition $\eta_J=\beta_k^+\otimes u_k$, which in turn is
equivalent with the relations:
\ben
\label{alphaplus}
\beta_k^+=\sum_{l=1}^n \tau_{kl} \alpha_l^+~~,
\een
which are equivalent with \eqref{abmatrix}. The positive polarization
condition $\ast_g^\C \eta_J=-\i \eta_J$ amounts to the relations
$\ast_g^\C \alpha^+_k=-\i \alpha^+_k$ and $\ast_g^\C \beta^+_k=-\i
\beta^+_k$, which say that $\alpha_k^+$ and $\beta_k^+$ have the form:
\be
\alpha_k^+=\alpha_k+\i\ast_g \alpha_k~~,~~\beta_k^+=\beta_k+\i\ast_g \beta_k~~,
\ee
i.e.:
\ben
\label{abplus}
{\hat \alpha}^+={\hat \alpha}+\i \ast_g {\hat \alpha}~~,~~{\hat \beta}^+={\hat \beta}+\i \ast_g {\hat \beta}~~.
\een
Combining \eqref{abmatrix} and \eqref{abplus} shows that the positive
polarization condition for $\eta$ amounts to \eqref{etapol}.
Separating real and imaginary parts, relation \eqref{etapol} amounts
to conditions \eqref{etapol2} and \eqref{etapol3}, which are mutually
equivalent since $\ast_g$ squares to minus the identity. These
conditions amount to the matrix relations:
\be
{\hat \eta}=\left[\begin{array}{cc} 1 & ~~0 \\ \theta
    &~-\gamma\end{array}\right]\left[\begin{array}{c} {\hat \alpha}
    \\ \ast_g {\hat \alpha}\end{array}\right]~~,~~ \ast_g {\hat
  \eta}=\left[\begin{array}{cc} 0 & 1 \\ \gamma
    &\theta \end{array}\right]\left[\begin{array}{c} {\hat
      \alpha}\\ \ast_g {\hat \alpha}\end{array}\right]~~.
\ee 
which in turn give \eqref{etapol3} upon using the relation: 
\be
\left[\begin{array}{cc} 1 &~~0\\ \theta &- \gamma \end{array}\right]^{-1}=
\left[\begin{array}{cc}  1 & ~~0 \\ \gamma^{-1}\theta & ~-\gamma^{-1} \end{array}\right]~~.
\ee
Notice that \eqref{etapol3} also follows directly from the
polarization condition $\ast_g {\hat \eta}=-{\hat J}{\hat \eta}$ using
\eqref{hatJtg} (recall that $\btau=-{\overline \tau}$, which gives
$\btau_R=-\theta$ and $\btau_I=\gamma$).
\end{proof}

\subsection{Interpretation of the fundamental form in the complexified formalism}

Let $(\cS^{\C}, D^{\C}, J^{\C}, \omega^{\C})$ denote the
complexification of $(\cS, D, J, \omega)$, and let $L^{\pm}_{J}\subset
\cS^{\C}$ denote the complex Lagrangian subbundles of
$(\cS^{\C},\omega^{\C})$ defined by:
\begin{equation}
L^{\pm}_{J} = \ker(J^{\C} \mp \i \,\id_{\cS^{\C}}) = \ker P^{\mp}_{J}\subset \cS^{\C}\, ,
\end{equation}
where $P^{\pm}_{J}\colon \cS^{\C}\to L^{\pm}_{J}$ are the corresponding projectors. We have:
\begin{equation}
\cS^{\C} = L^{+}_{J} \oplus L^{-}_{J}\, .
\end{equation}
Direct calculation gives\footnote{For simplicity in the notation
  we will denote $\id_{\Lambda^{k}T^{\ast}N}\otimes P^{\pm}_{J}$
  simply by $P^{\pm}_{J}$.}:
\begin{equation}
\label{eq:DCPCDC}
D^{\C}\circ P^{\pm}_{J} = P^{\pm}_{J}\circ D^{\C} \mp \frac{\i}{2} \Theta\, .
\end{equation}
Consider the induced connections:
\begin{equation}
D^{\C}_{\pm} \eqdef P^{\pm}_{J}\circ D^{\C}|_{L^{\pm}_{J}} \colon \Omega^{0}(L^{\pm}_{J})\to \Omega^{1}(L^{\pm}_{J})\, .
\end{equation}

\begin{Lemma}
\label{lemma:DCTheta}
The following equation holds:
\begin{equation}
\label{eq:DCDCproyection}
D^{\C}(\eta) = D^{\C}_{\pm}(\eta) \mp \frac{i}{2} \Theta(\eta)\, ,
\end{equation}
for every $\eta\in \Gamma(L^{\pm}_{J})$. In addition, $D^{\C}$
preserves $L^{\pm}_{J}$ if and only if $\Theta \circ P^{\pm}_{J} = 0$.
In particular, $D^{\C}$ preserves both $L^{+}_{J}$ and $L^{-}_{J}$ iff
$\Theta = 0$.
\end{Lemma}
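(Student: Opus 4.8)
The plan is to derive all three assertions from the commutation relation \eqref{eq:DCPCDC} together with the $J$-antilinearity of the fundamental form recorded in \eqref{p1}. First I would establish \eqref{eq:DCDCproyection}. Take $\eta\in \Gamma(L^{\pm}_{J})$, so that $P^{\pm}_{J}(\eta)=\eta$, and evaluate the operator identity \eqref{eq:DCPCDC} on $\eta$:
\begin{equation*}
D^{\C}(\eta)=D^{\C}(P^{\pm}_{J}\eta)=P^{\pm}_{J}(D^{\C}\eta)\mp \frac{\i}{2}\Theta(\eta)~~.
\end{equation*}
Since $D^{\C}_{\pm}(\eta)=P^{\pm}_{J}(D^{\C}\eta)$ by the definition of the induced connection, this is exactly \eqref{eq:DCDCproyection}.

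The key structural input for the remaining two claims is that $\Theta$ \emph{interchanges} the eigenbundles $L^{\pm}_{J}$. I would note that by \eqref{p1} each $\Theta(X)$ anticommutes with $J$, hence its complexification anticommutes with $J^{\C}$; therefore, if $\xi\in L^{\pm}_{J}$ (so $J^{\C}\xi=\pm\i\,\xi$), then $J^{\C}(\Theta(X)\xi)=-\Theta(X)(\pm\i\,\xi)=\mp\i\,\Theta(X)\xi$, which shows $\Theta(X)\xi\in L^{\mp}_{J}$. Thus $\Theta$ maps $\Gamma(L^{\pm}_{J})$ into $\Omega^{1}(L^{\mp}_{J})$.

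For the second assertion I would combine \eqref{eq:DCDCproyection} with this swapping property and the directness of the decomposition $\Omega^{1}(\cS^{\C})=\Omega^{1}(L^{+}_{J})\oplus \Omega^{1}(L^{-}_{J})$ induced by $\cS^{\C}=L^{+}_{J}\oplus L^{-}_{J}$. For $\eta\in \Gamma(L^{\pm}_{J})$ the term $D^{\C}_{\pm}(\eta)$ lies in $\Omega^{1}(L^{\pm}_{J})$ while $\Theta(\eta)$ lies in $\Omega^{1}(L^{\mp}_{J})$; hence $D^{\C}(\eta)\in \Omega^{1}(L^{\pm}_{J})$ precisely when $\Theta(\eta)=0$. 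Letting $\eta$ range over $\Gamma(L^{\pm}_{J})$, this says $\Theta|_{L^{\pm}_{J}}=0$, i.e. $\Theta\circ P^{\pm}_{J}=0$, which is the stated criterion. For the last assertion I would observe that $D^{\C}$ preserves both subbundles iff $\Theta\circ P^{+}_{J}=\Theta\circ P^{-}_{J}=0$; adding these and using $P^{+}_{J}+P^{-}_{J}=\id_{\cS^{\C}}$ gives $\Theta=0$, while the converse is immediate from \eqref{eq:DCPCDC} or from the two cases already proved.

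The main obstacle is the eigenbundle-swapping step: everything hinges on the antilinearity \eqref{p1} forcing the correction term $\mp\frac{\i}{2}\Theta(\eta)$ to land in the \emph{complementary} Lagrangian subbundle, since only then does the direct-sum decomposition allow one to read off invariance of $L^{\pm}_{J}$ as the vanishing of $\Theta$ there. The remaining manipulations are routine bookkeeping with the complementary projectors $P^{\pm}_{J}$.
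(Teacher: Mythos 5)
Your proof is correct, and it diverges from the paper's argument in an instructive way for the equivalence. The first claim is handled identically in both: evaluate the operator identity \eqref{eq:DCPCDC} on a section $\eta$ satisfying $P^{\pm}_{J}\eta=\eta$ and use the definition of $D^{\C}_{\pm}$. For the ``in addition'' part, the paper proves the forward implication by a direct computation: it expands $\Theta\circ P^{\pm}_{J} = D^{\C}\circ J\circ P^{\pm}_{J} - J\circ D^{\C}\circ P^{\pm}_{J}$ and observes that $J$ acts as $\pm\i$ both on $L^{\pm}_{J}$ and, by the preservation hypothesis, on the image of $D^{\C}\circ P^{\pm}_{J}$, so the two terms cancel; the converse then follows from \eqref{eq:DCDCproyection}. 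You instead import the antilinearity property \eqref{p1} to show that $\Theta(X)$ interchanges the eigenbundles $L^{+}_{J}$ and $L^{-}_{J}$, after which both directions of the equivalence follow at once from the directness of the splitting $\cS^{\C}=L^{+}_{J}\oplus L^{-}_{J}$: the correction term $\mp\frac{\i}{2}\Theta(\eta)$ in \eqref{eq:DCDCproyection} is exactly the component of $D^{\C}(\eta)$ in the complementary subbundle, so preservation is equivalent to its vanishing. Your route makes the mechanism more transparent and gives the final assertion cleanly from $P^{+}_{J}+P^{-}_{J}=\id_{\cS^{\C}}$, at the modest cost of relying on \eqref{p1}; the paper's computation is self-contained within the complexified formalism and never needs the eigenbundle-swapping observation.
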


\begin{proof}
Equation \eqref{eq:DCDCproyection} follows directly from equation
\eqref{eq:DCPCDC}. If $D^{\C}$ preserves $L^{\pm}_{J}$ then we have:
\begin{equation}
\Theta\circ P^{\pm}_{J} = D^{\C}\circ J\circ P^{\pm}_{J} -J\circ D^{\C}\circ P^{\pm}_{J} = 0\, ,
\end{equation} 
since $J\circ D^{\C}\circ P^{\pm}_{J} = \pm i\, D^{\C}\circ
P^{\pm}_{J}\colon \Omega^{0}(L^{\pm}_{J})\to \Omega^{1}(L^{\pm}_{J})$
and $ D^{\C}\circ J\circ P^{\pm}_{J} = \pm i D^{\C}\circ
P^{\pm}_{J}\colon \Omega^{0}(L^{\pm}_{J})\to
\Omega^{1}(L^{\pm}_{J})$. On the other hand, if $\Theta \circ
P^{\pm}_{J} = 0$ then $D^{\C}\circ P^{\pm}_{J} = D^{\C}_{\pm}$ and we
conclude.
\end{proof}

\begin{Remark}

\end{Remark}

\noindent
Although $D^{\C}$ is a flat connection on $\cS^{\C}$, $D^{\C}_{\pm}$
need not be a flat connection on $L^{\pm}_{J}$.
\begin{Proposition}
Let $\mathcal{R}_{D^{\C}_{\pm}}\colon \Omega^{0}(L^{\pm}_{J})\to
\Omega^{2}(L^{\pm}_{J})$ be the curvature endomorphism of
$D^{\C}_{\pm}$. The following formula holds:
\begin{equation}
\label{eq:curvatureTheta}
\mathcal{R}_{D^{\C}_{\pm}} = \pm \frac{\i}{2} P^{\pm}_{J}\circ \dd_{D^{\C}}\circ \Theta\, .
\end{equation}
In particular, if $D^{\C}$ preserves $L^{\pm}_{J}$ then:
\begin{equation}
\label{eq:RDpm}
\mathcal{R}_{D^{\C}_{\pm}} = 0\, ,
\end{equation}
and thus $(L^{\pm}_{J},  D^{\C}_{\pm})$ is a flat Lagrangian subbundle of $\cS^{\C}$.
\end{Proposition}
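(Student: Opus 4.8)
The plan is to compute the curvature of $D^\C_\pm$ straight from its definition $\mathcal{R}_{D^\C_\pm}=\dd_{D^\C_\pm}\circ D^\C_\pm$, using the flatness of the ambient connection $D^\C$ together with the pointwise decomposition of Lemma \ref{lemma:DCTheta}. The first step is to record the elementary fact that, on forms valued in the subbundle $L^\pm_J$, the exterior covariant derivative of the induced connection is just the $L^\pm_J$-component of the ambient one, i.e. $\dd_{D^\C_\pm}=P^\pm_J\circ\dd_{D^\C}$ restricted to $\Omega^\bullet(L^\pm_J)$. This follows by writing a local form as $\rho\otimes\eta$ with $\eta\in\Gamma(L^\pm_J)$, expanding both sides with the Leibniz rule, and using $P^\pm_J\eta=\eta$ together with the defining relation $P^\pm_J\circ D^\C=D^\C_\pm$ on sections of $L^\pm_J$.

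Next, for $\eta\in\Gamma(L^\pm_J)$ I would rewrite Lemma \ref{lemma:DCTheta} as $D^\C_\pm(\eta)=D^\C(\eta)\pm\frac{\i}{2}\Theta(\eta)$ and apply the identity $\dd_{D^\C_\pm}=P^\pm_J\circ\dd_{D^\C}$ to the $L^\pm_J$-valued one-form $D^\C_\pm\eta$. This yields
\[
\mathcal{R}_{D^\C_\pm}(\eta)=P^\pm_J\,\dd_{D^\C}(D^\C\eta)\pm\tfrac{\i}{2}\,P^\pm_J\,\dd_{D^\C}(\Theta(\eta))~~.
\]
The first term is $P^\pm_J$ applied to the ambient curvature $\dd_{D^\C}\circ D^\C=\mathcal{R}_{D^\C}$, which vanishes because $D^\C$ is flat, being the complexification of the flat connection $D$. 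Since $\eta$ was arbitrary, what survives is exactly formula \eqref{eq:curvatureTheta}, namely $\mathcal{R}_{D^\C_\pm}=\pm\frac{\i}{2}P^\pm_J\circ\dd_{D^\C}\circ\Theta$.

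For the concluding assertion I would invoke the second part of Lemma \ref{lemma:DCTheta}: if $D^\C$ preserves $L^\pm_J$ then $\Theta\circ P^\pm_J=0$, so $\Theta(\eta)=0$ for every $\eta\in\Gamma(L^\pm_J)$ and the right-hand side of \eqref{eq:curvatureTheta} is identically zero, giving \eqref{eq:RDpm}. Vanishing of the curvature together with the fact that $L^\pm_J$ is by construction a complex Lagrangian subbundle of $(\cS^\C,\omega^\C)$ then shows that $(L^\pm_J,D^\C_\pm)$ is a flat Lagrangian subbundle. I expect the only delicate point to be the bookkeeping in the first step — the clean identification of $\dd_{D^\C_\pm}$ with the projected ambient exterior derivative and the attendant sign in the Leibniz rule — since everything afterwards is a direct substitution of Lemma \ref{lemma:DCTheta} and cancellation of the ambient curvature. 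It is worth noting, for orientation, that $\Theta(\eta)$ actually lands in $\Omega^1(L^\mp_J)$ by the $J$-antilinearity \eqref{p1}, so the two summands above are genuinely the $L^\pm_J$- and $L^\mp_J$-components of $\dd_{D^\C}(D^\C_\pm\eta)$ before projection by $P^\pm_J$.
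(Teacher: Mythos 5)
Your proof is correct and takes essentially the same route as the paper's: both substitute the decomposition \eqref{eq:DCDCproyection} of Lemma \ref{lemma:DCTheta} into $P^{\pm}_{J}\circ \dd_{D^{\C}}$, use flatness of $D^{\C}$ to kill the ambient curvature term, and deduce \eqref{eq:RDpm} from $\Theta\circ P^{\pm}_{J}=0$. Your explicit justification of the identity $\dd_{D^{\C}_{\pm}}=P^{\pm}_{J}\circ\dd_{D^{\C}}$ on $L^{\pm}_{J}$-valued forms is a step the paper leaves implicit, but it is the same argument, not a different one.
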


\begin{proof}
Composing $P^{\pm}_{J}\circ \dd_{D^{\C}}$ with both sides of equation
\eqref{eq:DCDCproyection} and using the fact that $D^{\C}$ is a flat
connection gives \eqref{eq:curvatureTheta} follows. Equation
\eqref{eq:RDpm} follows from lemma \ref{lemma:DCTheta}.
\end{proof}

\section{Some twisted constructions}
\label{app:twistedconstructions}

\subsection{Twisted cohomology}

Let $(\cS,D)$ be a flat vector bundle of rank $r$ defined over a
manifold $N$ of dimension $d$, where $D:\Omega^0(N)\rightarrow
\Omega^1(N,\cS)$ is the flat connection. The exterior bundle $\wedge_N
\eqdef \wedge T^\ast N=\oplus_{k=0}^d \wedge^k T^\ast N$ is a bundle
of $\Z$-graded unital associative and supercommutative algebras when
endowed with the fiberwise multiplication given by the wedge
product. Let:
\be
\alpha \eqdef \oplus_{k=0}^d(-1)^k \id_{\wedge^k T^\ast N}
\ee
be the {\em main automorphism} of this bundle of algebras, namely the
unique unital automorphism which satisfies $\alpha|_{T^\ast
  N}=-\id_{T^\ast N}$. Let $\dd_{D}:\Omega(N,\cS)\rightarrow
\Omega(N,\cS)$ be the de Rham differential twisted by the flat
connection $D$, where $\Omega(N,\cS)=\Gamma(\wedge_N,\cS)$.  For any
$\rho\in \Omega(N)$ and $\xi\in \Gamma(N,\cS)$, we have:
\ben
\label{dDdef}
\dd_{D}(\rho\otimes \xi)=(\dd \rho)\otimes \xi+\alpha(\rho)\wedge (D \xi)~~.
\een

\begin{Definition}
For any $k=0\ldots d$, the $k$-th {\em $(\cS,D)$-twisted de Rham
  cohomology space of $N$} is the $\R$-vector space defined through:
\be
H^k_{\dd_D}(N,\cS)\eqdef \ker(\dd_D:\Omega^k(N,\cS)\rightarrow \Omega^{k+1}(N,\cS))/\im (\dd_D:\Omega^{k-1}(N,\cS)\rightarrow \Omega^k(N,\cS))~~.
\ee
The {\em total twisted de Rham cohomology space of $N$} is the
$\Z$-graded vector space defined through $H_{\dd_D}(N,\cS)\eqdef
\oplus_{k=0}^d H_{\dd_D}(N,\cS)$.
\end{Definition}

\noindent Let $\fS$ be the sheaf of smooth flat sections of
$(\cS,D)$, namely the sheaf of vector spaces defined through:
\be
\fS(U)\eqdef \{s\in \Gamma(U,\cS)|Ds=0\}~~
\ee
for any open subset $U\subset N$, with the obvious restriction
maps. This is a locally-constant sheaf of rank $2n$, whose stalk is
isomorphic with the typical fiber $\cS_0$ of $\cS$. Let $\Pi_1(N)$ be
the first homotopy groupoid of $N$.  The {\em local system defined by
  $(\cS,D)$} is the functor $T:\Pi_1(N)\rightarrow \Gp^\times$ which
sends a point $n\in N$ to the group $T(x)\eqdef
\Aut_\R(\cS_x,\omega_x)$ and the homotopy class (with fixed endpoints)
of a piecewise-smooth path $c$ from $x_1$ to $x_2$ in $N$ into the
parallel transport $T(c):\cS_{x_1}\rightarrow \cS_{x_2}$ defined by
$D$ along $c$. For any $x\in N$, we have
$\pi_1(N,x)=\Hom_{\Pi_1(N)}(x,x)$ and the holonomy representation
$\rho_x=\hol_D(x)$ of $D$ at $x$ is given by
$\rho_x=T|_{\pi_1(N,x)}:\pi_1(N,x)\rightarrow \Aut_\R(\cS_x)$. There
exist natural isomorphisms of graded vector spaces:
\be
H^k_{\dd_D}(N,\cS)\simeq H^k(\fS)\simeq H^k(N,T)~~\forall k=0\ldots d~~,
\ee
where $H^k(\fS)$ denote sheaf cohomology of $\fS$ and $H^k(N,T)$
denotes singular cohomology with coefficients in the local system $T$
(see \cite{Whitehead}). Using these isomorphisms, we identify the
three vector spaces above and denote them by $H^k(N,\cS)$, which we
call the $k$-th {\em twisted cohomology space defined by $(\cS,D)$.}
On the other hand, the de Rham isomorphism identifies the de Rham
cohomology $H_\dd(N)$ with the singular cohomology $H(N,\R)$, which we
denote by $H(N)$.

Let ${\tilde N}$ be the universal covering space of $N$ and
$\pi:{\tilde N}\rightarrow N$ be the projection. The complex
$(\Omega({\tilde N})\otimes \cS_x,\dd\otimes \id_{\cS_x})$ is
equivariant under the action of $\pi_1(N,x)$ given by tensoring the
pullback through deck transformations with $\rho_x$. Then $H^k(N,\cS)$
isomorphic with the $k$-th cohomology space of the invariant
sub-complex $(\Omega({\tilde N})\otimes \cS_x)^{\pi_1(N,x)}$.

\begin{Remark}
The exists a similar statement for singular cohomology valued in a
local system $T:\Pi_1(N)\rightarrow \Ab$, giving an isomorphism:
\be
H^k(N,T)\simeq E^k({\tilde N},T_x)~~,
\ee
where $E^k({\tilde N},T_x)$ denotes the $\pi_1(N,x)$-equivariant
singular cohomology of ${\tilde N}$ valued in the group $T_x\eqdef
T(x)$. Here $\pi_1(N,x)$ acts on $T_x$ through the holonomy
representation $\rho_x\eqdef T|_{\pi_1(N,x)}:\pi_1(N,x)\rightarrow
\Aut_\Gp(T_x)$ of $T$ at $x$, which is defined through
$\rho_x(c)=T(c)$ for all $c\in \pi_1(N,x)$. See \cite[Chap VI, page
  281]{Whitehead}.
\end{Remark}

\subsection{Twisted wedge product and cup product}
\label{app:twistedwedge}

Let $(\cS,\omega)$ be a symplectic vector bundle defined over a
manifold $N$. Let $\wedge_N(\cS)\eqdef \wedge_N \otimes \cS=\wedge
T^\ast N\otimes \cS$.

\begin{Definition} 
The {\em $(\cS,\omega)$-twisted wedge product} is the
$\wedge_N$-valued fiberwise bilinear pairing
$\bwedge_\omega:\wedge_N(\cS)\times_N\wedge_N(\cS)\rightarrow
\wedge_N$ which is determined uniquely by the condition (we write
$\bwedge_\omega$ in infix notation):
\ben
\label{bwedge}
(\rho_1\otimes \xi_1) \bwedge_{\omega} (\rho_2\otimes \xi_2)=\alpha(\rho_2)\omega(\xi_1,\xi_2) \rho_1\wedge \rho_2
\een
for all $\rho_1,\rho_2\in \Gamma(N,\wedge_N)=\Omega(N)$ and all $\xi_1,\xi_2\in \Gamma(N,\cS)$.
\end{Definition}

\noindent The sign prefactor in the right hand side of \eqref{bwedge}
corresponds to viewing $\cS$ as the graded vector bundle $\cS[-1]$
concentrated in degree one. Accordingly, the bundle
$\wedge_N(\cS[-1])\eqdef \wedge_N \otimes \cS[-1]$ is graded
with homogeneous components:
\be
\wedge_N^k (\cS[-1])=\wedge^{k-1} T^\ast N\otimes \cS~~
\ee
and the map $\bwedge_\omega:\wedge_N(\cS[-1])\otimes
(\wedge_N\cS[-1])\rightarrow \wedge T^\ast N$ is homogeneous of degree
$-2$. The space of $\cS$-valued differential forms
$\Omega(N,\cS)=\Gamma(N,\wedge_N(\cS))$ admits the rank grading, with
homogeneous components:
\be
\Omega^k(N,\cS)=\Gamma(N,\wedge_N^k(\cS))\, ,
\ee 
as well as the grading induced from $\wedge_N(\cS[-1])$:
\be
\Omega^k(N,\cS[-1])\eqdef \Gamma(N,\wedge_N^k(\cS[-1]))=\Omega^{k-1}(N,\cS)~~.
\ee
We have $\Omega(N,\cS[-1])=\Omega(N,\cS)[-1]$. Let
$\deg\eta=\rk\eta+1$ denote the degree of a pure rank $\cS$-valued
form with respect to the second grading. The pairing $\bwedge_\omega$
is graded-commutative with respect to the grading on
$\Omega(N,\cS)[-1]$, i.e. the following relation holds for all pure rank
bundle-valued forms $\eta_1,\eta_2\in \Omega(N,\cS)$:
\be
\eta_1\bwedge_\omega \eta_2=(-1)^{\deg\eta_1\deg \eta_2}\eta_2\bwedge_\omega\eta_1~~.
\ee
Let:
\be
\balpha\eqdef \alpha\otimes (-\id_\cS)=\oplus_{k=0}^d (-1)^{k+1} \id_{\wedge^k_N(\cS)}\in \End(N,\wedge_N(\cS))\, ,
\ee
be the endomorphism of $\wedge_N(\cS)$ given by:
\be
\balpha (\eta)=(-1)^{\deg \eta} \eta 
\ee
for any pure rank bundle-valued form $\eta\in \Omega(N,\cS)$. 

Let $(\cS,D,\omega)$ be a flat symplectic vector bundle over
$N$. Using \eqref{dDdef}, \eqref{bwedge} and compatibility of $D$
with $\omega$, one finds that $\bwedge_\omega$ induces a morphism of
complexes from $\Omega(N,\cS)[-1]\otimes \Omega(N,\cS)[-1]$ to
$\Omega(N)$, i.e. the following identity holds for any
$\eta_1,\eta_2\in \Omega(N,\cS)$:
\ben
\label{dwedgegen}
\dd (\eta_1\bwedge_\omega\eta_2)=(\dd_{D}\eta_1)\bwedge_\omega \eta_2+\balpha(\eta_1)\bwedge_\omega (\dd_{D}\eta_2)~~.
\een 
Thus $\bwedge_\omega$ descents to an $\R$-bilinear map from the twisted 
cohomology $H(N,\cS)$ to the ordinary cohomology $H(N)$.

\begin{Definition} 
Let $(\cS,D,\omega)$ be a flat symplectic vector bundle over $N$.  The
{\em $(\cS,D,\omega)$-twisted cup product} is the $\R$-bilinear
graded-symmetric pairing:
\be
\boldsymbol{\cup}_\omega:H(N,\cS)[-1]\times H(N,\cS)[-1]\rightarrow H(N)
\ee
induced by $\bwedge_\omega$, which is homogeneous of degree $-2$.
\end{Definition}

\begin{Remark}
The twisted cup product can also be defined using local systems valued
in the groupoid $\Vect^\sp$ of symplectic finite-dimensional
$\R$-vector spaces and admits a version for local systems valued in
the groupoid $\Vect_0^\sp$ of integral finite-dimensional symplectic
vector spaces over $\R$ which was mentioned in Subsection
\ref{sec:semiclassical}. When $N$ is compact and oriented, a choice of
orientation gives an isomorphism $H^d(N)\simeq \R$ by pairing with the
fundamental class $[N]\in H(N)$. In that case, the twisted cup
product induces the {\em twisted Kronecker pairing}
$\cK:H(N,\cS)[-1]\times H(N,\cS)[-1] \rightarrow \R$, which is defined
through:
\be
\cK(\alpha,\beta)=(\alpha\boldsymbol{\cup}_\omega\beta)(N)~~.
\ee
This $\R$-bilinear pairing is graded-symmetric and homogeneous of
degree $-d-2$.  
\end{Remark}

\subsection{Twisted Hodge operator, twisted codifferential and twisted d'Alembert operator}

Let $(\cS,D,J,\omega)$ be an electromagnetic structure on $N$ and $g$
be a pseudo-Riemannian metric of signature $(p,q)$ on $N$. Let
$Q(\cdot,\cdot) = \omega(J\cdot,\cdot)$ denote the Euclidean scalar product induced by $J$ and
$\omega$ on $\cS$, which acts on sections as:
\be
Q(\xi_1,\xi_2)=\omega(J\xi_1,\xi_2)=-\omega(\xi_1,J\xi_2)~~\forall \xi_1, \xi_2\in \Gamma(N,\cS)~~.
\ee
Let $(~,~)_g$ be the pseudo-Euclidean scalar product induced by $g$
on the vector bundle $\wedge_N=\wedge T^\ast N$, with respect to which
the rank decomposition of $\wedge T^\ast N$ is orthogonal and the
normalized volume form $\nu$ of $(N,g)$ satisfies
$(\nu,\nu)_{g}=(-1)^q$.  Together with $Q$, this induces
a pseudo-Euclidean scalar product $(~,~)$ on the vector bundle
$\wedge_N(\cS)$, which is uniquely determined by the condition:
\be
(\rho_1\otimes \xi_1,\rho_2\otimes \xi_2)=\delta_{k_1,k_2}(-1)^{k_1}Q(\xi_1,\xi_2) (\rho_1, \rho_2)_{g}~~.
\ee 
for all $\rho_1\in \Omega^{k_1}(N)$, $\rho_2\in \Omega^{k_2}(N)$ and
$\xi_1,\xi_2\in \Gamma(N,\cS)$. The Hodge operator $\ast:=\ast_{g}$ of
$(N,g)$ induces the endomorphism $\ast\eqdef \ast \otimes \id_{\cS}$
of the vector bundle $\wedge_N(\cS)$. Recall that the $J$-twisted
Hodge operator $\star \in \End(N,\wedge_N(\cS))$ is defined through:
\be
\star\eqdef\ast\otimes J=\ast \circ J=J\circ \ast~~.
\ee
The relation $\rho_1\wedge (\ast \rho_2)=(\rho_1,\rho_2)_{g} \nu$
implies:
\ben
\label{starwedgegen}
\eta_1 \bwedge_\omega (\star \eta_2)= \eta_2\bwedge_\omega (\star \eta_1)=(\eta_1,\eta_2) \nu~~\forall \eta_1,\eta_2\in \Omega(N,\cS)~~,
\een
while the relation $\ast^2=(-1)^q\alpha^{d-1}$ implies:
\ben
\label{starsquaregen}
\star^2=(-1)^{p-1}\balpha^{d-1}~~.
\een 
Let $\langle~,~\rangle$ denote the $\R$-valued symmetric
non-degenerate pairing defined on $\Omega_c(N,\cS)$ through:
\be
\langle \eta_1,\eta_2\rangle\eqdef \int_N (\eta_1,\eta_2)\nu=\int_N \eta_1 \bwedge_\omega (\star \eta_2)~~.
\ee
Relations \eqref{starsquaregen} implies that the formal adjoint of
$\dd_{D}$ with respect to this pairing (which we shall call the {\em
  twisted codifferential}) is given by:
\be
\updelta_D = \star \dd_{D}\star~~.
\ee

\begin{Definition}
The {\em twisted d'Alembert operator} defined by $(\cS,D,\omega)$ is
the second order differential operator
$\Box_{D}:\Omega(N,\cS)\rightarrow \Omega(N,\cS)$ defined through:
\be
\Box_{D}\eqdef
\dd_{D}\updelta_{D}+\updelta_D \dd_{D}~~.
\ee
\end{Definition}

\

\noindent When $d=4$ and $g$ is Lorentzian, we have: 
\ben
\label{HodgeSquare}
\ast^2=-\alpha~~\mathrm{and}~~\star^2=\balpha~~.
\een
In this case, the twisted d'Alembert operator is normally hyperbolic. 

\section{Unbased automorphisms of vector bundles} 
\label{app:unbased}

Let $\cM$ be a manifold. In this appendix, we discuss certain
constructions involving unbased automorphisms of vector bundles $\cS$
defined over $\cM$, such as the twisted push-forward of $\cS$-valued
vector fields and the twisted pull-back of $\cS$-valued forms. We also
prove some properties of these constructions which are used in Section
\ref{sec:duality}.

\subsection{Unbased morphisms of vector  bundles over $\cM$}

Let $\cS,\cS'$ be vector bundles over $\cM$ with projections
$\sigma:\cS\rightarrow \cM$ and $\sigma':\cS'\rightarrow \cM$. Recall
that an unbased morphism of vector bundles from $\cS$ to $\cS'$ is a
smooth map $f:\cS\rightarrow \cS'$ such that there exists a smooth map
$f_0:\cM\rightarrow \cM$ for which the following diagram commutes:
\be
\scalebox{1.2}{
\xymatrix{
\cS \ar[d]_{\sigma} \ar[r]^{f} & ~\cS' \ar[d]^{\sigma'} \\
\cM  \ar[r]^{f_0} & \cM \\
}}
\ee
and such that the restriction $f_p\eqdef f|_{\cS_p}:\cS_p\rightarrow
\cS'_{f_0(p)}$ is linear for all $p\in \cM$. Since $\sigma$ is
surjective, the map $f_0$ is uniquely determined by $f$, being called
its {\em projection} to $\cM$, while $f$ is called a {\em lift} of
$f_0$.  We say that $f$ is {\em based} if $f_0=\id_\cM$. Vector
bundles over $\cM$ and unbased morphisms between such form a category
denoted $\VB^\ub(\cM)$, which contains the category $\VB(\cM)$ of
vector bundles and based morphisms as a non-full subcategory.  Let
$\mathrm{B}_\cM$ be the category with a single object $\cM$ and
morphisms given by smooth maps from $\cM$ to $\cM$. The functor
$\VB^\ub(\cM)\rightarrow \mathrm{B}_\cM$ which takes $\cS$ into $\cM$
and $f$ into $f_0$ has fiber at $\cM$ given by $\VB(\cM)$. Hence the
map $\pi_\cS:\Aut^\ub(\cS)\rightarrow \Diff(\cM)$ given by
$\pi_\cS(f)\eqdef f_0$ is a morphism of groups. Let
$\Hom^\ub(\cS,\cS')$ and $\Isom^\ub(\cS,\cS')$ denote the sets of
unbased morphisms and isomorphisms from $\cS$ to $\cS'$. When
$\cS'=\cS$, we set $\End^\ub(\cS)\eqdef\Hom^\ub(\cS,\cS)$ and
$\Aut^\ub(\cS)\eqdef \Isom^\ub(\cS,\cS)$. The functor to $B_\cM$ gives
a decomposition $\Hom^\ub(\cS,\cS')=\sqcup_{\psi\in
  \cC^\infty(\cM,\cM)}\Hom^\ub_\psi(\cS,\cS')$, where:
\be
\Hom^\ub_\psi(\cS,\cS')\eqdef \{f\in \Hom^\ub(\cS,\cS')|f_0=\psi\}~~,~~\forall \psi\in \cC^\infty(\cM,\cM)~~.
\ee
We have a similar decomposition $\Isom^\ub(\cS,\cS')=\sqcup_{\psi\in
  \Diff(\cM)}\Isom_\psi(\cS,\cS')$, with
$\Isom^\ub_{\id_\cM}(\cS,\cS')=\Isom(\cS,\cS')$.

\subsection{The action of unbased automorphisms on ordinary sections}
\label{subsec:ordsecaction}

Recall that a {\em (smooth) section of $\cS$ along a map} $\psi\in
\cC^\infty(\cM,\cM)$ is a smooth map $\zeta:\cM\rightarrow \cS$ such
that $\sigma\circ \zeta=\psi$, which amounts to the condition
$\zeta_p\in \cS_{\psi(p)}$ for all $p\in \cM$.  Let
$\Gamma_\psi(\cM,\cS)$ denote the $\cC^\infty(\cM,\R)$-module of
sections of $\cS$ along $\psi$, where outer multiplication with
$\alpha\in \cC^\infty(\cM,\R)$ is defined through:
\be
(\alpha\zeta)_p\eqdef (\alpha\circ \psi)(p)\zeta_p\in \cS_{\psi(p)}~~.
\ee
Ordinary sections of $\cS$ are sections along the identity map
$\id_\cM$, thus $\Gamma(\cM,\cS)=\Gamma_{\id_\cM}(\cM,\cS)$.
Composition with $\psi$ from the right gives a morphism of
$\cC^\infty(\cM,\R)$-modules:
\be
\Gamma(\cM,\cS)\ni \xi\rightarrow \xi_\psi\eqdef \xi\circ \psi\in \Gamma_\psi(\cM,\cS)~~,
\ee
where: 
\be
\xi_\psi(p)=\xi(\psi(p))\in \cS_{\psi(p)}~~,~~\forall p\in \cM~~.
\ee
When $\psi\in \Diff(\cM)$, this is an isomorphism of modules whose
inverse is given by:
\be
\Gamma_\psi(\cM,\cS)\ni \zeta \rightarrow ~^\psi\zeta\eqdef \zeta\circ \psi^{-1}\in \Gamma(\cM,\cS)~~,
\ee
i.e.:
\be
({}^\psi\zeta)(p)=\zeta_{\psi^{-1}(p)}\in \cS_p~~,~~\forall p\in \cM~~.
\ee
We have $~^\psi(\xi_\psi)=\xi$ and $({}^\psi \zeta)_\psi=\zeta$. 

An unbased morphism $f\in \Hom^\ub(\cS,\cS')$ induces a map
$\Gamma(\cM,\cS)\rightarrow \Gamma_{f_0}(M,\cS')$ (which we again
denote by $f$), defined through:
\be
f(\xi)\eqdef f\circ \xi~~,~~\forall \xi\in \Gamma(\cM,\cS)~~,
\ee
i.e.:
\be
f(\xi)_p \eqdef f_p(\xi_p)\in \cS'_{f_0(p)}~~,~~\forall \xi\in \Gamma(\cM,\cS)~~\forall p\in \cM~~.
\ee
This map takes ordinary sections of $\cS$ into sections of $\cS$ along
the map $f_0$. For any $\alpha\in \cC^\infty(\cM,\R)$, we have:
\be
f((\alpha\circ f_0)\xi)=\alpha f(\xi)~~.
\ee 
For many purposes, it is convenient to work with with another map
induced by $f$, which takes ordinary sections of $\cS$ into ordinary
sections.

\begin{Definition}
The {\em action of an unbased automorphism $f\in \Aut^\ub(\cS)$ on
  ordinary sections of $\cS$} is the map $\mf\in
\Aut_\R(\Gamma(\cM,\cS))$ defined through:
\ben
\label{mfdef}
\mf(\xi)\eqdef ~^{f_0}(f(\xi))=f\circ \xi\circ f_0^{-1}~~.
\een
\end{Definition}

\

\noindent For any $p\in \cM$, we have:
\ben
\label{mfp}
\mf(\xi)_p=f_{f_0^{-1}(p)}(\xi_{f_0^{-1}(p)})\in \cS_p~~,
\een
which also reads: 
\ben
\label{mfp2}
\mf(\xi)_{f_0(p)}=f_p(\xi_p)\in \cS_{f_0(p)}~~.
\een

\begin{Proposition}
The map $\Aut^\ub(\cS)\ni f\rightarrow \mf\in
\Aut_\R(\Gamma(\cM,\cS))$ is a morphism of groups. Moreover, for any
$f\in \Aut^\ub(\cS)$, we have:
\begin{enumerate}[1.]
\itemsep 0.0em
\item The following relation holds for any $\xi\in \Gamma(\cM,\cS)$
  and any $\alpha\in \cC^\infty(\cM,\R)$:
\ben
\label{mfmod}
\mf(\alpha\xi)=(\alpha\circ f_0^{-1}) \mf(\xi)~~.
\een
In particular, $\mf$ is {\em not} an automorphism of the
$\cC^\infty(\cM,\R)$-module $\Gamma(\cM,\R)$~.
\item For any $\xi\in \Gamma(\cM,\cS)$ and any $p\in \cM$, we have:
\ben
\label{mfInv}
\mf^{-1}(\xi)_p=(f^{-1})(\xi)_{f_0(p)}=(f_p)^{-1}(\xi_{f_0(p)})~~.
\een
\end{enumerate}
\end{Proposition}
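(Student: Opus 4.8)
The plan is to establish the Proposition by verifying each of its three claims directly from the definition \eqref{mfdef} of $\mf$, namely $\mf(\xi)=f\circ \xi\circ f_0^{-1}$, together with the pointwise formula \eqref{mfp}. These are essentially bookkeeping computations, and the main work is to keep careful track of which point of $\cM$ each fiber lives over and to confirm that the base maps compose correctly.

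First I would prove that $f\mapsto \mf$ is a group homomorphism. Given $f,g\in \Aut^\ub(\cS)$ with projections $f_0,g_0\in \Diff(\cM)$, the composite $f\circ g$ projects to $f_0\circ g_0$, so by \eqref{mfdef} we have $\mathbf{(f\circ g)}(\xi)=(f\circ g)\circ \xi\circ (f_0\circ g_0)^{-1}=f\circ g\circ \xi\circ g_0^{-1}\circ f_0^{-1}$. On the other hand $\mf(\mathbf{g}(\xi))=f\circ(g\circ \xi\circ g_0^{-1})\circ f_0^{-1}$, which is the same expression; hence $\mathbf{(f\circ g)}=\mf\circ \mathbf{g}$. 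Since $\mathbf{\id_\cS}=\id_{\Gamma(\cM,\cS)}$, this shows the map is a morphism of groups, and in particular $\mf$ is invertible with $(\mf)^{-1}=\mathbf{(f^{-1})}$.

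Next I would verify the two numbered properties. For \eqref{mfmod}, I apply \eqref{mfp}: for any $p\in\cM$, $\mf(\alpha\xi)_p=f_{f_0^{-1}(p)}\big((\alpha\xi)_{f_0^{-1}(p)}\big)=f_{f_0^{-1}(p)}\big(\alpha(f_0^{-1}(p))\,\xi_{f_0^{-1}(p)}\big)$, and since each $f_{f_0^{-1}(p)}$ is $\R$-linear this equals $\alpha(f_0^{-1}(p))\,\mf(\xi)_p=(\alpha\circ f_0^{-1})(p)\,\mf(\xi)_p$, which is \eqref{mfmod}; the failure to be $\cC^\infty(\cM,\R)$-linear is precisely the appearance of $\alpha\circ f_0^{-1}$ rather than $\alpha$. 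For \eqref{mfInv}, I combine the homomorphism property with \eqref{mfp2}: since $\mf^{-1}=\mathbf{(f^{-1})}$ and $(f^{-1})_0=f_0^{-1}$, applying \eqref{mfp2} to the automorphism $f^{-1}$ (whose fiberwise restriction at $q$ is $(f_q)^{-1}$, viewed as the restriction $f^{-1}|_{\cS'_{f_0(q)}}$) gives $\mf^{-1}(\xi)_{f_0(p)}=(f^{-1})_{f_0(p)}(\xi_{f_0(p)})=(f_p)^{-1}(\xi_{f_0(p)})$, which after relabeling is exactly \eqref{mfInv}.

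The only genuinely delicate point will be the fiberwise identification in the proof of \eqref{mfInv}, where one must check that the restriction of the global inverse $f^{-1}$ to the fiber $\cS_{f_0(p)}$ coincides with the inverse $(f_p)^{-1}$ of the fiberwise isomorphism $f_p:\cS_p\to \cS_{f_0(p)}$. This follows because $f$ is an unbased isomorphism, so $f^{-1}$ covers $f_0^{-1}$ and restricts fiberwise to the inverse linear maps; everything else is a matter of substituting the pointwise formulas \eqref{mfp} and \eqref{mfp2} and using $\R$-linearity of each $f_p$. I therefore expect no substantive obstacle, only the need for notational care in distinguishing the various base points.
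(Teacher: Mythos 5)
Your proposal is correct and follows essentially the same route as the paper's own (very terse) proof: the homomorphism property is read off directly from the definition $\mf(\xi)=f\circ\xi\circ f_0^{-1}$, relation \eqref{mfmod} is a pointwise computation using \eqref{mfp} and fiberwise $\R$-linearity, and \eqref{mfInv} rests on exactly the fiberwise identification you flag as the delicate point, namely that $(f^{-1})_{f_0(p)}=(f_p)^{-1}$, which is the relation the paper states explicitly in its proof. The only cosmetic difference is that you route the last step through \eqref{mfp2} applied to $f^{-1}$ rather than \eqref{mfp}, but these are the same formula evaluated at a shifted base point, so there is no substantive divergence.
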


\begin{proof}
The fact that the map which takes $f$ into $\mf$ is a morphism of
groups follows from \eqref{mfdef}. Relation \eqref{mfmod} follows by
direct computation using \eqref{mfp}. Relation \eqref{mfInv} follows
from \eqref{mfp} using the relation:
\be
(f^{-1})_p=(f_{f_0^{-1}(p)})^{-1}\in \Hom(\cS_p,\cS_{f_0^{-1}(p)})~~.
\ee
\end{proof}

\subsection{The unbased morphism from the pull-back}

Let $\cS$ be a vector bundle over $\cM$.  For any $\psi\in
\cC^\infty(\cM,\cM)$, the pulled-back bundle $\cS^\psi$ has total
space given by the fiber product $\cS^\psi=\cM\times_{\psi,\sigma}
\cS=\{(p,s)\in \cM\times \cS|\psi(p)=\sigma(s)\}=\{(p,s)|p\in \cM ~\&~
s\in \cS_{\psi(p)}\}$ and projection $\sigma^\psi$ given by
$\sigma^\psi(p,s)=p$. Let $\Phi_\cS(\psi):\cS^\psi\rightarrow \cS$ be
the projection of the fiber product on the second factor:
\be
\Phi_\cS(\psi)(p,s)\eqdef s~~.
\ee
Then $\sigma\circ \Phi_\cS(\psi)=\psi\circ \sigma^\psi$, thus
$\Phi_\cS(\psi)$ is an unbased morphism $\Phi_\cS(\psi)\in
\Hom^\ub_\psi(\cS^\psi, \cS)$ whose projection is given by
$\Phi_\cS(\psi)_0=\psi$ and whose restriction to the fibers is a
linear isomorphism. We have a commutative diagram:
\be
\scalebox{1.2}{
\xymatrix{
\cS^\psi \ar[d]_{\sigma^\psi} \ar[r]^{~~\Phi_\cS(\psi)} & ~~\cS \ar[d]^{\sigma} \\
\cM  \ar[r]^{\psi } & \cM \\
}}
\ee
Given any section $\xi\in \Gamma(\cM,\cS)$, the pull-back section
$\xi^\psi\in \Gamma(\cM,\cS^\psi)$ is uniquely determined by the
condition:
\be
\Phi_\cS(\psi)\circ \xi^\psi=\xi_\psi\in \Gamma_\psi(\cM,\cS)~~,
\ee
i.e.:
\ben
\label{Phipsi}
\xi_\psi=\Phi_\cS(\psi)(\xi^\psi)~~.
\een
Suppose that $\psi\in \Diff(\cM)$. Then $\Phi_\cS(\psi)\in
\Isom^\ub_\psi(\cS^\psi,\cS)$ and \eqref{Phipsi} gives:
\ben
\label{psiast}
\xi^\psi=\Phi_\cS(\psi)^{-1}(\xi_\psi)~~,~~\forall \xi \in \Gamma(\cM,\cS)
\een
i.e.:
\ben
\label{Phizeta}
~^\psi\zeta=[\Phi_\cS(\psi)^{-1}(\zeta)]^{\psi^{-1}}~~,~~\forall \zeta\in \Gamma_\psi(\cM,\cS)~~.
\een

\noindent Given $\psi_1,\psi_2\in \Diff(\cM)$, we identify
$(\cS^{\psi_1})^{\psi_2}$ with $\cS^{\psi_1\psi_2}$.  We also identify
$\cS^{\id_\cM}$ with $\cS$. The proof of the following statement is
immediate:

\begin{Proposition}
\label{prop:PhiComp}
We have $\Phi_\cS(\id_\cM)=\id_{\cS}$. Moreover, the following relations hold:
\begin{enumerate}[1.]
\itemsep 0.0em
\item For any $\psi_1,\psi_2\in \Diff(\cM)$, we have: 
\be
\Phi_\cS(\psi_1\psi_2)=\Phi_\cS(\psi_1)\circ \Phi_{\cS^{\psi_1}}(\psi_2)~~.
\ee
\item For any $\psi,\theta\in \Diff(\cM)$ and any $h\in \Hom(\cS,\cS^\psi)$, we have: 
\be
h^{\theta}=\Phi_{\cS^\psi}(\theta)^{-1}\circ h\circ \Phi_\cS(\theta)~~.
\ee
\end{enumerate}
\end{Proposition}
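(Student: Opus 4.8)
The plan is to verify all three assertions by direct diagram-chasing on total spaces, since every object appearing is given explicitly through the fiber-product model of the pullback. I recall that for $\psi\in\cC^\infty(\cM,\cM)$ one has $\cS^\psi=\{(p,s)\in\cM\times\cS\mid\psi(p)=\sigma(s)\}$ with projection $\sigma^\psi(p,s)=p$ and $\Phi_\cS(\psi)(p,s)=s$, and that the standing identifications $\cS^{\id_\cM}\cong\cS$ and $(\cS^{\psi_1})^{\psi_2}\cong\cS^{\psi_1\psi_2}$ are the canonical ones on iterated fiber products. First I would settle $\Phi_\cS(\id_\cM)=\id_\cS$: the defining fiber product is $\{(p,s)\mid p=\sigma(s)\}$, which is identified with $\cS$ via $(p,s)\mapsto s$, and under this identification $\Phi_\cS(\id_\cM)(p,s)=s$ is exactly the identity map.

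For part 1 I would unwind both total spaces explicitly. A point of $(\cS^{\psi_1})^{\psi_2}$ lying over $p$ has the form $(p,(q,t))$ with $q=\psi_2(p)$ and $\psi_1(q)=\sigma(t)$; under the canonical identification with $\cS^{\psi_1\psi_2}$ it corresponds to $(p,t)$, which is legitimate since $(\psi_1\psi_2)(p)=\psi_1(q)=\sigma(t)$. Applying $\Phi_{\cS^{\psi_1}}(\psi_2)$ sends $(p,(q,t))\mapsto(q,t)$ and then $\Phi_\cS(\psi_1)$ sends $(q,t)\mapsto t$, so the composite is $(p,(q,t))\mapsto t$, which agrees with $\Phi_\cS(\psi_1\psi_2)(p,t)=t$. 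I note that this verification uses only smoothness of $\psi_1,\psi_2$, so the diffeomorphism hypothesis is inessential here.

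For part 2 the diffeomorphism hypothesis becomes crucial, because it is exactly what makes $\Phi_\cS(\theta)\in\Isom^\ub_\theta(\cS^\theta,\cS)$ and $\Phi_{\cS^\psi}(\theta)$ invertible on total spaces, with $\Phi_\cS(\theta)^{-1}(s)=(\theta^{-1}(\sigma(s)),s)$. Recalling that the pullback of the based morphism $h\in\Hom(\cS,\cS^\psi)$ acts fiberwise by $(h^\theta)_p=h_{\theta(p)}$, i.e. by $(p,s)\mapsto(p,h(s))$ on total spaces, I would then chase $(p,s)\in\cS^\theta$ through the right-hand side: $\Phi_\cS(\theta)$ produces $s\in\cS_{\theta(p)}$, next $h$ produces $h(s)\in(\cS^\psi)_{\theta(p)}$, and finally $\Phi_{\cS^\psi}(\theta)^{-1}$ returns $(\theta^{-1}(\theta(p)),h(s))=(p,h(s))$. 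Since this equals $h^\theta(p,s)$, the identity follows; the characterization \eqref{Phipsi} of pullback sections can be invoked to present these fiber computations cleanly.

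As the calculations themselves are routine, the only genuine obstacle is notational discipline: one must track scrupulously in which factor of each (iterated) fiber product a given symbol resides, together with the canonical identifications $\cS^{\id_\cM}\cong\cS$ and $(\cS^{\psi_1})^{\psi_2}\cong\cS^{\psi_1\psi_2}$, and must invoke the diffeomorphism hypothesis precisely at the inversion of $\Phi_{\cS^\psi}(\theta)$ in part 2.
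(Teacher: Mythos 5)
Your proof is correct: all three verifications (the identity case, the composition law in part 1, and the conjugation formula in part 2) are exactly the routine fiber-product computations that the paper declares ``immediate'' and omits, and you correctly flag that invertibility of $\Phi_{\cS^\psi}(\theta)$ is the only place where the diffeomorphism hypothesis is needed. Nothing further is required.
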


\subsection{The based isomorphism induced by an unbased automorphism}
\label{subsec:basedinduced}

\begin{Definition}
Let $f\in \Aut^\ub(\cS)$ be an unbased automorphism of $\cS$. 
The {\em based isomorphism induced by $f$} is the based
isomorphism of vector bundles defined through:
\ben
\label{hatf}
{\hat f}\eqdef \Phi_{\cS}(f_0)^{-1}\circ f\in \Isom(\cS,\cS^{f_0})~~.
\een
\end{Definition}

\

\noindent Thus ${\hat f}$ is the unique based isomorphism of vector bundles 
which makes the following diagram commute in the category $\VB^\ub(\cM)$:
\ben
\label{diagram:hatf}
\scalebox{1.2}{
\xymatrixcolsep{5pc}\xymatrix{
\cS \ar[d]_{{\hat f}} \ar[dr]^{~~f} & \\
\cS^{f_0}  \ar[r]^{\Phi_\cS(f_0)} & \cS \\
}}
\een 

\begin{Proposition}
For any $f\in \Aut^\ub(\cS)$ and any $\xi\in \Gamma(\cM,\cS)$, we have:
\beqan
&& \mf(\xi)=[{\hat f}(\xi)]^{f_0^{-1}}~~\label{mf}\\
&& \mf^{-1}(\xi)=f^{-1}(\xi_{f_0})={\hat f}^{-1}(\xi^{f_0})~~.\label{mfinv}
\eeqan 
\end{Proposition}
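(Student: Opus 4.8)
The plan is to prove both identities purely by unwinding the definitions of $\mf$, of the induced based isomorphism ${\hat f}$, and of the pull-back operation on sections. No geometric content is required beyond two bookkeeping relations already established: the inversion formula \eqref{Phizeta} for a section along a diffeomorphism, ${}^\psi\zeta=[\Phi_\cS(\psi)^{-1}(\zeta)]^{\psi^{-1}}$, and the defining relation \eqref{Phipsi} for the pull-back section, $\xi_\psi=\Phi_\cS(\psi)(\xi^\psi)$. Throughout, one must keep careful track of the distinction between ordinary sections and sections along a map.

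First I would establish \eqref{mf}. By definition \eqref{mfdef}, $\mf(\xi)={}^{f_0}(f(\xi))$, where $f(\xi)=f\circ\xi$ is viewed as a section of $\cS$ along $f_0$, i.e. an element of $\Gamma_{f_0}(\cM,\cS)$. Applying \eqref{Phizeta} with $\psi=f_0$ and $\zeta=f(\xi)$ gives
\be
\mf(\xi)={}^{f_0}(f(\xi))=[\Phi_\cS(f_0)^{-1}(f(\xi))]^{f_0^{-1}}~~.
\ee
Since ${\hat f}=\Phi_\cS(f_0)^{-1}\circ f$ by \eqref{hatf}, the ordinary section $\Phi_\cS(f_0)^{-1}(f(\xi))$ of $\cS^{f_0}$ is exactly ${\hat f}(\xi)$; substituting this and using the canonical identification $(\cS^{f_0})^{f_0^{-1}}=\cS$ (Proposition \ref{prop:PhiComp}) yields $\mf(\xi)=[{\hat f}(\xi)]^{f_0^{-1}}$, which is \eqref{mf}.

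Next I would treat the two equalities in \eqref{mfinv} separately. For the first, I would use that $f\mapsto\mf$ is a morphism of groups, so $\mf^{-1}$ coincides with the operator attached to the unbased automorphism $f^{-1}$, whose projection is $f_0^{-1}$. Applying \eqref{mfdef} to $f^{-1}$ then gives $\mf^{-1}(\xi)=f^{-1}\circ\xi\circ f_0=f^{-1}(\xi_{f_0})$; alternatively this matches the pointwise formula \eqref{mfInv}. For the second equality, I would invert \eqref{hatf} to obtain ${\hat f}^{-1}=f^{-1}\circ\Phi_\cS(f_0)$, and then apply \eqref{Phipsi} in the form $\Phi_\cS(f_0)(\xi^{f_0})=\xi_{f_0}$; composing with $f^{-1}$ gives ${\hat f}^{-1}(\xi^{f_0})=f^{-1}(\xi_{f_0})$, completing \eqref{mfinv}.

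The only real difficulty here is notational bookkeeping rather than any substantive obstacle: one must consistently distinguish ordinary sections in $\Gamma(\cM,\cS)$ from sections along $f_0$ in $\Gamma_{f_0}(\cM,\cS)$, remember the twisted module structure \eqref{mfmod} of $\mf$, and correctly interpret the action of the unbased maps $\Phi_\cS(f_0)^{-1}$ and $f^{-1}$ on sections that are not based. Once the identifications $(\cS^{f_0})_p\cong\cS_{f_0(p)}$ and $(\cS^{f_0})^{f_0^{-1}}=\cS$ are used carefully, each of the two identities reduces to a short composition of the previously established relations \eqref{Phizeta}, \eqref{Phipsi}, and \eqref{hatf}.
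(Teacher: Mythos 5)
Your proof is correct and follows essentially the same route as the paper's: \eqref{mf} is obtained by combining \eqref{mfdef}, \eqref{Phizeta} and \eqref{hatf}, and the equality $f^{-1}(\xi_{f_0})={\hat f}^{-1}(\xi^{f_0})$ by combining \eqref{hatf} with \eqref{Phipsi}. The only immaterial difference is how you close the chain in \eqref{mfinv}: you derive $\mf^{-1}(\xi)=f^{-1}(\xi_{f_0})$ from the group-morphism property of $f\mapsto \mf$ applied to $f^{-1}$ (equivalently from \eqref{mfInv}), whereas the paper instead derives $\mf^{-1}(\xi)={\hat f}^{-1}(\xi^{f_0})$ by inverting \eqref{mf} and substituting $\xi\to\mf^{-1}(\xi)$; both arguments are valid and rely on the same previously established relations.
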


\begin{proof}
For any section $\xi\in \Gamma(\cM,\cS)$, relation \eqref{hatf} gives
$f(\xi)=\Phi_\cS(f_0)({\hat f}(\xi))\in \Gamma_{f_0}(\cM,\cS)$, where
${\hat f}(\xi)\in \Gamma(\cM,\cS^{f_0})$. Using \eqref{mfdef} and
\eqref{Phizeta}, this gives:
\be
\mf(\xi)=~^{f_0}(f(\xi))=[\Phi_\cS(f_0)^{-1}(f(\xi))]^{f_0^{-1}}=[{\hat f}(\xi)]^{f_0^{-1}}~~,
\ee
which shows that \eqref{mf} holds. Relation \eqref{mf} gives
$\xi={\hat f}^{-1}(\mf(\xi)^{f_0})$. Replacing $\xi$ with
$\mf^{-1}(\xi)$ in this relation, we obtain:
\be
\mf^{-1}(\xi)={\hat f}^{-1}(\xi^{f_0})~~.
\ee
On the other hand, \eqref{hatf} implies $f^{-1}={\hat f}^{-1}\circ
\Phi_\cS(f_0)^{-1}$. Combining this with \eqref{Phipsi} gives:
\be
f^{-1}(\xi_{f_0})=f^{-1}(\Phi_\cS(f_0)(\xi^{f_0}))={\hat f}^{-1}(\xi^{f_0})~~.
\ee
Combining the two relations above gives \eqref{mfinv}.
\end{proof}

\noindent Given $f_1,f_2\in \Aut^\ub(\cS)$, Proposition
\eqref{prop:PhiComp} implies:
\ben
\label{hatfcomp}
\widehat{f_1f_2}={\hat f}_1^{f_{20}}\circ {\hat f_2}~~\mathrm{and}~~\widehat{\id_\cM}=\id_\cM~~~.
\een

\subsection{Relation to twisted automorphisms}

Let $\cS$ be a vector bundle over $\cM$. 

\begin{Definition}
A {\em twisted automorphism} of $\cS$ is a pair $(\psi,h)$, where
$\psi\in \Diff(\cM)$ and $h \in \Isom(\cS,\cS^\psi)$ is a {\em based}
isomorphism from $\cS$ to $\cS^\psi$.
\end{Definition}

\noindent The set:
\be
\Aut^\tw(\cS)\eqdef \sqcup_{\psi \in \Diff(\cM)}\Isom(\cS,\cS^\psi)=\{(\psi,h)|\psi\in \Diff(\cM)~\&~h\in \Isom(\cS,\cS^\psi)\}
\ee
of all twisted automorphisms of $\cS$ becomes a group when endowed with the
multiplication defined through:
\be
(\psi_1, f_1)(\psi_2,f_2)\eqdef (\psi_1\circ \psi_2, f_1^{\psi_2}\circ f_2)~~,
\ee
the unit element being given by $(\id_\cM,\id_\cS)$. The inverse of
$(\psi,f)\in \Aut^\tw(\cS)$ is given by:
\be
(\psi,f)^{-1}=(\psi^{-1}, (f^{-1})^{\psi^{-1}})~~.
\ee
Consider the map: 
\be
\mu_\cS:\Aut^\ub(\cS) \rightarrow \Aut^\tw(\cS)
\ee
which associates to an unbased automorphism $f$ of $\cS$ the pair
$\mu_\cS(f)=(f_0, {\hat f})\in \Aut^\tw(\cS)$, where ${\hat f}\in
\Isom(\cS,\cS^{f_0})$ is the based isomorphism induced by $f$. 

\begin{Proposition}
The map $\mu_\cS:\Aut^\ub(\cS)\rightarrow \Aut^\tw(\cS)$ is an
isomorphism of groups.
\end{Proposition}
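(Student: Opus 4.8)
The plan is to show that $\mu_\cS$ is a bijective group homomorphism, from which the statement follows since a bijective homomorphism of groups is automatically an isomorphism. I would establish the homomorphism property first and then exhibit an explicit two-sided inverse, so that bijectivity is immediate.

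First I would verify that $\mu_\cS$ respects the group laws. Recall that $f\mapsto f_0$ defines the group morphism $\pi_\cS:\Aut^\ub(\cS)\rightarrow \Diff(\cM)$, so that $(f_1f_2)_0=f_{10}\circ f_{20}$ for all $f_1,f_2\in \Aut^\ub(\cS)$. Combining this with the composition formula \eqref{hatfcomp} for the induced based isomorphisms gives
\be
\mu_\cS(f_1f_2)=\left((f_1f_2)_0,\widehat{f_1f_2}\right)=\left(f_{10}\circ f_{20},{\hat f}_1^{f_{20}}\circ {\hat f}_2\right)~~,
\ee
which is precisely the product $(f_{10},{\hat f}_1)(f_{20},{\hat f}_2)=\mu_\cS(f_1)\mu_\cS(f_2)$ computed with the multiplication rule of $\Aut^\tw(\cS)$. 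Since $\widehat{\id_\cS}=\id_\cS$ (again by \eqref{hatfcomp} together with $\Phi_\cS(\id_\cM)=\id_\cS$ from Proposition \ref{prop:PhiComp}), the unit is sent to $(\id_\cM,\id_\cS)$, so $\mu_\cS$ is a group homomorphism.

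Next I would construct the inverse. Define $\nu_\cS:\Aut^\tw(\cS)\rightarrow \Aut^\ub(\cS)$ through $\nu_\cS(\psi,h)\eqdef \Phi_\cS(\psi)\circ h$. The composite $\Phi_\cS(\psi)\circ h$ is obtained by composing the based isomorphism $h\in \Isom(\cS,\cS^\psi)$ (covering $\id_\cM$) with the unbased isomorphism $\Phi_\cS(\psi)\in \Isom^\ub_\psi(\cS^\psi,\cS)$; hence it is a fiberwise-linear isomorphism $\cS\to\cS$ covering $\psi\circ\id_\cM=\psi$, i.e.\ a genuine element of $\Aut^\ub(\cS)$ with projection $\psi$. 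Using the defining relation ${\hat f}=\Phi_\cS(f_0)^{-1}\circ f$ (equivalently, the commuting triangle \eqref{diagram:hatf}), I would check that $\nu_\cS(\mu_\cS(f))=\Phi_\cS(f_0)\circ{\hat f}=f$ for every $f$, and conversely that for $(\psi,h)\in \Aut^\tw(\cS)$ the element $f\eqdef \nu_\cS(\psi,h)$ satisfies $f_0=\psi$ and ${\hat f}=\Phi_\cS(\psi)^{-1}\circ f=h$, so that $\mu_\cS(\nu_\cS(\psi,h))=(\psi,h)$.

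The computations are entirely formal; the only point requiring care is the bookkeeping of the based-versus-unbased distinction and of the pullback identifications $(\cS^{\psi_1})^{\psi_2}\cong \cS^{\psi_1\psi_2}$ and $\cS^{\id_\cM}\cong\cS$ that are used implicitly in \eqref{hatfcomp} and in the multiplication law of $\Aut^\tw(\cS)$. Checking that these identifications are compatible with the composition formula of Proposition \ref{prop:PhiComp} is the main (though routine) obstacle; once it is settled, the two displayed verifications show that $\mu_\cS$ and $\nu_\cS$ are mutually inverse, completing the proof that $\mu_\cS$ is an isomorphism of groups.
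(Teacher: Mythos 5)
Your proof is correct and takes essentially the same approach as the paper: the homomorphism property is deduced from \eqref{hatfcomp}, and your explicit inverse $\nu_\cS(\psi,h)\eqdef \Phi_\cS(\psi)\circ h$ is precisely the unpacking of what the paper leaves implicit when it says bijectivity is clear from the definition of ${\hat f}$ as the unique based isomorphism making diagram \eqref{diagram:hatf} commute. The extra verifications you carry out (the unit, the pullback identifications) are routine and correctly handled.
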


\begin{proof}
The fact that $\mu_\cS$ is a morphism of groups follows from
\eqref{hatfcomp}. The fact that $\mu_\cS$ is bijective is clear from
the definition of ${\hat f}$ as the unique based isomorphism which
makes diagram \eqref{diagram:hatf} commute.
\end{proof}

\noindent In particular, an unbased automorphism $f$ of $\cS$ can be
identified with the twisted automorphism $(f_0, {\hat f})$.
Let $\Diff^\cS(\cM)$ be the subgroup of $\Diff(\cM)$ consisting of
those diffeomorphisms $\psi$ of $\cM$ for which there exists a {\em
  based} isomorphism between $\cS^\psi$ and $\cS$: 
\be
\Diff^\cS(\cM)\eqdef\{\psi\in \Diff(\cM)|\Isom(\cS^\psi,\cS)\neq \emptyset\}~~.
\ee
We have a short exact sequence: 
\ben
\label{eseq}
1\longrightarrow \Aut(\cS)\hookrightarrow \Aut^\tw(\cS)\longrightarrow \Diff^\cS(\cM)\longrightarrow 1~~,
\een
where the first map takes $h\in \Aut(\cS)$ into the twisted
automorphism $(\id_\cM,h)$ while the second map takes a twisted
automorphism $(\psi,h)$ into the diffeomorphism $\psi\in \Diff(\cM)$.
Through the isomorphism $\mu_\cS$, this induces a short exact sequence:
\be
1\longrightarrow \Aut(\cS)\hookrightarrow \Aut^\ub(\cS)\longrightarrow \Diff^\cS(\cM)\rightarrow 1~~.
\ee

\subsection{The action of unbased automorphisms on ordinary sections of the bundle of endomorphisms}

Let $\cS,\cS'$ be vector bundles over $\cM$ and $f\in
\Isom^\ub(\cS,\cS')$ be an unbased isomorphism from $\cS$ to $\cS'$.

\begin{Definition}
The {\em adjoint} of $f$ is the unbased isomorphism $\Ad(f)\in
\Isom^\ub(End(\cS), End(\cS'))$ whose fiber
$\Ad(f)_p:\End(\cS_p)\rightarrow \End(\cS'_{f_0(p)})$ at any $p\in
\cM$ is given by:
\be
\Ad(f)_p(t)=f_p \circ t \circ (f_p)^{-1}\in \End(\cS'_{f_0(p)})~~\forall t\in \End(\cS_p)~~.
\ee 
\end{Definition}

\

\noindent We have $\Ad(f)_0=f_0$. When $\cS'=\cS$, the map
$\Ad:\Aut^\ub(\cS)\rightarrow \Aut^\ub(End(\cS))$ is a morphism of
groups which satisfies $\pi_{End(\cS)}\circ \Ad=\pi_\cS$.  For any
$\psi\in \Diff(\cS)$, we have:
\ben
\label{PhiEnd}
\Phi_{End(\cS)}(\psi)=\Ad(\Phi_\cS(\psi))~~,
\een
where we identify $\End(\cS)^\psi\equiv \End(\cS^\psi)$. 

\begin{Proposition}
For any $f\in \Aut^\ub(\cS)$, we have:
\ben
\label{hatwidehat}
\widehat{\Ad(f)}=\Ad({\hat f})\in \Isom(End(\cS), End(\cS)^{f_0})\equiv  \Isom(End(\cS), End(\cS^{f_0}))~~,
\een
where we identify $\End(\cS)^{f_0}$ and $\End(\cS^{f_0})$.
\end{Proposition}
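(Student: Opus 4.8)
The plan is to reduce the statement to two facts already available in this appendix: functoriality of the adjoint operation $\Ad$ on unbased isomorphisms, and relation \eqref{PhiEnd}, which expresses $\Phi_{End(\cS)}(\psi)$ as $\Ad(\Phi_\cS(\psi))$ under the identification $End(\cS)^{f_0}\equiv End(\cS^{f_0})$. The cleanest route exploits the universal property \eqref{diagram:hatf} that characterizes the ``hat'' of an unbased automorphism, rather than any explicit fiberwise computation of the pulled-back endomorphisms.

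First I would record functoriality of $\Ad$. For composable unbased isomorphisms $f\in \Isom^\ub(\cS,\cS')$ covering $f_0$ and $g\in \Isom^\ub(\cS',\cS'')$ covering $g_0$, a one-line fiberwise check using $(g\circ f)_p=g_{f_0(p)}\circ f_p$ gives $\Ad(g\circ f)_p(t)=(g\circ f)_p\, t\, (g\circ f)_p^{-1}=g_{f_0(p)}(f_p\, t\, f_p^{-1})g_{f_0(p)}^{-1}=\Ad(g)_{f_0(p)}(\Ad(f)_p(t))$, hence $\Ad(g\circ f)=\Ad(g)\circ \Ad(f)$; together with $\Ad(\id_\cS)=\id_{End(\cS)}$ this yields $\Ad(f^{-1})=\Ad(f)^{-1}$.

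Next I would invoke the defining relation \eqref{hatf} in its rearranged form $f=\Phi_\cS(f_0)\circ {\hat f}$, together with the characterization of $\widehat{\Ad(f)}$ as the unique based isomorphism making the analogue of diagram \eqref{diagram:hatf} commute for $\Ad(f)$, namely $\Ad(f)=\Phi_{End(\cS)}(f_0)\circ \widehat{\Ad(f)}$ (here one uses $\Ad(f)_0=f_0$). Since $({\hat f})_0=\id_\cM$, the endomorphism-level isomorphism $\Ad({\hat f})$ is based, so it is a legitimate candidate for $\widehat{\Ad(f)}$. Then I would compute, using \eqref{PhiEnd} and functoriality,
\[
\Phi_{End(\cS)}(f_0)\circ \Ad({\hat f})=\Ad(\Phi_\cS(f_0))\circ \Ad({\hat f})=\Ad(\Phi_\cS(f_0)\circ {\hat f})=\Ad(f).
\]
By the uniqueness part of the universal property, $\widehat{\Ad(f)}=\Ad({\hat f})$, which is the claim.

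The only genuinely delicate point is the bookkeeping of the canonical identification $End(\cS)^{f_0}\equiv End(\cS^{f_0})$ under which \eqref{PhiEnd} is stated: one must check that $\Ad(\Phi_\cS(f_0))$, a priori a based isomorphism $End(\cS^{f_0})\to End(\cS)$, matches $\Phi_{End(\cS)}(f_0)\colon End(\cS)^{f_0}\to End(\cS)$ after this identification, and that $\Ad({\hat f})$ with ${\hat f}\in \Isom(\cS,\cS^{f_0})$ lands in $\Isom(End(\cS),End(\cS^{f_0}))\equiv \Isom(End(\cS),End(\cS)^{f_0})$ compatibly. These compatibilities follow from the fact that $\Ad$ commutes with pullback along $f_0$ and from relation \eqref{PhiEnd} itself, so once the identifications are fixed consistently the argument is purely formal and no computation beyond the fiberwise functoriality check is required.
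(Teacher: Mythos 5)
Your proof is correct and follows essentially the same route as the paper: both rest on the defining relation \eqref{hatf}, the identity \eqref{PhiEnd}, and the multiplicativity of $\Ad$ (which the paper uses implicitly and you verify fiberwise). The only cosmetic difference is that you verify the candidate $\Ad({\hat f})$ against the uniqueness characterization of $\widehat{\Ad(f)}$, whereas the paper composes $\Ad(f)$ with $\Phi_{End(\cS)}(f_0)^{-1}$ directly and simplifies.
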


\begin{proof}
Relations \eqref{hatf} and \eqref{PhiEnd} give:
\be
\widehat{\Ad(f)}=\Phi_{End(\cS)}(f_0)^{-1}\circ \Ad(f)=\Ad(\Phi_\cS(f_0))^{-1}\circ \Ad(f)=\Ad(\Phi_\cS(f_0)^{-1}\circ f)=\Ad({\hat f})~~.
\ee
\end{proof}

\noindent Given a based endomorphism $T\in \End(\cS)=\Gamma(\cM,End(\cS))$, we
can apply $\Ad(f)$ to $T$ to obtain a section $\Ad(f)(T)\eqdef
\Ad(f)\circ T\in \Gamma_{f_0}(\cM,End(\cS))$ of $End(\cS)$ along the
map $f_0$. This gives an $\R$-linear map
$\Ad(f):\Gamma(\cM,End(\cS))\rightarrow \Gamma_{f_0}(\cM,\End(\cS))$.  For
many purposes, it is more convenient to consider instead the action
$\bAd(f):\Gamma(\cM,End(\cS))\rightarrow \Gamma(\cM,End(\cS))$ of the
unbased automorphism $\Ad(f)$ on ordinary sections of $End(\cS)$,
which is defined as in Subsection \ref{subsec:ordsecaction}.

\begin{Definition}
The {\em adjoint action of $f\in \Aut^\ub(\cS)$ on ordinary sections of
  $End(\cS)$} is the $\R$-linear automorphism of the vector space
$\Gamma(\cM,End(\cS))=\End(\cS)$ defined through:
\ben
\label{bAdDef}
\bAd(f)(T)\eqdef ~^{f_0}(\Ad(f)(T))\in \Gamma(\cM,End(\cS))~~,~~\forall T\in \Gamma(\cM,End(\cS))~~.
\een
\end{Definition}

\

\noindent We have:
\ben
\label{bAdf}
\bAd(f)(T)_p=f_{f_0^{-1}(p)} \circ T_{f_0^{-1}(p)} \circ (f_{f_0^{-1}(p)})^{-1}\in \End(\cS_{p})~~\forall p\in\cM~~,
\een
i.e.: 
\ben
\label{bAdf2}
\bAd(f)(T)_{f_0(p)}=f_p\circ T_p \circ (f_p)^{-1}\in \End(\cS_{f_0(p)})~~.
\een

\begin{Proposition}
\label{prop:bAd}
For any $f\in \Aut^\ub(\cS)$, the map $\bAd(f):\End(\cS)\rightarrow \End(\cS)$ is a unital endomorphism of
the $\R$-algebra $(\End(\cS),\circ)$ of vector bundle endomorphisms of
$\cS$. The map:
\be
\bAd:\Aut^\ub(\cS)\rightarrow \Aut_\Alg(\End(\cS))
\ee
is a morphism of groups and we have:
\ben
\label{AdHatAd}
\bAd(f)(T)=[\Ad({\hat f})(T)]^{f_0^{-1}}~~\forall T\in \Gamma(\cM,End(\cS))
\een
Moreover, for any $T\in \Gamma(\cM,End(\cS))$ and any $\xi\in \Gamma(\cM,\cS)$, we have: 
\beqan
&& \bAd(f)(T)(\xi)=\mf(T(\mf^{-1}(\xi)))~~\forall \xi\in \Gamma(\cM,\cS)~~\label{bAdxi}\\
&&\bAd(f)^{-1}(T)(\xi)=\bAd(f^{-1})(T)(\xi)=\mf^{-1}(T(\mf(\xi)))~~\label{bAdxiinv}~~.
\eeqan
\end{Proposition}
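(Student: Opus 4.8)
The plan is to reduce everything to the single operator identity \eqref{bAdxi}, from which the algebra- and group-morphism statements follow formally, and to obtain \eqref{AdHatAd} by specializing the already-proven relation \eqref{mf} to the bundle $End(\cS)$.

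First I would prove \eqref{bAdxi} by a direct base-point computation. Fix $T\in \Gamma(\cM,End(\cS))$, $\xi\in \Gamma(\cM,\cS)$ and $p\in \cM$. Evaluating the right-hand side and peeling off one layer at a time using the pointwise formula \eqref{mfp} for $\mf$ and then the formula \eqref{mfInv} for $\mf^{-1}$ (noting that $f_0(f_0^{-1}(p))=p$), one finds
\be
\mf(T(\mf^{-1}(\xi)))_p=\left[f_{f_0^{-1}(p)}\circ T_{f_0^{-1}(p)}\circ (f_{f_0^{-1}(p)})^{-1}\right](\xi_p)~~.
\ee
By the pointwise description \eqref{bAdf} of $\bAd(f)$, the bracket is exactly $\bAd(f)(T)_p$, so the two sides agree at every $p$, which is \eqref{bAdxi}. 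The only care needed here is to keep track of whether each object is evaluated at $p$ or at $f_0^{-1}(p)$; this base-point bookkeeping is the sole obstacle in the whole argument, as there is no deeper difficulty.

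Next I would read \eqref{bAdxi} as the statement that the section $\bAd(f)(T)\in \End(\cS)$ acts on $\Gamma(\cM,\cS)$ as the conjugate $\mf\circ T\circ \mf^{-1}$; since a vector bundle endomorphism is determined by its action on global sections, this characterizes $\bAd(f)(T)$ completely. As $\mf\in \Aut_\R(\Gamma(\cM,\cS))$ is invertible, conjugation by $\mf$ is $\R$-linear in $T$, preserves composition and fixes $\id_\cS$, so $\bAd(f)$ is a unital endomorphism of the algebra $(\End(\cS),\circ)$. Using that $f\mapsto \mf$ is a morphism of groups (established above), the same faithfulness gives $\bAd(f_1f_2)(T)=\mathbf{f_1}\mathbf{f_2}\,T\,\mathbf{f_2}^{-1}\mathbf{f_1}^{-1}=\bAd(f_1)(\bAd(f_2)(T))$ together with $\bAd(\id_\cS)=\id$, so $\bAd$ is a morphism of groups valued in $\Aut_\Alg(\End(\cS))$. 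In particular $\bAd(f)^{-1}=\bAd(f^{-1})$; applying \eqref{bAdxi} to $f^{-1}$ and using $\mathbf{(f^{-1})}=\mf^{-1}$ (again the group-morphism property) yields $\bAd(f^{-1})(T)(\xi)=\mf^{-1}(T(\mf(\xi)))$, which is \eqref{bAdxiinv}.

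Finally, for \eqref{AdHatAd} I would apply the general formula \eqref{mf}, valid for any unbased automorphism of any vector bundle, to the bundle $End(\cS)$ and the unbased automorphism $\Ad(f)$. By definition $\bAd(f)$ is precisely the action of $\Ad(f)$ on ordinary sections, and $\Ad(f)_0=f_0$, so \eqref{mf} reads $\bAd(f)(T)=[\widehat{\Ad(f)}(T)]^{f_0^{-1}}$. Substituting the identity $\widehat{\Ad(f)}=\Ad({\hat f})$ of \eqref{hatwidehat} gives exactly \eqref{AdHatAd}, completing the proof.
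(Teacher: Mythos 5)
Your proof is correct and follows essentially the same route as the paper's: the pointwise bookkeeping with \eqref{mfp}, \eqref{mfInv} and \eqref{bAdf} that establishes \eqref{bAdxi}, the derivation of \eqref{AdHatAd} by applying \eqref{mf} to the unbased automorphism $\Ad(f)$ of $End(\cS)$ and substituting \eqref{hatwidehat}, and the replacement $f\rightarrow f^{-1}$ plus group-morphism properties for \eqref{bAdxiinv} are exactly the paper's steps. The only (harmless) reorganization is that you deduce the unital algebra-morphism and group-morphism claims formally from \eqref{bAdxi}, using that a vector bundle endomorphism is determined by its action on global sections, whereas the paper obtains them by direct pointwise computation from \eqref{bAdf2}; both arguments are equally valid.
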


\begin{proof}
The fact that $\bAd(f)$ is an $\R$-algebra automorphism follows from
\eqref{bAdf2} by direct computation, as does the fact that $\bAd$ is a
morphism of groups. Relation \eqref{AdHatAd} follows from \eqref{mf}
and \eqref{hatwidehat}. Relation \eqref{bAdxi} follows from
\eqref{bAdf2}, \eqref{mfp2} and \eqref{mfInv}. Relation
\eqref{bAdxiinv} follows from \eqref{bAdxi} by replacing $f$ with
$f^{-1}$ and using the fact that the map $f\rightarrow \mf$ and the map $\bAd$
are morphisms of groups.
\end{proof}

\begin{Remark} 
Replacing $f$ by $f^{-1}$ in \eqref{bAdf} and then replacing $p$ by $f_0(p)$ gives: 
\ben
\label{bAdfinv}
\bAd(f)^{-1}(T)_p=\bAd(f^{-1})(T)_p=f_p^{-1} \circ T_{f_0(p)} \circ f_p\in \End(\cS_p)~~\forall p\in\cM~~.
\een
\end{Remark}

\subsection{Conventions for the ordinary push-forward of vector fields 
and ordinary differential pull-back of forms}
\label{subsec:pushpull}

For any $\psi\in \Diff(\cM)$, the differential $\dd\psi$ is an unbased
automorphism of the tangent bundle $T\cM$, whose projection to $\cM$
equals $\psi$. This induces a based isomorphism of vector bundles
$\widehat{\dd\psi}\eqdef \Phi_{T\cM}(\psi)^{-1}\circ
\dd\psi:T\cM\rightarrow (T\cM)^\psi$. Applying $\dd\psi$ to a vector
field $X\in \cX(\cM)=\Gamma(\cM,T\cM)$ gives a section
$(\dd\psi)(X)\in \Gamma_\psi(\cM,T\cM)$ of $T\cM$ above the map $\psi$
(a vector field on $\cM$ along $\psi$). On the other hand, the action
of $\dd\psi$ on ordinary sections (see Subsection
\ref{subsec:ordsecaction}) induces an $\R$-linear automorphism of
$\cX(\cM)$ which we denote by $\mathbf{\dd\psi}$. This associates to
every ordinary vector field $X\in \cX(\cM)$ another ordinary vector field:
\ben
\label{psiastdef}
\psi_\ast(X)\eqdef \boldsymbol{\dd\psi}(X)=~^\psi[\dd\psi(X)]=[\widehat{\dd\psi}(X)]^{\psi^{-1}}\in \cX(M)~~,
\een
called the {\em ordinary push-forward} of $X$ through $\psi$. We have:
\be
\psi_\ast(X)_p=(\dd_{\psi^{-1}(p)}\psi)(X_{\psi^{-1}(p)})\in T_p\cM~~,~~\forall p\in \cM~~.
\ee
which also reads: 
\be
\psi_\ast(X)_{\psi(p)}=(\dd_p \psi)(X_p)~~,~~\forall p\in \cM
\ee
as well as:
\be
\psi_\ast(\alpha X)=(\alpha\circ \psi^{-1}) X~~,~~\forall \alpha\in \cC^\infty(\cM,\R)~~.
\ee
Moreover, the map $\Diff(\cM)\ni \psi\rightarrow \psi_\ast\in \Aut_\R(\cX(\cM))$ is 
a morphism of groups. 

Let $\psi^\ast:\Omega(\cM)\rightarrow \Omega(\cM)$ denote the
ordinary differential pullback of forms, defined in the convention: 
\be
\psi^\ast(\rho)_p(u_1,\ldots, u_k)\eqdef \rho_{\psi(p)}((\dd_p\psi)(u_1), \ldots, (\dd_p \psi)(u_k))~~,
\ee
for all $\rho\in\Omega^k(\cM)$ and all $u_1,\ldots, u_k\in T_p^\ast \cM$. 
Then: 
\be
\psi^\ast(\alpha\rho)=(\alpha\circ \psi)\psi^\ast(\rho)~~,~~\forall \alpha\in \cC^\infty(\cM,\R)
\ee
and the map $\Diff(\cM)\ni \psi\rightarrow \psi^\ast\in
\Aut_\R(\Omega(\cM))^\mathrm{op}$ is a morphism of groups. The
ordinary push-forward and differential pull-back are related through:
\ben
\label{pullpush0}
\psi^\ast(\rho)(X)=\rho(\psi_\ast(X))\circ \psi~~,~~\forall \rho\in \Omega^1(\cM)~~\forall X\in \cX(\cM)~~.
\een

\begin{Remark}
Other conventions are sometimes used for the ordinary push-forward of
vector fields and for the ordinary differential pullback of forms. The
present paper uses the conventions explained above.
\end{Remark}

\subsection{Twisted push-forward of bundle-valued vector fields}

\noindent Let $\cX(\cM,\cS)\eqdef \Gamma(\cM,T\cM\otimes
\cS)=\cX(M)\otimes_{\cC^\infty(\cM,\R)}\Gamma(\cM,\cS)$ denote the
$\cC^\infty(\cM,\R)$-module of $\cS$-valued vector fields defined on
$\cM$. We extend the ordinary push-forward of vector fields along
$\psi$ to the $\R$-linear map $\psi_\ast:\cX(\cM,\cS)\rightarrow
\cX(\cM,\cS^{\psi^{-1}})$ determined uniquely by the condition:
\be
\psi_\ast(X\otimes \xi)=\psi_\ast(X)\otimes \xi^{\psi^{-1}}~~,~~\forall X\in \cX(\cM)~~\forall \xi\in \Gamma(\cM,\cS)~~.
\ee
With this definition, we have: 
\be
\psi_\ast(\alpha Z)=(\alpha\circ \psi^{-1}) Z~~,~~\forall \alpha\in \cC^\infty(\cM,\R)~~\forall Z\in \cX(\cM,\cS)
\ee
and the map $\Diff(\cM)\ni \psi\rightarrow \psi_\ast\in
\Aut_\R(\cX(\cM,\cS))$ is a morphism of groups.  Let $f\in
\Aut^\ub(\cS)$ be an unbased automorphism of $\cS$.

\begin{Definition}
The {\em $\cS$-twisted push-forward along $f$} is the $\R$-linear map
$f_\ast:\cX(\cM,\cS)\rightarrow \cX(\cM,\cS)$ defined through:
\be
f_\ast\eqdef (f_0)_\ast\circ (\id_{\cX(\cM)}\otimes {\hat f})~~,
\ee
where ${\hat f}\in \Isom(\cS,\cS^{f_0})$ is the based isomorphism of
vector bundles induced by $f$ (see Subsection
\ref{subsec:basedinduced}).
\end{Definition}

\

\noindent Notice that $f_\ast$ is uniquely-determined by $\R$-linearity 
and by the property:
\ben
\label{fastdec}
f_\ast(X\otimes \xi)=(f_0)_\ast(X)\otimes {\hat f}(\xi)^{f_0^{-1}}=(f_0)_\ast(X)\otimes \mf(\xi)~~,~~\forall X\in \cX(\cM)~~\forall \xi\in \Gamma(\cM,\cS)~~,
\een
where in the last line we used relation \eqref{mf}. The map
$\Aut^\ub(\cS)\ni f \rightarrow f_\ast\in \Aut_\R(\cX(\cM,\cS))$ is a
morphism of groups.

\

\noindent For any $f\in \Aut^\ub(\cS)$, let $L_{{\hat f}}:End(\cS)\rightarrow
Hom(\cS,\cS^{f_0})$ and $R_{\hat f}:End(\cS^{f_0})\rightarrow
Hom(\cS,\cS^{f_0})$ be the based morphisms of vector bundles whose action
on sections are given by:
\beqa
& & L_{{\hat f}}(A)={\hat f}\circ A\in \Gamma(\cM,Hom(\cS,\cS^{f_0}))~~,~~\forall A\in \Gamma(\cM,End(\cS))~~\nn\\
& & R_{{\hat f}}(A)=B \circ {\hat f}\in \Gamma(\cM,Hom(\cS,\cS^{f_0}))~~,~~\forall B\in \Gamma(\cM,End(\cS^{f_0}))~~,
\eeqa
where ${\hat f}\in \Isom(\cS,\cS^{f_0})$ is the based isomorphism
induced by $f$. For any vector bundle $\cT$ over $\cM$, we extend 
$R_{\hat f}$ to a based morphism of vector bundles
$R_{\hat f}\eqdef \id_\cT\otimes R_{\hat f}:\cT \otimes
End(\cS^{f_0})\rightarrow \cT \otimes Hom(\cS,\cS^{f_0})$.
For any $\Psi\in \cX(\cM,End(\cS))=\Gamma(\cM,T\cM\otimes End(\cS))$, 
we have $\Ad(f)_\ast(\Psi)^{f_0}\in \Gamma(\cM,(T\cM)^{f_0}\otimes End(\cS^{f_0}))$ and 
$R_{\hat f}(\Ad(f)_\ast(\Psi)^{f_0})\in \Gamma(\cM,(T\cM)^{f_0}\otimes \Hom(\cS,\cS^{f_0}))$.

\begin{Proposition}
For any $f\in \Aut^\ub(\cS)$ and any $\Psi\in \cX(\cM,End(\cS))$, we have:
\ben
\label{Adpush}
R_{\hat f}(\Ad(f)_\ast (\Psi)^{f_0})=(\widehat{\dd f_0}\otimes L_{{\hat f}}) (\Psi)\in \Gamma(\cM, (T\cM)^{f_0}\otimes Hom(\cS,\cS^{f_0}))~~,
\een
where we identify $End(\cS)^{f_0}$ with $End(\cS^{f_0})$~. 
\end{Proposition}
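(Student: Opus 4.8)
The plan is to verify \eqref{Adpush} on decomposable tensors and then extend by additivity. Since both sides are $\R$-linear in $\Psi$ and every element of $\cX(\cM,End(\cS))=\cX(\cM)\otimes_{\cC^\infty(\cM,\R)}\Gamma(\cM,End(\cS))$ is a finite sum of terms $X\otimes A$ with $X\in \cX(\cM)$ and $A\in \Gamma(\cM,End(\cS))$, it suffices to treat $\Psi=X\otimes A$. First I would apply \eqref{fastdec} to the unbased automorphism $\Ad(f)$ of $End(\cS)$, whose projection to $\cM$ is $\Ad(f)_0=f_0$ and whose action on ordinary sections of $End(\cS)$ is $\bAd(f)$ (by definition of $\bAd$, compatibly with \eqref{hatwidehat} and \eqref{AdHatAd}). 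This gives $\Ad(f)_\ast(X\otimes A)=(f_0)_\ast(X)\otimes \bAd(f)(A)$, and hence, since the $f_0$-pullback distributes over the tensor product,
\be
(\Ad(f)_\ast(X\otimes A))^{f_0}=((f_0)_\ast(X))^{f_0}\otimes (\bAd(f)(A))^{f_0}~~.
\ee

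The two pulled-back factors I would then compute separately. For the tangent factor, relation \eqref{psiastdef} with $\psi=f_0$ reads $(f_0)_\ast(X)=[\widehat{\dd f_0}(X)]^{f_0^{-1}}$, exhibiting $(f_0)_\ast(X)$ as the $f_0^{-1}$-pullback of the section $\widehat{\dd f_0}(X)$ of $(T\cM)^{f_0}$. Applying the $f_0$-pullback undoes this operation, since pullback of sections is functorial and satisfies $(s^{f_0^{-1}})^{f_0}=s$ for any section $s$ of $(T\cM)^{f_0}$ (a one-line check on fibers using $(s^{\psi^{-1}})_p=s_{\psi^{-1}(p)}$). Thus $((f_0)_\ast(X))^{f_0}=\widehat{\dd f_0}(X)$. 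For the endomorphism factor I would argue fiberwise: by \eqref{bAdf2} we have $(\bAd(f)(A))^{f_0}_p=(\bAd(f)(A))_{f_0(p)}=f_p\circ A_p\circ (f_p)^{-1}\in \End((\cS^{f_0})_p)$, and composing on the right with $\hat f$ (the definition of $R_{\hat f}$, extended trivially over the $(T\cM)^{f_0}$ leg) while using ${\hat f}_p=f_p$ under the identification $(\cS^{f_0})_p=\cS_{f_0(p)}$ yields
\be
R_{\hat f}\big((\bAd(f)(A))^{f_0}\big)_p=f_p\circ A_p\circ (f_p)^{-1}\circ f_p=f_p\circ A_p={\hat f}_p\circ A_p=L_{\hat f}(A)_p~~.
\ee

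Combining the two factor computations gives $R_{\hat f}\big((\Ad(f)_\ast(X\otimes A))^{f_0}\big)=\widehat{\dd f_0}(X)\otimes L_{\hat f}(A)=(\widehat{\dd f_0}\otimes L_{\hat f})(X\otimes A)$, which is the right-hand side of \eqref{Adpush}; additivity then finishes the proof. I expect the only real obstacle to be bookkeeping rather than substance: one must consistently invoke the canonical identifications $(T\cM\otimes End(\cS))^{f_0}\cong (T\cM)^{f_0}\otimes End(\cS^{f_0})$ and $End(\cS)^{f_0}\cong End(\cS^{f_0})$, and carefully distinguish the push-forward $(f_0)_\ast$ (which is built from $\widehat{\dd f_0}$ together with an implicit $f_0^{-1}$-pullback) from the pullback $(\cdot)^{f_0}$. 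The whole identity hinges on the single cancellation that the outer $f_0$-pullback exactly undoes the $f_0^{-1}$-pullback hidden inside $(f_0)_\ast$, together with the fiberwise cancellation $(f_p\circ A_p\circ f_p^{-1})\circ f_p=f_p\circ A_p$.
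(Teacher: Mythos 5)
Your proof is correct and follows essentially the same route as the paper's: reduce to decomposable $\Psi=X\otimes A$, apply \eqref{fastdec} to the unbased automorphism $\Ad(f)$, identify $((f_0)_\ast(X))^{f_0}=\widehat{\dd f_0}(X)$ via \eqref{psiastdef}, and conclude by the cancellation $(f_p\circ A_p\circ f_p^{-1})\circ f_p=f_p\circ A_p$. The only cosmetic difference is that you verify the endomorphism factor fiberwise through \eqref{bAdf2} and ${\hat f}_p=f_p$, whereas the paper invokes \eqref{AdHatAd} to write $[\bAd(f)(A)]^{f_0}=\Ad({\hat f})(A)$ and then cancels at the level of sections.
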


\begin{proof}
It suffices to consider the case $\Psi=X\otimes A$ with $X\in
\cX(\cM)$ and $A\in \End(\cS)$. Then:
\beqa
R_{\hat f}(\Ad(f)_\ast (\Psi)^{f_0})&=& R_{\hat
  f}([(f_0)_\ast(X)]^{f_0} \otimes [\bAd(f)(A)]^{f_0})=R_{\hat f}
(\widehat{\dd f_0}(X)\otimes \Ad({\hat f})(A))\\ &=& \widehat{\dd
  f_0}(X)\otimes R_{\hat f}(\Ad({\hat f})(A))= \widehat{\dd
  f_0}(X)\otimes L_{{\hat f}}(A)= (\widehat{\dd f_0}\otimes L_{{\hat
    f}})(\Psi)~~,
\eeqa
where in the first equality we used \eqref{fastdec} while in the
second equality we used relations \eqref{psiastdef} and
\eqref{AdHatAd}.
\end{proof}

\subsection{Twisted differential pull-back of bundle-valued forms}

Let $\psi\in \Diff(\cM)$ and $\cS$ be a vector bundle over $\cM$. The
differential pull-back of $\cS$-valued forms is the $\R$-linear map
$\psi^\ast:\Omega(\cM,\cS)\rightarrow \Omega(\cM,\cS^\psi)$ defined as
follows. For any $\omega\in \Omega^k(\cM,\cS)$ and any $X_1\ldots
X_k\in \cX(\cM)$, we have:
\be
\psi^\ast(\omega)(X_1,\ldots, X_k)=\omega^\psi(\psi_\ast(X_1),\ldots, \psi_\ast(X_k))\in \Omega^k(\cM,\cS^\psi)~~,
\ee
where $\psi_\ast:\cX(\cM)\rightarrow \cX(\cM)$ is the ordinary
push-forward of vector fields (in the conventions of Subsection
\ref{subsec:pushpull}) and $\omega^\psi\in \Omega(\cM,\cS^\psi)$ is
the topological pull-back of $\omega$. Thus:
\be
\psi^\ast(\omega)_p(u_1,\ldots, u_k)=\omega_{\psi(p)}((\dd_p\psi)(u_1),\ldots, (\dd_p\psi)(u_k))
\ee
for all $p\in \cM$ and all $u_1,\ldots, u_k\in T_p\cM$. Notice that
$\psi^\ast$ is uniquely determined by $\R$-linearity and by the
property:
\ben
\label{psiastS}
\psi^\ast(\rho\otimes \xi)=\psi^\ast(\rho)\otimes \xi^\psi~~,~~\forall  \rho\in \Omega(\cM)~~\forall \xi\in \Gamma(\cM,\cS)~~,
\een
where $\psi^\ast(\rho)$ is the ordinary differential pull-back of the
form $\rho\in \Omega(\cM)$ (in the conventions of Subsection
\ref{subsec:pushpull}). We have:
\be
\psi^\ast(\alpha\omega)=(\alpha\circ \psi) \psi^\ast(\omega)~~,~~\forall \alpha\in \cC^\infty(\cM,\R)~~\forall \omega\in \Omega(\cM,\cS)~~
\ee
and the map $\Diff(\cM)\ni \psi\rightarrow \psi^\ast\in
\Aut_\R(\Omega(\cM,\cS))^{\mathrm{op}}$ is a morphism of groups.

\

\noindent Let $f\in \Aut^\ub(\cS)$ be an unbased automorphism of $\cS$. 

\begin{Definition}
The {\em $\cS$-twisted differential pull-back} through $f$ is the
isomorphism of graded $\R$-vector spaces:
\be
f^\ast\eqdef (\id_{\Omega(\cM)}\otimes {\hat
  f}^{-1})\circ f_0^\ast:\Omega(\cM,\cS)\rightarrow \Omega(\cM,\cS)~~,
\ee
where ${\hat f}\in \Isom(\cS,\cS^{f_0})$ is the based isomorphism
induced by $f$.
\end{Definition}

\

\noindent We shall occasionally denote the isomorphism of
$\cC^\infty(\cM,\R)$-modules $\id_{\Omega(\cM)}\otimes {\hat f}^{-1}$
simply by ${\hat f}^{-1}$.

\begin{Proposition}
\label{prop:twistedpullback}
The $\cS$-twisted differential pull-back $f^\ast$ is uniquely
determined by $\R$-linearity and by the property:
\ben
\label{fast}
f^\ast(\rho\otimes \xi)=f_0^\ast(\rho)\otimes \mf^{-1}(\xi)~~\forall \rho\in \Omega(\cM)~~\forall \xi\in \Gamma(\cM,\cS)~~.
\een
Moreover, the map $\Omega(\cM,\cS)\ni f\rightarrow f^\ast\in
\Aut_\R(\Omega(\cM,\cS))^{\mathrm{op}}$ is a morphism of groups and we have:
\beqa
&& f^\ast(\xi)=\mf^{-1}(\xi)~~,~~\forall \xi\in \Gamma(\cM,\cS)\\
&& f^\ast(\alpha\omega)=(\alpha\circ f_0)f^\ast(\omega)~~\forall \alpha\in \cC^\infty(\cM,\R)~~,~~\forall \omega\in \Omega(\cM,\cS)\\
&& f^\ast(\omega_1\wedge \omega_2)=f^\ast(\omega_1)\wedge f^\ast(\omega_2)~~,~~\forall \omega_1, \omega_2\in \Omega(\cM,\cS)~~.
\eeqa
\end{Proposition}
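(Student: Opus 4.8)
The plan is to reduce every assertion to the behavior of $f^\ast$ on decomposable bundle-valued forms $\rho\otimes\xi$ with $\rho\in\Omega(\cM)$ and $\xi\in\Gamma(\cM,\cS)$, since such elements span $\Omega(\cM,\cS)$ over $\R$ and every operator involved is $\R$-linear. First I would establish the characterization \eqref{fast}, from which all the remaining claims follow formally. Unwinding the definition $f^\ast=(\id_{\Omega(\cM)}\otimes{\hat f}^{-1})\circ f_0^\ast$ on a decomposable form and applying relation \eqref{psiastS} for the differential pull-back of bundle-valued forms gives
\[
f^\ast(\rho\otimes\xi)=(\id_{\Omega(\cM)}\otimes{\hat f}^{-1})\big(f_0^\ast(\rho)\otimes\xi^{f_0}\big)=f_0^\ast(\rho)\otimes{\hat f}^{-1}(\xi^{f_0})~~.
\]
The identity \eqref{mfinv}, namely ${\hat f}^{-1}(\xi^{f_0})=\mf^{-1}(\xi)$, then converts the right-hand factor into $\mf^{-1}(\xi)$, yielding \eqref{fast}. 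Uniqueness is automatic: any $\R$-linear operator agreeing with the right-hand side of \eqref{fast} on decomposable forms must coincide with $f^\ast$.

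Next I would read off the three listed identities from \eqref{fast}. Taking $\rho=1\in\Omega^0(\cM)=\cC^\infty(\cM,\R)$, for which $f_0^\ast(1)=1$, immediately gives $f^\ast(\xi)=\mf^{-1}(\xi)$ for all $\xi\in\Gamma(\cM,\cS)$. For the twisting of the outer module structure, I would test $f^\ast(\alpha\omega)$ on $\omega=\rho\otimes\xi$, writing $\alpha\omega=(\alpha\rho)\otimes\xi$ and invoking the scalar property $f_0^\ast(\alpha\rho)=(\alpha\circ f_0)f_0^\ast(\rho)$ from the conventions of Subsection \ref{subsec:pushpull}; factoring out $\alpha\circ f_0$ and comparing with \eqref{fast} produces $f^\ast(\alpha\omega)=(\alpha\circ f_0)f^\ast(\omega)$. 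The compatibility with the wedge product follows in the same spirit, since the wedge acts only on the differential-form factors while $\mf^{-1}$ is applied slotwise, so that the claim reduces to the standard fact that $f_0^\ast$ is an algebra morphism for the wedge product on $\Omega(\cM)$.

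Finally, to show that $f\mapsto f^\ast$ is an anti-morphism of groups into $\Aut_\R(\Omega(\cM,\cS))^{\mathrm{op}}$, i.e. $(f_1 f_2)^\ast=f_2^\ast\circ f_1^\ast$, I would again evaluate both sides on $\rho\otimes\xi$ using \eqref{fast}. On the left I would use that $f\mapsto f_0$ is a morphism of groups (so $(f_1 f_2)_0=f_{1,0}\circ f_{2,0}$), that $\psi\mapsto\psi^\ast$ on $\Omega(\cM)$ is a group anti-morphism, and that $f\mapsto\mf$ is a morphism of groups (whence the section action of $f_1 f_2$ is $\mf_1\circ\mf_2$ and its inverse is $\mf_2^{-1}\circ\mf_1^{-1}$); the right side then matches after one application of \eqref{fast} in each slot. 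I expect the only genuine subtlety — and hence the main point to get right — to be the bookkeeping among the several closely related operations, namely $f_0^\ast$ on scalar versus bundle-valued forms, the topological pull-back $\xi^{f_0}$, the section action $\mf$, and the induced based isomorphism ${\hat f}$. In particular, the crucial step is the correct invocation of \eqref{mfinv}, which is exactly the bridge between the ``topological'' factor ${\hat f}^{-1}(\xi^{f_0})$ produced by the definition and the ``section-level'' factor $\mf^{-1}(\xi)$ appearing in \eqref{fast}; once this is in place, everything else is a routine spanning-plus-linearity argument.
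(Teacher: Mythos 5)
Your proposal is correct and follows essentially the same route as the paper's own proof: unwind the definition $f^\ast=(\id_{\Omega(\cM)}\otimes{\hat f}^{-1})\circ f_0^\ast$ on decomposable forms via \eqref{psiastS}, convert ${\hat f}^{-1}(\xi^{f_0})$ to $\mf^{-1}(\xi)$ via \eqref{mfinv}, and deduce the remaining identities from \eqref{fast} by $\R$-linearity together with the known (anti)morphism properties of $f\mapsto f_0$, $\psi\mapsto\psi^\ast$ and $f\mapsto\mf$. The only difference is one of exposition: the paper leaves these last deductions as "immediate," whereas you spell them out, which is harmless and arguably clearer.
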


\begin{proof}
For any $\rho\in \Omega(\cM)$ and $\xi\in \Gamma(\cM,\cS)$, we have:
\be
f^\ast(\rho\otimes \xi)=f_0^\ast(\rho)\otimes {\hat f}^{-1}(\xi^{f_0})=f_0^\ast(\rho)\otimes \mf^{-1}(\xi)~~,
\ee
where we used relations \eqref{psiastS} and \eqref{mfinv}. It is clear
that $\R$-linearity and this property determine $f^\ast$. The
remaining statements follow immediately from this and from the
properties of $\mf^{-1}$ and of the ordinary differential pull-back of forms.
\end{proof}

\begin{Proposition}
For any $X\in \cX(\cM)$ and any $\omega\in \Omega(\cM,\cS)$, we have: 
\ben
\label{pullpush}
f^\ast(\omega)(X)=\mf^{-1}(\omega(f_{0\ast}(X)))\in \Gamma(\cM,\cS)~~.
\een
\end{Proposition}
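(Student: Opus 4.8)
The plan is to reduce to decomposable one-forms and then carefully track the scalar factors produced by composition with $f_0$. Since both sides of \eqref{pullpush} are $\R$-linear in $\omega$, and contraction with a single vector field $X$ produces an element of $\Gamma(\cM,\cS)$, it suffices to establish the identity for $\omega=\rho\otimes \xi$ with $\rho\in \Omega^1(\cM)$ and $\xi\in \Gamma(\cM,\cS)$; the general case then follows by extending $\R$-linearly over a local frame of $\cS$ and a spanning set of one-forms.

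For such a decomposable $\omega$, I would first compute the left-hand side using the defining property \eqref{fast} of the twisted differential pull-back, which gives $f^\ast(\rho\otimes \xi)=f_0^\ast(\rho)\otimes \mf^{-1}(\xi)$. Contracting with $X$ factors out the scalar function $f_0^\ast(\rho)(X)\in \cC^\infty(\cM,\R)$, so that $f^\ast(\omega)(X)=[f_0^\ast(\rho)(X)]\,\mf^{-1}(\xi)$. The key input here is the convention \eqref{pullpush0} relating the ordinary differential pull-back and ordinary push-forward, namely $f_0^\ast(\rho)(X)=\rho(f_{0\ast}(X))\circ f_0$, which converts the pulled-back form evaluated on $X$ into the original form evaluated on the push-forward $f_{0\ast}(X)$, at the cost of a precomposition with $f_0$. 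This yields $f^\ast(\omega)(X)=[\rho(f_{0\ast}(X))\circ f_0]\,\mf^{-1}(\xi)$.

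Next I would expand the right-hand side: $\omega(f_{0\ast}(X))=\rho(f_{0\ast}(X))\,\xi$, where $\rho(f_{0\ast}(X))\in \cC^\infty(\cM,\R)$, so that the target expression is $\mf^{-1}\!\big(\rho(f_{0\ast}(X))\,\xi\big)$. To evaluate this I would use the module-twisting rule \eqref{mfmod}, $\mf(\alpha\xi)=(\alpha\circ f_0^{-1})\mf(\xi)$, in its inverse form: substituting $\eta=\mf(\xi)$ and rearranging shows $\mf^{-1}(\beta\eta)=(\beta\circ f_0)\,\mf^{-1}(\eta)$ for all $\beta\in \cC^\infty(\cM,\R)$. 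Applying this with $\beta=\rho(f_{0\ast}(X))$ gives $\mf^{-1}(\omega(f_{0\ast}(X)))=[\rho(f_{0\ast}(X))\circ f_0]\,\mf^{-1}(\xi)$, which matches the left-hand side computed above, completing the decomposable case.

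I expect the main obstacle to be purely bookkeeping rather than conceptual: one must keep scrupulous track of whether each scalar function is composed with $f_0$ or with $f_0^{-1}$, since the convention \eqref{pullpush0} and the twisting rule \eqref{mfmod} each introduce such a composition, and the proof works precisely because these two compositions with $f_0$ agree and cancel against the non-module-linearity of $\mf^{-1}$. A secondary point to state cleanly is the reduction step: I should remark that evaluating on a single $X$ forces the relevant degree to be $1$, so that $f^\ast(\omega)(X)$ is indeed a section, and that $\R$-linearity of both $f^\ast$ (Proposition \ref{prop:twistedpullback}) and of contraction legitimizes passing from decomposable one-forms to arbitrary $\omega$.
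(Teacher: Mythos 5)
Your proof is correct and follows essentially the same route as the paper's: reduce to $\omega=\rho\otimes\xi$, compute the left-hand side via \eqref{fast} and the convention \eqref{pullpush0}, then match the two sides. The only cosmetic difference is the final step, where the paper invokes \eqref{mfinv} together with $\cC^\infty$-linearity of the based isomorphism ${\hat f}^{-1}$, whereas you derive and apply the inverse form of the twisting rule \eqref{mfmod}; these two bookkeeping devices are interchangeable.
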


\begin{proof}
It suffices to consider the case $\omega=\rho\otimes \xi$, with
$\rho\in \Omega(\cM)$ and $\xi\in \Gamma(\cM,\cS)$.  Then: 
\be
f^\ast(\omega)(X)=f_0^\ast(\rho)(X)\mf^{-1}(\xi)=\rho(f_{0\ast}(X))^{f_0}{\hat f}^{-1}(\xi^{f_0})={\hat f}^{-1}(\omega(f_{0\ast}(X))^{f_0})=\mf^{-1}(\omega(f_{0\ast}(X)))~~,
\ee
where we used \eqref{pullpush0} and \eqref{mfinv} and the fact that
${\hat f}^{-1}$ is a based isomorphism.
\end{proof}

\begin{Proposition}
For all $\Theta\in \Omega(\cM,End(\cS))$, $\xi\in \Gamma(\cM,\cS)$ and
$T\in \Gamma(\cM,End(\cS))$, we have:
\beqan
&& \Ad(f)^\ast(\Theta)(\xi)=f^\ast(\Theta(\mf(\xi)))\in \Omega(\cM,\cS)~~\label{Adfast}\\
&& \Ad(f)^\ast(T\Theta)=(\id_{\Omega(\cM)}\otimes\bAd(f^{-1})(T))(f^\ast(\Theta))\in \Omega(\cM, End(\cS))~~.\label{TAdfast}
\eeqan
\end{Proposition}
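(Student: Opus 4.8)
The plan is to reduce both identities to the case of a decomposable $End(\cS)$-valued form by $\R$-linearity of all operators involved, and then to compute both sides explicitly using the characterizing property of the twisted differential pull-back established in Proposition \ref{prop:twistedpullback}. The essential observation is that $\Ad(f)$ is an unbased automorphism of $End(\cS)$ whose projection to $\cM$ is again $f_0$ and whose action on ordinary sections of $End(\cS)$ is precisely $\bAd(f)$; hence Proposition \ref{prop:twistedpullback}, applied with $\cS$ replaced by $End(\cS)$ and $f$ replaced by $\Ad(f)$, gives
$$\Ad(f)^\ast(\rho\otimes T)=f_0^\ast(\rho)\otimes \bAd(f)^{-1}(T)=f_0^\ast(\rho)\otimes \bAd(f^{-1})(T)$$
for all $\rho\in \Omega(\cM)$ and $T\in \Gamma(\cM,End(\cS))$, where the second equality uses that $\bAd$ is a morphism of groups (Proposition \ref{prop:bAd}). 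In \eqref{TAdfast} I read $f^\ast$ acting on an $End(\cS)$-valued form as the $End(\cS)$-twisted differential pull-back $\Ad(f)^\ast$, which is consistent with \eqref{hatwidehat} since $\widehat{\Ad(f)}=\Ad({\hat f})$.

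For \eqref{Adfast}, I would take $\Theta=\rho\otimes T$. The left-hand side then becomes $f_0^\ast(\rho)\otimes [\bAd(f^{-1})(T)](\xi)$, and relation \eqref{bAdxiinv} rewrites $[\bAd(f^{-1})(T)](\xi)=\mf^{-1}(T(\mf(\xi)))$. The right-hand side is $f^\ast(\rho\otimes T(\mf(\xi)))$, which by the defining property \eqref{fast} of the $\cS$-twisted pull-back equals $f_0^\ast(\rho)\otimes \mf^{-1}(T(\mf(\xi)))$. The two expressions coincide, which proves \eqref{Adfast}.

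For \eqref{TAdfast}, I would again take $\Theta=\rho\otimes S$, so that $T\Theta=\rho\otimes(T\circ S)$. The left-hand side is $f_0^\ast(\rho)\otimes \bAd(f^{-1})(T\circ S)$, and since $\bAd(f^{-1})$ is a unital endomorphism of the composition algebra $(\End(\cS),\circ)$ (Proposition \ref{prop:bAd}) this factors as $f_0^\ast(\rho)\otimes [\bAd(f^{-1})(T)\circ \bAd(f^{-1})(S)]$. On the right-hand side, $f^\ast(\Theta)=\Ad(f)^\ast(\rho\otimes S)=f_0^\ast(\rho)\otimes \bAd(f^{-1})(S)$; applying the operator $\id_{\Omega(\cM)}\otimes \bAd(f^{-1})(T)$, which acts by left composition with the fixed endomorphism $\bAd(f^{-1})(T)$, yields exactly the same expression. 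Hence \eqref{TAdfast} holds.

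These are essentially bookkeeping identities, so I expect no serious mathematical obstacle. The one point requiring genuine care is the correct invocation of Proposition \ref{prop:twistedpullback} for the bundle $End(\cS)$: one must check that the role played by $\mf$ there is played here by $\bAd(f)$, and that $\widehat{\Ad(f)}=\Ad({\hat f})$ via \eqref{hatwidehat}, so that the $End(\cS)$-twisted pull-back through $\Ad(f)$ really is $(\id_{\Omega(\cM)}\otimes \Ad({\hat f})^{-1})\circ f_0^\ast$. A secondary, purely notational subtlety is interpreting $f^\ast$ on $End(\cS)$-valued forms in \eqref{TAdfast} as $\Ad(f)^\ast$; once this is fixed, both identities follow mechanically from \eqref{fast}, \eqref{bAdxi}, \eqref{bAdxiinv} and the algebra-morphism property of $\bAd$.
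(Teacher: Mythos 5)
Your proof is correct and follows essentially the same route as the paper's: reduce to a decomposable $\Theta=\rho\otimes A$, apply the twisted pull-back formula \eqref{fast} to the bundle $End(\cS)$ with the unbased automorphism $\Ad(f)$ (whose action on ordinary sections is $\bAd(f)$), then invoke \eqref{bAdxiinv} for the first identity and the algebra-morphism property of $\bAd$ from Proposition \ref{prop:bAd} for the second. Your explicit justification of the key step $\Ad(f)^\ast(\rho\otimes T)=f_0^\ast(\rho)\otimes \bAd(f^{-1})(T)$, and of the reading of $f^\ast(\Theta)$ as $\Ad(f)^\ast(\Theta)$ in \eqref{TAdfast}, merely makes precise what the paper uses implicitly.
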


\begin{proof}
It suffices to consider the case $\Theta=\rho\otimes A$, where $\rho\in
\Omega(\cM)$ and $A\in \Gamma(\cM,End(\cS))$. Then:
\be
\Ad(f)^\ast(\Theta)(\xi)=(f_0^\ast(\rho)\otimes \bAd(f)^{-1}(A))(\xi)=f_0^\ast(\rho)\otimes \mf^{-1}(A(\mf(\xi)))=f^\ast(\Theta(\mf(\xi)))~~,
\ee
where we used relation \eqref{bAdxiinv}. We have $T\Theta=\rho\otimes
TA$ and:
\be
\Ad(f)^\ast(T\Theta)=f_0^\ast(\rho)\otimes \bAd(f)^{-1}(TA)=f_0^\ast(\rho)\otimes \bAd(f^{-1})(T)\bAd(f^{-1})(A)=(\id_{\Omega(\cM)}\otimes \bAd(f^{-1})(T))(\Ad(f)^\ast(\Theta))~~,
\ee
where we used the fact that $\bAd$ is a morphism of groups from
$\Aut^\ub(\cS)$ to $\Aut_\Alg(\End(\cS))$ (see Proposition
\ref{prop:bAd}).
\end{proof}

\noindent Let $\cG$ be a Riemannian metric on $\cM$. We extend the
musical isomorphism $\sharp_\cG:\Omega^1(\cM)\rightarrow \cX(\cM)$ of
$\cG$ to an isomorphism $\sharp_\cG\eqdef \sharp_\cG\otimes
\id_{\cS}:\Omega^1(\cM,\cS)\rightarrow \cX(\cM,\cS)$.

\begin{Proposition}
\label{prop:pullpush}
For any $\Theta\in \Omega^1(\cM, \cS)$, we have:
\be
[f^\ast(\Theta)]^{\sharp_{f_0^\ast(\cG)}}=(f^{-1})_\ast (\Theta^{\sharp_\cG})~~.
\ee
\end{Proposition}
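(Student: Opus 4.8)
The plan is to reduce to a decomposable form and then isolate the single genuinely geometric input, which is a naturality identity for the musical isomorphism under pull-back by a diffeomorphism. Both sides of the asserted identity are $\R$-linear in $\Theta$ and compatible with the $\cC^\infty(\cM,\R)$-module structure, so it suffices to verify it for $\Theta=\rho\otimes \xi$ with $\rho\in \Omega^1(\cM)$ and $\xi\in \Gamma(\cM,\cS)$. For such $\Theta$ I will expand each side using the decomposition rules already established: on the left, \eqref{fast} gives $f^\ast(\Theta)=f_0^\ast(\rho)\otimes \mf^{-1}(\xi)$, and since the extended musical isomorphism $\sharp_{f_0^\ast(\cG)}=\sharp_{f_0^\ast(\cG)}\otimes \id_\cS$ acts only on the form factor, the left-hand side equals $\sharp_{f_0^\ast(\cG)}(f_0^\ast\rho)\otimes \mf^{-1}(\xi)$; on the right, $\Theta^{\sharp_\cG}=(\sharp_\cG\rho)\otimes \xi$ and the twisted push-forward decomposition (the analogue of \eqref{fastdec} for $f^{-1}$) gives $(f^{-1})_\ast(\Theta^{\sharp_\cG})=(f_0^{-1})_\ast(\sharp_\cG\rho)\otimes \mathbf{(f^{-1})}(\xi)$.

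The crux is the purely Riemannian lemma
\be
\sharp_{f_0^\ast(\cG)}(f_0^\ast\rho)=(f_0^{-1})_\ast(\sharp_\cG\rho)~~,~~\forall \rho\in \Omega^1(\cM)~~,
\ee
which I would prove directly from the defining property of $\sharp$, namely that $V\eqdef \sharp_{f_0^\ast(\cG)}(f_0^\ast\rho)$ is characterized by $(f_0^\ast\cG)_p(V_p,W_p)=(f_0^\ast\rho)_p(W_p)$ for all $W_p\in T_p\cM$. Unwinding both sides via $(f_0^\ast\cG)_p(V_p,W_p)=\cG_{f_0(p)}(\dd(f_0)_p V_p,\dd(f_0)_p W_p)$ and $(f_0^\ast\rho)_p(W_p)=\rho_{f_0(p)}(\dd(f_0)_p W_p)$, and using that $\dd(f_0)_p$ is an isomorphism so that $\dd(f_0)_p W_p$ ranges over all of $T_{f_0(p)}\cM$, one obtains $\dd(f_0)_p V_p=(\sharp_\cG\rho)_{f_0(p)}$, i.e. $V_p=(\dd(f_0)_p)^{-1}(\sharp_\cG\rho)_{f_0(p)}$. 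Comparing this with the push-forward convention of Subsection \ref{subsec:pushpull} (which gives $(f_0^{-1})_\ast(\sharp_\cG\rho)_p=\dd(f_0^{-1})_{f_0(p)}(\sharp_\cG\rho)_{f_0(p)}=(\dd(f_0)_p)^{-1}(\sharp_\cG\rho)_{f_0(p)}$ by the chain rule) yields the lemma.

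With the lemma in hand, the two expansions become $(f_0^{-1})_\ast(\sharp_\cG\rho)\otimes \mf^{-1}(\xi)$ on the left and $(f_0^{-1})_\ast(\sharp_\cG\rho)\otimes \mathbf{(f^{-1})}(\xi)$ on the right, so it only remains to note $\mathbf{(f^{-1})}=\mf^{-1}$. This follows because the map $f\mapsto \mf$ is a morphism of groups (Proposition in Subsection \ref{subsec:ordsecaction}), hence the action on ordinary sections of $f^{-1}$ is the inverse of $\mf$. The two sides then coincide, completing the argument. I expect the only delicate point to be bookkeeping in the scalar lemma: one must track the basepoint shifts $p\mapsto f_0(p)$ carefully so that the paper's specific push-forward convention $\psi_\ast(X)_{\psi(p)}=(\dd\psi)_p(X_p)$ and its differential pull-back convention produce $f_0^{-1}$ rather than $f_0$ in the push-forward; everything else is formal manipulation of the decomposition rules \eqref{fast} and \eqref{fastdec}.
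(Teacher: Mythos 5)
Your proposal is correct and follows essentially the same route as the paper: reduce to decomposable $\Theta=\rho\otimes\xi$, establish the key identity $(f_0^\ast\rho)^{\sharp_{f_0^\ast(\cG)}}=(f_0^{-1})_\ast(\rho^{\sharp_\cG})$, and combine it with the decomposition rules \eqref{fast} and \eqref{fastdec} together with the fact that $f\mapsto\mf$ is a group morphism. The only cosmetic difference is that you prove the scalar lemma by a pointwise computation with differentials, whereas the paper derives it in one line from the duality relation \eqref{pullpush0}; the content is identical.
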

\begin{proof}
It suffices to consider the case $\Theta=\rho\otimes \xi$ with $\rho\in
\Omega(\cM)$ and $\xi\in \Gamma(\cM,\cS)$. Then
$\Theta^{\sharp_\cG}=\rho^{\sharp_\cG}\otimes \xi$. For any $X\in \cX(\cM)$, we have:
\be
f_0^\ast(\rho)(X)=\rho((f_0)_\ast(X))\circ f_0=\cG((f_0)_\ast(X), \rho^{\sharp_\cG})\circ f_0=f_0^\ast(\cG)(X, (f_0^{-1})_\ast(\rho^{\sharp_\cG}))~~,
\ee
where we used \eqref{pullpush0}. This relation implies
$(f_0^\ast(\rho))^{\sharp_{f_0^\ast(\cG)}}=(f_0^{-1})_\ast(\rho^{\sharp_\cG})$. Since
$f^\ast(\Theta)=f_0^\ast(\rho)\otimes \mf^{-1}(\xi)$, we have
$[f^\ast(\Theta)]^{\sharp_{f_0^\ast(\cG)}}=f_0^\ast(\rho)^{\sharp_{f_0^\ast(\cG)}}\otimes
\mf^{-1}(\xi)=(f_0^{-1})_\ast(\rho^{\sharp_\cG})\otimes \mf^{-1}(\xi)=(f^{-1})_\ast(\Theta^{\sharp_\cG})$.
\end{proof}

\section{Local form of the equations of motion}
\label{app:localeom}

Let $\cM$ and $M$ be smooth manifolds and $\varphi:M\rightarrow
\cM$ be a smooth function. Assume that $M$ is four-dimensional and
endowed with a Lorentzian metric $g$. Let $(\cS, D, \omega)$ be a flat
symplectic vector bundle over $\cM$ and $J$ be a taming of
$(\cS,\omega)$. Notice that we do {\em not} assume that $J$ is
covariantly constant with respect to $D$. Let $L=\lambda(J)$ be the
positive polarization of $(\cS_\C,\omega_\C)$ defined by $J$.
Pick a D-flat symplectic local frame $\cE=(e_1\ldots e_n, f_1\ldots
e_n)$ of $\cS$ defined on an open subset $\cU\subset \cM$. Then there
exists a unique smooth map $\tau:=\btau^\cE(J|_U)\in
\cC^\infty(U,\SH_n)$ such that the sections $u_k\eqdef
u_k^\cE(\tau)\eqdef f_k+\sum_{l=1}^n \tau_{kl}e_l\in \Gamma(U,\cS_\C)$
(where $k=1\ldots n$) form a frame of $L|_U$.

Consider the pulled-back bundle $\cS^\varphi\eqdef \varphi^\ast(\cS)$ over
$M$, which is endowed with the pulled-back symplectic pairing
$\omega^\varphi\eqdef \varphi^\ast(\omega)$ and the pulled-back taming
$J^\varphi\eqdef \varphi^\ast(J)$, whose associated positive
polarization is $L^\varphi\eqdef \varphi^\ast(L)\subset
\cS^{\varphi,\C}$. Let $\cE^\varphi\eqdef (e^\varphi_1\ldots
e^\varphi_n,f^\varphi_1 \ldots f^\varphi_n) = \left\{
\cE^{\varphi}_{M} \right\}_{M=1,\hdots, 2n}$ be pulled-back symplectic
frame of $\cS^\varphi$, which is defined on the open set $U\eqdef
\varphi^{-1}(\cU)\subset M$, with $e^\varphi_i\eqdef
\varphi^\ast(e_i)$ and $f^\varphi_i\eqdef \varphi^\ast(f_i)$. This
frame is flat with respect to the pulled-back connection $D^\varphi$,
which is a symplectic flat connection on $\cS^\varphi$.

\subsection{Local form of the electromagnetic equations} 

Let $\cV\in \Omega^2(M,\cS^\varphi)$ be a two-form on $M$ valued in the
bundle $\cS^\varphi$, which we expand locally as:
\ben
\label{cVexp}
\cV=_U F_k \otimes e_k^\varphi + G_k \otimes f^\varphi_k~~,
\een
where $F_k,G_k\in \Omega^2(U)$. The $J$-projection $\cV_J\in \Omega^2(M,L^\varphi)$ expands as:
\be
\cV_J=_U F_k^+ \otimes e_k^\varphi + G_k^+\otimes f^\varphi_k
\ee 
with $F_k^+,G_k^+\in \Omega^2_\C(U)$, where $\Re F_k^+=F_k$ and $\Re
G_k^+=G_k$. Define:
\be
{\hat F}^+\eqdef
\left[\begin{array}{c}F^+_1\\\ldots\\ F^+_n\end{array}\right]~,~{\hat G}^+\eqdef
\left[\begin{array}{c}G^+_1\\\ldots\\ G^+_n\end{array}\right]~~.
\ee
By Proposition \ref{prop:polrels}, we have:
\ben
\label{GFmatrix}
{\hat G}^+=\tau^\varphi {\hat F}^+~~.
\een
The bundle-valued forms $\cV\in \Omega^2(U,\cS^\varphi)$ and
$\cV_J|_U\in \Omega^2(U,L^\varphi)$ can be identified in the local
symplectic frame $\cE^\varphi$ with the $2n$-vectors:
\be
{\hat \cV} = \left[\begin{array}{c} {\hat F}\\ {\hat G}\end{array}\right]~~,~~
{\hat \cV}_J = \left[\begin{array}{c} {\hat F}^+\\ {\hat G}^+\end{array}\right]~~.
\ee
and we have ${\hat \cV}=\Re {\hat \cV}_J$. 
The form $\cV$ is positively polarized iff $\ast_g^\C {\hat F}^+=-\i {\hat
  F}^+$ and $\ast_g^\C {\hat G}^+=-\i {\hat G}^+$, which gives:
\be
{\hat F}^+={\hat F}+\i \ast {\hat F}~~,~~{\hat G}^+={\hat G}+\i \ast {\hat G}~~.
\ee
The entries of the real and imaginary parts of
$\tau$:
\be
\theta=\Re\tau\in \cC^\infty(\cU,\Mat_s(n,\R))~~,~~\gamma=\Im\tau\in \cC^\infty(\cU,\Mat_s(n,\R))
\ee
capture the gauge coupling constants and ``theta angles''.
Setting $\theta^\varphi\eqdef \theta\circ (\varphi|_U)$ and
$\gamma^\varphi\eqdef \gamma\circ (\varphi|_U)$, we have
$\tau^\varphi=\theta^\varphi+i\gamma^\varphi$.
By Proposition \ref{prop:polrels}, the positive polarization condition amounts to
the relation: 
\ben
{\hat G}=\theta^\varphi~{\hat F}-\gamma^\varphi \ast_g {\hat F}~~,
\een
which in turn is equivalent with: 
\ben
\label{twsdreal}
\ast_g {\hat \cV} =\left[\begin{array}{cc} 
    (\gamma^\varphi)^{-1}\theta^\varphi~~& -(\gamma^\varphi)^{-1} \\ 
\gamma^\varphi+\theta^\varphi(\gamma^\varphi)^{-1}\theta^\varphi & -\theta^ \varphi (\gamma^\varphi)^{-1}\end{array}\right]{\hat \cV}~~.
\een
Since $D^\varphi(e_k^\varphi)=D^\varphi(f_k^\varphi)=0$, the 2-form $\cV$ is $\dd_{D^\varphi}$-closed
on $U$ iff.:
\ben
\label{BianchiEOM}
\dd {\hat F}=\dd {\hat G} =0~~,
\een
which amounts to the following equations for ${\hat F}$: 
\be
\dd {\hat F}=\dd (\theta^\varphi {\hat F}-\gamma^\varphi \ast {\hat F})=0~~.
\ee
This recovers the Maxwell equations and Bianchi identities as written
in the supergravity literature (see \cite{Andrianopoli} or
\cite[Sec. 6]{Ortin}) if one interprets the 2-forms $F_k\in
\Omega^2(U)$ as the field strengths and $G_k\in \Omega^2(U)$ are their
Lagrangian conjugates, defined with respect to the flat symplectic
frame $\cE$ of $\Delta$. Any other flat symplectic frame of $\Delta$
defined over $U$ has the form:
\be
\cE'=M\cE~~,
\ee
with $M\in \Sp(2n,\R)$. We have ${\hat \cV}=({\hat \cV}')^T\cE'$, with ${\hat
  \cV}'=M^{-T}{\hat \cV}$ and $\tau'\eqdef \tau^{\cE'}(J)$ is given by:
\be
\tau'=M\bullet \tau~~.
\ee
This reproduces the local formulation of electromagnetic duality
transformations.  

\begin{Remark}
Notice that reference \cite{Andrianopoli} uses an unusual definition
of the integration measure (see equation (2.3) in loc. cit), which
corresponds to working with the opposite orientation of $M$. Because
of this, the matrix $\gamma$ appearing in loc. cit. corresponds to
{\em minus} our $\gamma$. Taking this into account, our modular
matrix $\tau$ corresponds to the matrix denoted by $\cN$ in loc. cit.
\end{Remark}

\subsection{Local form of the  Einstein equations} 

Recall that the Einstein equation of the ESM theory is given by:
\begin{equation}
\label{eq:Einseinap}
\frac{1}{\kappa}\G(g) = \varphi^\ast(\cG) - e_\Sigma (g,\varphi) g + 2 \cV \loslash \cV \, .
\end{equation}
Using the local duality frame $\cE$ defined on $\cU\subset \cM$ and the
local coordinate systems $(\cU,y^A)$ of $\cM$ and $(U,x^\mu)$ of $M$ (where
$\varphi(U)\subset \cU$), define:
\be
Q_{MN} = Q(e_{M},e_{N})\in \cC^\infty(\cU,\R)
\ee 
and expand $\cV^M\in \Omega(U)$ as:
\be
\cV^M=\cV^M_{\mu\nu}\dd x^\mu\wedge \dd x^\nu~~\mathrm{where}~~\cV^M_{\mu\nu}\in \cC^\infty(U,\R)~~.
\ee
Then:
\be
\cV \loslash \cV=_U Q_{MN} g^{\rho\delta} \cV^{M}_{\mu\rho} \cV^{N}_{\delta\nu} \dd x^{\mu}\otimes \dd x^{\nu}~~.
\ee
Thus \eqref{eq:Einseinap} reduces to the following system of equations when restricted to $U$:
\ben
\label{eq:Einseinlocal}
\frac{1}{\kappa}\G_{\mu\nu}(g) =_{U} \cG_{AB}\partial_{\mu}\varphi^{A}\partial_{\nu}\varphi^{B}  - \frac{1}{2}g_{\mu\nu}\cG_{AB} \partial_{\rho} \varphi^{A} 
\partial^{\rho} \varphi^{B}   -  g_{\mu\nu}\Phi(\varphi) + 2\, Q_{MN} g^{\rho\delta} \cV^{M}_{\mu\rho} \cV^{N}_{\delta\nu}\, , ~~.
\een
This agrees with the Einstein equations of the local ESM theory, see equation \eqref{eq:localeinsteinconv}.

\subsection{Local form of the scalar equations} 

Recall that the scalar equation of the ESM theory take the form:
\begin{equation}
\label{eq:scalarap}
\theta_{\Sigma}(g,\varphi)  - \frac{1}{2} (\ast\cV ,\Psi^{\varphi}\cV) = 0.
\end{equation}
Using the local coordinates and the local duality frame introduced above, we find:
\be
(\ast\cV ,\Psi^{\varphi}\cV)=_U (\ast\cV , \Psi(\varphi)\cV)^A\partial_A^\varphi~~,
\ee
where the locally-defined functions $(\ast\cV , \Psi(\varphi)\cV)^A\in \cC^\infty(U,\R)$ are given by:
\be
(\ast\cV , \Psi(\varphi)\cV)^A = \cV^{M \mu\nu}\cV^{N}_{\mu\nu} \cG^{AB}(\varphi)(\partial_B J)_{MN}(\varphi)~~.
\ee
Hence \eqref{eq:scalarap} reduces to the following system of equations when restricted to $U$:
\beqan
\label{eq:scalarlocal}
\partial^\mu \partial_\mu
\varphi^A &+& \Gamma^A_{BC}(\varphi)\partial^\mu\varphi^B\partial_\mu
\varphi^C - g^{\mu\nu} K^\rho_{\mu\nu} \partial_\rho\varphi^A + \frac{1}{2} \cG^{AB}(\varphi) \partial_{B} \Phi(\varphi) - \nn\\
&-& \frac{1}{2}\cV^{M \mu\nu} \cV^{N}_{\mu\nu} \cG^{AB}(\varphi)(\partial_B J)_{MN}(\varphi) =_U 0~~.
\eeqan
The last term in \eqref{eq:scalarlocal} can be written as:
\begin{equation}
\label{eq:rexpress}
-\frac{1}{2}\cV^{M \mu\nu} \cV^{N}_{\mu\nu}(\partial_A J)_{MN}(\varphi) =_U \partial_{A} G^{\mu\nu}_{i}(\varphi) \ast F^{i}_{\mu\nu}\, . 
\end{equation}
Plugging equation \eqref{eq:rexpress} into equation
\eqref{eq:scalarlocal} we obtain the local scalar equations of the ESM
theory, in agreement with equation \eqref{eq:scalarlocalconv}.

\subsection{Summary of the local ESM equations} 
\label{app:localESM}

For reader's convenience, we summarize the local equations of motion
of the ESM theory using our conventions, which differ from those of
reference \cite{Ortin}. The local Lagrangian density with respect to
coordinate systems $(U, x^\mu)$ on $M$ and $(\cU,y^A)$ on $\cM$ such
that $U$ and $\cU$ are simply-connected and $\varphi(U)\subset \cU$
and with respect to a local flat symplectic frame $\cE=(e_i,f_i)$ of
$\Delta$ which is defined on $\cU$ is given by:
\begin{equation}
\label{eq:Localdensity}
\mathcal{L} = \frac{1}{2\kappa} \mathrm{R}(g) -
\frac{1}{2}(\cG_{AB}\circ \varphi) \partial_\mu \varphi^A\partial^\mu
\varphi^B - (\gamma_{ij}\circ \varphi) F^{i}_{\mu\nu} F^{j \mu \nu} -
(\theta_{ij}\circ \varphi) F^{i}_{\mu\nu} (\ast_g F^j)^{\mu \nu} -
\Phi\circ \varphi\, ,
\end{equation}
where $\gamma$ and $\theta$ are $n\times n$ real-valued matrix
functions defined on $\cU$ and $\gamma(p)$ is strictly
positive-definite for any $p\in \cU$. Here $F^i$ are closed 2-forms
defined on $U$ and $\ast_g$ is the Hodge operator of $g$.  Let
\be
G_{i} = (\theta_{ij}\circ \varphi) F^{j} - (\gamma_{ij}\circ
\varphi) \ast_g F^{j}\in \Omega^2(U)~
\ee
Notice that $G_i$ depends on $\varphi$. We will use the notation: 
\be
\partial_{B} G_i^{\mu\nu}\eqdef [(\partial_B\theta_{ij})\circ \varphi] F^{j\mu\nu} - 
[(\partial_B\gamma_{ij})\circ \varphi] (\ast_g F^{j})^{\mu\nu}~~.
\ee
Let $\cV^M$ be defined through
$\cV^i=G_i$ and $\cV^{i+n}=F^i$, where the index $i$ runs from $1$ to
$n$.  Then the equations of motion derived from \eqref{eq:Localdensity} are as follows:
\begin{itemize}
\item Local form of the Einstein equations:
\ben
\label{eq:localeinsteinconv}
\frac{1}{\kappa}\G_{\mu\nu}(g) =_{U} (\cG_{AB}\circ \varphi) \partial_{\mu}\varphi^{A}\partial_{\nu}\varphi^{B}  - \frac{1}{2}g_{\mu\nu}(\cG_{AB}\circ\varphi) \partial_{\rho} \varphi^{A} 
\partial^{\rho} \varphi^{B} + 2\, Q_{MN} \cV^{M}_{\mu\rho} g^{\rho\delta} \cV^{N}_{\delta\nu} - g_{\mu\nu} \Phi^\varphi \, .
\een
\item Local form of the scalar equations:
\beqan
\label{eq:scalarlocalconv}
\nabla^\mu \partial_\mu
\varphi^A + (\Gamma^A_{BC}\circ\varphi)\,\partial^\mu\varphi^B\partial_\mu
\varphi^C  + (\cG^{AB}\circ \varphi) \partial_{B}  G_i^{\mu\nu} \ast_g F^{i}_{\mu\nu}- (\cG^{AB}\circ \varphi) (\partial_{B} \Phi)\circ \varphi  = 0\, \,
\eeqan
\item Local form of the Maxwell Equations (including the Bianchi identities):
\ben
\label{BianchiEOMOrtin}
\dd F^i=\dd G_i =0~~.
\een
\end{itemize}

\section{Integral spaces and tori}
\label{app:integral}

In this appendix, we discuss various categories of integral spaces and tori and the relation 
between them. 

\subsection{Real integral spaces and real tori}

\begin{Definition}
A {\em real integral space} is a pair $(V,\Lambda)$, where $V$ is a
finite-dimensional vector space over $\R$ and $\Lambda\subset V$ is a
full lattice, i.e. a free subgroup of the Abelian group $(V,+)$ such
that $\Lambda\otimes_\Z \R\simeq V$.
\end{Definition}

\noindent The fullness condition implies that $\Lambda$ has rank equal
to $\dim_\R V$.

\begin{Definition}
Given two integral spaces $(V_1,\Lambda_1)$ and $(V_2,\Lambda_2)$, a
{\em morphism of integral spaces} $f:(V_1,\Lambda_1)\rightarrow
(V_2,\Lambda_2)$ is an $\R$-linear map $f:V_1\rightarrow V_2$ such
that $f(\Lambda_1)\subset \Lambda_2$. Such a morphism is called {\em
  finite} if $f$ is injective.
\end{Definition}

\noindent Integral spaces and morphisms of such
form a category $\Vect_0$, while integral spaces and finite morphisms
form a subcategory $\Vect_0^f$ of $\Vect_0$. Due to the fullness
condition, any integral space $(V,\Lambda)$ is completely determined
by the lattice $\Lambda$ and any morphism
$f:(V_1,\Lambda_1)\rightarrow (V_2,\Lambda_2)$ of integral spaces is
completely determined by its restriction $f_0:\Lambda_1\rightarrow
\Lambda_2$, which is a morphism of Abelian groups. It follows that
$\Vect_0$ is equivalent with the category of free finitely-generated
Abelian groups and group morphisms. 

By definition, a {\em real torus} is a compact and connected Abelian
Lie group $X$ defined over $\R$. Given two real tori $X_1,X_2$, a {\em
  morphism of real tori} $f:X_1\rightarrow X_2$ is a morphism of Lie
groups. The morphism is called {\em finite} if $\ker f$ is a finite
subgroup of $X_1$. With this definition, real tori and morphisms of
such form a category denoted $\Tor$, while real tori and finite
morphisms form a sub-category $\Tor^f$ of $\Tor$. 

There exist functors $X:\Vect_0\rightarrow \Tor$ and $\cT:\Tor\rightarrow \Vect_0$ 
defined as follows:
\begin{itemize}
\itemsep 0.0em
\item The {\em quotient functor} $X:\Vect_0\rightarrow \Tor$
  associates to each integral space $(V,\Lambda)$ the real torus
  $X(V,\Lambda)=V/\Lambda$ and to each morphism of integral spaces
  $f:(V_1,\Lambda_1)\rightarrow (V_2,\Lambda_2)$ the morphism of
  Abelian groups $X(f):V_1/\Lambda_1\rightarrow V_2/\Lambda_2$ given
  by $X(f)([v_1])=[f(v_1)]$ for all $v_1\in V_1$.
\item The {\em tangent functor} $\cT:\Tor\rightarrow \Vect_0$
  associates to each real torus $X$ the integral vector space
  $\cT(X)\eqdef (T_0X,\ker (\exp_X))$ (where $0$ is the neutral
  element of $X$ and $\exp_X:T_0X\rightarrow X$ is the exponential map
  of $X$, which is surjective) and to each morphism of real tori
  $f:X_1\rightarrow X_2$ the morphism of real integral spaces given by
  the differential $\cT(f)\eqdef \dd_0f:\cT(X_1)\rightarrow \cT(X_2)$
  of $f$ at $0$.
\end{itemize}
The following statement is well-known from Lie theory:

\begin{Proposition}
The functors $X$ and $\cT$ are mutually quasi-inverse equivalences
between $\Vect_0$ and $\Tor$, which restrict to mutually
quasi-inverse equivalences between $\Vect_0^f$ and $\Tor^f$.
\end{Proposition}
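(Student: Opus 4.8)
The plan is to exhibit natural isomorphisms $\cT\circ X\cong \id_{\Vect_0}$ and $X\circ \cT\cong \id_{\Tor}$, and then to check that both functors respect finiteness of morphisms. First I would treat the composite $\cT\circ X$. Given an integral space $(V,\Lambda)$, the quotient map $\pi:V\to V/\Lambda$ is a local diffeomorphism near the origin, so it induces a canonical identification $T_0(V/\Lambda)\cong V$ under which the exponential map $\exp_{V/\Lambda}:T_0(V/\Lambda)\to V/\Lambda$ becomes $\pi$ itself. Hence $\ker(\exp_{V/\Lambda})=\Lambda$, yielding a canonical isomorphism $\cT(X(V,\Lambda))\cong (V,\Lambda)$. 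Naturality follows because for a morphism $f:(V_1,\Lambda_1)\to(V_2,\Lambda_2)$ the differential $\dd_0(X(f))$ is, under these identifications, exactly $f$.

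Next I would treat the composite $X\circ \cT$, which is where the one genuinely non-formal input enters. For a compact connected Abelian Lie group $X$, the exponential map $\exp_X:T_0X\to X$ is a surjective Lie group homomorphism whose kernel $\ker(\exp_X)$ is a full lattice in $T_0X$; this is the structure theorem for real tori and is the main point on which the argument rests. The first isomorphism theorem then produces a canonical isomorphism $T_0X/\ker(\exp_X)\stackrel{\sim}{\rightarrow} X$, i.e. $X(\cT(X))\cong X$. Naturality here uses the standard relation $g\circ \exp_{X_1}=\exp_{X_2}\circ \dd_0 g$ valid for any morphism of tori $g:X_1\to X_2$, which identifies $X(\cT(g))$ with $g$.

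Finally I would verify the statements about finite morphisms, establishing the dictionary between injectivity of linear maps and finiteness of torus morphisms. A morphism $f:(V_1,\Lambda_1)\to(V_2,\Lambda_2)$ is finite by definition iff $f$ is injective; I would compute $\ker(X(f))=f^{-1}(\Lambda_2)/\Lambda_1$, which is finite precisely when $f$ is injective, since in that case $f^{-1}(\Lambda_2)$ is a full lattice of $V_1$ containing $\Lambda_1$, whereas a nonzero $\ker f$ produces a positive-dimensional and hence infinite subgroup inside $\ker(X(f))$. Conversely, for $g:X_1\to X_2$ the subgroup $\ker g$ is closed in the compact group $X_1$, so it is finite iff it is discrete iff its Lie algebra $\ker(\dd_0 g)$ vanishes, i.e. iff $\cT(g)=\dd_0 g$ is injective. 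This shows that the equivalence restricts to one between $\Vect_0^f$ and $\Tor^f$.

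The only substantive obstacle is the structure theorem for compact connected Abelian Lie groups invoked in the second step, namely the surjectivity of $\exp_X$ together with the fact that its kernel is a full lattice; everything else reduces to a routine verification of naturality and of the finiteness correspondence described above.
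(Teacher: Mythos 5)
Your proof is correct. The paper does not actually prove this proposition: it is stated with only the remark that it ``is well-known from Lie theory,'' so there is no written argument to compare yours against, and your write-up supplies exactly the content that remark defers to. Specifically, you isolate the one non-formal input --- that for a compact connected Abelian Lie group $X$ the exponential $\exp_X:T_0X\to X$ is a surjective homomorphism whose kernel is a full lattice --- and reduce everything else to the canonical identification of $\exp_{V/\Lambda}$ with the quotient map (giving $\cT\circ X\cong \id_{\Vect_0}$), the relation $g\circ\exp_{X_1}=\exp_{X_2}\circ\,\dd_0 g$ (giving naturality of $X\circ\cT\cong\id_{\Tor}$), and the finiteness dictionary. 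That last part, namely $\ker(X(f))=f^{-1}(\Lambda_2)/\Lambda_1$ being finite precisely when $f$ is injective (via fullness of the lattice $f^{-1}(\Lambda_2)$ in the injective case and the positive-dimensional subgroup $(\ker f+\Lambda_1)/\Lambda_1$ otherwise), together with $\ker g$ finite iff $\ker(\dd_0 g)=0$, is the only step tied to the paper's particular definitions of $\Vect_0^f$ and $\Tor^f$, and you verify it correctly; it is also the step a reader could not simply cite from a Lie theory text, so including it is a genuine improvement over the paper's bare assertion.
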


\subsection{Integral complex spaces and complex tori}

\begin{Definition}
An {\em integral complex space} is a triplet $(\cS,J,\Lambda)$ such
that:
\begin{enumerate}[1.]
\itemsep 0.0em
\item $\cS$ is a finite-dimensional vector space over $\R$.
\item $J$ is a complex structure on $\cS$.
\item $\Lambda$ is a full lattice inside $\cS$.
\end{enumerate}
\end{Definition}

\begin{Definition}
Given two integral complex spaces $(\cS_i,J_i,\Lambda_i)$ ($i=1,2$), a
{\em morphism of integral complex spaces} from $(\cS_1,J_1,\Lambda_1)$ to
$(\cS_2,J_2,\Lambda_2)$ is an $\R$-linear map $f:\cS_1\rightarrow
\cS_2$ such that $f(J_1)=J_2$ and $f(\Lambda_1)\subset \Lambda_2$. 
The morphism is called {\em finite} if $f$ is injective as a morphism of 
vector spaces. 
\end{Definition}

\noindent With this definition, integral complex spaces and morphism
of such form a category denoted $\Comp_0$. Keeping only the finite
morphisms gives a subcategory $\Comp_0^f$.  These categories map to 
$\Vect_0$ (respectively $\Vect_0^f$) through the functor which
forgets the complex structure.

A {\em complex torus} is a compact and connected complex
Lie group; such a Lie group is necessarily commutative \cite{Mumford}. 
A {\em morphism of complex tori} is a morphism of complex
Lie groups (in particular, every such morphism is a holomorphic
map). A morphism of complex tori is called {\em finite} if it has
finite kernel. With these definitions, complex tori and morphisms
between them form a category denoted $\TorComp$, while complex tori and
finite morphisms of such form a subcategory $\TorComp^f$. These
categories are mapped respectively to $\Tor$ and $\Tor^f$
through the functor which forgets the complex structure.

There functors $X$ and $\cT$ enrich to functors $X_c:\Comp_0\rightarrow \TorComp$ and
$\cT_c:\TorComp\rightarrow \Comp_0$ defined as follows:
\begin{itemize}
\itemsep 0.0em
\item Any integral complex space $(\cS,J,\Lambda)$
  determines a complex torus $X_c(\cS,J,\Lambda)\eqdef \cS/\Lambda$, endowed
  with the addition and complex structure induced from $\cS$.  Any
  morphism $f:(\cS_1,J_1,\Lambda_1)\rightarrow (\cS_2,J_2,\Lambda_2)$
  of integral complex spaces descends to a morphism
  of complex tori $X_c(f):X(\cS_1,J_1,\Lambda_1)\rightarrow
  X(\cS_2,J_2,\Lambda_2)$. 
\item For any complex torus $X$, we set $\cT_c(X)\eqdef (T_0X, J, \ker
  (\exp_X))$, where $T_0X$ is the tangent space to $X$ at the neutral
  element $0\in X$, $\exp_X:T_0X\rightarrow X$ is the exponential map
  of $X$ (which is surjective) and $J$ is the complex structure of
  $T_0X$ induced by the complex structure of $X$. Notice that
  $\cT(X)=(T_0X,\ker (\exp_X))$. For any morphism of complex tori
  $f:X_1\rightarrow X_2$, we set $\cT_c(f)\eqdef
  \dd_0f:T_0(X_1)\rightarrow T_0(X_2)$. The properties of the
  exponential map imply that $\cT_c(f)$ is a morphism of integral
  complex spaces from $\cT_c(X_1)$ to $\cT_c(X_2)$.
\end{itemize}

\noindent The following statement summarizes the well-known quotient
description of complex tori \cite{Mumford, BL2}:

\begin{Proposition}
The functors $X_c$ and $\cT_c$ are mutually quasi-inverse equivalences
of categories between $\Comp_0$ and $\TorComp$. The restrictions of
these functors give mutually quasi-inverse equivalences between
$\Comp_0^f$ and $\TorComp^f$.
\end{Proposition}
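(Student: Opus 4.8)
The plan is to establish the equivalence $X_c \dashv \cT_c$ by leveraging the already-proven real analogue (the preceding proposition that $X$ and $\cT$ are mutually quasi-inverse between $\Vect_0$ and $\Tor$) and then checking that everything in sight is compatible with the added complex structure. First I would verify that $X_c$ and $\cT_c$ are genuinely well-defined functors: for $X_c$, the quotient $\cS/\Lambda$ inherits a complex-manifold structure from $\cS$ because the $\Lambda$-translation action is holomorphic (each translation is the restriction of an affine holomorphic map on $\cS$), so the complex structure descends to the quotient; since addition and inversion are also induced from the $\C$-linear operations on $\cS$, the quotient is a complex Lie group, and compactness plus connectedness are immediate from fullness of $\Lambda$ and connectedness of $\cS$. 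For $\cT_c$, one notes that $T_0 X$ carries the complex structure coming from the ambient complex-manifold structure of $X$, and that $\exp_X$ for a complex torus is holomorphic, so its kernel is a full lattice in the $\R$-vector space underlying $T_0 X$; thus $\cT_c(X)$ is an integral complex space.

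Next I would exhibit the two natural isomorphisms $\eta: \id_{\Comp_0} \Rightarrow \cT_c \circ X_c$ and $\epsilon: X_c \circ \cT_c \Rightarrow \id_{\TorComp}$. The cleanest route is to reuse the real-case natural isomorphisms already supplied by the preceding proposition and simply observe they are $\C$-linear (respectively holomorphic) once the complex structures are in play. Concretely, for an integral complex space $(\cS, J, \Lambda)$ the composite $\cT_c(X_c(\cS,J,\Lambda))$ has underlying real integral space $\cT(X(\cS,\Lambda)) \cong (\cS,\Lambda)$ by the real equivalence, and the canonical identification $T_0(\cS/\Lambda) \cong \cS$ carries the induced complex structure back to $J$ precisely because the quotient map $\cS \to \cS/\Lambda$ is holomorphic with differential the identity at $0$. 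Likewise, for a complex torus $X$ the canonical isomorphism $X_c(\cT_c(X)) = (T_0 X)/\ker(\exp_X) \stackrel{\sim}{\to} X$ induced by $\exp_X$ is holomorphic, being a morphism of complex Lie groups, and is an isomorphism of real tori by the real statement; a holomorphic bijective morphism of complex Lie groups is a biholomorphism, so it is an isomorphism in $\TorComp$. Naturality in both cases follows from naturality in the real case together with the functoriality of passing to differentials and of $\exp$.

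Finally, I would handle the restriction to finite morphisms. A morphism $f$ of integral complex spaces is finite (injective as a linear map) if and only if the induced morphism of complex tori $X_c(f)$ has finite kernel; this biconditional is exactly the real statement (already proven), and since $X_c$ and $\cT_c$ only add compatibility with $J$ — which places no further constraint on injectivity or finiteness of kernels — the equivalence restricts verbatim to $\Comp_0^f$ and $\TorComp^f$. The main obstacle I anticipate is not any of the algebra, which is routine once the real case is granted, but rather the careful bookkeeping of complex structures under the two canonical identifications: one must confirm that the complex structure $J$ on $\cS$, the induced complex structure on $\cS/\Lambda$ as a complex manifold, the complex structure on its tangent space $T_0(\cS/\Lambda)$, and the structure transported back through $\exp$ all coincide. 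This is precisely the content of the standard quotient description of complex tori, so I would invoke the cited references \cite{Mumford, BL2} for the fact that $\exp_X$ is holomorphic and that $T_0 X \cong \cS$ as complex vector spaces, reducing the proof to the observation that the real equivalence already supplies the underlying isomorphisms and that each is complex-linear or holomorphic by construction.
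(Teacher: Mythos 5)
Your proposal is correct. Note, however, that the paper itself offers no proof of this proposition: it is stated as a summary of ``the well-known quotient description of complex tori'' with a citation to \cite{Mumford, BL2}, so there is no argument in the text to compare against line by line. What you have written is, in effect, the standard argument those references supply, organized in a way that matches the paper's own sequencing: you bootstrap from the immediately preceding real statement (that $X$ and $\cT$ are mutually quasi-inverse between $\Vect_0$ and $\Tor$) and then verify that every identification in sight respects the complex structures. The three points you isolate are exactly the ones that need checking --- (i) holomorphy of the $\Lambda$-translations, so that $J$ descends to $\cS/\Lambda$; (ii) that the canonical identification $T_0(\cS/\Lambda)\cong \cS$ (differential of the quotient map at $0$) and the map induced by $\exp_X$ are complex-linear, respectively holomorphic, so the real natural isomorphisms upgrade to the complex categories; and (iii) that finiteness of morphisms is a condition on the underlying real/integral data only, so the equivalence restricts to $\Comp_0^f$ and $\TorComp^f$ verbatim. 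The one step that deserves the explicit justification you give it is the claim that a bijective holomorphic morphism of complex Lie groups is an isomorphism in $\TorComp$; since a group morphism has constant rank, bijectivity forces its differential to be invertible everywhere, so the inverse is holomorphic and the counit $\epsilon$ is indeed an isomorphism. With that in place your argument is complete and is a legitimate filling-in of the proof the paper delegates to the literature.
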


\subsection{Integral symplectic spaces and symplectic tori}

\begin{Definition}
A {\em (real) integral symplectic space} is a triple
$(V,\omega,\Lambda)$ such that:
\begin{enumerate}[1.]
\itemsep 0.0em
\item $(V,\omega)$ is a finite-dimensional symplectic vector
  space over $\R$.
\item $(V,\Lambda)$ is an integral space. 
\item $\omega$ is integral with respect to $\Lambda$, i.e. we have
  $\omega(\Lambda,\Lambda)\subset \Z$.
\end{enumerate}
\end{Definition}

\begin{Definition}
Let $(V_1,\omega_1,\Lambda_1)$ and $(V_2,\omega_2,\Lambda_2)$ be two
integral symplectic spaces. A {\em morphism} from
$(V_1,\omega_1,\Lambda_1)$ to $(V_2,\omega_2,\Lambda_2)$ is a
symplectic morphism $f:(V_1,\omega_1)\rightarrow (V_2,\omega_2)$ which
is also a morphism of integral spaces, i.e. which satisfies that
$f(\Lambda_1)\subset \Lambda_2$. 
\end{Definition}

Notice that any morphism of integral symplectic spaces is
injective (since any morphism of symplectic vector spaces is).  Let
$\Symp_0$ denote the category of integral symplectic spaces and
morphisms between such. This category fibers over $\Vect_0^f$ through
the functor which forgets the lattice. 

A {\em real symplectic torus} is a pair $(X,\Omega)$,
where $X$ is a real torus and $\Omega$ is a translationally-invariant
symplectic form on $X$. A {\em morphism of real symplectic tori}
$f:(X_1,\Omega_1)\rightarrow (X_2,\Omega_2)$ is a morphism of real
tori $f:X_1\rightarrow X_2$ such that $f^\ast(\Omega_2)=\Omega_1$;
such a morphism is necessarily finite, i.e. $\ker f$ is automatically
a finite subgroup of $X_1$. These definitions give a category
$\TorSymp$ of symplectic tori and maps of such. The category
$\TorSymp$ fibers over $\Tor^f$ through the functor which
forgets the symplectic form.

The functors $X$ and $\cT$ enrich to functors $X_s:\Symp_0\rightarrow
\TorSymp$ and $\cT_s:\TorSymp\rightarrow \Symp_0$, which are
defined as follows:
\begin{itemize}
\item For any integral symplectic space $(V,\Lambda,\omega)$, we set
  $X_s(V,\Lambda,\omega)\eqdef (X(V,\Lambda),\Omega(\omega))$, where
  $\Omega(\omega)$ is the translationally-invariant symplectic form
  induced by $\omega$ on $X(V,\Lambda)=V/\Lambda$. For any morphism
  $f:(V_1,\omega_1,\Lambda_1) \rightarrow (V_2,\omega_2,\Lambda_2)$,
  we set $X_s(f)\eqdef X(f)$.
\item For any symplectic torus $(X,\Omega)$, we set
  $\cT_s(X,\Omega)\eqdef (V,\omega,\Lambda)$, where
  $(V,\Lambda)=\cT(X)$ and $\omega\eqdef \Omega_0$ is the value of
  $\Omega$ at the neutral element of $X$. For any morphism
  $f:(X_1,\omega_1)\rightarrow (X_2,\omega_2)$, we set $\cT_s(f)\eqdef
  \cT(f)$.
\end{itemize}

\noindent With these definitions, we have:

\begin{Proposition}
The functors $X_s$ and $\cT_s$ are mutually quasi-inverse equivalences
between the categories $\Symp_0$ and $\TorSymp_\R$, which restrict to
mutually quasi-inverse equivalences between the categories $\Symp_0^f$
and $\TorSymp_\R^f$.
\end{Proposition}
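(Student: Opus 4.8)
The plan is to bootstrap from the equivalence between $\Vect_0$ and $\Tor$ (and between $\Vect_0^f$ and $\Tor^f$) already furnished by the quotient and tangent functors $X$ and $\cT$, and then to check that the symplectic enrichment is transported correctly. Since $\Symp_0$ fibers over $\Vect_0^f$ and $\TorSymp$ fibers over $\Tor^f$ through the functors that forget the symplectic data, and since on morphisms $X_s$ and $\cT_s$ are defined by $X_s(f)=X(f)$ and $\cT_s(f)=\cT(f)$, these lift $X$ and $\cT$ along the forgetful functors. It therefore suffices to verify two things: that $X_s$ and $\cT_s$ are well defined on objects and morphisms, and that the unit and counit natural isomorphisms of the underlying equivalence $(X,\cT)$ are already isomorphisms in the enriched categories.

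First I would establish well-definedness. For $X_s$, a translationally-invariant $2$-form on $X(V,\Lambda)=V/\Lambda$ is the same datum as an alternating bilinear form on $T_0(V/\Lambda)\cong V$, and $\Omega(\omega)$ is by definition the invariant form whose value at the origin is $\omega$; it is symplectic precisely because $\omega$ is nondegenerate. That a symplectic morphism $f$ descends to a morphism pulling $\Omega(\omega_2)$ back to $\Omega(\omega_1)$ follows because $X(f)$ is a group homomorphism whose differential at the origin is $f$, so $X(f)^\ast\Omega(\omega_2)$ is the invariant form with value $f^\ast\omega_2=\omega_1$. For $\cT_s$, given $(X,\Omega)$ one sets $\omega\eqdef\Omega_0$ and $(V,\Lambda)\eqdef\cT(X)$; the only nontrivial point is that $(V,\omega,\Lambda)$ is \emph{integral}, i.e. $\omega(\Lambda,\Lambda)\subset\Z$. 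This is exactly the statement that the de Rham class of $\Omega$ is integral, under the identifications $H^2_{\dd}(X)\cong\wedge^2 V^\ast$ and $H^2(X,\Z)\cong\wedge^2\Hom(\Lambda,\Z)$, and it is guaranteed by the standing convention that the symplectic class of a symplectic torus is integral. This integrality check is the crux of well-definedness and is the main obstacle in the argument; everything else reduces to bookkeeping.

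With well-definedness in hand, the equivalence is essentially formal. The unit $\eta_{(V,\Lambda)}\colon(V,\Lambda)\to\cT(X(V,\Lambda))$ and counit $\epsilon_X\colon X(\cT(X))\to X$ of the underlying equivalence are already isomorphisms of integral spaces and of tori, so I need only check that they are \emph{symplectic}. The symplectic form on $\cT_s(X_s(V,\omega,\Lambda))$ is the value at the origin of the invariant form $\Omega(\omega)$, which is $\omega$ by construction, so $\eta$ is an isomorphism in $\Symp_0$; dually, $X_s(\cT_s(X,\Omega))$ carries the invariant form whose value at the origin is $\Omega_0$, and this equals $\Omega$ because $\Omega$ is translationally invariant and hence determined by $\Omega_0$, so $\epsilon$ is an isomorphism in $\TorSymp$. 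Naturality is inherited from the underlying equivalence. Finally, the restriction to the finite subcategories is immediate: every morphism of integral symplectic spaces is injective and every morphism of symplectic tori is finite, so $\Symp_0^f=\Symp_0$ and $\TorSymp^f=\TorSymp$, and the restricted statement coincides with the one just proved.
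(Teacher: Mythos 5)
Your proof is correct, and since the paper states this proposition without any proof at all (presenting it as a routine consequence of the already-established equivalence between $\Vect_0$ and $\Tor$), your argument supplies exactly the verification the authors leave implicit: lift the pair $(X,\cT)$ along the forgetful functors, check that the symplectic data is transported on objects and morphisms, and observe that the unit and counit of the underlying equivalence are already symplectic, hence isomorphisms in the enriched categories. Your final reduction of the ``finite'' statement to the observation that $\Symp_0^f=\Symp_0$ and $\TorSymp^f=\TorSymp$ is likewise consistent with the paper's own remarks that every morphism of integral symplectic spaces is injective and every morphism of symplectic tori is automatically finite.

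The point you flag as the crux deserves emphasis, because it exposes an omission in the paper's definitions rather than a gap in your argument. The paper defines a real symplectic torus as a pair $(X,\Omega)$ with $\Omega$ a translationally-invariant symplectic form, with \emph{no} integrality requirement on its cohomology class; under that literal definition, $\Omega_0$ need not take integer values on $\ker(\exp_X)$, so $\cT_s$ does not land in $\Symp_0$ and the proposition as stated would fail. Your ``standing convention'' that the class of $\Omega$ be integral --- equivalently, that $\Omega_0(\Lambda,\Lambda)\subset\Z$ for $\Lambda=\ker(\exp_X)$, via the identification of $H^2(X,\Z)$ inside invariant $2$-forms --- is precisely the hypothesis needed, and it must be read into the definition of $\TorSymp$ for the statement to be true. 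Once it is, one should also record the dual check for $X_s$, namely that the invariant form $\Omega(\omega)$ on $V/\Lambda$ has integral periods; this is immediate from $\omega(\Lambda,\Lambda)\subset\Z$, and with both directions in place your argument is complete.
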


\paragraph{Integral symplectic bases.}

Let $(V,\omega,\Lambda)$ be an integral symplectic space of dimension $\dim_\R V=2n$. Let:
\be
\Sp(V,\omega,\Lambda)\eqdef \{A\in \Sp(V,\omega)|A(\Lambda)=\Lambda\} 
\ee
denote the automorphism group of $(V,\omega,\Lambda)$ in the category $\Symp_0$. 

\begin{Definition}
An {\em integral symplectic basis} of $(V,\omega,\Lambda)$ is
a basis $\bcE \eqdef (\lambda_1\ldots \lambda_n,\mu_1\ldots \mu_n)$ of
the lattice $\Lambda$ (i.e. a basis of the free $\Z$-module
$\Lambda$) such that:
\be
\omega(\lambda_i,\lambda_j)=\omega(\mu_i,\mu_j)=0~~,~~\omega(\lambda_i,\mu_j)=t_j\delta_{ij}~~\forall i,j=1\ldots n~~,
\ee
where $t_1,\ldots, t_n\in \Z_{>0}$ are strictly positive integers
satisfying the divisibility conditions $t_1|t_2|\ldots | t_n$.
\end{Definition}

\begin{Remark}
An integral symplectic basis of $(V,\omega,\Lambda)$ need not be a
symplectic basis of the vector space $V$.
\end{Remark}

\noindent The elementary divisor theorem implies that any integral
symplectic space admits integral symplectic bases. Moreover, the
integers $t_1\ldots t_n$ (which are called the {\em elementary
  divisors} of $(V,\omega,\Lambda)$) do not depend on the choice of
integral symplectic basis. Let:
\be
\Div^n\eqdef \{(t_1,\ldots, t_n)\in (\Z_{>0})^n~|~~t_1|t_2|\ldots |t_n\}~~.
\ee
and: 
\be
\bdelta(n)\eqdef (1,\ldots, 1)\in \Div^n~~.
\ee
The ordered system of elementary divisors:
\be
\bt(V,\omega,\Lambda)\eqdef (t_1,t_2,\ldots , t_n)\in \Div^n
\ee
is called the {\em type} of $(V,\omega,\Lambda)$. The integral symplectic space
$(V,\omega,\Lambda)$ is called {\em principal} if
$\bt(V,\omega,\Lambda)=\bdelta(n)$. This is equivalent with
the requirement that the Liouville volume of (any) elementary cell of
$\Lambda$ with respect to $\omega$ be equal to $1$. For any
$\bt=(t_1,\ldots, t_n)\in \Div^n$, let:
\be
\mathrm{D}_\bt\eqdef \diag(t_1,\ldots, t_n)\in \Mat(n,\Z)~~.
\ee
and:
\be
\Gamma_\bt\eqdef \left[\begin{array}{cc}I_n &0 \\ 0 & \mathrm{D}_\bt\end{array}\right]\in \Mat(2n,\Z)~~.
\ee
Let $\Lambda_\bt\subset \R^{2n}$ be the lattice defined through:
\ben
\label{LambdaDef}
\Lambda_\bt=\{(m_1\ldots m_n, m_{n+1}t_1,\ldots, m_{2n}t_n)|m_1,\ldots, m_{2n}\in \Z\}~~.
\een
Then $(\R^{2n},\omega_n, \Lambda_\bt)$ is an integral symplectic space, where
$\omega_n$ denotes the canonical symplectic pairing on $\R^{2n}$. 
Notice that $\Lambda_{\bdelta(n)}=\Z^{2n}$. Also notice that $\Lambda_\bt\subset
\Z^{2n}$ is a sub-lattice of $\Z^{2n}$ of index $t_1t_2\ldots t_n$.

\paragraph{The modified Siegel modular group.}

\begin{Definition}
The {\em modified Siegel modular group\footnote{This is the group denoted by
    $G_D$ in \cite[Section 8]{BL}. It is isomorphic to the group
    denoted by $\Gamma_D=\Sp^D_{2n}(\Z)$ on page 216 of loc. cit., which has a
    different action on the Siegel upper half space.} of type $\bt\in
  \Div^n$} is the subgroup of $\Sp(2n,\R)$ defined through:
\be
\Sp_\bt(2n,\Z)\eqdef \{A\in \Sp(2n,\R)|\Gamma_\bt A\Gamma_\bt^{-1}\in \GL(2n,\Z)\}~~.
\ee
\end{Definition}

\noindent   We have:
\be
\Sp_\bt(2n,\Z)=\{A\in \Sp(2n,\R)|A\Lambda_\bt\subset \Lambda_\bt\}=\{A\in \Sp(2n,\R)|A\Lambda_\bt= \Lambda_\bt\}\simeq \Sp(\R^{2n},\omega_n,\Lambda_\bt)~~,
\ee
and $\Sp_{\bdelta_n}(2n,\Z)=\Sp(2n,\Z)$. Since
$\Lambda_\bt\subset \Z^{2n}$, we have $\Sp(2n,\Z)\subset
\Sp_\bt(2n,\Z)$.  

\begin{Proposition}
\label{prop:intsympbasis}
Let $(V,\omega,\Lambda)$ be a $2n$-dimensional integral symplectic space
of type $\bt$. Then any integral symplectic basis
$\bcE=(\lambda_1\ldots \lambda_n,\mu_1\ldots \mu_n)$ of this space
induces an isomorphism of integral symplectic spaces between
$(V,\omega,\Lambda)$ and $(\R^{2n},\omega_n,\Lambda_\bt)$ and an
isomorphism of groups between $\Sp(V,\omega,\Lambda)$ and
$\Sp_\bt(2n,\Z)$.
\end{Proposition}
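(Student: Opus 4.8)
The plan is to construct the isomorphism directly from the integral symplectic basis and then verify its compatibility with all relevant structures. First I would use the basis $\bcE=(\lambda_1,\ldots,\lambda_n,\mu_1,\ldots,\mu_n)$ of the lattice $\Lambda$ to define an $\R$-linear map $\Psi:V\rightarrow \R^{2n}$. The natural choice is to send each $\lambda_i$ to the $i$-th standard basis vector $a_i$ of $\R^{2n}$ and each $\mu_j$ to $t_j$ times the $(n+j)$-th standard basis vector $a_{n+j}$; this is forced by the requirement that $\Psi$ map $\Lambda$ onto $\Lambda_\bt$, since $\Lambda_\bt$ is spanned over $\Z$ by $a_1,\ldots,a_n,t_1 a_{n+1},\ldots,t_n a_{2n}$ according to \eqref{LambdaDef}. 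Because $\bcE$ is a $\Z$-basis of $\Lambda$ and the images generate $\Lambda_\bt$ as a $\Z$-module, $\Psi$ restricts to an isomorphism of lattices $\Lambda\xrightarrow{\sim}\Lambda_\bt$, hence (after tensoring with $\R$) an $\R$-linear isomorphism $V\xrightarrow{\sim}\R^{2n}$ of integral spaces.

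Next I would check that $\Psi$ is a symplectic morphism, i.e. $\omega_n(\Psi\xi,\Psi\eta)=\omega(\xi,\eta)$. It suffices to verify this on the basis vectors, using the defining relations $\omega(\lambda_i,\lambda_j)=\omega(\mu_i,\mu_j)=0$ and $\omega(\lambda_i,\mu_j)=t_j\delta_{ij}$. On the target side, $\omega_n(a_i,a_j)=\omega_n(a_{n+i},a_{n+j})=0$ and $\omega_n(a_i,a_{n+j})=\delta_{ij}$ by \eqref{omegan}, so $\omega_n(\Psi\lambda_i,\Psi\mu_j)=\omega_n(a_i,t_j a_{n+j})=t_j\delta_{ij}$, which matches $\omega(\lambda_i,\mu_j)$ exactly; the other pairings match trivially. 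This shows $\Psi$ is an isomorphism in $\Symp_0$ from $(V,\omega,\Lambda)$ to $(\R^{2n},\omega_n,\Lambda_\bt)$.

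For the second assertion I would transport automorphisms through $\Psi$ by conjugation, defining $\Phi:\Sp(V,\omega,\Lambda)\rightarrow \GL(2n,\R)$ by $\Phi(A)=\Psi\circ A\circ \Psi^{-1}$. Since $\Psi$ preserves both $\omega$ and the lattice, $\Phi(A)$ lies in $\Sp(2n,\R)$ and satisfies $\Phi(A)\Lambda_\bt=\Lambda_\bt$, so by the displayed characterization $\Sp_\bt(2n,\Z)=\{B\in\Sp(2n,\R)\mid B\Lambda_\bt=\Lambda_\bt\}$ we have $\Phi(A)\in\Sp_\bt(2n,\Z)$. Conjugation is manifestly a group homomorphism, and its inverse is conjugation by $\Psi^{-1}$, which maps $\Sp_\bt(2n,\Z)$ back into $\Sp(V,\omega,\Lambda)$ since $\Psi^{-1}(\Lambda_\bt)=\Lambda$; hence $\Phi$ is a group isomorphism.

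The steps are individually routine, so the main subtlety is really bookkeeping: one must be careful that the asymmetric roles of the $\lambda_i$ and $\mu_j$ (the divisors $t_j$ attach to the $\mu$'s, matching the block $\Gamma_\bt=\mathrm{diag}(I_n,\mathrm{D}_\bt)$) produce precisely $\Lambda_\bt$ rather than some other scaling, and that the same $\Psi$ simultaneously witnesses both the symplectic and the integral structure. The only point requiring genuine input is the existence of $\bcE$ with the stated divisibility $t_1|t_2|\cdots|t_n$, which is the elementary divisor theorem invoked just before the proposition; given that, the argument reduces to the verifications above. I expect no serious obstacle, the work being to organize the identifications cleanly rather than to overcome any real difficulty.
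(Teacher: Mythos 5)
Your construction is correct, and the linear map is in fact identical to the paper's: sending $\lambda_i\mapsto a_i$ and $\mu_j\mapsto t_j a_{n+j}$ is exactly the coordinate map of the rescaled symplectic basis $\cE=(\lambda_1,\ldots,\lambda_n,\tfrac{1}{t_1}\mu_1,\ldots,\tfrac{1}{t_n}\mu_n)$ that the paper uses, so the first halves of the two arguments coincide. The difference lies in the group isomorphism. The paper argues directly from the definition $\Sp_\bt(2n,\Z)\eqdef\{A\in\Sp(2n,\R)\,|\,\Gamma_\bt A\Gamma_\bt^{-1}\in\GL(2n,\Z)\}$: it compares the matrix $\hat{A}$ of an automorphism in the symplectic basis $\cE$ with its matrix $\boldsymbol{\hat{A}}$ in the integral basis $\bcE$, notes that these are related by conjugation by $\Gamma_\bt$, and reads off that symplecticity together with lattice preservation is equivalent to membership in $\Sp_\bt(2n,\Z)$. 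You instead use pure transport of structure (conjugation by $\Psi$) to identify $\Sp(V,\omega,\Lambda)$ with $\Sp(\R^{2n},\omega_n,\Lambda_\bt)$, and then invoke the displayed identity $\Sp_\bt(2n,\Z)=\{A\in\Sp(2n,\R)\,|\,A\Lambda_\bt=\Lambda_\bt\}\simeq\Sp(\R^{2n},\omega_n,\Lambda_\bt)$ stated in the paper just after the definition. This is logically sound, but be aware that the identity you cite is asserted there without proof, and it is precisely where the only nontrivial computation lives: establishing it requires the same change-of-basis argument the paper's proof performs ($A\Lambda_\bt\subset\Lambda_\bt$ iff the $\Gamma_\bt$-conjugate of $A$ is in $\Mat(2n,\Z)$, with symplecticity, via $\det A=1$, upgrading the inclusion of lattices to equality and integrality to membership in $\GL(2n,\Z)$). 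So your route buys a cleaner categorical statement at the cost of self-containedness; if the stabilizer characterization is not granted as known, you should append that short matrix verification, which is exactly the paper's proof.
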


\begin{proof}
The vectors $e_1\eqdef \lambda_1,\ldots e_n\eqdef \lambda_n, f_1\eqdef
\frac{1}{t_1}\mu_1,\ldots f_n\eqdef \frac{1}{t_n}\mu_n$ form a
symplectic basis $\cE=(e_1\ldots e_n, f_1\ldots f_n)$ of $(V,\omega)$,
which induces the stated isomorphism of integral symplectic
spaces. Recall that $\Omega_n$ denotes the matrix of $\omega_n$ in the
canonical basis. An invertible $\R$-linear operator $A\in \Aut_\R(V)$
is a symplectomorphism iff its matrix ${\hat A}$ in the basis $\cE$
belongs to $\Sp(2n,\R)$, which amounts to the condition:
\be
{\hat A}^t\Omega_{n}{\hat A}=\Omega_{n}~~.
\ee
On the other hand, $A$ preserves $\Lambda$ iff its matrix
$\boldsymbol{\hat{A}}$ in the basis $\bcE$ belongs to
$\Mat(2n,\Z)$. The two matrices are related through:
\be
\boldsymbol{\hat{A}}=\Gamma_\bt \hat{A}\Gamma_\bt^{-1}~~.
\ee
Hence the condition $A(\Lambda)\subset \Lambda$ amounts to the
requirement $\Gamma_\bt \hat{A}\Gamma_\bt^{-1} \in \Mat(2n,\Z)$.
This gives the desired isomorphism of groups.
\end{proof}

\subsection{Integral tamed symplectic spaces and polarized Abelian varieties}

\begin{Definition}
An {\em integral tamed symplectic space} is an ordered system
$(V,\omega,J,\Lambda)$ such that:
\begin{enumerate}[1.]
\itemsep 0.0em
\item $(V,\omega,\Lambda)$ is an integral symplectic space.
\item $J$ is a taming of the symplectic space $(V,\omega)$. 
\end{enumerate}
The {\em type} of an integral tamed symplectic space
$(V,\omega,J,\Lambda)$ is the type of the underlying
integral symplectic space $(V,\omega,\Lambda)$.
\end{Definition}

\begin{Definition}
Given two integral tamed symplectic spaces $(V_i,\omega_i,J_i,\Lambda_i)$
($i=1,2$), a {\em morphism} from $(V_1,\omega_1,J_1,\Lambda_1)$ to
$(V_2,\omega_2,J_2,\Lambda_2)$ is a linear map $f:V_1\rightarrow
V_2$ such that:
\begin{enumerate}[1.]
\item $f$ is a morphism of integral symplectic spaces from
  $(V_1,\omega_1,\Lambda_1)$ to $(V_2,\omega_2,\Lambda_2)$.
\item $f$ is a morphism of complex vector spaces from
  $(V_1,J_1)$ to $(V_2,J_2)$.
\end{enumerate}
\end{Definition}

\noindent Any morphism of integral tamed symplectic spaces is
necessarily injective. Integral tamed symplectic spaces and morphisms
of such form a category denoted $\TamedSymp_0$. This category is
naturally equivalent with the category $\Herm_0$ of integral Hermitian
spaces and morphisms of such, whose definition should be obvious to
the reader. There exist forgetful functors from $\TamedSymp_0$ to $\Symp_0$ 
and $\Comp_0$.

Let $X$ be a complex torus of complex dimension $n$. We identify the
Neron-Severi group of $X$ with the image $NS(X)\subset H^2(X,\Z)$ of
the morphism $c_1:\Pic(X)\rightarrow H^2(X,\Z)$, which fits into an
exact sequence:
\be
0\longrightarrow \Pic^0(X)\hookrightarrow \Pic(X)\stackrel{c_1}{\longrightarrow} NS(X)\longrightarrow 0~~.
\ee
Recall \cite{BL2} that $NS(X)$ is a free Abelian group whose rank
$\rho(X)$ (the Picard number of $X$) satisfies $\rho(X)\leq n^2$. Since
$H^2(X,\Z)$ is torsion-free, it can be identified with its image in
$H^2(X,\C)$, in which case $NS(X)$ can be identified with the group
$H^{1,1}(X,\C)\cap H^2(X,\Z)$ of integral $(1,1)$-cohomology classes.
By the Kodaira embedding theorem, a holomorphic line bundle on $X$ is
ample iff it is positive. The ample cone $NS_+(X)\subset NS(X)$
consists of the first Chern classes of all ample holomorphic line
bundles defined on $X$. This can be identified with the set
$\cK(X)\cap H^2(X,\Z)$ of integral K\"{a}hler classes, where $\cK(X)$ is
the K\"{a}hler cone of $X$. A {\em polarization} of $X$ is an element
$c\in NS_+(X)$; this can be identified with an integral K\"{a}hler class
of $X$. The complex torus $X$ is called an Abelian variety if its
admits polarizations, i.e. if $NS_+(X)\neq \emptyset$. This is
equivalent with the condition that $X$ admits a K\"{a}hler-Hodge metric
and to the condition that $X$ admits a holomorphic embedding
into some complex projective space. In this case, $X$ is a smooth projective 
algebraic variety, i.e. it can be presented as the zero locus of a system of 
homogeneous polynomial equations in that projective space. 
Let $\cT_c(X)=(\cS,J,\Lambda)$.  Any cohomology class $\alpha\in \cK(X)$
contains a unique translationally-invariant representative, whose
corresponding K\"{a}hler metric is itself translationally invariant and
hence induced by a Hermitian form $H$ on the complex the vector space
$(\cS,J)$. Equivalently, such a K\"{a}hler metric is determined by the
symplectic form $\omega=\Im H$, which has the property that
$(\cS,J,\omega)$ is a tamed symplectic vector space. This gives a
bijection between the K\"{a}hler cone $\cK(X)$ and the space of all
symplectic pairings on $\cS$ for which $J$ is a taming. The K\"{a}hler
class $\alpha$ is integral iff $\omega$ is integral with respect to
$\Lambda$. This gives a bijection between the set $NS_+(X)$ of
polarizations of $X$ and the set of all symplectic pairings $\omega$
on $\cS$ which make $(\cS,J,\omega,\Lambda)$ into an integral
Hermitian space. We let $c(\omega)$ denote the polarization defined by
such a symplectic pairing $\omega$. 

A {\em polarized Abelian variety}
is a pair $(X,c)$, where $X$ is an Abelian variety and $c\in N_+(X)$
is a polarization of $X$. Given two polarized Abelian varieties
$(X_1,c_1)$ and $(X_2,c_2)$, a {\em morphism of polarized Abelian
  varieties} $f:(X_1,c_1)\rightarrow (X_2,c_2)$ is a morphism of
complex tori $f:X_1\rightarrow X_2$ such that $f^\ast(c_2)=c_1$. Such
a morphism necessarily has finite kernel. Let $\AbVar$
denote the category of complex polarized Abelian varieties and
morphisms between such. There exist obvious forgetful functors 
from $\AbVar$ to $\TorSymp$ and $\TorComp$. 

\paragraph{Equivalence between $\TamedSymp_0$ and $\AbVar$.}

Consider the functor $X_h:\TamedSymp_0\rightarrow \AbVar$ defined
as follows:
\begin{itemize}
\item Given an integral tamed symplectic space
  $(\cS,\omega,J,\Lambda)$, we set $X_h(\cS,\omega,J,\Lambda)\eqdef
  (X_c(\cS,J,\Lambda),c(\omega))$, where $c(\omega)\eqdef
  \beta_X^{-1}(\omega)\in NS_+(X)$ is the polarization of
  $X_c(\cS,J,\Lambda)$ defined by $\omega$.
\item Given a finite morphism
  $f:(\cS_1,\omega_1,J_1,\Lambda_1)\rightarrow
  (\cS_2,\omega_2,J_2,\Lambda_2)$ of integral tamed symplectic spaces, let
  $X_h(f):=X_c(f):X_c(\cS_1,J_1,\Lambda_1)\rightarrow
  X_2:=X_c(\cS_2,J_2,\Lambda_2)$ be the finite morphism of complex tori
  induced by $f$. Then $X_h(f)^\ast(c_2)=c_1$, where $c_1$ and $c_2$ are
  the polarizations induced by $\omega_1$ and $\omega_2$ on $X_1$ and
  $X_2$. Thus $X_h(f)$ is a morphism of polarized Abelian varieties from
  $X_h(\cS_1,J_1,\omega_1,\Lambda_1)$ to
  $X_h(\cS_2,J_2,\omega_2,\Lambda_2)$.
\end{itemize}

\noindent The following well-known statement underlies the description of Abelian 
varieties through period matrices satisfying the Riemann bilinear relations:

\begin{Proposition}
The functor $X_h:\TamedSymp\rightarrow \AbVar$ is an equivalence of categories. 
\end{Proposition}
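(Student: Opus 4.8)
The plan is to construct an explicit quasi-inverse $\cT_h\colon \AbVar\to \TamedSymp_0$ by enriching the equivalence $(X_c,\cT_c)$ between $\Comp_0$ and $\TorComp$ with the polarization data, and then to verify that $X_h$ is fully faithful and essentially surjective. The key structural observation is that $X_h$ is \emph{fibered} over $X_c$: forgetting $\omega$ on the source and forgetting the polarization on the target, $X_h$ descends to $X_c$, and over a fixed complex torus $X$ with $\cT_c(X)=(\cS,J,\Lambda)$ it restricts, on fibers, to the bijection $\omega\mapsto c(\omega)$ between integral tamed symplectic forms on $(\cS,J,\Lambda)$ and elements of $NS_+(X)$ established in the preceding paragraph. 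Accordingly I would set $\cT_h(X,c)\eqdef (\cS,\omega,J,\Lambda)$, where $(\cS,J,\Lambda)=\cT_c(X)$ and $\omega$ is the unique integral tamed symplectic form with $c(\omega)=c$, and $\cT_h(g)\eqdef \cT_c(g)$ on morphisms.

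First I would dispose of faithfulness: since $X_h(f)=X_c(f)$ and a morphism of integral tamed symplectic spaces is determined by its underlying $\R$-linear map, faithfulness of $X_h$ is inherited from that of the equivalence $X_c$. For essential surjectivity, given $(X,c)\in\AbVar$ I would use the object-level bijection to produce $\omega$ with $c(\omega)=c$ and then transport along the counit isomorphism $X_c(\cT_c(X))\stackrel{\sim}{\rightarrow}X$, the point being that this isomorphism carries $c(\omega)$ to $c$. The essential content lies in fullness. Given a morphism $g\colon (X_1,c_1)\to (X_2,c_2)$ of polarized Abelian varieties, fullness of $X_c$ yields a unique $f\in \Comp_0$ with $X_c(f)=g$; since $g$ is necessarily finite, $f$ is injective. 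Writing $f^{\ast}\omega_2\eqdef \omega_2(f\,\cdot\,,f\,\cdot\,)$, this pullback form is nondegenerate and integral on $\Lambda_1$ (because $f(\Lambda_1)\subset\Lambda_2$), and $J_1$ is a taming for it by the elementary computation $\omega_2(f(J_1 v),fv)=\omega_2(J_2 fv,fv)=Q_{J_2,\omega_2}(fv,fv)>0$ for $v\neq 0$. Hence $f^{\ast}\omega_2$ lies in the fiber over $(\cS_1,J_1,\Lambda_1)$, so $c(f^{\ast}\omega_2)$ is defined; the naturality identity $c(f^{\ast}\omega_2)=g^{\ast}c(\omega_2)=g^{\ast}c_2=c_1=c(\omega_1)$ together with injectivity of $\omega\mapsto c(\omega)$ forces $f^{\ast}\omega_2=\omega_1$, so $f$ is symplectic and $X_h(f)=g$.

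It then remains to check that $\cT_h$ is well defined on morphisms --- the same symplectic check as above, now run forward --- and that the unit and counit of $(X_c,\cT_c)$ lift to natural isomorphisms for $(X_h,\cT_h)$, which is automatic once they are seen to respect the polarization data, i.e. the same fiber compatibility. The hard part will be the single \emph{naturality} input used repeatedly above, namely $X_c(f)^{\ast}\bigl(c(\omega)\bigr)=c(f^{\ast}\omega)$. This is precisely the compatibility of the first Chern class correspondence (Appell--Humbert theory) with pullback along morphisms of complex tori: under the canonical identification $H^2(X,\Z)\simeq \Alt^2(\Lambda,\Z)$ for $X=\cS/\Lambda$, the group $NS(X)$ corresponds to integral alternating forms $E$ satisfying $E(J\cdot,J\cdot)=E(\cdot,\cdot)$, the ample cone $NS_+(X)$ to those for which $E(J\cdot,\cdot)$ is positive definite, and $c_1$ of a polarizing line bundle to $\Im H=\omega$; since both the cohomological pullback $g^{\ast}$ and the pullback of alternating forms $f^{\ast}$ are computed from the single group homomorphism $f|_{\Lambda_1}\colon\Lambda_1\to\Lambda_2$, they agree. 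I would invoke this as standard, citing \cite{BL,BL2}; it is an obstacle only in that it requires careful bookkeeping with the line-bundle theory on complex tori rather than any genuinely new idea, the positivity half having already been reduced to the elementary inequality above.
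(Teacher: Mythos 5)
Your proof is correct, but it is worth knowing how it sits relative to the paper: the paper offers \emph{no proof at all} of this Proposition, presenting it as the well-known fact underlying the period-matrix description of polarized Abelian varieties and implicitly deferring to \cite{Mumford,BL,BL2}. Your route --- faithfulness inherited from $X_c$, essential surjectivity from the object-level bijection $\omega\mapsto c(\omega)$ (which the paper \emph{does} establish in the paragraph preceding the Proposition), and fullness by lifting $g$ to a complex-linear $f$ with $X_c(f)=g$ and showing $f^\ast\omega_2=\omega_1$ --- is a clean reduction of the categorical statement to exactly two inputs: that bijection, and the naturality $X_c(f)^\ast\bigl(c(\omega)\bigr)=c(f^\ast\omega)$ of the Chern-class/Appell--Humbert correspondence under pullback. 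Note that this same naturality is already assumed without proof by the paper in its definition of $X_h$ on morphisms (the assertion ``$X_h(f)^\ast(c_2)=c_1$''), so your argument makes explicit precisely the point the paper leaves implicit. Your fullness computation is sound: finiteness of $g$ forces injectivity of $f$ (a positive-dimensional kernel of $f$ would produce an infinite kernel of $X_c(f)$); injectivity then makes $f^\ast\omega_2$ nondegenerate and $J_1$-tamed via $\omega_2(J_2 fv,fv)=Q_{J_2,\omega_2}(fv,fv)>0$, consistent with the paper's convention $Q(v,w)=\omega(Jv,w)$; and injectivity of $\omega\mapsto c(\omega)$ onto $NS_+(X)$ forces $f^\ast\omega_2=\omega_1$. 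What the paper's citation-only treatment buys is brevity; what yours buys is a self-contained verification whose only imported ingredient is the standard identification $H^2(X,\Z)\simeq\Alt^2(\Lambda,\Z)$ compatibly with pullback along morphisms of tori --- which is indeed where the genuine content of the ``well-known'' statement lives, and which is legitimately cited to \cite{BL,BL2} rather than reproved.
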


\paragraph{The modular parameter relative to an integral symplectic basis.}

Let $(\cS,J,\omega,\Lambda)$ be an integral tamed symplectic space of complex
dimension $n$ and type $\bt\in \Div^n$.  Let $\bcE\eqdef
(\lambda_1,\ldots, \lambda_n, \mu_1\ldots \mu_n)$ be an integral
symplectic basis of $(\cS, \omega,\Lambda)$.  Let $e_i\eqdef
\lambda_i$ and $f_i\eqdef \frac{1}{t_i}\mu_i$. Then $\cE\eqdef
(e_1\ldots e_n, f_1\ldots f_n)$ is a symplectic basis of
$(\cS,\omega)$ and $\cE^2\eqdef (f_1\ldots f_n)$ is a basis (over
$\C$) of the complex vector space $(\cS,J)$. We have:
\be
e_i=\sum_{i,j=1}^n \btau_{ij}f_j~~,
\ee  
where $\btau\eqdef \btau^\cE \in \SH_n$ is the modular parameter of
the tamed symplectic vector space $(\cS,J,\omega)$ with respect to the
symplectic basis $\cE$ (see Appendix \ref{app:space_tamings}). 

\begin{Definition}
The modular parameter of $(\cS,J,\omega,\Lambda)$ relative to the
integral symplectic basis $\bcE=(\mu_1,\ldots, \mu_n,\lambda_1,\ldots,
\lambda_n)$ is defined through:
\be
\btau^\bcE\eqdef \btau^\cE~~,
\ee
where $\tau^\cE$ is the modular parameter of the tamed symplectic
space $(\cS,J,\omega)$ with respect to the symplectic basis $\cE$ of
$(\cS,\omega)$ defined through $\cE\eqdef (e_1\ldots e_n,
\frac{1}{t_1}\mu_1\ldots \frac{1}{t_n}f_n)$.
\end{Definition}

\noindent The period matrix of $(\cS,J,\omega)$ with respect to the
real basis $\bcE$ and complex basis $\cE_2$ is given by \cite[Section
  8.1]{BL}:
\be
\Pi_{\bcE}^{\cE_2}=[\btau^\bcE, \mathrm{D_\bt}]~~\mathrm{where}~~\mathrm{D}_\bt\eqdef \diag(t_1\ldots t_n)~~.
\ee
The pair $(X_h(\cS,D,J,\omega),\bcE)$ is a polarized Abelian variety
of type $\bt$ {\em with symplectic basis} in the sense of
\cite[Section 8]{BL} and $\btau^\bcE$ coincides with its modular
parameter. As explained in loc cit, the moduli space of such objects
is the Siegel upper half space $\SH_n$, while the moduli space of
polarized Abelian varieties of complex dimension $n$ and type $\bt$
equals the quotient $\SH_n/\Sp_\bt(2n,\Z)$, where $\Sp_\bt(2n,\Z)$
acts on $\SH_n$ by matrix fractional transformations. This action is
properly discontinuous since $\Sp_\bt(2n,Z)$ is a discrete subgroup of
$\Sp(2n,\R)$.

\subsection{Summary of correspondences}

Summarizing everything, we have the following commutative diagram of
categories and functors, whose vertical arrows are equivalences and
whose remaining arrows are forgetful functors. The functors $X_s$ and $X_h$ 
arise in Section \ref{sec:Dirac}. 
\ben
\label{diag:cat}
\scalebox{1.2}{
\xymatrix{
\TamedSymp_0 \ar_{X_h}[dd]\ar[rd] \ar[r] & \Comp_0^f\ar[rd] \ar@<7pt>^(.70){X_c}[dd]\\
& \Symp_0\!\!\!\ar_(.30)[dd]{X_s}[dd]  \ar[r] & \Vect_0^f\ar^X[dd]\\
\AbVar \ar[rd] \ar[r] & ~~\TorComp^f \ar[rd]\\
& \TorSymp  \ar[r] & \Tor^f\\
}}
\een

\end{document}